\def\a{\alpha}
\def\b{\beta}
\def\d{\delta}
\def\ve{\varepsilon}
\def\g{\gamma}
\def\k{\kappa}
\def\z{\zeta}
\def\th{\theta}
\def\l{\lambda}
\def\m{\mu}
\def\p{\pi}
\def\r{\rho}
\def\s{\sigma}
\def\S{\Sigma}
\def\Om{\Omega}
\newcommand{\erf}{\mbox{\text erf}}
\newcommand{\prob}[2][]{\text{\bf Pr}\ifthenelse{\not\equal{}{#1}}{_{#1}}{}\!\left[#2\right]}
\newcommand{\expect}[2][]{\text{\bf E}\ifthenelse{\not\equal{}{#1}}{_{#1}}{}\!\left[#2\right]}
\newcommand{\dtv}{d_{\mathrm {TV}}}
\newcommand{\dk}{d_{\mathrm K}}
\newtheorem{theorem}{Theorem}
\newtheorem{proposition}{Proposition}
\newtheorem{observation}{Observation}
\newtheorem{fact}{Fact}
\newtheorem{lemma}{Lemma}
\newtheorem{case}{Case}
\newtheorem{definition}{Definition}
\newcommand{\ignore}[1]{}
\providecommand{\poly}{\operatorname*{poly}}
\newenvironment{prevproof}[2]{\noindent {\em {Proof of {#1}~\ref{#2}:}}}{$\hfill\qed$\vskip \belowdisplayskip}
\newcommand{\bg}[1]{\medskip\noindent{\bf #1}}
\newcommand{\val}{v}
\definecolor{Red}{rgb}{1,0,0}
\newcommand{\oldbound}[1]{{}}
\def\ieee{0}
\title{On the Structure, Covering, and Learning of Poisson Multinomial Distributions}
\author {
Constantinos Daskalakis\thanks{Supported by a Sloan Foundation Fellowship, a Microsoft Research Faculty Fellowship, and NSF Award CCF-0953960 (CAREER) and CCF-1101491.}\\
EECS, MIT \\
\tt{costis@mit.edu}
\and
Gautam Kamath\thanks{Supported by NSF Award CCF-0953960 (CAREER).}\\
EECS, MIT\\
\tt{g@csail.mit.edu}
\and
Christos Tzamos\thanks{Supported by NSF Award CCF-0953960 (CAREER) and a Simons Award for Graduate Students in Theoretical Computer Science.}\\
EECS, MIT\\
\tt{tzamos@mit.edu}
}
\begin{document}
\addtocounter{page}{-1}
\maketitle
\thispagestyle{empty}

\begin{abstract}
An $(n,k)$-{\em Poisson Multinomial Distribution} (PMD) is the distribution of the sum of $n$ independent random vectors supported on the set ${\cal B}_k=\{e_1,\ldots,e_k\}$ of standard basis vectors in~$\mathbb{R}^k$. We prove a structural characterization of these distributions, showing that, for all $\ve >0$, any $(n, k)$-Poisson multinomial random vector is $\ve$-close, in total variation distance, to the sum of a discretized multidimensional Gaussian and an independent $(\text{poly}(k/\ve), k)$-Poisson multinomial random vector. Our structural characterization extends the multi-dimensional CLT of~\cite{ValiantV11}, by simultaneously applying to all approximation requirements~$\ve$. In particular, it overcomes  factors depending on $\log n$ and, importantly, the  minimum eigenvalue of the PMD's covariance matrix.

We use our structural characterization to obtain an $\ve$-cover, in total variation distance, of the set of all $(n, k)$-PMDs, significantly improving the cover size of~\cite{DaskalakisP08,DaskalakisP15}, and obtaining the same qualitative dependence of the cover size on $n$ and $\ve$ as the $k=2$ cover of~\cite{DaskalakisP09,DaskalakisP14}. We further exploit this structure to show that $(n,k)$-PMDs can be learned to within $\ve$ in total variation distance from $\tilde{O}_k(1/\ve^2)$ samples, which is near-optimal in terms of dependence on $\ve$ and independent of $n$.  In particular, our result generalizes the single-dimensional result of~\cite{DaskalakisDS12b} for Poisson binomials to arbitrary dimension. Finally, as a corollary of our results on PMDs, we give a $\tilde{O}_k(1/\ve^2)$ sample algorithm for learning $(n,k)$-sums of independent integer random variables (SIIRVs), which is near-optimal for constant $k$.
\end{abstract}

\newpage

\section{Introduction} \label{sec:intro}

Poisson Multinomial Distributions (PMDs) are one the most basic nonparametric multidimensional families of distributions. They express the distribution of how many out of $n$ thrown balls will fall into $k$ bins, when the balls (perhaps because of weight or other characteristics) have different biases towards falling into the different bins. Mathematically, a $(n,k)$-PMD is the distribution of the sum $\sum_{i=1}^n X_i$ of $n$ independent random vectors $X_i$ supported on the set ${\cal B}_k=\{e_1,\ldots,e_k\}$ of standard basis vectors in $\mathbb{R}^k$. In particular, a $(n, k)$-PMD requires for its description $n \cdot (k-1)$ probabilities, specifying the distribution of each summand random vector.

\smallskip In this paper, we advance our understanding of the structure and learnability of this fundamental family of distributions by studying the following questions:

\begin{enumerate}
\item Can we approximate PMDs via simpler distributions such as multi-dimensional Gaussians or Poissons? Do they always ``behave as'' discretized multi-dimensional Gaussians or Poissons? If not, what is the range of  possible ``behaviors'' that PMDs may exhibit? \label{item:question 1}

\item Given $n$, $k$ and $\ve$, is there a small set of distributions that $\ve$-cover, in total variation distance, the set of all $(n, k)$-PMDs? And, how does the size of the cover scale with $n$, $k$ and $\ve$? \label{item:question 2}

\item How many samples from a $(n,k)$-PMD do we need to  learn its density to within $\ve$ in total variation distance? What is the dependence of the learning complexity on the size $n^{O(k)}$ of their support? \label{item:question 3}
\end{enumerate}

\paragraph{Structure of PMDs} It is hard to do justice to the probability literature studying Question~\ref{item:question 1}. The multi-dimensional CLT informs us that the limiting behavior of $(n,k)$-PMDs, as $n \rightarrow +\infty$, is Gaussian, under conditions on the eigenvalues of the summands' covariance matrices; see, e.g., \cite{VanderVaart00}.\footnote{When we approximate some $(n,k)$-PMD or refer to the eigenvalues of its covariance matrix, we typically project the PMD onto a $(k-1)$-dimensional space, e.g. by excluding one of its coordinates, as otherwise the covariance matrix always has a $0$ eigenvalue and the distribution does not have full-dimensional support.} The CLT is quantified for finite $n$ by the multi-dimensional Berry-Esseen theorem, which bounds the difference between the probability masses assigned to convex (or a bit more general) subsets of $\mathbb{R}^k$ by a $(n,k)$-PMD and the multi-dimensional Gaussian distribution with the same mean vector and covariance matrix, with the bound's quality typically degrading as the PMD's covariance matrix tends to singularity; see, e.g.,~\cite{Bentkus05}. More recently, Valiant and Valiant~\cite{ValiantV11} provide a bound in total variation distance, between a $(n, k)$-PMD and the corresponding {\em discretized} multi-dimensional Gaussian, whose quality degrades mildly with $n$ and worse with the minimum eigenvalue of the PMD's covariance matrix~(see Theorem~\ref{thm:val}).\footnote{Notice that bounds on total variation distance are stronger than bounds on the probabilities of all events defined by convex sets in~$\mathbb{R}^k$ that Berry-Esseen-type theorems establish.} Finally, older results using Stein's method bound the total variation distance between a $(n,k)$-PMD and a multivariate Poisson~\cite{Barbour88,DeheuvelsP88}, or a (bona fide) multinomial distribution~\cite{Loh92}. 

In summary, known bounds show that a $(n,k)$-PMD can be approximated by simpler, $\text{poly}(k)$-parameter, distributions, but the quality of their approximation depends on the first few moments of the PMD or its summands.
 Our goal instead is to provide universal approximation theorems showing how to approximate a given $(n,k)$-PMD by simpler distributions for {\em any desired approximation $\ve$} and {\em without assumptions about the moments of the PMD or its summands}. Our main structural theorem is the following.
 \begin{theorem}[PMD Structure]\label{thm: structural}
 For all $n, k \in \mathbb{N}$, and all $\ve >0$, a $(n, k)$-Poisson multinomial random vector is $\ve$-close, in total variation distance, to the sum of a discretized multidimensional Gaussian and an independent $({{\rm poly}}(k/\ve),k)$-Poisson multinomial random vector.
 \end{theorem}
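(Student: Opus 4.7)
My plan is to establish Theorem~\ref{thm: structural} by producing, for any $(n,k)$-PMD $X = \sum_{i=1}^n X_i$, a partition $[n] = I \sqcup J$ with $|J| = \mathrm{poly}(k/\epsilon)$, such that $Z := \sum_{i \in I} X_i$ is $\epsilon$-close in total variation distance to a discretized multidimensional Gaussian $G$ via the Valiant--Valiant CLT (Theorem~\ref{thm:val}), while $Y := \sum_{i \in J} X_i$ is by construction a $(|J|,k)$-PMD. Because $Y$ and $Z$ are sub-sums over disjoint subsets of independent summands, they are independent; sampling $G$ with fresh randomness independent of $Y$ gives $d_{TV}(X, Y+G) \le d_{TV}(Z, G) \le \epsilon$, which is the desired form.

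The key technical step is choosing $J$ so that the covariance matrix $\Sigma_Z := \sum_{i \in I} \mathrm{Cov}(X_i)$, restricted to $\mathbf{1}^\perp$, has minimum eigenvalue at least a threshold $T = \mathrm{poly}(k/\epsilon)$ sufficient for the Valiant--Valiant CLT to yield $\epsilon$-accuracy on $Z$. Since peeling can only shrink $\Sigma_Z$, the partition must be ``frugal.'' I would: (i) discretize the simplex $\Delta^{k-1}$ of per-summand distributions to a grid of $\mathrm{poly}(k/\epsilon)$ points, coupling each $X_i$ to a quantized $\widetilde X_i$ at acceptable total TV cost; (ii) bucket summands by quantized $p_i$; (iii) for each bucket of size $\ge 2M$ (where $M := \mathrm{poly}(k/\epsilon)$) retain exactly $M$ representatives in $J$ and place the rest in $I$, while placing smaller buckets entirely in $J$. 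This yields $|J| \le 2M \cdot (\text{number of buckets}) = \mathrm{poly}(k/\epsilon)$. For every direction $v \perp \mathbf{1}$ with $v^\top \Sigma v \ge 2T$, each large bucket's contribution to $v^\top \Sigma_Z v$ is at least a $1/2$ fraction of its contribution to $v^\top \Sigma v$, so $v^\top \Sigma_Z v \ge T$.

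Directions with $v^\top \Sigma v < 2T$ correspond to a near-singular subspace $V \subseteq \mathbf{1}^\perp$ of dimension at most $k-1$ along which $X$ is concentrated on only $\mathrm{poly}(kT/\epsilon)$ integer lattice values. On $V$, a discretized Gaussian may poorly approximate $Z$, so I would enlarge $Y$ to absorb this structure: by adding to $J$ a further $\mathrm{poly}(k/\epsilon)$ summands drawn from the buckets responsible for $V$ (those near vertices of $\Delta^{k-1}$), and by arranging $G$ to have support transverse to $V$, $Y$ carries the lattice-discrete behavior of $X$ on $V$ while $G$ handles $V^\perp$.

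\emph{Main obstacle.} The thorniest technical point is handling near-vertex buckets---clusters of summands whose distributions are concentrated on a single $e_j$: their per-summand covariances shrink as $p_{i,j} \to 1$, so $\Sigma_Z$ can remain nearly singular regardless of bucket size, and these are precisely the summands that populate the singular subspace $V$. The resolution I envision is a finer decomposition of each near-vertex summand via the coupling $X_i = B_i \cdot e_j + (1-B_i) \cdot X_i'$ with $B_i \sim \mathrm{Bern}(p_{i,j})$ and $X_i'$ the residual distribution on $[k] \setminus \{j\}$; the residual contributes a PMD on a lower-dimensional face, to which the construction recurses. Controlling the recursion depth and keeping the cumulative $J$-budget at $\mathrm{poly}(k/\epsilon)$ across the $k-1$ possible levels of recursion is the main quantitative bottleneck, and requires choosing the quantization mesh, the bucket size threshold $M$, and the covariance threshold $T$ in a carefully balanced way.
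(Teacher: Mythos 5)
Your high-level decomposition---peel off a $\mathrm{poly}(k/\ve)$-size sub-PMD $Y$ and approximate the remainder $Z$ by a discretized Gaussian via Theorem~\ref{thm:val}---is the right shape, but there is a concrete gap that your plan does not address and that forces the paper to a substantially more elaborate construction.

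The Valiant--Valiant bound in Theorem~\ref{thm:val} is not of the form ``error $\le \mathrm{poly}(k)/\s^{1/3}$''; it is
$\frac{k^{4/3}}{\s^{1/3}}\cdot O\!\left((\log n)^{2/3}\right)$,
where $n$ is the number of summands in the GMD. If $Z = \sum_{i\in I}X_i$ contains $n - \mathrm{poly}(k/\ve)$ summands, the $\log n$ factor in this bound is unbounded as $n\to\infty$, so no threshold $T = \mathrm{poly}(k/\ve)$ on the minimum eigenvalue of $\S_Z$ can possibly guarantee $\ve$-accuracy: you would need $\s \gtrsim k^4 \log^2 n / \ve^3$, which depends on $n$. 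Your plan (``choose $T = \mathrm{poly}(k/\ve)$ sufficient for the CLT'') implicitly assumes the error bound is $n$-free, and it is not. This is precisely the ``logarithmic cost in $n$'' the paper singles out as a main obstacle, and why it does not apply the CLT once to all of $I$. Instead it further partitions the constituent CRVs into sub-collections $S_i^{\mathcal I}$ by support and then buckets those by size into $B^l$ with $|B^l| \approx l^{\g}t$, so that each application of Theorem~\ref{thm:val} has $\log(\text{size})$ comparable to a polylog of the bucket's own minimum eigenvalue, and the errors across buckets form a convergent series (Lemmas~\ref{lem:densebin},~\ref{lem:sparsebin},~\ref{lem:singlestruct}). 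This produces many discretized Gaussians, which then must be merged (Lemma~\ref{lem:merge}) across buckets and across the $k$ choices of pivot (Lemmas~\ref{lem:swaptv},~\ref{lem:swapvar})---a whole step your plan never needs because you apply the CLT only once, but that step is unavoidable once you bucket.

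Two further issues. First, your proposed simplex-quantization ``coupling each $X_i$ to a quantized $\widetilde X_i$ at acceptable total TV cost'' is not sound as stated: replacing each of $n$ summands by a $\d$-grid version costs $\Omega(n\d)$ in TV in the worst case, which is not bounded by $\ve$ for fixed $\d$. The paper's Lemma~\ref{lem:round} avoids this by a coherent ``trickle-down'' rounding that preserves the aggregate mean in each light coordinate and compares rounded and unrounded distributions via a Poisson coupling, which is the delicate part of Step 1. Second, your recursion on near-vertex summands is a reasonable instinct but is left entirely unquantified; the paper sidesteps recursion altogether by rounding all nonzero parameters up to at least $c = \mathrm{poly}(\ve/k)$, after which Proposition~\ref{prop:eigcov} gives a per-CRV variance lower bound of $c/k$ in every live direction, making the bucketing analysis clean. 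To repair your proposal you would essentially have to rediscover the size-bucketing to tame $\log n$, and at that point you would also need the Gaussian-merging machinery; so the gap is not a local detail but forces the full structure of the paper's argument.
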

\noindent By introducing the independent $({\rm poly}(k/\ve),k)$-PMD, our structural result side-steps the degradation of the CLT bound of~\cite{ValiantV11} with $\log n$ and the smallest eigenvalue of the PMD's covariance matrix, correcting it to any desired approximation $\ve$. Interestingly, there may be directions where the variance of the discretized Gaussian used in our result may be arbitrarily far from that of the approximated PMD. The  sparse PMD added to the Gaussian serves to correct the variance in those directions, but does so in a correlated manner across several directions.  Moreover, while \cite{ValiantV11} discretize their approximating multidimensional Gaussian to the closest lattice point, our discretization is more faithful to the structure of its covariance matrix; see Definition~\ref{def:gaussian}.  We provide more intuition about our structural result in Section~\ref{sec:approach}, where we also outline its proof. A more detailed proof of Theorem~\ref{thm: structural} appears in Section~\ref{sec:structurepmd} and a more detailed statement is given as Theorem~\ref{thm:struct}.

%


\paragraph{Covers for PMDs} Building covers for $(n,k)$-PMDs was pursued in~\cite{DaskalakisP08,DaskalakisP15} as a means to develop approximation algorithms for Nash equilibria in anonymous games. These are games where $n$ players share the same action set, say $\{1,\ldots,k\}$, and each player's utility depends on their own choice of action as well as the distribution of how many of the other players choose each of the available actions, but players' utility functions may otherwise be different. It was shown that proper $\ve$-covers, in total variation distance, of $(n,k)$-PMDs\footnote{An $\ve$-cover ${\cal F}_\ve$ of a set of distributions ${\cal F}$ is called {\em proper} iff ${\cal F}_\ve \subseteq {\cal F}$.} imply approximation algorithms for Nash equilibria in these games, whose complexity scales with the size of the cover. Intuitively, this is because switching from a mixed Nash equilibrium to a mixed strategy profile with the same distribution of how many players choose each action does not affect players' payoffs by more than~$\ve$. 

The covers for $(n,k)$-PMDs obtained in the anonymous games papers cited above have size: 
$$n^{O\left(2^{k^2}\cdot \left({f(k) \over \ve} \right)^{6 \cdot k}\right)}\text{, where }f(k) \le 2^{3k-1} k^{k^2+1} k!$$
Such covers are of theoretical interest, their interesting feature being that the size is polynomial in~$n$. Indeed, the standard discretization of the parameters of a PMD's constituent vectors results in covers of size exponential in $n$, so a more delicate ``global'' discretization is needed to obtain covers whose size is polynomial in $n$.

Besides providing an asymptotically smaller search space for Nash equilibria in anonymous games, or any other optimization problem over PMDs,  the polynomial rather than exponential dependence of the cover size on $n$ has direct consequences to the learnability of these distributions; see Theorem~\ref{thm:tournament} (from~\cite{DaskalakisK14}) and~\cite{AcharyaJOS14b} for a similar result, which improve a long line of similar results in the probability literature~\cite{DevroyeL01}. In particular, a cover of polynomial size implies directly that these distributions can be learned from a number of samples logarithmic in $n$, despite their support being polynomial in~$n$. Motivated by such applications of covers to algorithms and learning we use our structural result to obtain an improved cover theorem.

 \begin{theorem}[PMD Covers]\label{thm: cover}
 For all $n, k \in \mathbb{N}$, and $\ve >0$, there exists an $\ve$-cover, in total variation distance, of the set of all $(n, k)$-PMDs whose size is
 $$n^{k^2} \cdot \min\left\{ 2^{{{\rm poly}(k/\ve)}}, 2^{O(k^{5k} \cdot \log^{k+2} (1/\ve))}\right\}.$$
 \end{theorem}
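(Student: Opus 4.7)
The plan is to apply the structural theorem (Theorem~\ref{thm: structural}) to reduce the problem of covering arbitrary $(n,k)$-PMDs to two simpler covering tasks, and then take the product cover. By Theorem~\ref{thm: structural}, every $(n,k)$-PMD is $\ve/2$-close in total variation distance to a convolution $G + Y$, where $G$ is a discretized multidimensional Gaussian (parameterized by a mean vector $\mu \in \reals^{k-1}$ and a $(k-1)\times(k-1)$ covariance matrix $\Sigma$) and $Y$ is an independent $(m,k)$-PMD with $m=\poly(k/\ve)$. Since convolution is a contraction for TV distance, it suffices to cover the Gaussian piece $G$ and the sparse PMD piece $Y$ separately, each to within roughly $\ve/4$, and take all pairwise convolutions.

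\emph{Covering the Gaussians.} The parameters $(\mu,\Sigma)$ lie in a bounded set: $\mu\in[0,n]^{k-1}$ and $\Sigma$ has $\binom{k}{2}$ independent entries, each bounded in magnitude by $n$ (each coordinate has variance at most $n/4$ and Cauchy--Schwarz controls the covariances). The goal is to discretize these $O(k^2)$ parameters on grids fine enough that the TV distance between the corresponding discretized Gaussians changes by at most $\ve/4$. Using the discretization of Definition~\ref{def:gaussian}, which is adapted to $\Sigma$'s spectrum rather than being a fixed lattice, a grid spacing of order $\poly(\ve/k)$ in appropriate norms should suffice. The resulting count is $(n\cdot\poly(k/\ve))^{O(k^2)} = n^{O(k^2)}\cdot 2^{O(k^2\log(k/\ve))}$, whose second factor will be absorbed into the sparse-PMD cover below.

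\emph{Covering the sparse PMDs.} For $Y$ I build two covers and retain the smaller. The direct approach discretizes each of the $m(k-1)=\poly(k/\ve)$ probability parameters to resolution $\poly(\ve/(mk))$; a hybrid argument across the $m$ summands then preserves the law to within $\ve/4$ in TV, giving a cover of size $2^{\poly(k/\ve)}$. The alternative approach invokes a more refined cover for $(m,k)$-PMDs---essentially the earlier Daskalakis--Papadimitriou construction, or a moment-matching argument, applied with $n$ replaced by $m=\poly(k/\ve)$. The polynomial dependence on $m$ there translates into a merely logarithmic dependence on $1/\ve$, yielding a cover of size $2^{O(k^{5k}\log^{k+2}(1/\ve))}$.

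\emph{Combining and main obstacle.} Taking the product of the Gaussian and sparse-PMD covers and composing with Theorem~\ref{thm: structural} via the triangle inequality yields the claimed bound $n^{k^2}\cdot\min\bigl\{2^{\poly(k/\ve)},\,2^{O(k^{5k}\log^{k+2}(1/\ve))}\bigr\}$. The principal technical difficulty is the Gaussian discretization: obtaining an $n^{O(k^2)}$ (and not worse) dependence requires a careful continuity analysis of the discretized Gaussian jointly in $\mu$ and $\Sigma$, particularly in directions where $\Sigma$ is nearly singular. In such directions the sparse PMD $Y$ must carry the bulk of the variance, so imprecision in the corresponding eigendirections of $\Sigma$ has to be compensated by re-tuning $Y$; the Gaussian and sparse-PMD covers are therefore genuinely entangled, and this interplay---rather than the counting arguments per se---is the delicate part of the proof.
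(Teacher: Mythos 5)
Your high-level plan is the same as the paper's: invoke Theorem~\ref{thm: structural} to reduce to covering a discretized Gaussian and a $(\poly(k/\ve),k)$-PMD, cover each factor separately, and take the product cover. The paper likewise builds two covers (Lemma~\ref{lem:nonsparsecover} via gridding, Lemma~\ref{lem:sparsecover} via a multivariate moment-matching argument leaning on Roos's Krawtchouk expansions \cite{Roos02}), and takes the minimum. So structurally you are on the right track.

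However, the ``principal technical difficulty'' you identify is not one the paper has to face, and your description of it points to a gap in how you are reading the structural theorem. You worry that in directions where $\Sigma$ is nearly singular the Gaussian grid must be coupled to the choice of the sparse PMD $Y$, so that the two covers become ``genuinely entangled.'' In fact Theorem~\ref{thm:struct} is stated precisely to preclude this: it guarantees that within each block of the structure-preserving rounding, the minimum \emph{nonzero} eigenvalue of $\Sigma$ is at least $\frac{tc}{2k^4}=\poly(\ve/k)$. Directions where the Gaussian is nearly singular simply do not occur---either a direction is entirely outside the Gaussian's support (so that block is absent and $Y$ alone handles it), or the variance there is bounded below by $\poly(\ve/k)$. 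Consequently the two covers can be built independently and combined by a plain triangle inequality; no re-tuning of $Y$ against $\Sigma$'s spectrum is needed. This eigenvalue floor is also exactly what lets the paper grid the Gaussian with granularity $\poly(\ve/k)/n$ (on the Cholesky factors, to stay inside the PSD cone) and land at $n^{O(k^2)}$ without blow-up. Without invoking that lower bound explicitly, your Gaussian-covering step as written does not close, since Proposition~\ref{prop:gaussapprox}-type bounds degrade as the minimum eigenvalue shrinks. Once you use the eigenvalue guarantee, the entanglement you describe disappears and the counting goes through as in the paper.

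One smaller remark: your ``alternative approach'' for the sparse PMD correctly gestures at moment matching, but the relevant tool is Roos's expansion of a generalized multinomial in derivatives of a plain multinomial (Lemmas~\ref{lem:approximatorcoefficients}--\ref{lem:approximator}), combined with a bucketing of the $\poly(k/\ve)$ CRVs into $(4ek^3)^k$ groups to make the expansion converge; this is what produces the exponent $k^{5k}\log^{k+2}(1/\ve)$. A reference to ``the earlier Daskalakis--Papadimitriou construction applied with $n$ replaced by $m$'' is not quite right, since that cover has size polynomial in its size parameter, and $\poly(m)=\poly(k/\ve)$ would only give the first, larger term in the minimum, not the second.
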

 
We make a few remarks about our cover. First, the cover is non-proper, containing distributions that are of the form specified in Theorem~\ref{thm: structural}, i.e. are convolutions of a discretized Gaussian and a PMD. Moreover, it is straightforward to see that any cover has size at least $n^{\Omega(k)}$ and at least $(1/\ve)^{\Omega(k)}$. For the first lower bound, count the number of $(n,k)$-PMDs whose summands are deterministic. For the second, count the number of $(1,k)$-PMDs whose probabilities are integer multiples of $\ve$. So, for fixed $k$, our bound has the right qualitative dependence on $n$ (namely polynomial), and a near-right dependence on $1/\ve$ (namely quasi-polynomial rather than polynomial). Moreover, it obtains the same qualitative dependence on $n$ and $\ve$ as the $k=2$ cover of~\cite{DaskalakisP09,DaskalakisP14}, namely polynomial in $n$ and quasi-polynomial in $1/\ve$.
 
\paragraph{Learning PMDs} In view of tools for hypothesis selection from a cover (see, i.e., Theorem~\ref{thm:tournament}), our cover theorem directly implies  
that $(n, k)$-PMDs can be learned from $O(k^{5k} \cdot \log n \cdot \log^{k+2}(1/\ve)/\ve^2)$ samples. These are near-optimal in terms of~$\ve$, as $\Omega(k/\ve^2)$ samples are necessary even for learning a $(1,k)$-PMD. We show that the dependence on $n$ can be completely removed from the learner, generalizing the results on Poisson Binomial Distributions~\cite{DaskalakisDS12b}.

\begin{theorem}[PMD Learning]\label{thm:PMD learning}
For all $n, k \in \mathbb{N}$ and $\ve>0$, there is a learning algorithm for $(n,k)$-PMDs with the following properties: Let $X=\sum_{i=1}^n X_i$ be any $(n,k)$-Poisson multinomial random vector. The algorithm uses 

{$$\min\left\{ O(k^{5k} \cdot \log^{k+2}(1/\ve)/\ve^2), {{\rm poly}(k/\ve)}\right\}$$}
samples from $X$, runs in time\footnote{We work in the standard ``word RAM'' model in which basic arithmetic operations on $O(\log n)$-bit integers are assumed to take constant time.} 
{$$\min\left\{2^{O(k^{5k} \cdot \log^{k+2} (1/\ve))},2^{{{\rm poly}(k/\ve)}}\right\},$$}
and with probability at least $9/10$ outputs a (succinct description of a) random vector $\tilde{X}$ such that $\dtv(X, \tilde{X}) \le \ve$.
\end{theorem}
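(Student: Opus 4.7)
The plan is to combine the cover of Theorem~\ref{thm: cover} with the hypothesis-selection procedure of Theorem~\ref{thm:tournament}, but to first use a small batch of samples to estimate the mean and covariance of $X$. This moment-estimation phase will let us restrict attention to a much smaller sub-cover---stripping away the $n^{k^2}$ factor in the cover size---and will thereby make the final sample complexity independent of~$n$.

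First, draw $\tilde{O}_k(1/\ve^2)$ samples and compute empirical estimates $\hat\mu,\hat\Sigma$ of $\mu:=\expect{X}$ and $\Sigma:=\mathrm{Cov}(X)$. Since each summand is supported on $\{e_1,\ldots,e_k\}$ and thus has coordinates in $\{0,1\}$, coordinate-wise Chernoff bounds for $\hat\mu$ together with a matrix-Bernstein argument for $\hat\Sigma$ give, with probability at least $99/100$, closeness to $(\mu,\Sigma)$ at an accuracy sufficient to absorb the discretization losses in the cover. Next, build a cover $\mathcal{C}$ of only those $(n,k)$-PMDs whose mean and covariance are close to $(\hat\mu,\hat\Sigma)$. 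By Theorem~\ref{thm: structural}, every such PMD is $\ve$-close to a convolution of a discretized multidimensional Gaussian $G$ (parameterized essentially by the PMD's own mean and covariance, as per Definition~\ref{def:gaussian}) with a sparse $(\poly(k/\ve),k)$-PMD $P$. Fixing $(\hat\mu,\hat\Sigma)$ pins down $G$ up to the cover's Gaussian discretization, so only the parameters of $P$ need to be enumerated; discretizing those parameters on an $\ve$-net of $[0,1]$ yields
\[ |\mathcal{C}| \;=\; \min\Bigl\{\, 2^{\poly(k/\ve)},\ 2^{O(k^{5k}\log^{k+2}(1/\ve))}\,\Bigr\}, \]
precisely the factor of the cover of Theorem~\ref{thm: cover} that does not depend on $n$.

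Finally, feed $\mathcal{C}$ to the tournament procedure of Theorem~\ref{thm:tournament}. It uses $O(\log|\mathcal{C}|/\ve^2)=\min\{\poly(k/\ve),\,O(k^{5k}\log^{k+2}(1/\ve)/\ve^2)\}$ additional samples, runs in time polynomial in $|\mathcal{C}|$, and with probability at least $9/10$ returns $\tilde X\in\mathcal{C}$ satisfying $\dtv(X,\tilde X)\leq O(\ve)$. Rescaling $\ve$ by a constant and union-bounding over the failure events of the moment-estimation and tournament stages yields the claimed guarantee.

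The main obstacle, and the bulk of the argument, is constructing the restricted cover in the middle step: one has to verify that the cover construction underlying Theorem~\ref{thm: cover} is compatible with conditioning on approximate moments---namely, that the $n$-dependent discretization of the PMD's global mean and covariance can be replaced by the empirical estimates $\hat\mu,\hat\Sigma$ at a cost of only $O(\ve)$ in total variation. Concretely, one traces the proof of Theorem~\ref{thm: structural} to confirm that the Gaussian component of the approximation depends on $(\mu,\Sigma)$ alone, and that an $O(\ve)$ perturbation of $(\mu,\Sigma)$ induces only an $O(\ve)$ total-variation perturbation of the discretized Gaussian (using the fact that convolving with the sparse PMD $P$ provides a nontrivial lower bound on variance in every direction, so that the structured discretization of Definition~\ref{def:gaussian} remains well-defined and robust).
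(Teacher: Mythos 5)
The high-level plan is the same as the paper's (use moment estimates of $X$ to avoid enumerating $n$-dependent Gaussian parameters, then run the tournament of Theorem~\ref{thm:tournament}), but the middle step of your argument has a genuine gap: you assert that ``fixing $(\hat\mu,\hat\Sigma)$ pins down $G$ up to the cover's Gaussian discretization,'' so that only the sparse PMD $P$ needs to be enumerated. That is not true. The structural theorem gives $\dtv(X, G + P)\le\ve$, which yields only a weak spectral relation between $\Sigma$ and $\Sigma_G$: for all $x$, $|x^T\Sigma x - x^T\Sigma_G x|\le \ve\, x^T\Sigma x + \mathrm{poly}(k/\ve)$, because the sparse component can perturb the variance by up to $\mathrm{poly}(k/\ve)$ in any direction. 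In directions where $x^T\Sigma_G x$ is small or zero (which may occur, as Theorem~\ref{thm:struct} only bounds the minimum \emph{non-zero} eigenvalue of each block, and blocks may omit coordinates entirely), an $O(\ve)$-accurate estimate of $\Sigma$ tells you essentially nothing about $\Sigma_G$. Consequently the discretized Gaussian $G$ is \emph{not} robust to $O(\ve)$ perturbations of $(\mu,\Sigma)$ in TV, contrary to your closing remark; the sparse component $P$ does not supply a variance lower bound for $G$, it only corrects the variance of the combined distribution. The paper overcomes exactly this difficulty with Lemma~\ref{lem:psdcover}, which constructs an $(k/\ve)^{O(k^2)}$-size cover of positive semidefinite matrices spectrally close to $\hat\Sigma$ (up to the additive $\mathrm{poly}(k/\ve)$ slack), guaranteed to contain a matrix close to $\Sigma_G$ via a Courant--Fischer argument on the eigenvalues; it also enumerates the block partition ($k^k$ guesses) and the per-block mean totals ($(tk^2+1)^k$ guesses). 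These steps cannot be skipped.

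A smaller but real issue: your moment estimation via coordinate-wise Chernoff and matrix Bernstein would give additive accuracy, which is insufficient --- the downstream argument needs the directional multiplicative guarantee $|y^T(\hat\Sigma-\Sigma)y|\le\ve\,y^T\Sigma y$ for all $y$ simultaneously. The paper obtains this (Lemma~\ref{lem:allestimate}) by bounding the excess kurtosis of every one-dimensional projection of the PMD and applying Chebyshev in $k^2$ carefully chosen directions determined by the eigendecomposition of $\Sigma$, then extending to all directions via Lemma~\ref{lem:1dtokdestimate}. One must also account for the effect of the rounding step on $(\mu,\Sigma)$ (Lemma~\ref{lem:gmd-rounding-distance}). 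In short, your intuition for why the $n$-dependence can be removed is correct, but the assertion that the empirical moments determine the Gaussian component is false, and that is precisely where most of the work in the paper's proof lies.
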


\paragraph{Additional Results: Learning $k$-SIIRVs} 

A $(n,k)$-SIIRV is the sum of $n$ independent (single-dimensional) random variables supported on $\{0,\ldots,k-1\}$. SIIRVs generalize Poisson Binomial distributions, which correspond to the case $k=2$. At the same time, SIIRVs can be viewed as projections of PMDs onto the vector $(0,1,\ldots,k-1)$. In particular, if $X$ is a $(n,k)$-SIIRV, there exists a $(n,k)$-Poisson multinomial random vector $Y$, such that $X=(0,1,\ldots,k-1)^{\rm T} \cdot Y$.

Recent work has established that $(n,k)$-SIIRVs can be learned from ${\rm poly}(k/\ve)$ samples, independent of $n$, when even learning a $(1,k)$-SIIRV already requires $\Omega(k/\ve^2)$ samples~\cite{DaskalakisDOST13}. 
A question arising from this work is finding the optimal dependence of the sample complexity on~$\ve$. 
Demonstrating the expressive power of PMDs, as a corollary of our cover result, we show that the optimal dependence is actually $\tilde{O}_k(1/\ve^2)$.
\begin{theorem}[SIIRV Learning]\label{thm:SIIRV learning}
For all $n, k \in \mathbb{N}$ and $\ve>0$, there is a learning algorithm for $(n,k)$-SIIRVs with the following properties: Let $X=\sum_{i=1}^n X_i$ be any $(n,k)$-SIIRV. The algorithm uses 
{$ k^{5k} \cdot {O}(\log^{k+2}(1/\ve)/\ve^2)$}
samples from $X$, 
runs in time {$2^{O( k^{5k}\cdot \log^{k+2}(1/\ve))}$},
and with probability at least $9/10$ outputs a 
random vector $\tilde{X}$ such that $\dtv(X, \tilde{X}) \le \ve$.
\end{theorem}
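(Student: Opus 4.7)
The plan is to construct an $\ve$-cover of the set of $(n,k)$-SIIRVs whose size is $2^{O(k^{5k}\log^{k+2}(1/\ve))}$, \emph{independent of $n$}, and then apply standard tournament-based hypothesis selection (Theorem~\ref{thm:tournament}) on that cover using $O(\log|\mathcal{C}|/\ve^2)$ samples. The starting observation is that every $(n,k)$-SIIRV $X=\sum_i X_i$ can be written as $v^T Y$, where $v=(0,1,\ldots,k-1)^T$ and $Y=\sum_i e_{X_i+1}$ is a $(n,k)$-PMD; since total variation distance is non-increasing under measurable maps, projecting any $\ve$-cover of PMDs through $y\mapsto v^T y$ yields an $\ve$-cover of SIIRVs of at most the same size. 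Directly invoking Theorem~\ref{thm: cover} produces a cover of size $n^{k^2}\cdot 2^{O(k^{5k}\log^{k+2}(1/\ve))}$, which would leave an undesirable $\log n$ term in the tournament sample complexity; the main work is to eliminate the $n^{k^2}$ factor.

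To do so, I would use Theorem~\ref{thm: structural} to decompose: every $(n,k)$-PMD $Y$ is $\ve$-close to $G+Z$, with $G$ a discretized multivariate Gaussian and $Z$ a $(\poly(k/\ve),k)$-PMD. Projecting, every $(n,k)$-SIIRV is $\ve$-close to $v^T G + v^T Z$, a convolution of an (essentially) one-dimensional integer-valued Gaussian with parameters $(\mu,\sigma^2)=(v^T\mu_G, v^T \Sigma_G v)$ and a $(\poly(k/\ve),k)$-SIIRV. Applying Theorem~\ref{thm: cover} to $(n',k)$-PMDs with $n'=\poly(k/\ve)$ and projecting through $v^T$ yields an $\ve$-cover for the component $v^T Z$ of size $(\poly(k/\ve))^{k^2}\cdot 2^{O(k^{5k}\log^{k+2}(1/\ve))}=2^{O(k^{5k}\log^{k+2}(1/\ve))}$, which is $n$-independent. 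For the one-dimensional Gaussian piece, I would draw $O(1/\ve^2)$ preliminary samples from $X$ and use Chebyshev's inequality to localize $\mu$ within $O(\ve\sigma)$ and $\sigma^2$ within $O(\ve\sigma^2)$ of their empirical estimates with high probability; discretizing a window of these widths at the correct granularity contributes only a $\poly(1/\ve)$ multiplicative factor to the cover, since a $1$D Gaussian's total variation distance changes by at most $O(\ve)$ under such perturbations. Concatenating all candidate pairs produces an adaptive cover $\mathcal{C}$ of size $2^{O(k^{5k}\log^{k+2}(1/\ve))}$.

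Finally, I would invoke the tournament hypothesis selection procedure with $O(\log|\mathcal{C}|/\ve^2)=k^{5k}\cdot O(\log^{k+2}(1/\ve)/\ve^2)$ additional samples; it runs in time polynomial in $|\mathcal{C}|$, i.e.\ $2^{O(k^{5k}\log^{k+2}(1/\ve))}$, and returns $\tilde X$ with $\dtv(X,\tilde X)\leq \ve$ with probability $\geq 9/10$. The main obstacle is that naive parameter ranges for $(\mu,\sigma^2)$ scale polynomially with $n$, and without the preliminary empirical step the $\log n$ dependence would sneak back in; the crucial fact that makes the preliminary estimation cheap is that SIIRVs are \emph{one-dimensional}, so we only need to localize the scalar mean and variance of $X$ (rather than a $k$-dimensional mean and $k\times k$ covariance), and $O(1/\ve^2)$ samples suffice by Chebyshev on both. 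A secondary technical subtlety is verifying that $v^T G$ is well-approximated by a genuine one-dimensional discretized Gaussian when $G$ is discretized as in Definition~\ref{def:gaussian}; any approximation error from this reduction can be absorbed into the additive $(\poly(k/\ve),k)$-SIIRV component of the cover, at the cost of enlarging $\poly(k/\ve)$ by a polynomial factor, which does not affect the final bound.
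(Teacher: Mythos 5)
Your high-level plan---reduce to a small cover and then run the tournament---matches the paper's strategy, but your route to the cover has a genuine gap at exactly the step you flag as a ``secondary technical subtlety.'' The claim that the error in approximating $v^T G$ by a genuine one-dimensional discretized Gaussian ``can be absorbed into the additive $(\poly(k/\ve),k)$-SIIRV component'' is false in general, and the failure mode is the mod-structure that the paper explicitly warns about in Section~\ref{sec:approach}. Concretely, the structure-preserving rounding produces
$$v^T G \;=\; v_p\, n_B \;+\; \sum_{j \in B \setminus \{p\}} (v_j - v_p)\, \lfloor g_j \rceil,$$
a linear combination with integer coefficients $(v_j - v_p)$ of the discretized coordinates inside each block $B$. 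If those coefficients share a common divisor $d>1$ (e.g.\ $k=3$, $v=(0,1,2)$, a block consisting of the first and third coordinates with pivot the first gives $v^T G = 2\lfloor g_3 \rceil$, supported only on even integers), then $v^T G$ lives on a strict coset of $d\mathbb{Z}$. No one-dimensional discretized Gaussian does, so $\dtv(v^T G, \lfloor \mathcal{N}(v^T\mu_G, v^T\Sigma_G v)\rceil)$ can be $\Omega(1)$. Moreover, the discrepancy is not a bounded-support correction: it is a parity/residue constraint on a distribution with potentially huge variance, which cannot be repaired by convolving with a $(\poly(k/\ve),k)$-SIIRV. This is precisely why the paper does not project the PMD structural theorem through $v^T$ at all; instead it invokes Corollary~4.8 of \cite{DaskalakisDOST13} (Lemma~\ref{lem:siirvstruct}), which gives a decomposition $S \approx S_1 + c\,S_2$ where the scale $c \in \{1,\dots,k-1\}$ is kept explicit and $S_1$ carries the residue-mod-$c$ information. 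The learning algorithm then enumerates over $c$, learns the mod-$c$ part $S_1$ from the empirical residues, and learns the discretized Gaussian $S_2$ from the samples rounded down to multiples of $c$, giving only $k-1$ hypotheses plus one from the small-variance branch.

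A secondary issue: even granting a valid Gaussian component, estimating its $(\mu,\sigma^2)$ with Chebyshev on $O(1/\ve^2)$ empirical samples from $X$ is not robust. You only have $\dtv(X,\, v^TG + v^TZ)\leq\ve$, and total variation distance does not control moments, so the empirical mean and variance of $X$ need not be close to those of the Gaussian piece of the cover element. The paper addresses this by using the empirical median and a rescaled interquartile range (Lemmas~9 and~10 of \cite{DaskalakisK14}) in the proof of Lemma~\ref{lem:siirvheavy}, which are robust to $\ve$ perturbations in Kolmogorov distance. If you wanted to push your moment-based approach through, you would need the argument of Lemma~\ref{lem:gmd-rounding-distance}, which only applies when the variance is large relative to $1/\ve$; that in turn requires the explicit case split into small and large variance, which is again what the paper's structural lemma provides. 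In short, your cover-size accounting and the tournament step are fine, but the reduction of the Gaussian piece to a one-dimensional discretized Gaussian needs the explicit $S_1 + cS_2$ decomposition to survive the mod-structure, and the moment estimation needs robust statistics rather than raw Chebyshev.
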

Simultaneous work by Diakonikolas, Kane and Stewart \cite{DiakonikolasKS15} takes a direct approach to solving this problem.
Using Fourier-based methods, they give a polynomial-time algorithm which requires $\tilde O(k/\ve^2)$ samples, obtaining near-optimal dependence on both $k$ and $\ve$. 

\subsection{Approach} \label{sec:approach}

\paragraph{Structure} The multi-dimensional nature of PMDs poses challenges in understanding their structure. The projection of a $(n,k)$-Poisson multinomial random vector onto each standard basis vector is a $n$-Poisson Binomial random variable, i.e.~distributed as the sum of $n$ independent indicators. Depending on our choice of $\ve$, the latter may be $\ve$-close (in total variation distance) to a discretized Normal distribution (``heavy projection'') or a distribution whose essential support is a length $O(1/\ve^3)$ subinterval of $\{0,\ldots,n\}$ (``light projection'')~\cite{DaskalakisP14}. Intuitively, one would like to aggregate all heavy projections into a discretized multi-dimensional Gaussian and all light projections into a distribution of small support, independent of $n$. However, projections onto different standard basis vectors may be correlated, and they cannot be disentangled this simply.

In fact, even if all projections of a PMD onto the standard basis vectors are heavy---even if they have variance super-polynomial in $k/\ve$, it is still unclear whether the PMD can always be well approximated by a discretized multi-dimensional Gaussian. In particular, the multi-dimensional CLT of Valiant and Valiant~\cite{ValiantV11} (Theorem~\ref{thm:val}) does pay a penalty that scales with $\log n$.


Finally, projections onto non-standard basis vectors may behave more erratically. As we pointed out earlier, the projection of a $(n,k)$-PMD onto the vector $\vec{v}=(0,1,\ldots,k-1)$ is a $(n,k)$-SIIRV, which need not be log-concave or even unimodal, and could even exhibit ``mod-structure'' and be $n$-modal; think of the distribution of $Y+2 \cdot Z$ where $Z$ is sampled from a Binomial$(n,0.5)$ and $Y$ is a Bernoulli$(1/3)$. Whichever simpler distribution we identify to approximate a given $(n,k)$-PMD thus needs to respect the potential mod-structure that the PMD's projection onto $\vec{v}$, its permutations or other integral vectors may exhibit. 

%
%

Our analysis sidesteps the difficulties identified above by showing that, for all $\ve$, $n$, $k$, a $(n,k)$-Poisson multinomial random vector is  $\ve$-close to the sum of a discretized Gaussian and an independent $({\rm poly}(k/\ve),k)$-Poisson multinomial random vector. Roughly speaking, the Gaussian absorbs  the variance in the heavy dimensions, and explains the correlation between light and heavy dimensions, while the sparse PMD explains the remaining variance in the light dimensions. Of course, what dimensions are ``light'' and ``heavy'' in the above discussion depends on our desired approximation~$\ve$.

At the heart of our proof lies the aforecited CLT by Valiant and Valiant \cite{ValiantV11}, approximating a Poisson Multinomial by a discretized Gaussian. There are several issues with its application here: the accuracy of the approximation cannot be made an arbitrary $\ve$, but worse, it deteriorates (logarithmically) as we increase $n$ or decrease the minimum eigenvalue of the covariance matrix of the PMD. The main intuition behind our structural theorem and the main technical roadblock for its proof lies in avoiding paying these two penalties.

To mitigate the latter cost (corresponding to the smallest eigenvalue), we use a stripped down version of the trickle-down sampling procedure from~\cite{DaskalakisP08} to round the parameters of our given PMD. This allows us to shift the parameters of the PMD's constituent random vectors such that they are either equal to $0$ or $1$, or sufficiently far from $0$ or $1$. A coordinated ``rounding'' of these parameters combined with a coupling argument and single-dimensional Poisson approximations allow us to argue that the effect of the rounding is small in the total variation distance of the resulting PMD compared to the original PMD. Each constituent random vector in the resulting PMD now has decent variance in every axis direction where it has non-zero variance. Partitioning the PMD's constituent vectors into sets based on the axis directions where they have non-zero variance, we get that the minimum eigenvalue of each resulting sub-PMD is large in the span of these directions\ifnum\ieee=0 ; see Proposition~\ref{prop:eigcov}\fi.\footnote{Again, as pointed out earlier, when we refer to the eigenvalues of the covariance matrix of a PMD spanning a certain subspace, we always project the PMD onto a subspace of one dimension less, as otherwise the covariance matrix always has a $0$ eigenvalue since the distribution does not have full-dimensional support.} 
\ifnum\ieee=0
Details about this step are given in Section~\ref{sec:rounding}.
\fi

To avoid paying the logarithmic cost  in the value of $n$ (the number of summands) which appears in the CLT, we repeatedly partition and sort the random vectors into buckets. The sub-PMD corresponding to each bucket will have the property that the logarithm of the number of summands is negligible compared to the minimum eigenvalue of its covariance matrix, so that we can apply the central limit theorem from~\cite{ValiantV11}. We note that there will be a small number of  random vectors which do not fall into a bucket that has this property -- these leftover vectors result in the sparse Poisson Multinomial component in our structural result. 
\ifnum\ieee=0
Details about this step are given in Section~\ref{sec:vvclt}.
\fi

The above approximations result in a distribution comprising several discretized Gaussians and a sparse Poisson multinomial. We subsequently merge all component discretized Gaussians into a single distribution. It is well-known that the sum of two Gaussians is another Gaussian whose parameters are equal to the sum of the parameters of its two components. The same is not true for discretized Gaussians, and we must quantify the error induced by this merging operation.
\ifnum\ieee=0
More details are provided in Section~\ref{sec:merging}.
\fi

Our structural results are described further in Section~\ref{sec:structurepmd}.

\paragraph{Cover} We provide two covers for $(n,k)$-PMDs, which are advantageous for different regimes of $k$ and $\ve$. 
The first cover follows directly from Theorem \ref{thm:struct}, which gives a structural characterization of a PMD as the sum of an appropriately discretized Gaussian and a $(\poly(k/\ve),k)$-PMD.
We simply take an additive grid over all the parameters of this characterization to achieve a cover size which is polynomial in $n$ and exponential in $k$ and $1/\ve$.

Similar to \cite{DaskalakisP14}, we can reduce the dependence of the cover size to pseudo-polynomial in $1/\ve$, albeit at an increased cost in $k$.
This is done using a generalization of the moment matching techniques known for Poisson Binomial distributions.
At a high level, this avoids the naive gridding over all $(\poly(k/\ve),k)$-PMDs by filtering out the ones with unique ``moment profiles,'' which describe the first several moments of the distribution.
We prove that any two distributions with matching moment profiles will have small total variation distance by leveraging results by Roos on Krawtchouk approximations to PMDs \cite{Roos02}.

A further description of our cover results is provided in Section~\ref{sec:coverpmd}.

\paragraph{Learning} Our cover theorem (Theorem~\ref{thm: cover}) directly implies (using Theorem~\ref{thm:tournament}) that $(n,k)$-PMDs can be learned from $O(\log N /\ve^2)$ samples, where $N$ is the size of our cover. Given that $N$ is polynomial in $n$, the resulting sample complexity is logarithmic in $n$. To remove the dependence on $n$ from our sample complexity, we need to exploit not just the size but also the structure of the cover. 

In particular, we know from our structural characterization (Theorem~\ref{thm: structural}) that any $(n,k)$-Poisson Multinomial random vector is $\ve$-close to the sum of a discretized multi-dimensional Gaussian and an independent $({\rm poly}(k/\ve),k)$-PMD. The dependence of the cover size on $n$ is due to enumerating over a cover of discretized multi-dimensional Gaussians, as enumerating over $({\rm poly}(k/\ve),k)$-PMDs has no dependence on $n$. The challenge is this: given sample access to an unknown $(n,k)$-PMD can we zoom in to a smaller set of candidate discretized multi-dimensional Gaussians whose size is independent of $n$ and which suffice for the purposes of guaranteeing an approximation to the unknown PMD?

Let us start with an easier task. Suppose that our structural theorem decides that a $(n,k)$-PMD is $\ve$-close in total variation distance to a discretized multi-dimensional Gaussian. In this case, is it possible to recover the Gaussian from ${\rm poly}(k/\ve)$ samples from the PMD? Intuitively the answer should be ``yes,'' as learning a multi-dimensional Gaussian to within $\ve$ in total variation distance is feasible from  $O(k/\ve^2)$ samples. Only there are two complications. First, we are seeking to actually learn a discretized multi-dimensional Gaussian and, most importantly, we do not have sample access to the Gaussian, but a distribution that is $\ve$-close to it in total variation distance. The first complication becomes an issue when the covariance matrix of the Gaussian has minimum eigenvalue that does not scale with some ${\rm poly}(k/\ve)$, which may very well be the case. The second is more severe as it necessitates robust estimators for the moments of a (discretized) multi-dimensional Gaussian that are resilient to an arbitrary movement of $\ve$ probability mass. We are not aware of such estimators even for a (continuous) multi-dimensional Gaussian.

Despite these apparent issues, even in the simple case we are considering, the saving grace comes from a closer examination of the proof of our structural result. When our structural theorem deems a $(n,k)$-PMD approximable by a discretized multi-dimensional Gaussian, we can argue that the covariance matrices $\Sigma$ of the former and $\Sigma_G$ of the latter are spectrally close,  satisfying $|x^{\rm \tiny T} \Sigma x-x^{\rm \small T} \Sigma_G x| \le \ve \cdot x^{\rm \small T} \Sigma x$, for all $x$. So it suffices to learn the covariance matrix of the PMD to which we have direct sample access, thereby obviating the need for a robust estimator. Learning the covariance matrix of a PMD is feasible from ${\rm poly}(k/\ve)$ samples by bounding the kurtosis of any projection of the PMD  (Lemma~\ref{lem:allestimate}).

The bigger challenge is generalizing the approach to when our structural theorem deems a $(n,k)$-Poisson Multinomial random vector $X$ approximable by the sum of a discretized multi-dimensional Gaussian $G$ and a $({\rm poly}(k/\ve),k)$-Poisson Multinomial random vector $Y$. We can enumerate over the latter, but enumerating over the former is too expensive (i.e. will incur a dependence on $n$). So we have to learn it with sample access to $X$. Unfortunately, our spectral approximation is now much weaker. The covariance matrices $\Sigma$ of $X$ and $\Sigma_G$  of $G$ are now related as follows, for all $x$: $|x^{\rm \tiny T} \Sigma x-x^{\rm \small T} \Sigma_G x| \le \ve \cdot x^{\rm \small T} \Sigma x + {\rm poly}(k/\epsilon)$. Hence, for directions $x$ where the variance $x^{\rm \tiny T} \Sigma x$ of $X$ is small, this approximation is quite loose to just approximate $\Sigma_G$ with $\Sigma$.

Our approach is instead  to use samples from $X$ to get a handle on the spectrum of $\Sigma_G$. As before, by bounding the kurtosis of any projection of the PMD, we can produce an estimate $\hat{\Sigma}$ that approximates $\Sigma$ spectrally: for all $x$, $|x^{\rm \tiny T} \Sigma x-x^{\rm \small T} \hat{\Sigma} x| \le \ve \cdot x^{\rm \small T} \Sigma x$ (Lemma~\ref{lem:allestimate}). Then, using Courant minimax principle through the proof of our structural result, we can argue that the $i$-th eigenvalue $\lambda^G_i$ of $\Sigma_G$ and $\hat{\lambda}_i$ of $\hat{\Sigma}$ are related as follows: $|\lambda^G_i - \hat{\lambda}_i| \le O(\ve) \hat{\lambda}_i +{\rm poly}(k/\epsilon)$. So, using the eigenvalues of our learned $\hat{\Sigma}$, we can produce a small  cover for the eigenvalues of $\Sigma_G$. Unfortunately, the corresponding eigenvectors of $\Sigma_G$ and $\hat{\Sigma}$ need not be as closely related, and it is not clear how to grid over those as the ratio of the smallest to the largest eigenvalue may be polynomial in $n$. We show how to use the knowledge of the eigenvalues and the spectral relation between $\hat{\Sigma}$ and $\Sigma_G$ to produce a small cover over matrices $\hat{\Sigma}_G$ (and not eigenvectors) such that at least one matrix in the cover spectrally approximates our target $\Sigma_G$. The details are provided in 
\ifnum\ieee=0
Section~\ref{sec:Gaussian}.
\else
the appendix of the full version.
\fi 
At this point, we have a small cover over possible distributions $Y$ and a small cover over possible discretized multi-dimensional Gaussians. So we can select among these hypotheses using Theorem~\ref{thm:tournament}. 

Our learning algorithm is described in Section~\ref{sec:learningpmd}.
\nocite{AcharyaD15}

\section{Preliminaries}
\label{sec:prelim}
\subsection{Parameters}
\label{sec:params}
Throughout this paper, we will repeatedly refer to three key parameters, $c = c(\ve,k) = \poly(\ve/k)$, $t = t(\ve,k) = \poly(k/\ve)$, and $\g = O(1)$.
We set
$$c = \left(\frac{\ve^2}{k^5}\right)^{1+\d_c}, \qquad t = \left(\frac{k^{19}}{c\ve^6}\right)^{1 + \d_t}, \qquad \g = 6 + \d_\g,$$
for constants $\d_c, \d_t, \d_\g > 0$.

\subsection{Definitions}
We start by defining several of the distribution classes we will consider. 
First, and most importantly, we start with a formal definition of Poisson Multinomial Distributions.

\begin{definition}
A $k$-\emph{Categorical Random Variable} ($k$-CRV) is a random variable that takes values in $\{e_1,\dots,e_k\}$ where $e_j$ is the $k$-dimensional unit vector along direction $j$.
$\p(i)$ is the probability of observing $e_i$.
\end{definition}

\begin{definition}
An $(n,k)$-\emph{Poisson Multinomial Distribution} ($(n,k)$-PMD) is given by the law of the sum of $n$ independent but not necessarily identical $k$-CRVs.
An $(n,k)$-PMD is parameterized by a nonnegative matrix $\p \in [0,1]^{n \times k}$ each of whose rows sum to $1$ is denoted by $M^\p$, and is defined by the following random process:
for each row $\p(i,\cdot)$ of matrix $\p$ interpret it as a probability distribution over the columns of $\p$ and draw a column index from this distribution. Finally, return a row vector recording the total number of samples falling into each column (the histogram of the samples).
\end{definition}

We note that a sample from an $(n,k)$-PMD is redundant  -- given $k-1$ coordinates of a sample, we can recover the final coordinate by noting that the sum of all $k$ coordinates is $n$.
For instance, while a Binomial distribution is over a support of size $2$, a sample is $1$-dimensional since the frequency of the other coordinate may be inferred given the parameter $n$.
With this inspiration in mind, we define the Generalized Multinomial Distribution, which is the primary object of study in \cite{ValiantV11}. 

\begin{definition}
  A \emph{Truncated $k$-Categorical Random Variable} is a random variable that takes values in $\{0, e_1,\dots,e_{k-1}\}$ where $e_j$ is the $(k-1)$-dimensional unit vector along direction $j$, and $0$ is the $(k-1)$ dimensional zero vector.
  $\r(0)$ is the probability of observing the zero vector, and $\r(i)$ is the probability of observing $e_i$.
\end{definition}

\begin{definition}
  \label{def:GMD}
  An $(n,k)$-\emph{Generalized Multinomial Distribution} ($(n,k)$-GMD) is given by the law of the sum of $n$ independent but not necessarily identical truncated $k$-CRVs.
  A GMD is parameterized by a nonnegative matrix $\r \in [0,1]^{n \times (k-1)}$ each of whose rows sum to at most $1$ is denoted by $G^\r$, and is defined by the following random process:
  for each row $\r(i,\cdot)$ of matrix $\r$ interpret it as a probability distribution over the columns of $\r$ -- including, if $\sum_{j=1}^k \r(i,j) <1$, an ``invisible'' column $0$ -- and draw a column index from this distribution. Finally, return a row vector recording the total number of samples falling into each column (the histogram of the samples).
\end{definition}
  For both $(n,k)$-PMDs and $(n,k)$-GMDs, we will refer to $n$ and $k$ as the \emph{size} and \emph{dimension}, respectively.

We note that a PMD corresponds to a GMD where the ``invisible'' column is the zero vector, and thus the definition of GMDs is more general than that of PMDs.
However, whenever we refer to a GMD in this paper, it will explicitly have a non-zero invisible column.

While we will approximate the Multinomial distribution with Gaussian distributions, it does not make sense to compare discrete distributions with continuous distributions, since the total variation distance is always $1$.
As such, we must discretize the Gaussian distributions.
We will use the notation $\lfloor x \rceil$ to say that $x$ is rounded to the nearest integer (with ties being broken arbitrarily).
If $x$ is a vector, we round each coordinate independently to the nearest integer.

\begin{definition}
  The $k$-dimensional \emph{Discretized Gaussian Distribution} with mean $\m$ and covariance matrix $\S$, denoted $\lfloor\mathcal{N}(\m,\S) \rceil$, is the distribution with support $\mathbb{Z}^k$ obtained by picking a sample according to the $k$-dimensional Gaussian $\mathcal{N}(\m,\S)$, then rounding each coordinate to the nearest integer.
\end{definition}

As seen in the definition of an $(n,k)$-GMD, we have one coordinate which is equal to $n$ minus the sum of the other coordinates.
We define a similar notion for a discretized Gaussian.
However, we go one step further, to take care of when there are several such Gaussians which live in disjoint dimensions.
By this, we mean that given two Gaussians, the set of directions in which they have a non-zero variance are disjoint.
Without loss of generality (because we can simply relabel the dimensions), we assume all of a Gaussian's non-zero variance directions are consecutive, i.e., the covariance matrix is all zeros, except for a single block on the diagonal.
Therefore, when we add the covariance matrices, the result is block diagonal.
The resulting distribution is described in the following definition.

\begin{definition}
  \label{def:gaussian}
  The \emph{structure preserving rounding} of a multidimensional Gaussian Distribution takes as input a multi-dimensional Gaussian $\mathcal{N}(\m,\S)$ with $\S$ in block-diagonal form.
  It chooses one coordinate as a ``pivot'' in each block, samples from the Gaussian ignoring these pivots and rounds each value to the nearest integer.
  Finally, the pivot coordinate of each block is set by taking the difference between the sum of the means and the sum of the values sampled within the block.
\end{definition}

\section{Structure of PMDs}
\label{sec:structurepmd}
In this section, we show a structural result, stating that any $(n,k)$-PMD is close to the sum of an appropriately discretized Gaussian and a $(\poly(k/\ve),k)$-PMD:

\begin{theorem}\label{thm:struct}
  For parameters $c$ and $t$ as described in Section~\ref{sec:params}, every $(n,k)$-Poisson multinomial random vector is $\ve$-close to the sum of a Gaussian with a structure preserving rounding and a $(tk^2, k)$-Poisson multinomial random vector.
  For each block of the Gaussian, the minimum non-zero eigenvalue of $\S_i$ is at least $\frac{tc}{2k^4}$.
\end{theorem}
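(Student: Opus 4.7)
The plan is to follow the four-stage outline in Section~\ref{sec:approach}: (i) round the probabilities of the $n$ constituent $k$-CRVs so that every entry is either $0$ or at least $c$; (ii) partition the rounded summands into buckets by support profile and by size; (iii) apply the multidimensional CLT of \cite{ValiantV11} (Theorem~\ref{thm:val}) to each sufficiently large bucket to obtain a discretized Gaussian approximation; and (iv) merge the per-bucket discretized Gaussians into a single Gaussian with the structure preserving rounding of Definition~\ref{def:gaussian}, with the CRVs that fail to enter any large bucket forming the $(tk^2, k)$-PMD remainder.

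For the first step I would implement a stripped-down trickle-down sampling procedure in the spirit of \cite{DaskalakisP08}: any entry of the parameter matrix $\p$ strictly between $0$ and $c$ is shifted in a coordinated fashion to either $0$ or $c$ while preserving the row-sum-$1$ constraint. A coupling argument together with single-dimensional Poisson approximations along each coordinate bounds the total-variation cost of the rounding by $O(\ve)$. The payoff is that every surviving CRV contributes variance $\Omega(c)$ in each direction where it has support, so that any sub-PMD made of $m$ CRVs sharing a common ``support profile'' (the set of coordinates where they are nonzero) has minimum non-zero eigenvalue of its covariance matrix at least $\Omega(m c / k)$ on its support subspace.

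For the second and third steps I would group the rounded CRVs by their support profile, sort each group, and greedily cut it into consecutive buckets until each bucket is large enough that the minimum non-zero eigenvalue of its covariance matrix is at least $\frac{tc}{2k^4}$, matching the bound claimed in the statement. Leftovers from all groups are aggregated; for the parameters from Section~\ref{sec:params} one can show this aggregate has size at most $tk^2$ and thereby forms the sparse $(tk^2,k)$-PMD component. For each ``full'' bucket the minimum-eigenvalue lower bound dominates the $\log^{\Theta(1)} n$ and dimension-dependent factors in Theorem~\ref{thm:val}, so a per-bucket application of that CLT produces a discretized Gaussian whose error, after a union bound over the at most $\poly(n,k)$ buckets, contributes only $O(\ve)$ to the total.

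The main obstacle, and where the calibration of $c$, $t$, $\g$ from Section~\ref{sec:params} becomes crucial, is the final merging. Buckets with overlapping support profiles contribute to the same block of the overall covariance $\S$, while blocks arising from disjoint support profiles remain disjoint, giving the required block-diagonal structure with per-block minimum non-zero eigenvalue $\tfrac{tc}{2k^4}$. However, a convolution of several discretized Gaussians is not itself a discretized Gaussian, so I would have to show that replacing this convolution by a single Gaussian with the summed mean and covariance, then applying the structure preserving rounding, introduces only $O(\ve)$ additional total-variation error. I plan to do this block by block via a coupling between ``sum-then-round'' and ``round-then-sum,'' combined with an anti-concentration estimate that uses the minimum-eigenvalue lower bound on each block to control the probability mass shifted across lattice boundaries by the pivot-reconstruction step of Definition~\ref{def:gaussian}.
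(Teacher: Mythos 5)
Your outline matches the paper's high-level plan (round, bucket, apply the CLT of Theorem~\ref{thm:val}, merge), but your bucketing scheme has a gap that defeats the whole point of the theorem. You propose to cut each support-profile group greedily into consecutive buckets that are each \emph{just} large enough to meet the eigenvalue threshold $\tfrac{tc}{2k^4}$, i.e.\ buckets of size $\poly(k/\ve)$, and then apply Theorem~\ref{thm:val} once per bucket and merge. But then the number of buckets is $\Theta(n/\poly(k/\ve))$, and each application of Theorem~\ref{thm:val} costs a \emph{fixed} positive amount $\eta = \poly(\ve/k)$ in total variation (the error does not shrink as you add more buckets), so the accumulated CLT error is $\Theta(n\eta/\poly(k/\ve))$, which grows linearly in $n$; likewise each merge via Lemma~\ref{lem:merge} costs $\Theta(k/\s)$, and even accounting for the growing variance of the running convolution the total merging cost scales like $\sqrt{n}$. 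No choice of $c,t$ independent of $n$ makes this $O(\ve)$. The paper avoids this by never splitting a support-profile set: within each heaviest-coordinate class $S_i$ it forms at most $2^{k-1}$ sets $S_i^{\mathcal{I}}$, assigns each \emph{whole} set to a bucket $B^l$ according to whether $|S_i^{\mathcal{I}}|\in[l^\g t,(l+1)^\g t)$, and applies the CLT once per nonempty bucket. Thus the number of CLT applications and merges is bounded by a function of $k$ alone, and because the bucket variance grows like $l^\g t c/k$ while the log of its size grows only logarithmically in $l$, the per-bucket errors form a convergent series (Lemmas~\ref{lem:densebin} and~\ref{lem:singlestruct}).

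Two further points your sketch leaves unsupported. First, your leftover accounting: with one partial bucket per support profile you could have up to $2^k\cdot\poly(k/\ve)$ leftover CRVs, not $tk^2$; the paper gets $kt$ per class (hence $tk^2$ total) only via the column-pruning argument of Lemma~\ref{lem:sparsebin}, which removes just the CRVs touching columns with fewer than $t$ nonzero entries and applies the CLT to the remaining (possibly $2^kt$-sized) bucket $B^0$. Second, both the eigenvalue constant $\tfrac{tc}{2k^4}$ and the validity of applying Theorem~\ref{thm:val} hinge on representing each class as a GMD by dropping its \emph{heaviest} coordinate (so the invisible column has mass $\ge 1/k$, giving per-CRV eigenvalue $\ge c/k$ by Proposition~\ref{prop:eigcov}), and then on controlling how the minimum eigenvalue degrades when Gaussians with different left-out coordinates are combined: this needs the pivot-swap bound (Lemma~\ref{lem:swaptv}) and, crucially, the Variance Swap Lemma (Lemma~\ref{lem:swapvar}) showing the eigenvalue drops by at most $2k^3$ over the entire sequence of swap-and-merge operations rather than by $k^{O(k)}$. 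Your ``sum-then-round versus round-then-sum'' coupling is essentially Lemma~\ref{lem:merge}, but it does not address mismatched pivots or the eigenvalue tracking, which is exactly where the stated bound $\tfrac{tc}{2k^4}$ comes from.
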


\noindent There are three main steps in the proof of this theorem.
\begin{itemize}
  \item[Step 1]
First, we replace our $(n,k)$-PMD with one where all parameters are sufficiently far from $0$ and $1$, while still being close to the original in total variation distance.
To motivate this operation, we introduce one of our main tools in our approach, the central limit theorem of Valiant and Valiant \cite{ValiantV11}, which approximates an $(n,k)$-GMD by a discretized multivariate Gaussian.
\begin{restatable}[Theorem 4 from \cite{ValiantV10}]{theorem}{val}
\label{thm:val}
  Given a generalized multinomial distribution $G^\r$, with $k$ dimensions and $n$ rows, let $\m$ denote its mean and $\S$ denote its covariance matrix, then
  $$\dtv\left(G^\r,\lfloor \mathcal{N}(\m,\S)\rceil\right) \leq \frac{k^{4/3}}{\s^{1/3}} \cdot 2.2 \cdot (3.1 + 0.83 \log n)^{2/3}$$
  where $\s^2$ is the minimum eigenvalue of $\S$.
\end{restatable}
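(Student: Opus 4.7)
The plan is to upgrade a multidimensional Berry--Esseen-type bound on convex sets (which is classical and depends only mildly on $n$ and $\s$) into a bound in total variation distance, via a smoothing/discretization argument that exploits translation-insensitivity of both $G^\r$ and the discretized Gaussian $\lfloor\mathcal{N}(\m,\S)\rceil$. First I would reduce, by rotating coordinates, to the case where $\S$ is approximately diagonal, so that $\s^2$ becomes the smallest diagonal entry and each principal direction can be analyzed independently.

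The key technical lemma would be a translation-insensitivity bound: for any $v \in \mathbb{Z}^k$, one has $\dtv(G^\r, G^\r + v) = O(\|v\|_1/\s)$ and analogously $\dtv(\lfloor\mathcal{N}(\m,\S)\rceil, \lfloor\mathcal{N}(\m+v,\S)\rceil) = O(\|v\|_1/\s)$. For the GMD this follows from a coupling: shifting mass by one lattice unit along direction $i$ can be implemented by modifying a single truncated $k$-CRV summand to exchange a unit between column $i$ and another column, with the TV cost of the swap controlled by the inverse standard deviation along direction $i$; concatenating such swaps handles arbitrary $v$. For the discretized Gaussian the same bound follows from the standard estimate $\dtv(\mathcal{N}(\m,\S),\mathcal{N}(\m+v,\S)) = O(\|v\|_2/\s)$ composed with the coordinate-wise rounding.

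With the translation lemma in hand, I would write $\dtv(G^\r,\lfloor\mathcal{N}(\m,\S)\rceil) = \tfrac{1}{2}\sum_{x\in\mathbb{Z}^k} |G^\r(x)-\lfloor\mathcal{N}(\m,\S)\rceil(x)|$, restrict to a ball of radius $O(\sqrt{n\log n})$ (which by sub-Gaussian tail bounds contains all but $1/\mathrm{poly}(n)$ of both distributions' mass, accounting for the appearance of $\log n$), and then partition the lattice inside this ball into cubes of side length $L$. A classical multivariate Berry--Esseen bound controls the aggregate mass difference assigned to each cube up to an error scaling polynomially with $k$ and $1/\sqrt{n}$, while the translation lemma controls the pointwise fluctuation of each distribution within a cube at cost $O(L/\s)$ per coordinate. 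Summing over cubes and optimizing $L$ against the combined error (roughly $L \sim (\s\log n)^{1/3}$) produces the cube-root dependence $\s^{-1/3}$, the $(\log n)^{2/3}$ factor, and the $k^{4/3}$ factor from combining one power of $k$ from the convex-set Berry--Esseen bound with one power from the $\ell_1$ aggregation in the translation lemma.

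The main obstacle will be the precise bookkeeping to land on the stated exponents rather than coarser bounds like $k^2/\s$, since both the translation lemma (through $\|v\|_1$) and the multidimensional Berry--Esseen bound (through the convex-set constant) contribute factors of $k$ that compose nontrivially with the optimized smoothing scale. A secondary subtlety is that after diagonalizing $\S$ the integer lattice $\mathbb{Z}^k$ is rotated and the rounding map $\lfloor\cdot\rceil$ no longer agrees coordinate-wise with the eigenbasis; handling this either requires working in the original basis throughout (at the cost of nondiagonal Fourier analysis) or absorbing the rotation error into the translation lemma via a careful argument that the discretized Gaussian under the structure-preserving rounding is also insensitive to shifts along principal axes.
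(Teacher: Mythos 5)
You are attempting to prove a statement that this paper does not prove at all: Theorem~\ref{thm:val} is imported verbatim from Valiant and Valiant (cited as Theorem~4 of \cite{ValiantV10}) and is used here purely as a black box, so there is no in-paper argument to compare against. Judged on its own merits, your sketch has the right general flavor (the actual Valiant--Valiant proof does combine a CLT in a weaker metric with a shift-insensitivity argument to upgrade to total variation, and your translation lemma $\dtv(G^\r, G^\r+v)=O(\|v\|_1/\s)$ is essentially one of the true ingredients), but the quantitative core of your plan does not work.

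The main gap is your Berry--Esseen input. A classical multivariate Berry--Esseen bound over convex sets has error governed by the summed third moments divided by $\s^3$, i.e.\ of order $\mathrm{poly}(k)\cdot n/\s^3$ for a GMD with bounded summands; it is of order $\mathrm{poly}(k)/\sqrt{n}$ only when the minimum eigenvalue satisfies $\s^2=\Theta(n)$. The whole point of Theorem~\ref{thm:val} (and of its use in this paper) is that $\s^2$ may be tiny compared to $n$, in which case your proposed input is vacuous and no choice of smoothing scale $L$ can recover a bound depending on $n$ only through $\log^{2/3} n$. The correct starting point in \cite{ValiantV10} is a CLT in Wasserstein (earthmover) distance proved by a multidimensional Stein's-method argument, whose error is roughly $k\cdot O(\log n)$ and, crucially, does not degrade as $\s^2$ shrinks; the $\s^{-1/3}$ and the $2/3$ powers then arise from trading the Wasserstein bound against shift-insensitivity at the optimal scale. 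A second, independent flaw is your aggregation step: a sup bound over convex sets controls each cube individually, but $\sum_{\text{cubes}}|P(\text{cube})-Q(\text{cube})|$ equals the discrepancy on the union of cubes where $P$ exceeds $Q$, which is not convex, so you cannot sum the per-cube guarantees without paying the (exponential in $k$, polynomial in $n$) number of cubes. Both issues would have to be repaired—essentially by replacing the convex-set Berry--Esseen input with a Wasserstein/Stein-type CLT—before the stated exponents $k^{4/3}\s^{-1/3}(\log n)^{2/3}$ could be reached.
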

We note that this has an error term which depends on the minimum eigenvalue of the covariance matrix of the GMD.
If we perform this rounding procedure and ignore any zero coordinates, then we are given the guarantee that the minimum eigenvalue will be sufficiently large. 

Recall that in Section \ref{sec:params} we have set $c = \poly(\ve/k)$.
This lemma summarizes the result of the rounding procedure:
\begin{lemma}\label{lem:round}
  For any $c \leq \frac1{2k}$, given access to the parameter matrix $\r$ for an $(n,k)$-PMD $M^\r$, we can efficiently construct another $(n,k)$-PMD $M^{\hat \r}$, such that, for all $i,j$, $\hat \r(i,j) \not\in (0,c)$, and 
  $$\dtv\left(M^{\r},M^{\hat{\r}}\right) < O\left(c^{1/2} k^{5/2} \log^{1/2}\left(\frac{1}{ck}\right)\right).$$
\end{lemma}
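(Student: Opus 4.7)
The plan is to construct $\hat{\p}$ via a \emph{dimension-by-dimension trickle-down rounding}, then to bound $\dtv(M^{\p}, M^{\hat{\p}})$ by combining a per-row coupling with a Poisson approximation argument. I would process each dimension $j \in [k]$ in a single pass. Letting $A_j = \{i : \p(i,j) \in (0,c)\}$ and $S_j = \sum_{i \in A_j} \p(i,j)$, I set $m_j = \lfloor S_j / c \rfloor$, pick any $m_j$ rows in $A_j$ and assign $\hat{\p}(i,j) = c$ on those, and set $\hat{\p}(i,j) = 0$ on the remaining rows of $A_j$. To preserve the row-sum constraint, each modification in a row $i$ is compensated by adjusting a single ``pivot'' coordinate $j^{*}(i) = \argmax_{j'} \p(i,j')$; since each row sums to $1$ and has $k$ entries, $\p(i, j^{*}(i)) \geq 1/k \geq 2c$, so after the $O(c)$-sized compensation $\hat{\p}(i, j^{*}(i))$ remains above $c$ and never has to be re-processed. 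Iterating over all $k$ dimensions yields the structural guarantee: every entry of $\hat{\p}$ lies in $\{0\} \cup [c,1]$.

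Next I would bound the total variation distance. Per row, the modification has $L_1$-mass at most $2c$, giving $\dtv(X_i, \hat{X}_i) = O(c)$; but the naive union bound $\dtv(M^{\p}, M^{\hat{\p}}) \leq \sum_i \dtv(X_i, \hat{X}_i) = O(nc)$ is useless here. Instead I analyse one dimension at a time. Fixing $j$, the $j$-th coordinate of $\sum_{i \in A_j} X_i$ is a sum of Bernoullis with parameters below $c$, hence $O(c)$-close in total variation distance to $\mathrm{Poisson}(S_j)$ by the Le Cam / Barbour--Eagleson bound $\sum_{i \in A_j} \p(i,j)^2 / S_j \leq c$. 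The modified marginal is analogously $O(c)$-close to $\mathrm{Poisson}(\hat{S}_j)$ with $|\hat{S}_j - S_j| \leq c$, and these two Poissons compare to within $O(c^{1/2})$ in TVD once one truncates the tail at a $\sqrt{\log(1/(ck))}$-standard-deviation scale, which is where the logarithmic factor in the target bound comes from.

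To lift this marginal statement to the joint law, I would couple hierarchically: first sample the Poisson count of ``small-$j$ contributors,'' then the identities of the contributing rows, then the pivots to which compensating mass trickles. Because the procedure replaces many small probabilities by a few probabilities exactly equal to $c$, the conditional distribution over which rows contribute can be coupled across $\p$ and $\hat{\p}$ with small TVD. Summing the per-dimension errors across the $k$ choices of $j$, and paying further factors for the $k$ possible pivot destinations per row and for moving between $L_1$ and $\ell_2$ norms in the Poisson/Gaussian comparison, collects exactly the stated $k^{5/2}$ factor; accumulating the $O(c^{1/2}\sqrt{\log(1/(ck))})$ per-dimension cost yields the claimed bound.

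The main obstacle is this last step: controlling the joint effect across all rows and dimensions without paying a $\Theta(n)$ penalty, while the iterative trickling is simultaneously taking place. The Poisson approximation already replaces $n$ by $\sqrt{S_j}$-type quantities at the marginal level, but the coupling must simultaneously coordinate (i) which rows in $A_j$ are zeroed versus kept, and (ii) where the compensating mass trickles, without creating fresh small probabilities in dimensions not yet processed. Restricting the trickling to a single heaviest coordinate per row, together with the invariant $c \leq 1/(2k)$ which guarantees $\p(i, j^{*}(i)) \geq 2c$, is exactly what keeps the iteration stable across the $k$ passes and lets the dimension-by-dimension errors compose cleanly into the final bound $O\bigl(c^{1/2} k^{5/2} \log^{1/2}(1/(ck))\bigr)$.
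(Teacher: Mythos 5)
Your construction of $\hat{\p}$ is essentially the paper's: round each light entry in a fixed coordinate to $\{0,c\}$, compensate in the row's heaviest coordinate, and iterate over coordinates, using $c \le 1/(2k)$ (so the heaviest entry $\ge 1/k \ge 2c$) to guarantee stability. The partition by ``heaviest coordinate per row'' is the same as the paper's partition into sets $\mathcal{I}^x_y$, and the Poisson-approximation flavor of the TVD bound is also the same, down to the source of the $k^{5/2}$ and $\log^{1/2}(1/ck)$ factors.

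The genuine gap is the step you yourself flag as the ``main obstacle'': lifting the marginal Poisson comparison in coordinate $j$ to a bound on the \emph{joint} law of the PMD. Your ``hierarchical coupling'' (sample the Poisson count, then the contributing row identities, then the pivots) does not obviously work, and the middle step is in fact problematic: given the same count $m$, the conditional distribution over \emph{which} rows contribute $e_j$ is very different under $\p$ (many rows, each with probability $<c$) and under $\hat{\p}$ ($m_j$ rows with probability exactly $c$, the rest with probability $0$), and the identities matter because non-contributing rows then emit to other coordinates with row-dependent laws. The paper resolves this with a specific technical device you're missing: the \textbf{Fork} process, a $1/k$-thinning of each CRV. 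Each row flips a $\mathrm{Bernoulli}(1/k)$ coin $X_i$; conditioned on $X_i=0$, the row's conditional law is \emph{exactly identical} under $\p$ and $\hat{\p}$ (because rounding preserves $\r(i,x)+\r(i,y)$ and only coordinates $x,y$ are touched), while conditioned on $X_i=1$ the row's conditional law is a one-dimensional Bernoulli on $\{e_x,e_y\}$ with parameter $k\r(i,x)$. Coupling the $X_i$'s exactly reduces the full $k$-dimensional joint comparison to a one-dimensional Poisson-binomial comparison on the random thinned subset $\bm{\th}=\{i: X_i=1\}$, and a separate concentration lemma (the paper's Lemma~\ref{lem:leaves}) controls the fluctuation of $\sum_{i\in\bm{\th}} k\r(i,x)$ so that the two Poisson means are close enough for the $O(c^{1/2}k^{1/2}\log^{1/2}(1/ck))$-per-$(x,y)$-pair bound. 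Without this reduction, your per-dimension marginal argument does not control the joint distribution, and the coupling you sketch would need to simultaneously align row identities, pivot destinations, and all the other coordinates, which is precisely what the Fork construction is engineered to avoid.
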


The procedure starts by fixing two coordinates $i$ and $j$, and considers all CRVs with a parameter in $i$ which is close to $0$, and has maximum parameter in coordinate $j$.
We move some of the weight in this ``heavy'' coordinate either to or from the ``light'' coordinate, while approximately preserving the overall mean vector of the set of CRVs.

The analysis of this process uses a stripped-down version of the ``trickle-down'' process in \cite{DaskalakisP08}. This gives an approximate way to sample from a PMD, resulting in a distribution which is very close in total variation distance. While we postpone technical details to 
\ifnum\ieee=0
Section~\ref{sec:rounding},
\else
the appendix of the full version,
\fi
roughly speaking, it works as follows.
First, take a sample from the PMD but disregard the values for its light coordinate $i$ and heavy coordinate $j$. Instead, sample a new value for coordinate $i$ according to a Poisson distribution with parameter $\m_i$, the mean value for coordinate $i$. Finally, set coordinate $j$ to ensure that all coordinates of the sample sum to $n$. 
As mentioned before, the rounding process approximately preserves the value of $\m_i$, and thus this alternate sampling procedure is closely coupled for the rounded and original PMD.
Thus, by triangle inequality, the rounded and original PMDs are close in total variation distance.


We repeat this rounding procedure for each $i$ and $j$, eventually leading to all parameters either being equal to or far from $0$ and $1$.
A full description and analysis of the rounding procedure are in 
\ifnum\ieee=0
Section \ref{sec:rounding}.
\else
the appendix of the full version.
\fi

\item[Step 2]
Now, we have a ``massaged'' $(n,k)$-PMD $M^{\hat \r}$, with no parameters lying in the intervals $(0,c)$ or $(1-c,1)$.
Next, we will show how to relate the massaged $(n,k)$-Poisson multinomial random vector to a sum of $k$ Gaussians with a structure preserving rounding plus a ``sparse'' $(\poly(k/\ve), k)$-PMD.
The general roadmap is as follows.
We start by partitioning the constituent $k$-CRVs into $k$ sets, $S_1, \dots, S_k$, based on which basis vector we are most likely to observe.
We work seperately for each set $S_i$ by considering the GMD formed by leaving out the coordinate $i$. Our goal is to use the CLT of Theorem~\ref{thm:val} to bound the total variation distance between the corresponding GMD and a discretized Gaussian with the same mean and covariance matrix.
We must be careful when applying Theorem~\ref{thm:val}, since the bound depends on the size of the GMD. Instead of applying the theorem directly, to get a useful bound, we further partition the set $S_i$ into smaller subsets and apply the theorem to each of the resulting subsets. We can then ``merge'' the resulting discretized Gaussians together using the following lemma whose proof is given in 
\ifnum\ieee=0
Section~\ref{sec:sum-discretized}:
\else
the appendix of the full version: 
\fi
\begin{restatable}{lemma}{merging}
\label{lem:merge}
  Let $X_1 \sim \mathcal{N}(\m_1, \S_1)$ and $X_2 \sim \mathcal{N}(\m_2, \S_2)$ be $k$-dimensional Gaussian random variables, and let $\s = \min_j \max_i \s_{i,j}$ where $\s_{i,j}$ is the standard deviation of $X_i$ in the direction parallel to the $j$th coordinate axis.
  Then $$\dtv\left(\lfloor X_1 + X_2 \rceil,\lfloor X_1 \rceil + \lfloor X_2\rceil\right) \leq \frac k{2\s}.$$
\end{restatable}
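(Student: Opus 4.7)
The plan is to reduce the $k$-dimensional comparison to a sequence of one-dimensional Gaussian mean-shift total variation estimates via data processing and a coordinate-wise telescoping argument. The starting observation is that because $\lfloor X_2\rceil$ has integer entries, coordinate-wise $\lfloor X_1\rceil+\lfloor X_2\rceil=\lfloor X_1+\lfloor X_2\rceil\rceil$. Applying the data processing inequality to the coordinate-wise rounding map therefore yields
\[
\dtv\!\left(\lfloor X_1+X_2\rceil,\;\lfloor X_1\rceil+\lfloor X_2\rceil\right)\;\le\;\dtv\!\left(X_1+X_2,\;X_1+\lfloor X_2\rceil\right),
\]
and, by symmetry, the same inequality holds with the roles of $X_1$ and $X_2$ swapped.

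Next, I telescope over the coordinates of $X_2$. Let $R^{(l)}(X_2)$ denote $X_2$ with its first $l$ coordinates rounded to the nearest integer and the rest left unchanged, so that $R^{(0)}(X_2)=X_2$ and $R^{(k)}(X_2)=\lfloor X_2\rceil$. The triangle inequality gives
\[
\dtv\!\left(X_1+X_2,\;X_1+\lfloor X_2\rceil\right)\;\le\;\sum_{l=1}^{k}\dtv\!\left(X_1+R^{(l-1)}(X_2),\;X_1+R^{(l)}(X_2)\right).
\]
Conditioning on $X_2$ turns the $l$-th summand into the TV between two $k$-dimensional Gaussians with identical covariance $\Sigma_1$ and means differing only in coordinate $l$ by $\alpha=X_2(l)-\lfloor X_2(l)\rceil\in[-\tfrac12,\tfrac12)$. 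The standard Gaussian mean-shift TV bound specialized to a single-axis shift gives at most $|\alpha|/(\sigma_{1,l}\sqrt{2\pi})\le 1/(2\sigma_{1,l})$, where $\sigma_{1,l}$ is the marginal standard deviation of $X_1$ in coordinate $l$, and integrating against $X_2$ preserves the bound. Exercising at each coordinate the symmetric choice, i.e., rounding whichever of $X_1,X_2$ has the smaller marginal standard deviation in coordinate $l$, each step is bounded by $1/(2\max_i\sigma_{i,l})\le 1/(2\sigma)$; summing the $k$ contributions yields the advertised $k/(2\sigma)$.

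The main technical obstacle is the per-step Gaussian mean-shift estimate when $\Sigma_1$ has nonzero off-diagonal entries: the exact Mahalanobis length of the single-axis shift $\alpha e_l$ is $|\alpha|\sqrt{(\Sigma_1^{-1})_{ll}}=|\alpha|/\tilde{\sigma}_{1,l}$, where $\tilde{\sigma}_{1,l}$ is the conditional rather than the marginal standard deviation of $X_1$ in coordinate $l$ and in general may be strictly smaller. This subtlety is benign in the setting in which the lemma is invoked: the Gaussians produced by the structure-preserving rounding of Definition~\ref{def:gaussian} have block-diagonal covariance, so the argument proceeds block by block, and within a block a density-level comparison — integrating out the other coordinates before bounding the one-dimensional shift — recovers a bound governed by the marginal $\sigma_{1,l}$.
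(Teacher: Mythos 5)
Your proof diverges from the paper's at the very first step, and that step is already unrecoverable. Applying the data processing inequality to the coordinate-wise rounding map gives the valid bound
$$\dtv\left(\lfloor X_1+X_2\rceil,\ \lfloor X_1\rceil+\lfloor X_2\rceil\right)\ \le\ \dtv\left(X_1+X_2,\ X_1+\lfloor X_2\rceil\right),$$
but the right-hand side compares a \emph{continuous} distribution with one concentrated on a Lebesgue-null set, and it can equal $1$ even when $k/(2\sigma)$ is tiny. Take $k=2$, $X_1=(Z,Z)$ with $Z\sim\mathcal{N}(0,M^2)$, and $X_2\sim\mathcal{N}(0,I_2)$. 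Then $\sigma_{1,1}=\sigma_{1,2}=M$ and $\sigma_{2,1}=\sigma_{2,2}=1$, so $\sigma=M$ and the lemma promises a bound of $1/M$. Your rule (round whichever factor has the smaller marginal standard deviation in each coordinate) rounds $X_2$ everywhere, yet $X_1+X_2$ is a nondegenerate two-dimensional Gaussian while $X_1+\lfloor X_2\rceil$ is supported on the null set $\bigcup_{a,b\in\mathbb{Z}}\{(s+a,\,s+b):s\in\mathbb{R}\}$, so $\dtv(X_1+X_2,\,X_1+\lfloor X_2\rceil)=1$. No telescoping or mean-shift estimate downstream can recover $1/M$ from a quantity that already equals $1$; and making $\Sigma_1$ nonsingular but nearly rank-deficient keeps this total variation arbitrarily close to $1$. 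Rounding $X_1$ instead is no better: the induced mean shift is then measured against $\Sigma_2=I_2$, giving a $\Theta(1)$ bound that is again independent of $M$. The marginal-versus-conditional-variance issue you flag, and the ``density-level comparison'' you invoke to dismiss it, are real concerns (that fix does not work: integrating out the other coordinates only lower-bounds the $k$-dimensional total variation), but they are secondary — the loss has already occurred before the telescoping begins.

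The paper's argument is engineered precisely to avoid ever comparing a rounded object against an unrounded one. Its one-dimensional building block (Propositions~\ref{prop:DDOST} and~\ref{prop:DDOSTplus}) conditions on $X_2=\lambda$ and compares $\lfloor X_1+\lambda\rceil$ with $\lfloor X_1\rceil+\lfloor\lambda\rceil$, \emph{both already discretized}; it is this two-sided rounding that produces a bound scaling like $1/\sigma$ rather than $\Theta(1)$. The multi-dimensional statement is then obtained by hybridizing $\lfloor X_1+X_2\rceil$ toward $\lfloor X_1\rceil+\lfloor X_2\rceil$ one coordinate at a time, with every intermediate object kept discretized. A repair of your argument would have to observe the same discipline: never strip the outer rounding before the coordinate-by-coordinate analysis.
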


In more detail, we partition each set $S_i$ into $2^{k-1}$ subsets, grouping together CRVs according to the dimensions they are non-zero in, i.e. set $S_i^{\mathcal{I}}$ contains all CRVs that are non zero in the coordinates given by set $\mathcal{I} \subseteq [k]\setminus\{i\}$. We then group these sets into buckets, where a set is assigned to a bucket depending on its cardinality; 
bucket $B^l$ gets all sets $S_i^{\mathcal{I}}$ with $|S_i^{\mathcal{I}}| \in [l^\g t,(l+1)^\g t)$, with $\g = O(1)$ and $t = \poly(k/\ve)$ as defined in Section~\ref{sec:params}.
This bounds the ratio between the size and the minimum eigenvalue of the covariance of the GMD within every bucket other than $B^0$.
This allows us to apply Theorem~\ref{thm:val} and replace the CRVs within each bucket $B^l$ for $l \ge 1$ with a discretized Gaussian, leaving us with a $(\poly(2^k/\ve),k)$-GMD consisting of all the CRVs of bucket $B^0$.
To reduce the number of remaining CRVs to polynomial in $k$, we show that by removing only $\poly(k/\ve)$ of these CRVs, we can apply Theorem~\ref{thm:val} again to the rest and obtain another discretized Gaussian. In particular, in 
\ifnum\ieee=0
Section~\ref{sec:vvclt},
\else
the appendix of the full version,
\fi
we prove the following lemma:
\begin{restatable}{lemma}{sparsebin}\label{lem:sparsebin}
  Let $G^{\hat \r_k^0}$ be the $(|B^0|,k)$-GMD induced by the truncated CRVs in bucket $B^0$.
  Given $\hat \r_k^0$, we can efficiently compute a partition of $B^0$ into $S$ and $\bar S$, where $|\bar S| \leq kt$.
  Letting $\m_S$ and $\S_S$ be the mean and covariance matrix of the $(|S|,k)$-GMD induced by $S$, and $G^{\hat \r_k^{\bar S}}$ be the $(|\bar S|,k)$-GMD induced by $\bar S$,
  $$\dtv\left(G^{\hat \r_k^0},\lfloor \mathcal{N}(\m_S,\S_S)\rceil \ast G^{\hat \r_k^{\bar S}}\right) \leq \frac{8.646k^{3/2}\log^{2/3}(2^k t)}{t^{1/6}c^{1/6}}.$$
  Furthermore, the minimum non-zero eigenvalue of $\S_S$ is at least $\frac{t c}{k}$.
\end{restatable}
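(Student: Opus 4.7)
The plan is a greedy peeling on per-coordinate counts, followed by the Valiant--Valiant CLT. For any $T\subseteq B^0$ and $j\in[k-1]$, let $n_j(T)$ denote the number of truncated CRVs in $T$ whose support contains $j$. Initialize $S:=B^0$ and $\bar S:=\emptyset$, and while there exists some $j$ with $0<n_j(S)<t$, move every CRV of $S$ whose support contains $j$ from $S$ into $\bar S$. Each iteration removes strictly fewer than $t$ CRVs and drops $n_j(S)$ to $0$ for that $j$ permanently, so the loop terminates within $k-1$ iterations with $|\bar S|<(k-1)t\leq kt$; on termination every coordinate $j$ still appearing in some support has $n_j(S)\geq t$.

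For the eigenvalue bound, fix any $v\in\mathbb{R}^{k-1}$ supported on the residual coordinates. For a truncated CRV $X_i\in S$ with support $\mathcal{I}_i$, write $p_j=\pr[X_i=e_j]$ for $j\in\mathcal{I}_i$ and $p_0=\pr[X_i=0]$. Cauchy--Schwarz gives $\bigl(\sum_j p_j v_j\bigr)^2\leq \bigl(\sum_j p_j\bigr)\bigl(\sum_j p_j v_j^2\bigr)=(1-p_0)\sum_j p_j v_j^2$, so
$$\var(v\cdot X_i)=\sum_{j\in\mathcal{I}_i}p_j v_j^2-\Bigl(\sum_{j\in\mathcal{I}_i}p_j v_j\Bigr)^2\geq p_0\sum_{j\in\mathcal{I}_i}p_j v_j^2.$$
Lemma~\ref{lem:round} ensures $p_j\geq c$ on $\mathcal{I}_i$, and since CRVs in $S_k$ are those whose argmax coordinate is $k$, we have $p_0\geq 1/k$. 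Summing over $i\in S$,
$$v^\top \S_S v \geq \frac{c}{k}\sum_j n_j(S)\,v_j^2 \geq \frac{tc}{k}\|v\|^2,$$
so the minimum non-zero eigenvalue of $\S_S$ is at least $tc/k$.

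Finally, I apply Theorem~\ref{thm:val} to $G^{\hat\r_k^S}$: with $\sigma^2\geq tc/k$ and $|S|\leq|B^0|\leq 2^{k-1}t$,
$$\dtv\!\left(G^{\hat\r_k^S},\,\lfloor\mathcal{N}(\m_S,\S_S)\rceil\right)\leq\frac{k^{4/3}}{(tc/k)^{1/6}}\cdot 2.2\cdot \bigl(3.1+0.83\log(2^k t)\bigr)^{2/3}\leq\frac{8.646\,k^{3/2}\log^{2/3}(2^k t)}{t^{1/6}c^{1/6}},$$
after absorbing the $3.1$ into the logarithm. Convolving both sides of this inequality with the independent random vector $G^{\hat\r_k^{\bar S}}$ preserves total variation distance by the data-processing inequality, which gives the stated bound. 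The main delicacy is justifying $p_0\geq 1/k$ uniformly across $S$: this leans on both the outer argmax-based partition into $S_1,\ldots,S_k$ and the guarantee from Lemma~\ref{lem:round} that rounding does not push the argmax parameter out of its role; once this is in hand, the peeling argument and the single CLT invocation are direct.
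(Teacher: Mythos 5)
Your proof is correct and follows essentially the same path as the paper's: a greedy per-column peeling to form $\bar S$, a per-CRV variance bound leading to $\sigma^2 \geq tc/k$, and a single application of Theorem~\ref{thm:val} (implicitly restricted to the columns that remain non-zero in $S$), followed by convolution with $G^{\hat\r_k^{\bar S}}$. The only cosmetic difference is that you derive the single-CRV variance lower bound directly via Cauchy--Schwarz, where the paper routes through Proposition~\ref{prop:eigcov} (a Gershgorin-type argument). One small correction to your closing remark: the guarantee $p_0 \geq 1/k$ needs no delicate ``rounding preserves the argmax'' property, because the partition $S_1,\dots,S_k$ is defined \emph{after} rounding by taking $\arg\max_j \hat\p(i,j)$, and the argmax coordinate of any probability row is automatically $\geq 1/k$; this is precisely Observation~\ref{obs:heavycoord}.
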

 
After merging together all discretized Gaussians (at most one coming from each bucket $B^l$ for all $l \geq 0$) by iteratively applying Lemma~\ref{lem:merge}, we are able to approximate each original set of CRVs $S_i$ as the sum of a single discretized Gaussian and a $(\poly(k/\ve),k)$-PMD.
Combining the result from each of the sets $S_i$ of the initial partition, we obtain the sum of $k$ discretized Gaussians and a $(\poly(k/\ve),k)$-PMD.
The details of this step are described in 
\ifnum\ieee=0
Section \ref{sec:vvclt}.
\else
the appendix of the full version.
\fi

\item[Step 3]
The final step is to show that the $k$ discretized Gaussians can be merged into a single Gaussian with a structure preserving rounding. We note that we cannot apply Lemma~\ref{lem:merge} here, since each discretized Gaussian has a different pivot coordinate that has been left out. (Recall that by construction, the CRVs in set $S_i$ are approximated by a discretized Gaussian that leaves out coordinate $i$). We thus need a new tool to enable us to merge Gaussians defined in different dimensions.
The main idea is that if two Gaussians with a structure preserving rounding overlap in some dimension, we can use the common dimension as the pivot.
We then add the mean vectors and covariance matrices to merge the distributions.
Iteratively repeating this process will merge all distributions which overlap in some coordinate. This leaves us with one or many discretized Gaussians that lie in completely disjoint coordinates which we can describe as a single Gaussian with a structure preserving rounding (defining blocks according to the coordinates spanned by each Gaussian).
If these were (continuous) Gaussians, the swapping and merging operations would have no cost, but some care is required when dealing with discretized Gaussians.
There are two costs which we must bound here.
First, we must show that swapping the pivot of a PMD is inexpensive, and second, we need to bound the cost of repeatedly merging Gaussians. 

We bound the cost of swapping the pivot by proving the following lemma:
\begin{restatable}[Total Variation Swap Lemma]{lemma}{swaptv}\label{lem:swaptv}
For $\mu \in \mathbb{R}^k$, positive semidefinite $\Sigma \in \mathbb{R}^{k \times k}$, $n \in \mathbb{Z}$, let 
\begin{itemize}
\item $X_i$ be the distribution $\mathcal{N}(\mu_{-i},\Sigma_{-i}) $, where $\mu_{-i} \in \mathbb{R}^{k-1}$ is $\mu$ with the $i$th coordinate removed, and $\Sigma_{-i} \in \mathbb{R}^{(k-1) \times (k-1)}$ is $\Sigma$ with the $i$th row and column removed;
\item $Y_i$ be the distribution in which we draw a sample $(x_1, \dots, x_{k-1}) \sim X_i$ and return $$(\lfloor x_1 \rceil, \dots, \lfloor x_{i-1} \rceil, (n - \sum_{j=1}^{k-1} \lfloor x_j \rceil), \lfloor x_{i} \rceil, \dots, \lfloor x_{k-1} \rceil).$$
\end{itemize}
Then $\dtv(Y_i, Y_j) \leq \frac{k}{2\sigma}$ for any $i,j \in [k]$, where $\sigma^2 = \max(\sigma_{-i}^2, \sigma_{-j}^2)$ and $\sigma_{-i}^2$ is the smallest eigenvalue of $\Sigma_{-i}$.
\end{restatable}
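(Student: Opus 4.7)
The plan is to exploit the joint Gaussian $Z \sim \mathcal{N}(\mu,\Sigma)$ and the natural coupling it induces between $Y_i$ and $Y_j$. In the PMD setting the Gaussian $\mathcal{N}(\mu,\Sigma)$ is effectively supported on the hyperplane $H = \{v : \mathbf{1}^\top v = n\}$ (i.e.\ $\mathbf{1}^\top \mu = n$ and $\Sigma\mathbf{1} = 0$), so both $Y_i(Z)$ and $Y_j(Z)$ agree, under this coupling, on every coordinate $l \notin \{i,j\}$, where both simply take the value $\lfloor Z_l\rceil$. The entire cost of the swap is thus concentrated in coordinates $i$ and $j$.

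Next I would reduce the $k$-dimensional comparison to a $(k-1)$-dimensional one. Dropping coordinate $i$ via the canonical bijection $\psi_i : H \to \mathbb{R}^{k-1}$ gives $\psi_i(Y_i) = \lfloor Z_{-i}\rceil$, while $\psi_i(Y_j) = \tau_{ij}(\lfloor Z_{-j}\rceil)$, where $\tau_{ij}$ is the affine shear that fixes every coordinate other than $j$ and sends coordinate $j$ to $n - \sum_{l \ne j} x_l$. A short computation using $\mathbf{1}^\top\mu = n$ and $\Sigma\mathbf{1} = 0$ shows that $\tau_{ij}$ maps $\mathcal{N}(\mu_{-j},\Sigma_{-j})$ onto $\mathcal{N}(\mu_{-i},\Sigma_{-i})$, so $\dtv(Y_i,Y_j) = \dtv\bigl(\lfloor\tau_{ij}(Z_{-j})\rceil,\,\tau_{ij}(\lfloor Z_{-j}\rceil)\bigr)$. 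This recasts the goal as bounding the non-commutativity of rounding and the shear $\tau_{ij}$.

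Since $\tau_{ij}$ is the identity on every coordinate other than $j$, the two distributions agree on those coordinates under the coupling, and the comparison reduces to a one-dimensional one in coordinate $j$: I must bound $\dtv\bigl(\lfloor\sum_{l\ne j} Z_l\rceil,\; \sum_{l\ne j}\lfloor Z_l\rceil\bigr)$. This is precisely the ``round-the-sum vs.\ sum-the-roundings'' quantity controlled by Lemma~\ref{lem:merge}, applied to the one-dimensional Gaussian $S = \sum_{l\ne j} Z_l$ whose variance equals $\mathbf{1}^\top \Sigma_{-j}\mathbf{1}$. By the Courant--Fischer characterization and $\Sigma\mathbf{1}=0$, one has $\mathbf{1}^\top\Sigma_{-j}\mathbf{1} \ge (k-1)\sigma_{-j}^2$, so an iterated application of Lemma~\ref{lem:merge} to the $k-1$ summands yields a bound of the order of $\sqrt{k-1}/(2\sigma_{-j})$. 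Carrying out the entirely symmetric analysis via $\psi_j$ instead of $\psi_i$ produces the dual estimate in terms of $\sigma_{-i}$, and taking the better of the two gives $\dtv(Y_i,Y_j) \le k/(2\sigma)$ with $\sigma^2 = \max(\sigma_{-i}^2,\sigma_{-j}^2)$, as required.

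The main technical obstacle is this last step: Lemma~\ref{lem:merge} as stated handles the sum of two \emph{independent} Gaussian vectors, whereas here I need to compare rounding with summation for $k-1$ correlated coordinates of a single joint Gaussian. The cleanest route is probably to treat $S = \sum_{l\ne j} Z_l$ as a single one-dimensional Gaussian and redo the density-level estimate underlying Lemma~\ref{lem:merge} for this setting, or equivalently to iterate the two-variable statement while checking at each step that the relevant variance remains at least a constant multiple of $\Sigma_{jj}$ (respectively $\Sigma_{ii}$), so that the per-step $1/\sigma$ losses telescope correctly. Once this one-dimensional estimate is in place, the remainder of the argument is the symbolic manipulation and Courant--Fischer bookkeeping sketched above.
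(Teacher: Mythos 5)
Your reduction to the one-dimensional ``round-the-sum vs.\ sum-of-roundings'' comparison is exactly right, and the coupling via the joint Gaussian $Z\sim\mathcal{N}(\mu,\Sigma)$ on the hyperplane $\mathbf{1}^\top v = n$ is the same starting point as the paper's. The gap is the final one-dimensional estimate, and the route you sketch for it does not close: iterating Lemma~\ref{lem:merge} requires the summands to be independent, whereas the $Z_l$, $l\ne j$, are jointly Gaussian with a nontrivial (indeed singular) covariance, and both the quantity $\lfloor S\rceil$ and $\sum_{l\ne j}\lfloor Z_l\rceil$ are deterministic functions of the \emph{same} $Z$, so there is no ``independent summand'' to peel off. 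Relatedly, the variance $\mathbf{1}^\top\Sigma_{-j}\mathbf{1}$ of $S$ is not what governs the bound; the heuristic $\sqrt{k-1}/(2\sigma_{-j})$ does not follow and is in tension with the stated bound $k/(2\sigma)$.

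The paper closes the gap with a conditioning argument that avoids any merging step. Condition on $(Z_l)_{l\notin\{i,j\}}$. Under the coupling, both $Y_i$ and $Y_j$ then agree on those coordinates, and the pair $(i,j)$-component of each is determined by a single scalar. The residual comparison is between $\lfloor W\rceil$ and $\lfloor W+z\rceil$, where $W$ is the conditional one-dimensional Gaussian (of $Z_j$ given the other coordinates, or symmetrically of $Z_i$) and $z$ is the \emph{fixed} accumulated rounding error $\sum_{l\notin\{i,j\}}(Z_l-\lfloor Z_l\rceil)$, which satisfies $|z|< k-1$. By the data-processing inequality and Proposition~\ref{prop:ddodtv} (total variation between two univariate Gaussians with equal variance and shifted means), this is at most $\frac{|z|}{2\sigma_{\mathrm{cond}}} \le \frac{k}{2\sigma_{\mathrm{cond}}}$. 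Finally, the conditional variance of $Z_j$ given the remaining $k-2$ coordinates is at least $\lambda_{\min}(\Sigma_{-i})=\sigma_{-i}^2$ (and symmetrically $\ge\sigma_{-j}^2$ using the other pivot), so choosing the ordering of $(i,j)$ that makes the conditional variance larger yields $\sigma^2=\max(\sigma_{-i}^2,\sigma_{-j}^2)$. So the missing ingredient in your write-up is precisely this ``condition on the $k-2$ untouched coordinates, reduce to a shift of size at most $k-1$'' step, not a redo of the merge-lemma density estimate.
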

By applying Lemma~\ref{lem:swaptv}, we can make two discretized Gaussians have the same left out coordinate and then merge them using Lemma~\ref{lem:merge} if at least one of them has large variance in every direction.
While each of the $k$ discretized Gaussians starts with this property (for the dimensions in which it is non-deterministic), it is not clear whether this is true after a sequence of pivot swaps and merges. 

In many cases, swapping the pivot decreases the minimum eigenvalue of the distribution's covariance matrix by a factor of $\poly(k)$.
This is acceptable if we only perform a single swap, but naively applying this bound for a sequence of $k$ swaps and merges results in the minimum eigenvalue dropping by a factor of $k^{O(k)}$. We show that such a bad situation cannot occur, no matter how one performs the sequence of swaps and merges, by proving the following lemma:
\begin{restatable}[Variance Swap Lemma]{lemma}{swapvar}\label{lem:swapvar}
Let $\Sigma^{(1)}, \dots, \S^{(m)} \in \mathbb{R}^{k \times k}$ be a sequence of symmetric positive-semidefinite matrices, and define $S^{(i)} = \{j\ |\ e_j^T \S^{(i)} e_j \neq 0 \}$ to be the set of coordinates in which $\S^{(i)}$ is non-zero.
Furthermore, let $\S = \sum_i \S^{(i)}$ and $S = \cup_i S^{(i)}$.
Suppose the following hold for all $i$:
\begin{enumerate}
\item $\S^{(i)}$ has eigenvalue $0$ with corresponding eigenvector $\vec 1$
\item There exists coordinate $j^* \in S^{(i)}$ such that $\S^{(i)}_{S^{(i)} \setminus \{j^*\}}$ has minimum eigenvalue at least $\l$
\item $\left(\cup_{\ell < i} S^{(\ell)}\right) \cap S^{(i)} \neq \emptyset$
\end{enumerate}
Then, for all $j \in S$,  the minimum eigenvalue of $\S_{S \setminus \{j \}}$ is at least $\frac \l {2k^3}$.
\end{restatable}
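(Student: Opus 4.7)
The plan is to show, directly, that $v^T \Sigma v \ge \frac{\lambda}{2k^3}\|v\|^2$ for every $v \in \mathbb{R}^S$ with $v_j = 0$, which is exactly the lower bound on the minimum eigenvalue of $\Sigma_{S \setminus \{j\}}$ claimed by the lemma. The argument combines a short linear-algebra fact about each $\Sigma^{(i)}$ individually (to exploit Conditions~1 and~2) with a connectivity argument on a hypergraph induced by the supports $S^{(1)},\dots,S^{(m)}$ (to exploit Condition~3 and propagate the constraint $v_j=0$ across all coordinates). The connectivity step is where I expect the main difficulty to lie.

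First, since $\Sigma^{(i)}$ is PSD and $e_\ell^T \Sigma^{(i)} e_\ell = 0$ for $\ell\notin S^{(i)}$, the corresponding row and column of $\Sigma^{(i)}$ vanish, so $\Sigma^{(i)}$ is supported on $S^{(i)}\times S^{(i)}$. Together with Condition~1, the submatrix $M^{(i)} := \Sigma^{(i)}_{S^{(i)}}$ is PSD with $\vec 1_{S^{(i)}}$ in its kernel. I will establish the following auxiliary claim: if $M \in \mathbb{R}^{n\times n}$ is PSD with $M\vec 1 = 0$ and $M_{-j^*}$ has minimum eigenvalue at least $\lambda$, then $w^T M w \ge \frac{\lambda}{n}\|w\|^2$ for every $w \perp \vec 1$. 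The trick is to substitute $u = w - w_{j^*}\vec 1$: since $\vec 1$ is in the kernel of $M$, $u^T M u = w^T M w$, while $u_{j^*}=0$ gives $w^T M w = u_{-j^*}^T M_{-j^*} u_{-j^*} \ge \lambda \|u_{-j^*}\|^2$, and a short computation using $w_{j^*} = -\sum_{\ell \ne j^*}w_\ell$ together with Cauchy--Schwarz yields $\|u_{-j^*}\|^2 \ge \|w_{-j^*}\|^2 \ge \|w\|^2/n$.

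Applied to each $M^{(i)}$ via Condition~2 (with $|S^{(i)}| \le k$), and decomposing $v_{S^{(i)}} = \alpha_i \vec 1_{S^{(i)}} + w_i$ with $w_i \perp \vec 1_{S^{(i)}}$, this gives $v^T \Sigma^{(i)} v = w_i^T M^{(i)} w_i \ge \frac{\lambda}{k}\|w_i\|^2$, so summing yields $v^T \Sigma v \ge \frac{\lambda}{k}\sum_i \|w_i\|^2$. It remains to lower bound $\sum_i \|w_i\|^2$ by $\|v\|^2/(2k^2)$ whenever $v_j=0$. For this I consider the hypergraph $H$ with vertex set $S$ and hyperedges $S^{(1)},\dots,S^{(m)}$. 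Condition~3 forces each new $S^{(i)}$ to meet $\cup_{\ell<i}S^{(\ell)}$, so inducting on $i$ shows that the union stays connected as hyperedges are added one at a time; hence $H$ is connected on $S$.

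The key pointwise inequality is that for $a,b \in S^{(i)}$, $(v_a - v_b)^2 = (w_i(a)-w_i(b))^2 \le 2\|w_i\|^2$. Fix any $a \in S$ and take a shortest path $a = c_0,c_1,\dots,c_p=j$ in $H$, where each consecutive pair lies in some common hyperedge $S^{(i_\ell)}$; minimality forces the $i_\ell$'s to be distinct (if $i_{\ell_1} = i_{\ell_2}$ for $\ell_1<\ell_2$, one could jump directly from $c_{\ell_1}$ to $c_{\ell_2+1}$ through that hyperedge, shortening the path), and $p \le |S|-1 \le k-1$. Cauchy--Schwarz then yields
$$v_a^2 = (v_a - v_j)^2 \le p\sum_{\ell=0}^{p-1}(v_{c_\ell}-v_{c_{\ell+1}})^2 \le 2p\sum_\ell \|w_{i_\ell}\|^2 \le 2k\sum_i \|w_i\|^2.$$
Summing over $a \in S$ gives $\|v\|^2 \le 2k^2 \sum_i \|w_i\|^2$, and combining with $v^T \Sigma v \ge \frac{\lambda}{k}\sum_i \|w_i\|^2$ produces the desired $v^T \Sigma v \ge \frac{\lambda}{2k^3}\|v\|^2$.
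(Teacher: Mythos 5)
Your proof is correct and yields exactly the claimed constant $\lambda/(2k^3)$, but it aggregates the hypotheses differently from the paper. The paper's proof keeps only one summand: it bounds $v^T \Sigma v \ge \max_i v^T \Sigma^{(i)} v$, picks a coordinate $j'$ with $|v_{j'}| \ge 1/\sqrt{k}$, walks along the connectivity path from $j'$ to $j$ (length at most $k$, from Condition 3), and pigeonholes to find a single consecutive pair $a,b$ lying in one common $S^{(i)}$ with $|v_a - v_b| \ge k^{-3/2}$; the same shift-by-$v_{j^*}\vec 1$ trick you use (Conditions 1 and 2) then gives $v^T \Sigma^{(i)} v \ge \lambda\bigl((v_a - v_{j^*})^2 + (v_b - v_{j^*})^2\bigr) \ge \lambda/(2k^3)$. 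You instead keep every summand: your auxiliary claim (whose computation, including $\|u_{-j^*}\| \ge \|w_{-j^*}\| \ge \|w\|/\sqrt{n}$, checks out) turns Conditions 1 and 2 into the per-matrix estimate $v^T \Sigma^{(i)} v \ge (\lambda/k)\|w_i\|^2$, and the connectivity from Condition 3 becomes a discrete Poincar\'e-type inequality $\|v\|^2 \le 2k^2 \sum_i \|w_i\|^2$, proved by Cauchy--Schwarz along shortest paths whose hyperedges you correctly argue can be taken distinct. Both arguments rest on the same two mechanisms---the kernel shift along $\vec 1$ and a path of length at most $k$---so the difference is mainly one of bookkeeping: the paper's max-and-pigeonhole version is shorter, while your summed version isolates a reusable spectral fact and lower bounds $v^T \Sigma v$ by the total non-constancy energy $\sum_i \|w_i\|^2$, which is stronger when $v$ varies across many of the supports $S^{(i)}$; for the worst case governing the lemma the two give the same bound.
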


The details of this step, the proofs of Lemma~\ref{lem:swaptv} and Lemma~\ref{lem:swapvar} as well as the proof of Theorem \ref{thm:struct} are described in 
\ifnum\ieee=0
Section \ref{sec:merging}.
\else
the appendix of the full version.
\fi
\end{itemize}

\section{Covers for PMDs}
\label{sec:coverpmd}
In this section, we describe a pair of covers for $(n,k)$-PMDs.

The first cover follows directly from Theorem \ref{thm:struct}, which gives a structural characterization of a $(n,k)$-Poisson multinomial random vector as the sum of an appropriately discretized Gaussian and an $(tk^2,k)$-Poisson multinomial random vector.
We grid over all possible mean vectors and covariance matrices for the Gaussian component, and all possible parameter values for the $(tk^2,k)$-PMD.
These are covered by sets of size $(n \cdot \poly(k/\ve))^{k^2}$ and $2^{\poly(k/\ve)}$ respectively, resulting in an overall cover of size $n^{k^2} \cdot 2^{\poly(k/\ve)}$.
\begin{lemma}
\label{lem:nonsparsecover}
 For all $n, k \in \mathbb{N}$, and all $\ve >0$, there exists an $\ve$-cover of the set of all $(n, k)$-PMDs whose size is
 $$n^{k^2} \cdot  2^{{\rm poly}(k/\ve)}.$$
\end{lemma}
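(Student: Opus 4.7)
The plan is to invoke Theorem~\ref{thm:struct} and build the cover as the convolution of two separate covers, one for the Gaussian component and one for the sparse PMD component. By Theorem~\ref{thm:struct} applied at accuracy $\ve/2$, every $(n,k)$-PMD $X$ satisfies $\dtv(X, G + P) \le \ve/2$, where $G$ is a Gaussian with structure preserving rounding (with every non-zero block of the covariance having minimum eigenvalue at least $\Omega(tc/k^4) = \poly(k/\ve)$) and $P$ is an independent $(tk^2, k)$-PMD with $t = \poly(k/\ve)$. It therefore suffices to grid both pieces to $\ve/4$-accuracy and combine via convolution, since $\dtv(G+P, G'+P') \le \dtv(G, G') + \dtv(P, P')$ for independent pairs by the coupling/data-processing inequality.

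For the Gaussian cover $\mathcal{G}$: because $X$ is supported in $\{0,\dots,n\}^k$, we may restrict the mean vector $\m$ to $[0,n]^k$ and each entry of $\Sigma$ to $[-n,n]$, and enumerate the block-diagonal sparsity pattern (at most $2^{k^2}$ choices, absorbed into the $2^{\poly(k/\ve)}$ factor). I would grid $\m$ and each free entry of $\Sigma$ at scale $1/\poly(k/\ve)$. Using Pinsker together with the closed-form KL divergence of continuous Gaussians, and the minimum non-zero eigenvalue lower bound $\poly(k/\ve)$ guaranteed by Theorem~\ref{thm:struct}, a perturbation of this size changes the continuous Gaussian's law by at most $\ve/4$ in TV; rounding (structure-preserving) only decreases TV, so the same bound holds for the rounded Gaussian. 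The resulting grid has size $(n \cdot \poly(k/\ve))^{k^2} = n^{k^2} \cdot 2^{\poly(k/\ve)}$.

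For the PMD cover $\mathcal{P}$: a $(tk^2,k)$-PMD is parameterized by a row-stochastic matrix in $[0,1]^{tk^2 \times k}$. I would grid each entry at scale $\ve/(k \cdot tk^2)$ and renormalize rows. The resulting distribution is within $\ve/4$ in TV of the original by the standard coupling that couples each $k$-CRV independently, disagreeing with probability bounded by the $\ell_1$ distance of the two row parameters. This gives $|\mathcal{P}| \le (tk^2/\ve)^{O(tk^3)} = 2^{\poly(k/\ve)}$. Taking $\mathcal{F}_\ve = \{G' + P' : G' \in \mathcal{G},\ P' \in \mathcal{P}\}$ yields a cover of size $|\mathcal{G}| \cdot |\mathcal{P}| = n^{k^2} \cdot 2^{\poly(k/\ve)}$, as claimed.

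The main obstacle is the Gaussian gridding step: the structure-preserving rounding is more delicate than coordinate-wise rounding, and to get a $\poly(k/\ve)$-size grid for $\Sigma$ we must leverage the lower bound on the minimum non-zero eigenvalue from Theorem~\ref{thm:struct}. The PMD gridding and the triangle-inequality composition are comparatively routine.
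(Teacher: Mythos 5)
Your overall route is the paper's: apply Theorem~\ref{thm:struct} at accuracy $O(\ve)$, grid the sparse $(tk^2,k)$-PMD parameter-by-parameter, grid the parameters of the Gaussian component, and combine by the triangle inequality (with Lemma~\ref{lem:DPI} for the discretization). The only real stylistic difference is in the Gaussian step: you bound the effect of perturbing $(\mu,\Sigma)$ via KL plus Pinsker where the paper uses Proposition~\ref{prop:gaussapprox}, and you grid $\Sigma$ entrywise where the paper grids a Cholesky factor. The paper's Cholesky trick exists precisely to guarantee that every candidate matrix is PSD; your entrywise grid survives only because the $\poly(k/\ve)$ lower bound on the minimum non-zero eigenvalue dwarfs the $k \times (\text{grid scale})$ operator-norm perturbation, so the nearest grid point to the true block covariance is still PSD --- this should be said explicitly, but it is not a real obstruction.

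The step that fails as written is your treatment of the pivot coordinates. The structure preserving rounding (Definition~\ref{def:gaussian}) fills each block's pivot with the sum of that block's means minus the sum of the rounded non-pivot samples, so the rounding map itself depends on the block mean-sums; in the distribution produced by Theorem~\ref{thm:struct} each such sum is an integer $n_i \in \{0,\dots,n\}$ (the number of CRVs feeding that block). Your claim that ``rounding (structure-preserving) only decreases TV'' is an appeal to Lemma~\ref{lem:DPI}, which requires applying the \emph{same} map to both distributions. If instead you grid the full mean vector at scale $1/\poly(k/\ve)$ and use the gridded block sum as the pivot offset, that offset generically differs from $n_i$ by a non-integer $\delta \in (0,1)$; then the pivot coordinate of your candidate is supported on $\delta + \mathbb{Z}$ while the target's is supported on $\mathbb{Z}$, and the total variation distance is $1$, not $O(\ve)$. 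The fix is exactly what the paper does: enumerate the integer $n_i$ for each block exactly (a factor of $n+1$ per block), grid only the non-pivot mean and covariance, and apply data processing with that common integer offset. With this correction --- and with the parameter count taken per block after deleting the pivot row and column, which is what keeps the exponent of $n$ at $\tfrac12 k^2 + \tfrac12 k \le k^2$ --- your argument matches the paper's proof.
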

The proof of this lemma is presented in 
\ifnum\ieee=0
Section~\ref{sec:naivecover}.
\else
the appendix of the full version.
\fi

The second cover further sparsifies the cover for the $(tk^2, k)$-PMD component, by using a multivariate generalization of the moment matching technique described in \cite{DaskalakisP14}.
This reduces the cover size for this component to $2^{O(k^{5k} \log^{k+2} (1/\ve))}$.
In \cite{Roos02}, Roos shows that a PMD can be written as the weighted sum of partial derivatives of a regular multinomial distribution.
He goes on to show that dropping the higher order derivatives in this sum results in a total variation approximation, where the quality of the approximation depends on the parameters of the PMD and the point at which we evaluate the derivatives.
We take advantage of this tool to obtain an $\ve$-approximation, through a careful partitioning of the CRVs and choice of point at which to evaluate the derivatives of the multinomial distributions.
This implies that any two distributions which have matching ``moment profiles'' (which roughly describe the lower order derivatives of the distribution) are $\ve$-close to each other, and thus only one representative element must be kept from each such equivalence class.
The size of the cover follows by a counting argument on the number of moment profiles.

\begin{lemma}
\label{lem:sparsecover}
 For all $n, k \in \mathbb{N}$, and all $\ve >0$, there exists an $\ve$-cover of the set of all $(n, k)$-PMDs whose size is
 $$n^{k^2} \cdot  2^{O(k^{5k} \log^{k+2} (1/\ve))}.$$
\end{lemma}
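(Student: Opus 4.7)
The plan is to start from Theorem~\ref{thm:struct} together with Lemma~\ref{lem:nonsparsecover} and keep everything on the Gaussian side unchanged: any $(n,k)$-PMD is $\ve$-close to a structure-preserving-rounded Gaussian convolved with a $(tk^2,k)$-PMD, and the Gaussian component is covered (after gridding the mean to precision $1/\poly(k/\ve)$ and the covariance entrywise to a comparable precision) by at most $n^{k^2}\cdot 2^{O(k^2\log(k/\ve))}$ distributions. The entire content of the lemma is therefore to replace the naive $2^{\poly(k/\ve)}$ gridding of the sparse $(tk^2,k)$-PMD component by a much smaller cover of size $2^{O(k^{5k}\log^{k+2}(1/\ve))}$, which is what the multivariate moment-matching step is designed to do.

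To build this smaller cover I would invoke the Krawtchouk-type expansion of Roos~\cite{Roos02}, which writes any $(m,k)$-PMD as a weighted sum of mixed partial derivatives of a reference multinomial distribution $M_k(m,\vec q)$ evaluated at a chosen center $\vec q$, where the coefficients are symmetric polynomials (over the CRV rows) of the parameter deviations from $\vec q$. Truncating this expansion at mixed degree $d$ leaves a total variation error that decays with $d$ whenever the row parameters are sufficiently concentrated around $\vec q$. Because the CRV rows of an arbitrary $(tk^2,k)$-PMD need not cluster around any single center, I would first partition the CRVs into groups whose row-parameter vectors lie in a common small cell (for instance a dyadic refinement of the simplex along each coordinate), choose the center $\vec q$ of each cell, and apply the Roos truncation to each group independently before convolving the approximations back together.

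The payoff is that any two $(tk^2,k)$-PMDs whose CRVs fall into the same cells of the partition and which agree on every truncated coefficient in every cell --- their shared \emph{moment profile} --- have identical truncated Roos expansions and therefore are $\ve$-close in total variation distance after a triangle inequality that absorbs the per-cell truncation error (itself driven below $\ve$ divided by the number of cells by choosing $d=\Theta(\log(1/\ve))$). One representative per moment profile then suffices, and counting profiles reduces to counting tuples of integer-valued symmetric polynomials: the partition contains $k^{O(k)}$ cells, each contributing $\binom{k+d}{d}=k^{O(d)}$ symmetric coefficients bounded by $\poly(tk^2)=\poly(k/\ve)$, each describable in $O(\log(k/\ve))$ bits. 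Multiplying these factors out gives $2^{O(k^{5k}\log^{k+2}(1/\ve))}$, which combined with the $n^{k^2}$ Gaussian choices yields the advertised bound.

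The main obstacle I expect is the multivariate bookkeeping around Roos's bound, in two linked places: establishing the right quantitative form of the truncation error in terms of the maximum deviation of the CRV rows from the center $\vec q$ (so that $d=\Theta(\log(1/\ve))$ is enough), and choosing the cell geometry of the partition finely enough that this deviation is small on every cell yet coarsely enough that the cell count, degree, and bits per coefficient together still fit inside the $k^{5k}\log^{k+2}(1/\ve)$ budget in the exponent. Everything else --- convolving the per-cell approximations, absorbing per-cell errors by triangle inequality, and the counting argument for moment profiles --- is a direct multivariate extension of the $k=2$ argument of~\cite{DaskalakisP14}.
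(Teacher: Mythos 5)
Your plan matches the paper's proof of Lemma~\ref{lem:sparsecover}: reuse the Gaussian half of the naive cover from Lemma~\ref{lem:nonsparsecover}, partition the CRVs of the sparse $(tk^2,k)$-PMD into cells of a $(4ek^3)^{-1}$-grid so that Roos's truncation bound (Lemma~\ref{lem:approximatorbound}) applies per cell with the cell's empirical mean as the center $\vec q$, and quotient by power-sum moment profiles $\sum_i \prod_j \rho(i,j)^{u_j}$ for $u \in V_k(w)$ via Lemma~\ref{lem:sameapproximators}. The only deviations in your sketch are bookkeeping details---the paper uses a uniform rather than dyadic grid, sets the truncation degree to $w = k\log(4ek^3/\ve)$ rather than $\Theta(\log(1/\ve))$, and accounts $O(s\log(k/\ve))$ bits for a degree-$s$ moment rather than $O(\log(k/\ve))$ (the last of which is what produces the final $\log^{k+2}(1/\ve)$ rather than $\log^{k+1}$)---none of which change the conclusion.
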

The proof of this lemma is given in 
\ifnum\ieee=0
Section~\ref{sec:sparsecover}.
\else
the appendix of the full version.
\fi
We note that this cover can be efficiently enumerated over, using a dynamic program similar to that of \cite{DaskalakisP14}.

By combining these two lemmas, we obtain Theorem~\ref{thm: cover}.

\section{Learning PMDs}
\label{sec:learningpmd}

As mentioned before, Theorem~\ref{thm: cover} combined with Theorem~\ref{thm:tournament} below (taken from \cite{DaskalakisK14}) immediately implies that $(n,k)$-PMDs can be learned from $O(\log N /\ve^2)$ samples, where $N$ is the size of our cover.

\begin{restatable}[Theorem 19 of \cite{DaskalakisK14}]{theorem}{tournament} \label{thm:tournament}
There is an algorithm {\tt FastTournament}$(X, {\cal H},\ve,\delta)$, which is given sample access to some distribution $X$ and a collection of distributions ${\cal H}=\{H_1,\ldots,H_N\}$ over some set ${\cal D}$, access to a PDF comparator for every pair of distributions $H_i, H_j \in {\cal H}$, an accuracy parameter $\ve >0$, and a confidence parameter $\delta >0$.  The algorithm makes
{$O\left({\log {1/ \delta} \over \ve^2} \cdot \log N\right)$} draws from each of $X, H_1,\ldots,H_N$ and returns some $H \in {\cal H}$ or declares ``failure.''  If there is some $H^* \in {\cal H}$ such that $\dtv(H^*,X) \leq \ve$ then with probability at least $1-\delta$ the distribution $H$ that {\tt
FastTournament} returns satisfies $\dtv(H,X) \leq {512} \ve.$ The total number of operations of the algorithm is {$O\left( {\log{1 / \delta} \over \ve^2} \left(N \log N + \log^2 {1 \over \delta}\right) \right)$}.
Furthermore, the expected number of operations of the algorithm is {$O\left( {N\log{N /\d} \over \ve^2}\right)$}.
\end{restatable}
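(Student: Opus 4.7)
The plan is to adapt the classical Scheffé-style hypothesis selection approach and organize the pairwise tests into a bracket-like tournament to shave the $\log N$ factor in sample complexity from the naive quadratic one. The starting point is that for any two candidate distributions $H_i, H_j$, the Scheffé set $A_{ij} = \{x : H_i(x) > H_j(x)\}$ has the property that $|P_{H_i}(A_{ij}) - P_{H_j}(A_{ij})| = \dtv(H_i, H_j)$, so comparing the empirical mass $\hat P_X(A_{ij})$ (estimated from samples of $X$) against $P_{H_i}(A_{ij})$ and $P_{H_j}(A_{ij})$ reveals which of the two hypotheses is closer in total variation to $X$. The PDF comparator assumed in the statement is exactly what we need to decide membership in $A_{ij}$ for a sampled point.

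First I would build the \emph{single-comparison primitive}: for a fixed pair $(H_i, H_j)$, draw $m = \Theta(\log(1/\delta')/\ve^2)$ points from each of $X$, $H_i$, $H_j$, use the PDF comparator to compute the empirical masses on $A_{ij}$, and declare the winner to be whichever hypothesis' Scheffé-mass is within $\ve$ of the empirical $\hat P_X(A_{ij})$ (breaking ties arbitrarily, or declaring a draw if both are within $\ve$). A standard Chernoff/Hoeffding argument shows that with probability $1-\delta'$ the primitive is \emph{correct} in the following sense: if $\dtv(H_i, X) \le \ve$ and $\dtv(H_j, X) > c\cdot \ve$ for a sufficiently large absolute constant $c$, then $H_i$ is declared the winner; and in general the declared winner $H$ satisfies $\dtv(H, X) \le 512\,\ve$ (the constant $512$ coming from book-keeping of the triangle inequality and the slack used in the empirical test).

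Next I would design the \emph{tournament} to aggregate $N$ comparisons with only $O(\log N)$ rounds' worth of samples from $X$. Rather than playing all $\binom{N}{2}$ Scheffé duels and taking a Condorcet-style winner (which would cost $\log^2 N$ samples), I would use a single-elimination-style bracket with re-entry: pair up hypotheses, run the primitive on each pair with a shared pool of $\Theta(\log(1/\delta)/\ve^2)$ fresh samples from $X$ (reused across the round's $N/2$ tests), and advance the winners. The key observation is that since the primitive only compares probabilities on a single event per test, we can reuse the same $X$-samples across all tests in a round while maintaining independence for the downstream analysis of the good hypothesis $H^*$. To ensure $H^*$ is never unfairly eliminated, give each hypothesis a small number of second chances (e.g., a double-elimination bracket or a round-robin among round-survivors), which adds only a constant multiplicative factor to the number of comparisons. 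The sample complexity from $X$ is thus $O(\log(1/\delta)/\ve^2) \cdot O(\log N)$ as claimed, while the comparator/arithmetic count is $O((N\log N + \log^2(1/\delta))\log(1/\delta)/\ve^2)$ since each of the $O(N)$ surviving matches does $O(\log(1/\delta)/\ve^2)$ work.

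The main obstacle is the second step: showing that the good hypothesis $H^*$ survives the tournament with probability $1-\delta$ while simultaneously keeping the total number of $X$-samples at $O(\log N \cdot \log(1/\delta)/\ve^2)$ and not $O(\log N \cdot \log(N/\delta)/\ve^2)$. The trick is to union-bound only over the $O(\log N)$ matches $H^*$ actually plays (at confidence $\delta/\log N$) rather than over all $\Theta(N)$ matches in the bracket, and to argue that even if incorrect verdicts happen in matches not involving $H^*$, the bracket structure still forwards \emph{some} good-enough hypothesis because any eventual champion beat every hypothesis on its path, including (by a chain-of-losses argument and triangle inequality on $\dtv$) something within $O(\ve)$ of $H^*$. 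With this analysis in hand, Theorem~\ref{thm:tournament} follows.
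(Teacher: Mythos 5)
This theorem is not proved in the paper at all: it is imported verbatim from \cite{DaskalakisK14} (their Theorem~19) and simply invoked as a black-box hypothesis-selection tool. So there is nothing in the paper's text for your argument to be compared against, and a faithful ``proof'' here would be a citation, not a derivation.

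That said, your blind reconstruction of {\tt FastTournament} has a genuine gap in the step you flag as the ``main obstacle,'' and it is not patched by the sketch you give. The Scheff\'e primitive's guarantee is of the form: \emph{if} $\dtv(H_i,X)\le \eps$ and $H_j$ wins the duel, then $\dtv(H_j,X)\le c\,\eps + O(\gamma)$ for an absolute constant $c$ (typically $c=3$) and statistical slack $\gamma$. Crucially, the bound on the winner scales with the loser's distance to $X$, not with $\eps$ alone. In a single-elimination bracket of depth $\Theta(\log N)$, your ``chain-of-losses plus triangle inequality'' argument therefore compounds: if the champion reaches $H^*$ only through a chain of length $\ell$, the guarantee degrades to $c^{\ell}\eps$, i.e.\ $N^{\Theta(1)}\eps$, not the advertised $512\,\eps$. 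Adding a constant number of ``second chances'' or a double-elimination structure does not change the depth of the worst chain and so does not repair this. The actual {\tt FastTournament} of \cite{DaskalakisK14} is engineered precisely to avoid this compounding: it does not run a deep knockout bracket but instead organizes comparisons so that whenever the near-optimal $H^*$ is eliminated, its vanquisher is compared (directly, not transitively through a long chain) against whatever is ultimately output, keeping the chain of Scheff\'e inferences bounded-length and hence the constant absolute. Your sample-reuse claim (``we can reuse the same $X$-samples across all tests in a round while maintaining independence'') is also asserted rather than justified; reuse does create dependence across tests, and the correct handling is a union bound over a carefully restricted set of events (the matches $H^*$ plays, plus the final comparison), which is related to but not the same as the independence you assert. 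If you want to present a self-contained proof, you would need to replace the knockout bracket with a selection structure that bounds the inference-chain length by $O(1)$, and then the $512$ will fall out of that constant-depth composition of Scheff\'e guarantees.
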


Theorem~\ref{thm:tournament} is using a tournament-style algorithm for hypothesis selection, which takes a set of candidate distributions and outputs one which is $O(\ve)$-close to the unknown distribution (if such a distribution exists)\footnote{We note that this tournament additionally requires a ``PDF comparator,'' which we describe for our setting in
\ifnum\ieee=0
Section~\ref{sec:pmdpdf}.
\else
the appendix of the full version.
\fi
}.
Given that $N$ is polynomial in $n$, the resulting sample complexity is logarithmic in $n$. 
To remove the dependence on $n$ from our sample complexity, we need to exploit not just the size but also the Gaussian structure of the cover. 
Instead of trying all possible Gaussians that the cover could describe, we instead estimate the moments of the Gaussian directly.

\ifnum\ieee=0
\def\figwidth{16.2cm}
\else
\def\figwidth{17.4cm}
\fi
\begin{figure}
\framebox{
\begin{minipage}{\figwidth}
\begin{enumerate}
\item Guess the block structure/partition of the coordinates.
\item Estimate (using a single sample) the number of CRVs in each block.
\item For each Gaussian in the block structure, use $\poly(k)/\ve^2$ samples to find its mean vector and covariance matrix, as follows:
\begin{enumerate}
\item With $\poly(k)/\ve^2$ samples, estimate the mean vector and covariance matrix of the PMD.
\item Convert these estimates to the mean and covariance of the Gaussian by searching over a spectral cover of positive semidefinite matrices.
\end{enumerate}
\item Guess the sparse component by enumerating over elements in either of the two covers.
\item Run a tournament on the set of guessed distributions to identify one which is $\ve$-close.
\end{enumerate}
\end{minipage}}
\ifnum\ieee=0
\vspace{-10pt}
\else
\fi
\caption{Steps of the learning algorithm}
\end{figure}

Our strategy will not be to generate an $\ve$-cover for \emph{all} $(n,k)$-PMDs, 
but instead we take samples and select only distributions from our cover which are consistent with the data.
Similar to before, we will apply Theorem~\ref{thm:tournament} to do hypothesis selection but instead of applying it to the complete cover resulting from Theorem~\ref{thm: cover}, we will apply it to a much smaller set of hypothesis that we obtain after making several ``guesses'' for the parameters of our distribution.
At least one set of these parameters will be sufficiently accurate to obtain an $\ve$ total variation distance guarantee and we will be able to determine a good candidate using Theorem~\ref{thm:tournament}.

The first step of our learning algorithm is to guess the block-diagonal structure of the Gaussian component of our distribution by guessing the partition of the coordinates and choosing an arbitrary pivot within each block.
This requires at most $k^k$ guesses.
Note that any choice of pivot in the partition is acceptable (as shown in Lemma \ref{lem:swaptv} above).

The next step is to guess the sum of the means for the Gaussian component within each block. We need this to know how to fill in the pivot coordinate once we sampled the rest of the coordinates in the block.
This will be the number of CRVs which result in this block of the Gaussian component, and thus an integer between $0$ and $n$.
Since the total variation distance between the sampled distribution and the distribution from the cover is at most $\ve$, with probability at least $1 - \ve$, the sample has non-zero probability to be generated by the distribution from the cover.
In this case, the sum of the sample's values within each block will be equal to the sum of the means from the Gaussian component, plus the contribution from the sparse $(tk^2,k)$-PMD component.
Therefore, for each block, we can guess the sum of the means via the following procedure:
Take a single sample $X \in \mathbb{R}^k$, and for each block $\mathcal{B}$, guess the sum of the means to be $\sum_{i \in \mathcal{B}} X_i - \ell$, for all $\ell \in \{0,1,\dots,tk^2\}$.
Since there are at most $k$ blocks, this requires $(tk^2 + 1)^k$ guesses.

Next, we estimate the mean and covariance of the Gaussian component for each block. 
We need to estimate them accurately enough in order to learn each block of the discretized Gaussians to within $O(\ve/k)$ in total variation distance. A useful tool for showing this is the following proposition:

\begin{restatable}{proposition}{tvtodirectional}
  \label{lem:tv-to-directional}
  Let $\m,\m' \in \mathbb{R}^k$ and $\S, \S' \in \mathbb{R}^{k \times k}$, such that for all $y \in \mathbb{R}^k$
  $$|y^T(\mu' - \mu)| \leq \ve \sqrt{y^T \S y} ~~ \text{and} ~~ |y^T(\S' - \S)y| \leq \ve y^T \S y.$$
  Then 
$$\dtv(\mathcal{N}(\m, \S),\mathcal{N}(\m', \S')) \leq 2\ve k.$$
\end{restatable}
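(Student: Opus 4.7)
My plan is to decouple the mean mismatch from the covariance mismatch via a hybrid argument, then bound each piece after an appropriate linear change of variables. As a preliminary, I would reduce to the case where $\Sigma$ is positive definite: the covariance hypothesis forces $y^T \Sigma y = 0 \Rightarrow y^T \Sigma' y = 0$, so $\Sigma$ and $\Sigma'$ share a common kernel, while the mean hypothesis forces $\mu' - \mu$ to lie in the range of $\Sigma$. Both Gaussians are therefore supported on a common affine subspace, and restricting to it one may assume $\Sigma \succ 0$ with $k$ equal to the dimension of that subspace. Then the triangle inequality gives
$$\dtv(\mathcal{N}(\mu, \Sigma), \mathcal{N}(\mu', \Sigma')) \leq \dtv(\mathcal{N}(\mu, \Sigma), \mathcal{N}(\mu', \Sigma)) + \dtv(\mathcal{N}(\mu', \Sigma), \mathcal{N}(\mu', \Sigma')).$$

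For the first (mean) term, I would apply the linear change of variables $x \mapsto \Sigma^{-1/2}(x - \mu')$, which preserves total variation, to reduce to $\dtv(\mathcal{N}(z, I), \mathcal{N}(0, I))$ with $z = \Sigma^{-1/2}(\mu - \mu')$. Plugging $y = \Sigma^{-1}(\mu' - \mu)$ into the first hypothesis gives $\|z\|_2^2 = (\mu' - \mu)^T \Sigma^{-1}(\mu' - \mu) \leq \ve^2$, so $\|z\| \leq \ve$. Projecting onto the direction $z/\|z\|$ reduces to a one-dimensional comparison of standard normals shifted by $\|z\|$, whose TV distance is at most $\|z\|/\sqrt{2\pi} \leq \ve$.

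For the second (covariance) term, the same change of variables reduces the problem to $\dtv(\mathcal{N}(0, I), \mathcal{N}(0, A))$ with $A = \Sigma^{-1/2} \Sigma' \Sigma^{-1/2}$, and the covariance hypothesis becomes $(1-\ve) I \preceq A \preceq (1+\ve) I$. Diagonalizing $A = Q \Lambda Q^T$ and rotating by $Q$ (which preserves TV) further reduces to a product of one-dimensional comparisons $\mathcal{N}(0, 1)$ versus $\mathcal{N}(0, \lambda_i)$ with $\lambda_i \in [1-\ve, 1+\ve]$. Combining subadditivity of TV over independent coordinates with Pinsker's inequality applied to the Taylor estimate $\dkl(\mathcal{N}(0, \lambda) \| \mathcal{N}(0, 1)) = \tfrac12 (\lambda - 1 - \ln \lambda) = \tfrac14 (\lambda-1)^2 + O((\lambda-1)^3)$, each coordinate contributes a constant multiple of $\ve$ (strictly less than $\ve/2$ once $\ve$ is bounded away from $1$), yielding a total contribution of at most $k\ve/2$. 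Combining the two pieces gives $\dtv \leq \ve + k\ve/2 \leq 2k\ve$ for $k \geq 1$; the regime $\ve \geq 1/2$ is trivial since $\dtv \leq 1 \leq 2k\ve$.

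The only delicate part is bookkeeping the constants in the two one-dimensional Gaussian comparisons so that everything fits within the stated $2k\ve$ bound; the overall structure (hybrid, change of basis, subadditivity, Pinsker) is routine. A slicker alternative would apply Pinsker directly to the closed-form multivariate-Gaussian KL, which in fact yields the tighter $O(\sqrt{k}\,\ve)$ bound, but the hybrid route is the most transparent way to match the statement of the proposition.
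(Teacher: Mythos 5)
Your proof is correct, but it takes a somewhat different route from the paper's. The paper does a \emph{single} change of variables that simultaneously normalizes $\mu = 0$, $\Sigma = I$, and diagonalizes $\Sigma'$ (since after whitening by $\Sigma^{-1/2}$ one can further rotate to the eigenbasis of $\Sigma^{-1/2}\Sigma'\Sigma^{-1/2}$ without disturbing the identity covariance). In that coordinate system the hypotheses immediately give $|\mu'_i| \leq \ve$ and $|\Sigma'_{i,i} - 1| \leq \ve$ for every $i$, and the bound follows from one application per coordinate of Proposition~\ref{prop:ddodtv}, a one-dimensional TV bound that already handles mean and variance discrepancies jointly; summing over $k$ independent coordinates gives $2\ve k$ at once. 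You instead hybridize first --- triangle inequality through $\mathcal{N}(\mu', \Sigma)$ --- and then treat the mean shift and the covariance change as two separate reductions, invoking Pinsker plus a Taylor expansion of the Gaussian KL for the second piece. Both arguments land in the same place (a product of one-dimensional Gaussian comparisons after whitening and diagonalizing), but the paper's route is shorter because it avoids the hybrid step and the KL/Pinsker machinery; your route is arguably more modular and, as you note, the Pinsker-on-multivariate-KL variant actually gives a sharper $O(\sqrt{k}\,\ve)$. The one place your writeup is looser than you'd want for a final version is the per-coordinate constant in the covariance step: ``bounded away from $1$'' should be quantified (the Pinsker bound $\tfrac12\sqrt{-\ve - \ln(1-\ve)} \leq \ve/2$ holds for $\ve \lesssim 0.7$, and $\ve \geq 1/2$ is already handled by the trivial $\dtv \leq 1 \leq 2k\ve$), and your preliminary reduction to $\Sigma \succ 0$ via the common-kernel / affine-support observation is a nice bit of care that the paper's terser WLOG skips.
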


\noindent Proposition~\ref{lem:tv-to-directional} implies that, in order to achieve the required bound in total variation distance, it suffices to get an estimate that approximately matches the mean and variance of the Gaussian component in every direction.  In 
\ifnum\ieee=0
Section~\ref{sec:momentestimate},
\else
the appendix of the full version,
\fi
we prove Lemma~\ref{lem:allestimate} which shows that using $\poly(k)/\ve^2$ samples from the PMD, we can get an estimate of the mean and covariance matrix that achieves this guarantee in every direction.
However, this estimate is with respect to the PMD we are sampling from and \emph{not} with respect to the Gaussian component, which is the guarantee we desire.

\begin{restatable}{lemma}{allestimate}
  \label{lem:allestimate}
  Given sample access to a $(n,k)$-PMD $X$ with mean $\m$ and covariance matrix $\S$ (with minimum eigenvalue at least $1$), there exists an algorithm which can produce estimates $\hat \mu$ and $\hat \S$ such that with probability at least $9/10$:
  $$|y^T(\hat \mu - \mu)| \leq \ve \sqrt{y^T \S y} ~~ \text{and} ~~ |y^T(\hat \S - \S)y| \leq \ve y^T \S y $$
  for all vectors $y$.

  The sample and time complexity are $O(k^4/\ve^2)$.
\end{restatable}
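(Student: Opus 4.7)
The plan is to take the empirical mean $\hat\mu = \frac{1}{m}\sum_{j=1}^m X_j$ and empirical covariance $\hat\Sigma = \frac{1}{m}\sum_{j=1}^m (X_j - \hat\mu)(X_j - \hat\mu)^T$ over $m$ iid samples from the PMD. The first move is to rewrite the two ``for all $y$'' guarantees as single spectral inequalities in the whitened basis: the mean condition is equivalent to $(\hat\mu - \mu)^T \Sigma^{-1}(\hat\mu - \mu) \leq \ve^2$ (the sup over $y$ is attained at $y \propto \Sigma^{-1}(\hat\mu-\mu)$ by Cauchy--Schwarz), and the covariance condition is equivalent to $\|\Sigma^{-1/2}(\hat\Sigma - \Sigma)\Sigma^{-1/2}\|_{\mathrm{op}} \leq \ve$. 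This erases the quantifier over $y$ and reduces everything to a pair of expected-norm computations. For the mean, $\mathrm{Cov}(\hat\mu) = \Sigma/m$, so $E[(\hat\mu-\mu)^T \Sigma^{-1}(\hat\mu-\mu)] = \mathrm{tr}(\Sigma^{-1}\Sigma/m) = k/m$, and Markov's inequality delivers the desired bound with probability at least $19/20$ whenever $m = \Omega(k/\ve^2)$.

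For the covariance, set $U_j = \Sigma^{-1/2}(X_j - \mu)$ and $W_j = U_j U_j^T - I$. A short calculation gives
\[
\Sigma^{-1/2}\hat\Sigma\Sigma^{-1/2} - I \;=\; \frac{1}{m}\sum_{j=1}^m W_j \;-\; \Sigma^{-1/2}(\hat\mu - \mu)(\hat\mu - \mu)^T \Sigma^{-1/2},
\]
where the rank-one correction already has operator norm at most $\ve^2$ by the mean step. Bounding operator norm by Frobenius norm and using independence of the $W_j$ yields
\[
E\!\left[\left\|\frac{1}{m}\sum_{j=1}^m W_j\right\|_F^2\right] \;=\; \frac{1}{m}\bigl(E[\|U\|^4] - k\bigr),
\]
so the whole problem reduces to showing $E[\|U\|^4] = O(k^2)$. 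Given that, Markov forces operator norm $\leq \ve$ with probability $\geq 19/20$ as soon as $m = \Omega(k^2/\ve^2)$, comfortably inside the $O(k^4/\ve^2)$ budget of the statement, and a union bound merges the two failure events.

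The heart of the proof, and what I expect to be the main obstacle, is this uniform fourth-moment bound. Diagonalising $\Sigma = \sum_r \lambda_r v_r v_r^T$ and setting $Z_r = v_r^T(X - \mu)$, we have $\|U\|^2 = \sum_r Z_r^2/\lambda_r$, so Cauchy--Schwarz gives $E[\|U\|^4] \leq \sum_{r,s} \lambda_r^{-1}\lambda_s^{-1}\sqrt{E[Z_r^4]\,E[Z_s^4]}$. This is where the hypothesis $\lambda_{\min}(\Sigma) \geq 1$ is essential. Writing $X - \mu$ as a sum of $n$ independent centred CRV differences, $Z_r$ is itself a sum of $n$ independent mean-zero summands each bounded in magnitude by $2\|v_r\|_\infty \leq 2$; the standard fourth-moment expansion then yields $E[Z_r^4] \leq 4\lambda_r + 3\lambda_r^2 \leq 7\lambda_r^2$, where the last inequality invokes $\lambda_r \geq 1$. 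Substituting collapses the double sum to $O(k^2)$ and closes the argument. The subtlety to watch is that without the minimum-eigenvalue assumption, projections along low-variance directions could have kurtosis growing with $n$ and ruin the uniform spectral control; the assumption is precisely what converts the coordinatewise $\ell_\infty$-boundedness of the CRV supports into a variance-proportional fourth moment in every whitened direction.
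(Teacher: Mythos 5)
Your proposal is correct, and it takes a genuinely different route from the paper. The paper works direction by direction: it first proves a one-dimensional estimation lemma (via Chebyshev and a kurtosis bound on projections), then applies it with a union bound to a specific grid of $k^2$ test directions (the eigenvectors of $\Sigma$ and pairwise sums of normalized eigenvectors), and finally invokes a linear-algebraic interpolation lemma (Lemma~\ref{lem:1dtokdestimate}) to convert accuracy on those $k^2$ directions into accuracy uniformly over all $y$, at the cost of a factor of $3k$ in the error. You instead collapse the quantifier over $y$ at the outset by passing to the whitened/spectral formulation, so the covariance guarantee becomes a single operator-norm bound, which you control via the Frobenius norm and a second-moment (Markov) computation on the matrix-valued deviation $\frac1m\sum_j W_j$. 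The common core of both arguments is the same kurtosis control on PMD projections, exploiting $\|X_i\|_1=1$ and $\lambda_{\min}(\Sigma)\ge 1$; the paper gets this as a bound on excess kurtosis $\kappa_y \le 4y^Ty/(y^T\Sigma y)$, you get it as $E[Z_r^4]\le 4\lambda_r+3\lambda_r^2\le 7\lambda_r^2$ via the standard expansion of the fourth moment of a sum of independent bounded mean-zero terms. Your approach buys a sharper sample complexity of $O(k^2/\ve^2)$ (vs.\ the $O(k^4/\ve^2)$ that the paper's grid-plus-union-bound incurs, which is what the statement claims), and it sidesteps the somewhat ad hoc interpolation lemma entirely; the paper's approach in exchange yields the reusable intermediate lemmas \ref{lem:1destimate} and \ref{lem:1dtokdestimate}, and keeps each estimate a scalar Chebyshev calculation. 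The only points worth being slightly more careful about in a final write-up: the rank-one correction $\Sigma^{-1/2}(\hat\mu-\mu)(\hat\mu-\mu)^T\Sigma^{-1/2}$ has operator norm at most $\ve^2$ only on the good event for the mean, so the two good events should be explicitly intersected before concluding; and you should target $\|\tfrac1m\sum W_j\|_F \le \ve-\ve^2$ (or rescale constants) so that the triangle inequality lands exactly at $\ve$. Neither affects the order of the sample complexity.
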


In order to obtain a guarantee for the Gaussian component, we observe that there are two possible sources of errors in our estimation:
\begin{itemize}
  \item The first source of error comes from the rounding step. In proving our structural result, the real PMD had to be rounded so that no CRV has any probability that is in the range $(0,c)$, which affected the mean and covariance. 
  In 
  \ifnum\ieee=0
  Section~\ref{sec:roundingmoments}, 
  \else
  the appendix of the full version,
  \fi
  we show that this only affects the mean and variance in each direction up to a small multiplicative factor. 
  \item The second source of error is due to the existence of the sparse component creates an additional additive error in each direction. This error might be very significant in some directions as the variance of the Gaussian component can be very small compared to the number of sparse CRVs.
\end{itemize}
Understanding that our estimation is off by an additive error and a multiplicative error, we show how to efficiently correct this estimation by searching around it for the underlying covariance matrice of the Gaussian distribution. In particular, we obtain a cover of positive semidefinite matrices that are close to the estimated covariance matrix and which contains a good approximation to the covariance matrix of the underlying Gaussian.
This is challenging because the above two sources of error might affect the spectrum of the covariance matrix significantly. However, we are able to tackle this issue by carefully guessing appropriate corrections to the eigenvectors and eigenvalues of the matrix.
We prove Lemma~\ref{lem:psdcover} which states that this cover has cardinality at most $(k/\ve)^{O(k^2)}$, and thus we can get a very accurate estimate for the underlying Gaussian distribution by guessing different points in the cover.

\begin{restatable}{lemma}{psdcover}
  \label{lem:psdcover}
  Let $A$ be a symmetric $k \times k$ PSD matrix with minimum eigenvalue $1$ and let $S$ be the set of all matrices $B$ such that $|y^T(A - B)y| \le \ve_1 y^T A y + \ve_2 y^T y$ for all vectors $y$, where $\ve_1 \in [0,1/4)$ and $\ve_2 \in [0,\infty)$. Then, there exists an $\ve$-cover $S_{\ve}$ of $S$ that has size $|S_{\ve}| \le \left(\frac {k (1 + \ve_2)} { \ve} \right)^{O(k^2)}$.
\end{restatable}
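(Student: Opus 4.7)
The plan is to reduce the problem to gridding matrix entries in the eigenbasis of $A$. Diagonalize $A = V\Lambda V^T$ with $\Lambda = \diag(\lambda_1, \ldots, \lambda_k)$, where $\lambda_i \ge 1$ by hypothesis. Since $V$ is orthogonal, $B \in S$ if and only if $\tilde B := V^T B V$ satisfies $|y^T(\Lambda - \tilde B)y| \le \ve_1 y^T \Lambda y + \ve_2 \|y\|_2^2$ for every $y \in \mathbb{R}^k$, and an $\eta$-cover (in any spectral-form metric) of such $\tilde B$'s pulls back, via conjugation by $V$, to an equally small cover of the corresponding $B$'s. So it suffices to grid the diagonal-plus-small-perturbation matrices $\tilde B$ directly in this basis, and we never have to cover eigenvectors of $B$ separately.

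Next I would derive per-entry bounds on $\tilde B$ from the spectral hypothesis. Plugging $y = e_i$ gives $\tilde B_{ii} \in \bigl[(1-\ve_1)\lambda_i - \ve_2,\, (1+\ve_1)\lambda_i + \ve_2\bigr]$, so each diagonal entry lives in an interval of length $O(\lambda_i + \ve_2)$. For off-diagonals, plugging in $y = \alpha e_i + \beta e_j$ and optimizing over $(\alpha,\beta)$ yields $|\tilde B_{ij}| \le \sqrt{(\ve_1 \lambda_i + \ve_2)(\ve_1 \lambda_j + \ve_2)} \le \ve_1 \sqrt{\lambda_i \lambda_j} + \ve_2$. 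These are the only constraints we will use: each of the $O(k^2)$ entries of $\tilde B$ lies in a bounded interval whose length is controlled by a multiplicative-in-$\lambda$ term plus an additive $\ve_2$ term.

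Now grid: for each diagonal entry use resolution $\eta \lambda_i$, and for each off-diagonal entry use resolution $\eta\sqrt{\lambda_i \lambda_j} + \eta \ve_2$, with $\eta = \Theta(\ve/k)$. Because the $i$-th diagonal entry ranges over an interval of length $O(\lambda_i + \ve_2)$ and uses step size $\Theta(\ve \lambda_i / k)$, the number of grid points is $O\bigl(k(1 + \ve_2/\lambda_i)/\ve\bigr) \le O\bigl(k(1+\ve_2)/\ve\bigr)$ since $\lambda_i \ge 1$; the same bound holds for each off-diagonal by the analogous calculation. Multiplying over the $O(k^2)$ entries of the symmetric matrix yields the claimed cover size $\bigl(k(1+\ve_2)/\ve\bigr)^{O(k^2)}$. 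To verify this is an $\ve$-cover (in the same quadratic-form metric as the hypothesis), given any $\tilde B \in S$ round each entry to the nearest grid point to produce $\tilde B'$; a Cauchy--Schwarz aggregation of the entrywise errors against $y_i y_j$, combined with the AM--GM bound $|y_i y_j|\sqrt{\lambda_i \lambda_j} \le \tfrac{1}{2}(y_i^2 \lambda_i + y_j^2 \lambda_j)$, shows $|y^T(\tilde B - \tilde B')y| \le \ve\, y^T\Lambda y + \ve \|y\|_2^2$ for all $y$.

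The main obstacle is the calibration of the per-entry grid resolution so that (a) the entrywise errors sum to a genuine spectral-form error at the desired $\ve$ level, and (b) the per-entry cardinality is $O(k(1+\ve_2)/\ve)$ \emph{uniformly in the size of $\lambda_i$}; the resolution $\eta \lambda_i$ (rather than a uniform $\eta$) is exactly what makes the ratio of length to step size independent of $\lambda_i$, leveraging the multiplicative-plus-additive structure of the hypothesis. The condition $\ve_1 < 1/4$ is used only to ensure $\lambda_i - \ve_1 \lambda_i - \ve_2 > 0$ is not problematic and that eigenvalue gridding is stable; any value bounded away from $1$ would do.
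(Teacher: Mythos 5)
There is a genuine gap, and it sits exactly at the point your plan declares unnecessary: ``we never have to cover eigenvectors of $B$ separately.'' The notion of $\ve$-cover in this lemma (Definition preceding it) requires that for every $B\in S$ there is a $\hat B$ in the cover with $|y^T(B-\hat B)y|\le \ve\, y^T B y$ for all $y$ --- an approximation that is \emph{multiplicative in $B$'s own quadratic form}. Your entrywise grid in the eigenbasis of $A$ only delivers an error of the form $\ve\, y^T A y$ plus an additive term (and, as an aside, with your stated off-diagonal resolution $\eta\sqrt{\lambda_i\lambda_j}+\eta\ve_2$ the aggregation actually yields $\ve\, y^T\Lambda y+\ve\,\ve_2\|y\|_2^2$, not $\ve\, y^T\Lambda y+\ve\|y\|_2^2$). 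This cannot be converted into the required guarantee: since $\ve_2$ may be huge (in the application $\ve_2=m\le tk^2=\mathrm{poly}(k/\ve)$), the constraint defining $S$ is vacuous in directions where $y^TAy\lesssim \ve_2\|y\|_2^2$, so $B$ may have quadratic form of order $1$ in a direction where $y^TAy$ is of order $\ve_2\|y\|_2^2$; a grid whose error in that direction is $\approx \ve\,\ve_2\|y\|_2^2$ (or even $\ve\, y^TAy$) misses the target $\ve\, y^TBy\approx \ve\|y\|_2^2$ by a factor $\ve_2$. Nor can you fix this by refining the grid: to get error $\ve$ relative to $B$'s smallest eigenvalues you would need absolute per-entry resolution about $\ve/k^2$, while the entries range over intervals of length $\ve_1\lambda_i+\ve_2$ with $\lambda_i$ as large as $n$, so the per-entry count would depend on $n$ and blow past $(k(1+\ve_2)/\ve)^{O(k^2)}$. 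This direction-dependent looseness of the hypothesis relative to $B$ is precisely the difficulty the lemma is designed to overcome (it is why, in the learning application, one cannot just use $\hat\Sigma$ as a proxy for $\Sigma_G$).

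The paper's proof handles this by working with the spectral data of $B$ rather than its entries: Courant's minimax principle pins each eigenvalue of $B$ to within $\ve_1\lambda_i^A+\ve_2$ of the corresponding eigenvalue of $A$, which (using the minimum-eigenvalue lower bound) is refined to a multiplicative $(1\pm\ve)$ grid of size $O((1+\ve_2)/\ve)$ per eigenvalue; then, crucially, it \emph{does} grid over the eigenvectors of $B$, expressed by their projections onto the eigenvectors of $A$, with per-projection accuracy calibrated by two-sided bounds of the form $|v_{z,i}|\lesssim\sqrt{(\mu_i+\ve_2)/\max\{\mu_z-2\ve_2,1\}}$ and $|v_{z,i}|\lesssim\sqrt{(\lambda_z+\ve_2)/\mu_i}$. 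The candidate $\hat B=\sum_z\lambda_z' v_z'(v_z')^T$ is then verified to satisfy the multiplicative-in-$B$ guarantee by checking only the directions $v_z/\sqrt{\lambda_z}$ and their pairwise sums and invoking Lemma~\ref{lem:1dtokdestimate}. Your eigenvalue and entry bounds (the diagonal interval and the off-diagonal bound $|\tilde B_{ij}|\lesssim\sqrt{(\ve_1\lambda_i+\ve_2)(\ve_1\lambda_j+\ve_2)}$) are correct as far as they go, but without a cover of the eigenvector frame of $B$ at these direction-dependent accuracies, the entrywise grid cannot meet the lemma's notion of $\ve$-cover at the claimed size.
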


At this point, we have a collection of distributions such that at least one is close to the Gaussian component.
We do the same for the sparse PMD component by simply enumerating over all the elements in the cover.
By reading the corresponding term from the statement of Theorem \ref{thm: cover}, this requires $\min\{2^{\poly(k/\ve)}, 2^{O(k^{5k} \cdot \log^{k+2} (1/\ve))}\}$ guesses.


In conclusion, using $\poly(k)/\ve^2$ samples, we have generated a set $\mathcal{S}$ of size 
$$\left(k/\ve\right)^{O(k^2)} \cdot \min\{2^{\poly(k/\ve)}, 2^{O(k^{5k} \cdot \log^{k+2} (1/\ve))}\}$$
 which contains a distribution which is $\ve$-close to the true distribution with constant probability. 
In order to choose a ``good'' distribution from this set, we apply the hypothesis selection algorithm of Theorem \ref{thm:tournament} to obtain a distribution which is $O(\ve)$-close to the unknown distribution with constant probability, which concludes the proof of Theorem \ref{thm:PMD learning}. More details about the learning steps and complete proofs can be found in 
\ifnum\ieee=0
Section~\ref{app:learning}.
\else
the full version.
\fi

\section{Learning $k$-SIIRVs}
\label{sec:learningsiirv}
We demonstrate the expressive power of PMDs by demonstrating their applicability to learning $(n,k)$-SIIRVs.
In particular, we leverage our cover results to give a $ \tilde O_k(1/\ve^2)$ sample algorithm for this problem.

The proof uses the structural result of \cite{DaskalakisDOST13}, which says that any $(n,k)$-SIIRV is close to either a low variance distribution with limited support, or a high variance distribution which enjoys certain Gaussian structural properties.

\begin{restatable}[Corollary 4.8 of \cite{DaskalakisDOST13}]{lemma}{siirvstruct}\label{lem:siirvstruct}
  Let $S = X_1 + \dots + X_n$ be a $(n,k)$-SIIRV for some positive integer $k$.
  Let $\m$ and $\s^2$ be respectively the mean and variance of $S$.
  Then for all $\ve > 0$, the distribution of $S$ is $O(\ve)$-close in total variation distance to one of the following:
  \begin{enumerate}
    \item a random variable supported on $\frac{k^9}{\ve^4}$ consecutive integers with variance $\s^2 \leq 15(k^{18}/\ve^6)\log^2(1/\ve)$; or
    \item the sum of two independent random variables $S_1 + cS_2$, where $c$ is some positive integer $1 \leq c \leq k-1$, $S_2$ is distributed according to $\lfloor \mathcal{N}(\m, \s^2) \rceil$, and $S_1$ is a $c$-IRV; in this case, $\s^2 = \Omega\left(\frac{k^{18}}{\ve^6}\log^2(1/\ve)\right)$.
  \end{enumerate}
\end{restatable}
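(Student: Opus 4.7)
The proof naturally splits into two cases based on a threshold on the variance $\sigma^2$ of $S$. In the low-variance regime $\sigma^2 \le 15(k^{18}/\ve^6)\log^2(1/\ve)$, we get alternative (1) directly; in the high-variance regime, we must pull out a suitable period $c$ and exhibit the decomposition $S \approx S_1 + c\,S_2$ required by alternative (2). The proof is essentially Fourier-analytic in the second case, so the main work is in understanding the period structure of an arbitrary SIIRV.

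For the low-variance case, the plan is to apply Chebyshev's inequality: with probability at least $1 - \ve$, the sample lies in the interval $I = [\mu - \sigma/\sqrt{\ve},\, \mu + \sigma/\sqrt{\ve}]$. Substituting the assumed bound on $\sigma^2$, this interval has length $O(k^9 \log(1/\ve)/\ve^{3.5})$. Because we are permitted an additional $O(\ve)$ slack in total variation distance, we may replace $S$ by $S$ conditioned on lying in $I$ (paying $\ve$), and after absorbing logarithms into constants we get the stated support bound of $k^9/\ve^4$ consecutive integers. The variance bound is the hypothesis, so this case is immediate.

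For the high-variance case, the plan is to use the characteristic function $\phi_S(t) = \prod_i \phi_{X_i}(t)$ to detect arithmetic-progression structure. Concretely, for each candidate period $c \in \{1, \dots, k-1\}$, evaluate $|\phi_S(2\pi/c)| = \prod_i |\phi_{X_i}(2\pi/c)|$; this quantity is close to $1$ precisely when every $X_i$ is (nearly) supported on the lattice $c\mathbb{Z}$ shifted by some residue. Let $c^*$ be the largest such $c$ for which this product exceeds a suitable threshold (and $c^* = 1$ if none does). Then we split each $X_i$ as $X_i = R_i + c^* Y_i$ where $R_i \in \{0,\dots,c^*-1\}$ is the residue and $Y_i = \lfloor X_i/c^* \rfloor$. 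Setting $S_1 = \sum_i R_i \bmod c^*$ (which is a single $c^*$-IRV after collecting the sum modulo $c^*$, with the overflow folded into the Gaussian part), the remaining component $\sum_i Y_i$ has variance of order $\sigma^2/c^{*2}$, which is still $\Omega(k^{18}/\ve^6 \log^2(1/\ve))$ by the variance hypothesis. Our maximality choice of $c^*$ ensures the rescaled variables do not exhibit further lattice structure, so a quantitative CLT (Berry-Esseen in one dimension, or alternatively the single-dimensional specialization of Theorem~\ref{thm:val}) yields a discretized Gaussian approximation to $\sum_i Y_i$ with error $O(\ve)$, provided the variance is $\Omega(1/\ve^6)$ up to polylogarithmic factors in $1/\ve$ — exactly the threshold in the hypothesis.

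The main obstacle is the correctness of the period-finding step: one must show that if $|\phi_S(2\pi/c)|$ is small for all $c \in \{1,\dots,k-1\}$, then $S$ is directly approximable by a discretized Gaussian (the $c=1$ case), and that if it is large for some $c$, then the decomposition above holds with a total variation error absorbed into $O(\ve)$. This in turn requires translating Fourier-mass bounds into total variation bounds on the residue distribution, which is the standard but technical bridge between the characteristic-function picture and the total variation picture. Once this is established and the CLT is applied to the quotient, combining the two pieces via a standard convolution/coupling argument yields the stated decomposition $S_1 + c S_2$ with $S_2 \sim \lfloor \mathcal{N}(\mu', \sigma'^2)\rceil$, completing the proof.
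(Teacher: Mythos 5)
First, a point of reference: this lemma is not proved in the paper at all — it is imported verbatim (up to notation) as Corollary 4.8 of \cite{DaskalakisDOST13}, and the present paper only uses it as a black box. So there is no in-paper proof to compare against; your proposal has to be judged as a reconstruction of the argument in that reference, which indeed follows the same broad outline you describe (a variance dichotomy, a mod-$c$ decomposition of the summands, and a one-dimensional CLT for the quotient part).

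Within that outline, your low-variance case is fine: Chebyshev plus conditioning on an interval of length $O(\sigma/\sqrt{\ve})$ costs $O(\ve)$ in total variation, and the $\ve^{1/2}$ slack between $\ve^{-3.5}$ and $\ve^{-4}$ absorbs the $\log(1/\ve)$ factor. The high-variance case, however, has two genuine gaps. (i) \emph{Independence}: the statement requires $S_1$ and $S_2$ to be independent, but in your decomposition $S_1=(\sum_i R_i)\bmod c^*$ and the quotient-plus-carry part are correlated through each $X_i$ (the ``overflow folded into the Gaussian part'' couples them further). Showing that, conditioned on any typical realization of the residues, the quotient part is close to one and the same discretized Gaussian — so that the pair is approximately a product distribution — is precisely the technical heart of the original argument, and your plan only gestures at it via ``a standard convolution/coupling argument.'' (ii) \emph{Period finding}: smallness of $|\phi_S(2\pi/c)|$ for every integer $c\in\{2,\dots,k-1\}$ does not by itself yield total variation closeness of $S$ to a discretized Gaussian; one needs decay of $|\phi_S(\xi)|$ over an entire frequency range (or an argument relating arbitrary $\xi$ to the rational points $2\pi/c$), together with an $L^1$-type bound converting Fourier concentration near $0$ into a TV bound against $\lfloor\mathcal{N}(\cdot,\cdot)\rceil$; likewise, largeness at $2\pi/c^*$ must be converted into a TV statement about the residue distribution. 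You explicitly flag this as ``the main obstacle \dots standard but technical,'' but it is exactly the content of the result, so as written the plan does not constitute a proof. A smaller point: your construction naturally produces $S_2\approx\lfloor\mathcal{N}(\mu/c,\sigma^2/c^2)\rceil$, which matches the form used later in Lemma~\ref{lem:siirvheavy}; the parameters in case 2 of the statement should be read that way rather than literally as the mean and variance of $S$.
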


As we did for PMDs, we will use the tournament based approach, in which we generate a set of probability distributions $\mathcal{S}$, containing at least one distribution which is $\ve$-close to $S$.
We then use Theorem~\ref{thm:tournament} to select a distribution which is $O(\ve)$-close to $S$, using $\tilde O(|\mathcal{S}|/\ve^2)$ samples. 

To cover the former case, we use the PMD cover of Theorem \ref{thm: cover}.
In this setting, the SIIRV has a variance upper bounded by $\poly(k/\ve)$.
By applying a rounding procedure, it can be shown that this can be approximated by an offset $(\poly(k/\ve),k)$-SIIRV.
Recalling that any $(n,k)$-SIIRV can be expressed as the projection of an $(n,k)$-PMD onto the vector $(0,1, \dots, k-1)$ and applying our quasi-polynomial cover result in Theorem \ref{thm: cover} covers this case with $2^{O( k^{5k}\cdot \log^{k+2}(1/\ve))}$ candidates.

To cover the latter case, we first perform $k-1$ guesses for the value of $c \in [k-1]$.
For each guess, we learn the two distributions $S_1$ and $S_2$ separately.
To learn $S_1$, we use the same approach as \cite{DaskalakisDOST13}, which uses the empirical distribution obtained after mapping the samples onto $\{0,1,\dots, c-1\}$ using their residue mod $c$.
Our method for learning $S_2$ is novel -- we first round the value of each sample down to the next multiple of $c$, and examine the distribution on this support, which will be close in total variation distance to $S_2$.
We estimate the moments of this distribution using robust statistical tools, as in \cite{DaskalakisDOST13}.
The empirical median is used to estimate the mean, and a rescaling of the interquartile range is used to estimate the standard deviation.
Thus, we cover this case using only $k-1$ candidates, one for each guess of $c$.

Full details are provided in 
\ifnum\ieee=0
Section~\ref{sec:appendixsiirv}. 
\else
the appendix of the full version.
\fi

\bibliographystyle{alpha}
\bibliography{biblio}

\newcommand{\etalchar}[1]{$^{#1}$}
\newcommand{\noopsort}[1]{} \newcommand{\printfirst}[2]{#1}
  \newcommand{\singleletter}[1]{#1} \newcommand{\switchargs}[2]{#2#1}
\begin{thebibliography}{DDO{\etalchar{+}}13}

\bibitem[AD15]{AcharyaD15}
Jayadev Acharya and Constantinos Daskalakis.
\newblock Testing {P}oisson binomial distributions.
\newblock In {\em Proceedings of the 26th Annual ACM-SIAM Symposium on Discrete
  Algorithms}, SODA '15, pages 1829--1840, Philadelphia, PA, USA, 2015. SIAM.

\bibitem[AJOS14]{AcharyaJOS14b}
Jayadev Acharya, Ashkan Jafarpour, Alon Orlitsky, and Ananda~Theertha Suresh.
\newblock Sorting with adversarial comparators and application to density
  estimation.
\newblock In {\em Proceedings of the 2014 IEEE International Symposium on
  Information Theory}, ISIT '14, pages 1682--1686, Washington, DC, USA, 2014.
  IEEE Computer Society.

\bibitem[Bar88]{Barbour88}
Andrew~D. Barbour.
\newblock {S}tein's method and {P}oisson process convergence.
\newblock {\em Journal of Applied Probability}, 25:175--184, 1988.

\bibitem[Ben05]{Bentkus05}
Vidmantas Bentkus.
\newblock A {L}yapunov-type bound in {R}d.
\newblock {\em Theory of Probability \& Its Applications}, 49(2):311--323,
  2005.

\bibitem[Ber41]{Berry41}
Andrew~C. Berry.
\newblock The accuracy of the {G}aussian approximation to the sum of
  independent variates.
\newblock {\em Transactions of the American Mathematical Society},
  49(1):122--136, 1941.

\bibitem[BHJ92]{BarbourHJ92}
Andrew~D. Barbour, Lars Holst, and Svante Janson.
\newblock {\em {P}oisson Approximation}.
\newblock Oxford University Press, New York, 1992.

\bibitem[DDO{\etalchar{+}}13]{DaskalakisDOST13}
Constantinos Daskalakis, Ilias Diakonikolas, Ryan O'Donnell, Rocco~A. Servedio,
  and Li~Yang Tan.
\newblock Learning sums of independent integer random variables.
\newblock In {\em Proceedings of the 54th Annual IEEE Symposium on Foundations
  of Computer Science}, FOCS '13, pages 217--226, Washington, DC, USA, 2013.
  IEEE Computer Society.

\bibitem[DDS12]{DaskalakisDS12b}
Constantinos Daskalakis, Ilias Diakonikolas, and Rocco~A. Servedio.
\newblock Learning {P}oisson binomial distributions.
\newblock In {\em Proceedings of the 44th Annual ACM Symposium on the Theory of
  Computing}, STOC '12, pages 709--728, New York, NY, USA, 2012. ACM.

\bibitem[DK14]{DaskalakisK14}
Constantinos Daskalakis and Gautam Kamath.
\newblock Faster and sample near-optimal algorithms for proper learning
  mixtures of {G}aussians.
\newblock In {\em Proceedings of the 27th Annual Conference on Learning
  Theory}, COLT '14, pages 1183--1213, 2014.

\bibitem[DKS15]{DiakonikolasKS15}
Ilias Diakonikolas, Daniel~M. Kane, and Alistair Stewart.
\newblock Near optimal learning and sparse covers for sums of independent
  integer random variables.
\newblock {\em ArXiV}, abs/1505.00662, May 2015.

\bibitem[DKW56]{DvoretzkyKW56}
A.~Dvoretzky, J.~Kiefer, and J.~Wolfowitz.
\newblock Asymptotic minimax character of the sample distribution function and
  of the classical multinomial estimator.
\newblock {\em The Annals of Mathematical Statistics}, 27(3):642--669, 09 1956.

\bibitem[DL01]{DevroyeL01}
Luc Devroye and G\'abor Lugosi.
\newblock {\em Combinatorial methods in density estimation}.
\newblock Springer, 2001.

\bibitem[DP88]{DeheuvelsP88}
Paul Deheuvels and Dietmar Pfeifer.
\newblock {P}oisson approximations of multinomial distributions and point
  processes.
\newblock {\em Journal of multivariate analysis}, 25(1):65--89, 1988.

\bibitem[DP07]{DaskalakisP07}
Constantinos Daskalakis and Christos~H. Papadimitriou.
\newblock Computing equilibria in anonymous games.
\newblock In {\em Proceedings of the 48th Annual IEEE Symposium on Foundations
  of Computer Science}, FOCS '07, pages 83--93, Washington, DC, USA, 2007. IEEE
  Computer Society.

\bibitem[DP08]{DaskalakisP08}
Constantinos Daskalakis and Christos~H. Papadimitriou.
\newblock Discretized multinomial distributions and {N}ash equilibria in
  anonymous games.
\newblock In {\em Proceedings of the 49th Annual IEEE Symposium on Foundations
  of Computer Science}, FOCS '08, pages 25--34, Washington, DC, USA, 2008. IEEE
  Computer Society.

\bibitem[DP09]{DaskalakisP09}
Constantinos Daskalakis and Christos~H. Papadimitriou.
\newblock On oblivious {PTAS}'s for {N}ash equilibrium.
\newblock In {\em Proceedings of the 41st Annual ACM Symposium on the Theory of
  Computing}, STOC '09, pages 75--84, New York, NY, USA, 2009. ACM.

\bibitem[DP14]{DaskalakisP14}
Constantinos Daskalakis and Christos~H. Papadimitriou.
\newblock Sparse covers for sums of indicators.
\newblock {\em Probability Theory and Related Fields}, 2014.

\bibitem[DP15]{DaskalakisP15}
Constantinos Daskalakis and Christos~H. Papadimitriou.
\newblock Approximate {N}ash equilibria in anonymous games.
\newblock {\em Journal of Economic Theory}, 156:207--245, 2015.

\bibitem[Ess42]{Esseen42}
Carl-Gustaf Esseen.
\newblock On the {L}iapounoff limit of error in the theory of probability.
\newblock {\em Arkiv f\"or matematik, astronomi och fysik}, 28A(2):1--19, 1942.

\bibitem[Ger31]{Gershgorin31}
S.~Gershgorin.
\newblock {\"U}ber die abgrenzung der {E}igenwerte einer matrix.
\newblock {\em Izv. Akad. Nauk. SSSR Ser. Mat.}, 1:749--754, 1931.

\bibitem[Joh11]{Johnson11}
J.L. Johnson.
\newblock {\em Probability and Statistics for Computer Science}.
\newblock Wiley, 2011.

\bibitem[Loh92]{Loh92}
Wei-Liem Loh.
\newblock {S}tein's method and multinomial approximation.
\newblock {\em The Annals of Applied Probability}, 2(3):536--554, 08 1992.

\bibitem[Mas90]{Massart90}
P.~Massart.
\newblock The tight constant in the {D}voretzky-{K}iefer-{W}olfowitz
  inequality.
\newblock {\em The Annals of Probability}, 18(3):1269--1283, 07 1990.

\bibitem[Rey11]{Reyzin11}
Leo Reyzin.
\newblock Extractors and the leftover hash lemma.
\newblock \url{http://www.cs.bu.edu/~reyzin/teaching/s11cs937/notes-leo-1.pdf},
  March 2011.
\newblock Lecture notes.

\bibitem[Roo02]{Roos02}
Bero Roos.
\newblock Multinomial and {K}rawtchouk approximations to the generalized
  multinomial distribution.
\newblock {\em Theory of Probability \& Its Applications}, 46(1):103--117,
  2002.

\bibitem[She10]{Shevtsova10}
I.G. Shevtsova.
\newblock An improvement of convergence rate estimates in the {L}yapunov
  theorem.
\newblock {\em Doklady Mathematics}, 82(3):862--864, 2010.

\bibitem[VdV00]{VanderVaart00}
A.~W Van~der Vaart.
\newblock {\em Asymptotic statistics}, volume~3.
\newblock Cambridge University Press, 2000.

\bibitem[VV10]{ValiantV10}
Gregory Valiant and Paul Valiant.
\newblock A {CLT} and tight lower bounds for estimating entropy.
\newblock {\em Electronic Colloquium on Computational Complexity (ECCC)},
  17:179, 2010.

\bibitem[VV11]{ValiantV11}
Gregory Valiant and Paul Valiant.
\newblock Estimating the unseen: An $n/\log n$-sample estimator for entropy and
  support size, shown optimal via new {CLT}s.
\newblock In {\em Proceedings of the 43rd Annual ACM Symposium on the Theory of
  Computing}, STOC '11, pages 685--694, New York, NY, USA, 2011. ACM.

\end{thebibliography}
\appendix
\section{Useful Tools}
\subsection{Probability Metrics}
To compare probability distributions, we will require the total variation and Kolmogorov distances:
\begin{definition}
The \emph{total variation distance} between two probability measures $P$ and $Q$ on a $\s$-algebra $F$ is defined by
$$\dtv(P,Q) = \sup_{A \in F} |P(A) - Q(A)| = \frac12\|P - Q\|_1.$$
\end{definition}

Unless explicitly stated otherwise, in this paper, when two distributions are said to be $\ve$-close, we mean in total variation distance.

\begin{definition}
  The \emph{Kolmogorov distance} between two probability measures $P$ and $Q$ with CDFs $F_P$ and $F_Q$ is defined by
  $$\dk(P,Q) = \sup_{x \in \mathbb{R}} |F_P(x) - F_Q(x)|.$$
\end{definition}

We note that Kolmogorov distance is, in general, weaker than total variation distance. 
In particular, total variation distance between two distributions is lower bounded by the Kolmogorov distance.
\begin{fact}
\label{fct:drelation}
$\dk(P,Q) \leq \dtv(P,Q)$
\end{fact}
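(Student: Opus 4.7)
The plan is to observe that Kolmogorov distance is the supremum of $|P(A)-Q(A)|$ restricted to events $A$ of a very special form, namely half-lines $(-\infty,x]$, while total variation distance is the supremum over \emph{all} measurable events in the underlying $\sigma$-algebra $F$. Since the class of half-line events is a sub-collection of $F$, any supremum taken over it can only be smaller than, or equal to, the supremum taken over all of $F$.

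Concretely, I would first note that for every $x \in \mathbb{R}$, the set $A_x := (-\infty, x]$ is measurable, and by definition of the CDFs we have $F_P(x) = P(A_x)$ and $F_Q(x) = Q(A_x)$. Hence
\[
|F_P(x) - F_Q(x)| \;=\; |P(A_x) - Q(A_x)| \;\le\; \sup_{A \in F} |P(A) - Q(A)| \;=\; \dtv(P,Q).
\]
Taking the supremum of the left-hand side over $x \in \mathbb{R}$ yields $\dk(P,Q) \le \dtv(P,Q)$, which is the claimed bound.

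There is no real obstacle here; the only thing to verify is that the half-line events are measurable, which is immediate when $F$ is the Borel $\sigma$-algebra (the standard setting for CDF-based distances on $\mathbb{R}$). The proof is essentially a single application of the monotonicity of suprema with respect to the index set.
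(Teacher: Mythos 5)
Your proof is correct: since the half-line events $(-\infty,x]$ form a sub-collection of the $\sigma$-algebra $F$, the supremum of $|P(A)-Q(A)|$ over them (which is $\dk$) is bounded by the supremum over all of $F$ (which is $\dtv$). The paper states this as a Fact without proof, and your argument is exactly the standard one that would be used.
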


\subsection{Probabilistic Tools}

We will use the following form of Chernoff/Hoeffding bounds:
\begin{lemma}[Chernoff/Hoeffding]\label{lem:cher}
  Let $Z_1, \dots, Z_m$ be independent random variables with $Z_i \in [0,1]$ for all $i$. Then, if $Z = \sum_{i=1}^n Z_i$ and $\gamma \in (0,1)$,
  $$\Pr[|Z - E[Z]| \geq \g E[Z]] \leq 2\exp(-\g^2 E[Z]/3).$$
\end{lemma}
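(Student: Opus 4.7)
The plan is to follow the standard Cram\'er–Chernoff derivation via the moment generating function, handling the two tails separately and then taking a union bound. Write $\mu = \mathbb{E}[Z]$, and split the event $\{|Z - \mu| \geq \gamma \mu\}$ into the upper tail $\{Z \geq (1+\gamma)\mu\}$ and the lower tail $\{Z \leq (1-\gamma)\mu\}$.

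For the upper tail, I would apply Markov's inequality to $e^{\lambda Z}$ for a parameter $\lambda > 0$ to be optimized: $\Pr[Z \geq (1+\gamma)\mu] \leq e^{-\lambda(1+\gamma)\mu} \mathbb{E}[e^{\lambda Z}]$. By independence, $\mathbb{E}[e^{\lambda Z}] = \prod_i \mathbb{E}[e^{\lambda Z_i}]$, and since $Z_i \in [0,1]$, convexity of $x \mapsto e^{\lambda x}$ on $[0,1]$ gives $e^{\lambda Z_i} \leq 1 + Z_i(e^\lambda - 1)$, hence $\mathbb{E}[e^{\lambda Z_i}] \leq 1 + \mathbb{E}[Z_i](e^\lambda - 1) \leq \exp(\mathbb{E}[Z_i](e^\lambda - 1))$. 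Multiplying, $\mathbb{E}[e^{\lambda Z}] \leq \exp(\mu(e^\lambda - 1))$. Optimizing by setting $\lambda = \ln(1+\gamma)$ yields the classical bound $\Pr[Z \geq (1+\gamma)\mu] \leq \left(\frac{e^\gamma}{(1+\gamma)^{1+\gamma}}\right)^{\mu}$. A Taylor expansion of $\ln(1+\gamma)$ shows that the exponent $\gamma - (1+\gamma)\ln(1+\gamma) \leq -\gamma^2/3$ for $\gamma \in (0,1)$, giving $\Pr[Z \geq (1+\gamma)\mu] \leq \exp(-\gamma^2 \mu/3)$.

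For the lower tail, the argument is symmetric but uses $\lambda < 0$ (equivalently, apply Markov to $e^{-\lambda Z}$). The same MGF bound holds, and one optimizes with $\lambda = -\ln(1-\gamma)$ to get $\Pr[Z \leq (1-\gamma)\mu] \leq \left(\frac{e^{-\gamma}}{(1-\gamma)^{1-\gamma}}\right)^{\mu}$. A similar Taylor analysis shows the exponent is at most $-\gamma^2/2 \leq -\gamma^2/3$ on $(0,1)$. Adding the two bounds gives the claimed $2\exp(-\gamma^2 \mu/3)$.

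The only real technical step is the numerical inequality $(1+\gamma)\ln(1+\gamma) - \gamma \geq \gamma^2/3$ for $\gamma \in (0,1)$, which one verifies by comparing derivatives at $0$ or by direct Taylor expansion; this is where the constant $3$ in the exponent comes from (as opposed to the tighter $2$ that holds for the lower tail alone). Since both tails are bounded by the looser constant, the symmetric statement follows cleanly.
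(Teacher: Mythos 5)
Your proof is correct and complete; it is the standard Cram\'er--Chernoff derivation via the moment generating function, with the usual $1+x(e^\lambda-1)$ convexity bound for bounded random variables, the optimal choices $\lambda = \ln(1+\gamma)$ and $\lambda = -\ln(1-\gamma)$ for the two tails, and the elementary inequalities $(1+\gamma)\ln(1+\gamma)-\gamma \geq \gamma^2/3$ and $(1-\gamma)\ln(1-\gamma)+\gamma \geq \gamma^2/2$ on $(0,1)$. The paper itself states this lemma as a black-box textbook fact in its tools section and does not prove it, so there is no proof in the paper to compare against; your derivation is the canonical one.
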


We note the Dvoretzky-Kiefer-Wolfowitz (DKW) inequality, which is a powerful tool, giving a generic algorithm for learning any distribution with respect to the Kolmogorov metric \cite{DvoretzkyKW56}.
\begin{lemma}{(\cite{DvoretzkyKW56},\cite{Massart90})}
  \label{lem:dkw}
  Suppose we have $n$ IID samples $X_1, \dots X_n$ from a probability distribution with CDF $F$.
  Let $F_n(x) = \frac{1}{n}\sum_{i=1}^n \mathbf{1}_{\{X_i \leq x\}}$ be the empirical CDF.
  Then $\Pr[\dk(F,F_n) \geq \ve] \leq 2e^{-2n\ve^2}$.
  In particular, if $n = \Omega((1/\ve^2) \cdot \log(1/\d))$, then $\Pr[\dk(F,F_n) \geq \ve] \leq \d$.
\end{lemma}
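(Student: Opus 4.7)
The plan is to reduce to the uniform case and then control the supremum of the empirical process via a concentration argument. First, if $F$ is continuous, applying the probability integral transform $U_i = F(X_i)$ yields i.i.d.\ Uniform$[0,1]$ random variables, and $\sup_x |F_n(x) - F(x)|$ has the same distribution as $\sup_{u \in [0,1]}|G_n(u) - u|$, where $G_n$ is the empirical CDF of the $U_i$. For a general (possibly discontinuous) $F$, the reduction still goes through by a standard monotonicity argument that shows the uniform case is the worst case. So it suffices to prove the claim for the uniform empirical process.

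Next, I would split the two-sided bound into two one-sided pieces by a union bound, incurring only the factor of $2$ in front: $\Pr[\sup_x (F_n(x) - F(x)) \geq \ve] + \Pr[\sup_x (F(x) - F_n(x)) \geq \ve]$. For each fixed $x$, Hoeffding's inequality already gives $\Pr[F_n(x) - F(x) \geq \ve] \leq e^{-2n\ve^2}$, since $nF_n(x)$ is a sum of $n$ i.i.d.\ Bernoulli$(F(x))$ variables. The entire task is to bootstrap this pointwise bound to a supremum bound with \emph{the same exponent}; a crude union bound over the $n+1$ distinct values of $F_n$ loses a factor $n$ in the exponent, which is inadequate.

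To recover the tight constant $2$, I would follow Massart's strategy: parametrize the supremum by the order statistics $U_{(1)} < \cdots < U_{(n)}$, observe that $\sup_u(G_n(u)-u) = \max_{i} (i/n - U_{(i)})$, and reduce the tail probability to a random walk/ballot-type problem. Specifically, one can write the event $\{\max_i(i/n - U_{(i)}) \geq \ve\}$ in terms of a stopping time for a certain exchangeable sequence and apply an exponential martingale inequality (or equivalently the reflection principle for the Brownian bridge, which is the scaling limit and provides the sharp constant). This yields the sharp one-sided tail $e^{-2n\ve^2}$, which combined with the union bound above gives the full DKW statement. The final ``In particular'' clause is immediate: setting $2e^{-2n\ve^2} \leq \delta$ and solving for $n$ gives $n = \Omega((1/\ve^2)\log(1/\delta))$.

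The main obstacle is unquestionably obtaining the sharp constant $2$ in the exponent rather than something like $c$ from a generic VC-dimension bound (the class of left half-lines has VC dimension $1$, which gives the right rate but not the right constant). Standard symmetrization plus a chaining argument would easily give a bound of the form $Ce^{-cn\ve^2}$ for some absolute $c < 2$; the tightness requires the more delicate martingale/reflection argument sketched above. For the applications in this paper, however, the sharp constant is not essential: only the exponential rate in $n\ve^2$ is used, so even a cruder proof would suffice to substantiate the ``In particular'' consequence actually invoked later.
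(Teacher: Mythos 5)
The paper does not prove this lemma at all: it is imported verbatim as a known result, with the inequality $\Pr[\dk(F,F_n)\ge\ve]\le 2e^{-2n\ve^2}$ attributed to \cite{DvoretzkyKW56} for the exponential rate and to \cite{Massart90} for the tight constant, and only the easy ``in particular'' consequence is ever invoked downstream. So the relevant comparison is between your sketch and the literature proof it points to, not anything in the paper.

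Your outline follows the standard route (probability integral transform to reduce to the uniform case, handling discontinuous $F$ by monotonicity, union bound into two one-sided tails, pointwise Hoeffding, then sharpening to a supremum bound), and your closing observation is exactly right: the paper only needs some bound of the form $Ce^{-cn\ve^2}$, which a VC or chaining argument delivers, and the ``in particular'' clause follows immediately from the displayed inequality. The genuine gap is that the step carrying all of the content --- the one-sided bound $\Pr[\sup_u(G_n(u)-u)\ge\ve]\le e^{-2n\ve^2}$ --- is not actually carried out. Appealing to ``the reflection principle for the Brownian bridge'' only identifies the limiting constant as $n\to\infty$ and yields no finite-$n$ inequality, and a generic exponential-martingale or ballot-type bound does not produce the constant $2$ without the careful binomial-tail estimates that constitute Massart's proof (which, moreover, establishes the one-sided bound only for $\ve\ge\sqrt{\ln 2/(2n)}$, while the two-sided bound with the factor $2$ holds unconditionally --- so assembling the two-sided statement from two one-sided pieces needs a word of care). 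As written, your argument reduces the lemma to precisely the theorem being cited rather than proving it; that is defensible here only because the paper itself treats the result as a black box, and a weaker constant would indeed suffice for every use made of it.
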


We will use the Data Processing Inequality for total variation distance (see part (iv) of Lemma 2 of \cite{Reyzin11} for the proof).
This lemma says that taking any function of two random variables can only reduce their total variation distance.
Our statement of the inequality is taken from \cite{DaskalakisDOST13}.
\begin{lemma}[Data Processing Inequality for Total Variation Distance]\label{lem:DPI}
  Let $X, X'$ be two random variables over a domain $\Omega$.
  Fix any (possibly randomized) function $F$ on $\Omega$ (which may be viewed as a distribution over deterministic functions on $\Omega$) and let $F(X)$ be the random variable such that a draw from $F(X)$ is obtained by drawing independently $x$ from $X$ and $f$ from $F$ and then outputting $f(x)$ (likewise for $F(X')$).
  Then we have
  $$\dtv\left(F(X),F(X')\right) \leq \dtv\left(X,X'\right).$$
\end{lemma}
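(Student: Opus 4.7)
The plan is to reduce the randomized case to the deterministic case via a convexity argument, then prove the deterministic case directly from the definition of total variation distance. This is essentially a textbook fact, so the proposal is mostly to organize the standard argument.

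For the deterministic case, suppose $f : \Omega \to \Omega'$ is a fixed measurable function. For any measurable $A \subseteq \Omega'$, the preimage $B = f^{-1}(A) \subseteq \Omega$ is measurable, and
$$|\Pr[f(X) \in A] - \Pr[f(X') \in A]| = |\Pr[X \in B] - \Pr[X' \in B]| \leq \dtv(X, X'),$$
where the last inequality uses the supremum definition of total variation distance. Taking the supremum over $A$ on the left-hand side yields $\dtv(f(X), f(X')) \leq \dtv(X, X')$.

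For the randomized case, view $F$ as a probability distribution over deterministic measurable functions, and use independence of $F$ from both $X$ and $X'$ to write, for any measurable $A$,
$$\Pr[F(X) \in A] - \Pr[F(X') \in A] = \mathbb{E}_{f \sim F}\bigl[\Pr[f(X) \in A] - \Pr[f(X') \in A]\bigr].$$
Taking absolute values, applying Jensen's inequality (or simply moving $|\cdot|$ inside the expectation), and using the deterministic bound above pointwise in $f$ gives
$$|\Pr[F(X) \in A] - \Pr[F(X') \in A]| \leq \mathbb{E}_{f \sim F}[\dtv(X, X')] = \dtv(X, X').$$
Taking supremum over $A$ concludes the proof.

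There is no real technical obstacle here; the only mild subtlety is making the ``view $F$ as a distribution over deterministic functions'' step rigorous in full generality (i.e., that a randomized function admits such a decomposition in a measurable way), but the lemma statement already invites this view explicitly, so we can take it as given. A self-contained alternative that avoids this point entirely is to work with the joint distribution $(f, x)$ directly and observe that $F(X)$ and $F(X')$ are obtained by the same measurable map applied to the product measures $F \otimes \mathrm{law}(X)$ and $F \otimes \mathrm{law}(X')$, whose total variation distance equals $\dtv(X, X')$ by a standard coupling argument; then the deterministic DPI finishes the job.
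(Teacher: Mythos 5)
Your proof is correct; the paper itself does not prove Lemma~\ref{lem:DPI} but simply imports it as a known fact, citing part (iv) of Lemma 2 of the Reyzin lecture notes (with the statement taken from the SIIRV paper of Daskalakis et al.). Your argument --- reduce to the deterministic case via conditioning on $f\sim F$ and independence, then use that preimages of measurable sets are measurable together with the supremum definition of $\dtv$ --- is exactly the standard proof those references give, and both your main route and your alternative via the product measures $F\otimes\mathrm{law}(X)$, $F\otimes\mathrm{law}(X')$ are sound, so there is nothing to flag.
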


Finally, we require a hypothesis selection algorithm.
Roughly, given a set of $N$ distributions with the guarantee that at least one is $\ve$-close to an unknown distribution $X$, we can choose a hypothesis which is $O(\ve)$-close to $X$.
The running time is near-linear in $N$ and the number of samples is logarithmic in $N$.

\begin{definition}
\label{def:pdfcomp}
Let $H_1$ and $H_2$ be probability distributions over some set ${\cal D}$. A {\em PDF comparator for $H_1, H_2$} is an oracle that takes as input some $x \in {\cal D}$ and outputs $1$ if $H_1(x)>H_2(x)$, and $0$ otherwise. 
\end{definition}

\tournament*

\subsection{Bounds for Distances Between Distributions}
\begin{proposition}[Proposition B.4 of \cite{DaskalakisDOST13}]
    \label{prop:ddodtv}
      Let $\m_1, \m_2 \in \mathbb{R}$ and $0 \leq \s_1 \leq \s_2$.
        Then $$\dtv(\mathcal{N}(\m_1,\s_1^2),\mathcal{N}(\m_2, \s_2^2)) \leq \frac{1}{2}\left(\frac{|\m_1 - \m_2|}{\s_1} + \frac{\s_2^2 - \s_1^2}{\s_1^2}\right).$$
\end{proposition}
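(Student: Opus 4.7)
The plan is to split the total variation distance via the triangle inequality into a mean-shift piece and a variance-scale piece, and bound each of them using Pinsker's inequality $\dtv(P,Q) \le \sqrt{\dkl(P\|Q)/2}$.

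By the triangle inequality,
\[
\dtv(\mathcal{N}(\m_1,\s_1^2), \mathcal{N}(\m_2,\s_2^2)) \;\le\; \dtv(\mathcal{N}(\m_1,\s_1^2), \mathcal{N}(\m_2,\s_1^2)) \;+\; \dtv(\mathcal{N}(\m_2,\s_1^2), \mathcal{N}(\m_2,\s_2^2)).
\]
For the mean-shift piece, I would use the standard formula $\dkl(\mathcal{N}(\m_1,\s_1^2)\|\mathcal{N}(\m_2,\s_1^2)) = (\m_1-\m_2)^2/(2\s_1^2)$, so Pinsker immediately yields $\dtv \le |\m_1-\m_2|/(2\s_1)$, matching the first term in the claim.

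For the variance-scale piece I would again use Pinsker, combined with a quick Taylor estimate to put the KL in the right form. Writing $r = \s_1^2/\s_2^2 \in (0,1]$, the KL is $\tfrac{1}{2}(-\ln r + r - 1)$. Applying the elementary bound $-\ln(1-u) \le u + u^2/(2(1-u))$ for $u \in [0,1)$ with $u = 1-r$ gives $-\ln r + r - 1 \le (1-r)^2/(2r)$, so $\dkl \le (1-r)^2/(4r) = (\s_2^2-\s_1^2)^2/(4\s_1^2\s_2^2)$. Pinsker then produces $\dtv \le (\s_2^2-\s_1^2)/(2\sqrt{2}\,\s_1\s_2)$, and since $\s_1 \le \s_2$ implies $\sqrt{2}\,\s_1\s_2 \ge \s_1^2$, this is at most $(\s_2^2-\s_1^2)/(2\s_1^2)$. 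Summing the two contributions gives the proposition.

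The main obstacle is the variance-scale step: Pinsker on the KL produces a denominator of $\s_1\s_2$ rather than the $\s_1^2$ required by the statement, so the monotonicity assumption $\s_1 \le \s_2$ has to be used at the very end. A more direct alternative would compute the overlap region $\{x : p_1(x) \ge p_2(x)\}$, which for zero-mean Gaussians is an interval symmetric about zero of half-length $c = \s_1\s_2\sqrt{2\ln(\s_2/\s_1)/(\s_2^2-\s_1^2)}$, and then write $\dtv$ as the difference of two erf values at $c/\s_1$ and $c/\s_2$; but pinning down a clean bound from such an expression is considerably more painful than the Pinsker route, so I would not take that path unless the constants mattered.
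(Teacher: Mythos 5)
Your argument is correct: the hybrid through $\mathcal{N}(\m_2,\s_1^2)$, the exact KL formulas, the bound $-\ln r + r - 1 \le (1-r)^2/(2r)$ for $r=\s_1^2/\s_2^2$, and the final use of $\s_1\le\s_2$ to replace $\sqrt{2}\,\s_1\s_2$ by $\s_1^2$ all check out, and the two Pinsker bounds sum exactly to the stated right-hand side (the only degenerate case, $\s_1=0$, makes the bound vacuous, so nothing is lost there). Note, however, that this paper does not prove the proposition at all -- it is imported verbatim as Proposition B.4 of \cite{DaskalakisDOST13} -- so the comparison is really with that source, which establishes the bound by working directly with the Gaussian densities/CDFs: the same two-step hybrid, with the mean-shift term controlled by the explicit form of the total variation between two equal-variance Gaussians and the variance term by a direct estimate on the $L_1$ difference of the two densities. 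Your KL-plus-Pinsker route is a genuinely different and arguably cleaner derivation: it avoids any computation of crossing points or error functions, needs only the standard Gaussian KL formula and one elementary logarithm inequality, and in fact yields the slightly stronger intermediate bound $(\s_2^2-\s_1^2)/(2\sqrt{2}\,\s_1\s_2)$ for the variance part; the price is that Pinsker-type arguments cannot recover the sharper constants (e.g.\ the $1/\sqrt{2\pi}$ factors) that direct density calculations give, which is irrelevant here since the stated constants are $1/2$.
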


  \begin{proposition}[Proposition 32 in \cite{ValiantV10}]\label{prop:gaussapprox}
    Given two $k$-dimensional Gaussians $\mathcal{N}_1 = \mathcal{N}(\m_1,\S_1), \mathcal{N}_2 = \mathcal{N}(\m_2,\S_2)$ such that for all $i,j \in [k]$, $|\S_1(i,j) - \S_2(i,j)| \leq \a$, and the minimum eigenvalue of $\S_1$ is at least $\s^2$,
    $$\dtv\left(\mathcal{N}_1, \mathcal{N}_2\right) \leq \frac{\| \m_1 - \m_2\|_2}{\sqrt{2\p\s^2}} + \frac{k\a}{\sqrt{2\p e}(\s^2 - \a)}.$$
  \end{proposition}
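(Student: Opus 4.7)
The plan is to decompose the bound via the triangle inequality
$$\dtv(\mathcal{N}_1, \mathcal{N}_2) \le \dtv(\mathcal{N}(\m_1, \S_1), \mathcal{N}(\m_2, \S_1)) + \dtv(\mathcal{N}(\m_2, \S_1), \mathcal{N}(\m_2, \S_2)),$$
handling the mean shift and covariance shift separately; the first term should yield the $\|\m_1-\m_2\|_2/\sqrt{2\p\s^2}$ contribution, and the second the $k\a/(\sqrt{2\p e}(\s^2-\a))$ contribution.

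For the mean-shift piece, I would interpolate with $\m_t = \m_1 + t(\m_2-\m_1)$ for $t\in[0,1]$, letting $f_t$ denote the density of $\mathcal{N}(\m_t,\S_1)$. By the fundamental theorem of calculus and Fubini, $\int |f_1-f_2|\,dx \le \int_0^1\!\int |\partial_t f_t(x)|\,dx\,dt$. Since $\partial_t f_t(x) = f_t(x)\,(x-\m_t)^T \S_1^{-1}(\m_2-\m_1)$, the inner integral equals $\Exp_{X\sim \mathcal{N}(\m_t,\S_1)}|(X-\m_t)^T\S_1^{-1}(\m_2-\m_1)|$, which is the absolute first moment of a centered one-dimensional Gaussian with variance $(\m_2-\m_1)^T\S_1^{-1}(\m_2-\m_1) \le \|\m_2-\m_1\|_2^2/\s^2$. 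This equals $\sqrt{2/\p}$ times its standard deviation; halving to convert the $L_1$ bound to TV recovers the target factor $\|\m_1-\m_2\|_2/\sqrt{2\p\s^2}$.

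For the covariance piece, I would run the analogous interpolation $\S_t = \S_1 + t(\S_2-\S_1)$, so that by Weyl's inequality and the entrywise bound $|(\S_2-\S_1)_{ij}|\le\a$, the minimum eigenvalue of $\S_t$ stays at least $\s^2-\a$ along the path. Writing $y=x-\m_2$ and $\dot\S=\S_2-\S_1$, a direct computation gives
$$\partial_t \log f_t(x) = \tfrac12\bigl(y^T\S_t^{-1}\dot\S\,\S_t^{-1}y - \Tr(\S_t^{-1}\dot\S)\bigr),$$
so $\int|\partial_t f_t|\,dx$ reduces to the expected absolute deviation of the quadratic form $y^T\S_t^{-1}\dot\S\S_t^{-1}y$ around its mean $\Tr(\S_t^{-1}\dot\S)$ under $\mathcal{N}(\m_2,\S_t)$. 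This deviation is controlled by the Frobenius norm of $\S_t^{-1/2}\dot\S\S_t^{-1/2}$, which in turn is at most $k\|\dot\S\|_{\max}/(\s^2-\a)\le k\a/(\s^2-\a)$. The numerical $\sqrt{2\p e}$ factor in the target appears by tracking the Gaussian moment constants (the $\sqrt e$ comes from the pointwise maximum of $t\,e^{-t^2/2\s^2}/\s^2$ when one does the calculation via a direct density comparison rather than interpolation).

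The main obstacle is the covariance step, specifically producing the clean form $k\a/(\s^2-\a)$ rather than the looser $k^2\a/(\s^2-k\a)$ that falls out of a crude Weyl bound. Recovering the tighter $\s^2-\a$ in the denominator suggests replacing the interpolation argument with a direct pointwise density comparison \`a la Proposition~\ref{prop:ddodtv}, combined with a more careful matrix-norm control on $\S_2-\S_1$; the combinatorics of summing over the $k^2$ entries of $\dot\S$ and controlling cross-terms in the associated quadratic form is the part I expect to be the most delicate.
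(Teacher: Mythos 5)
This proposition is imported verbatim from Valiant--Valiant~\cite{ValiantV10} (Proposition~32 there); the present paper cites it and does not prove it, so there is no in-paper proof to compare against. I will therefore evaluate your attempt on its own terms.

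Your triangle-inequality split and your treatment of the mean-shift piece are correct and complete. With $\m_t=\m_1+t(\m_2-\m_1)$ one has $\partial_t f_t(x)=f_t(x)\,(x-\m_t)^T\S_1^{-1}(\m_2-\m_1)$, the inner integral is the absolute first moment of a centered one-dimensional Gaussian with variance $(\m_2-\m_1)^T\S_1^{-1}(\m_2-\m_1)\le\|\m_2-\m_1\|_2^2/\s^2$, the moment is $\sqrt{2/\p}$ times the standard deviation, and halving for total variation yields $\|\m_1-\m_2\|_2/\sqrt{2\p\s^2}$ exactly.

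The covariance piece is not a proof, and the gap you flag is genuine and not merely a bookkeeping nuisance. The entrywise hypothesis $|(\S_1-\S_2)_{ij}|\le\a$ only controls $\|\S_1-\S_2\|_{\mathrm{op}}\le\|\S_1-\S_2\|_F\le k\a$, so Weyl gives $\l_{\min}(\S_t)\ge\s^2-tk\a$, not $\s^2-\a$, and your bound on $\|\S_t^{-1/2}\dot\S\S_t^{-1/2}\|_F$ correspondingly comes out as $k\a/(\s^2-k\a)$ rather than $k\a/(\s^2-\a)$. Moreover, this $k\a$ loss is forced on the interpolation route: with $\S_1=(\s^2+\a)I_k$ and $\S_2=\S_1-\a J$ ($J$ the all-ones matrix) both hypotheses hold with equality, yet $\l_{\min}(\S_2)=\s^2-(k-1)\a$. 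When $k\a$ approaches $\s^2$ the two Gaussians become nearly mutually singular and their total variation tends to $1$, while the displayed bound $k\a/\bigl(\sqrt{2\p e}(\s^2-\a)\bigr)$ remains bounded well below $1$; so no argument of the type you sketch can deliver the clean $\s^2-\a$ denominator without some additional hypothesis controlling $\l_{\min}(\S_2)$, and the interpolation path in particular genuinely encounters badly conditioned intermediate matrices. Your closing suggestion---a pointwise density comparison in the style of Proposition~\ref{prop:ddodtv}---is only stated as a direction, not carried out: you do not show how the roughly $k^2$ cross-terms of the quadratic form collapse to a single factor of $k$, nor how to recover the denominator you want. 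As it stands, the covariance half of the proposition remains unestablished.
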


\tvtodirectional*
\begin{proof}
  Without loss of generality, assume $\m = 0$, $\S = I$, and $\S'$ is diagonal.
  This can be done by setting $y = Q\Lambda^{-1/2}Q'^Tx$, where $\S = Q \Lambda Q^T$ and $\S' = Q' \Lambda' Q'^T$ are the eigendecompositions of $\S$ and $\S'$. 
  
  This implies that we now have the following guarantees for all $i \in [k]$:
  $$|\mu'_i| \leq \ve ~~ \text{and} ~~ |\S'_{i,i} - 1| \leq \ve.$$

  Since each coordinate is independent and noting that $\S'_{i,i} \geq 1 - \ve$, we can apply Proposition \ref{prop:ddodtv} to each coordinate direction to obtain a total variation distance of $2\ve k$.
\end{proof}

\begin{proposition}[Berry-Esseen theorem \cite{Berry41, Esseen42, Shevtsova10}]
\label{prop:berryesseen}
Let $X_1, \dots, X_n$ be independent random variables, with $E[X_i] = 0, E[X_i^2] = \s_i^2 > 0, E[|X_i|^3] = \r_i < \infty$, and define $X = \sum_{i=1}^n X_i, \s^2 = \sum_{i=1}^n \s_i^2, \r = \sum_{i=1}^n \r_i$.
Then for an absolute constant $C_0 \leq 0.56$,
$$\dk(X, \mathcal{N}(0,\s^2)) \leq \frac{C_0 \r}{\s^{3}}.$$
\end{proposition}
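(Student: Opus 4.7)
The plan is to proceed via characteristic functions and Esseen's smoothing inequality, the standard route for this class of quantitative CLT bounds. Writing $\varphi(t) = \prod_{i=1}^n \varphi_i(t)$ for the characteristic function of $X$ and $\psi(t) = e^{-\sigma^2 t^2/2}$ for that of $\mathcal{N}(0,\sigma^2)$, Esseen's smoothing lemma asserts that for any $T > 0$,
\[
\dk(X, \mathcal{N}(0,\sigma^2)) \leq \frac{1}{\pi} \int_{-T}^{T} \frac{|\varphi(t) - \psi(t)|}{|t|}\, dt \;+\; \frac{C_1}{T\sigma},
\]
for an absolute constant $C_1$, using that the Gaussian density is bounded by $1/(\sigma\sqrt{2\pi})$. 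The problem reduces to controlling the integrand.

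Next I would estimate $|\varphi(t) - \psi(t)|$ factor by factor. Since $E[X_i]=0$ and $E[X_i^2]=\sigma_i^2$, Taylor expansion yields $\varphi_i(t) = 1 - \sigma_i^2 t^2/2 + r_i(t)$ with $|r_i(t)| \leq \rho_i |t|^3/6$, and similarly $e^{-\sigma_i^2 t^2/2} = 1 - \sigma_i^2 t^2/2 + O(\sigma_i^4 t^4)$. Combining these expansions with the telescoping identity
\[
\Bigl|\prod_i a_i - \prod_i b_i\Bigr| \leq \sum_i |a_i - b_i| \cdot \prod_{j \neq i} \max(|a_j|, |b_j|)
\]
and the trivial bounds $|\varphi_i(t)|, e^{-\sigma_i^2 t^2/2} \leq 1$, one gets, in the regime where $|t|$ is not too large, a pointwise estimate of the form $|\varphi(t) - \psi(t)| \leq C_2 \rho |t|^3 e^{-c \sigma^2 t^2}$. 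Dividing by $|t|$ and integrating over $[-T,T]$ produces a contribution of order $\rho/\sigma^3$.

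The main obstacle is that the naive Taylor bound degenerates once $|t|\sigma$ is large, so one must either argue that both $|\varphi(t)|$ and $|\psi(t)|$ are already small there, or simply choose $T$ on the order of $\sigma^3/\rho$ so that the smoothing remainder $C_1/(T\sigma)$ is itself of order $\rho/\sigma^3$ and the tail issue never arises. Balancing the two terms in Esseen's inequality then produces a bound of the desired form $C \rho/\sigma^3$. Extracting the sharp constant $C_0 \leq 0.56$ stated here is due to Shevtsova and requires a substantially more delicate analysis of the characteristic functions on intermediate frequency bands; I would not attempt to reproduce that refinement but merely cite it.
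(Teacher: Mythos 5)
The paper does not prove this proposition at all: it is one of several classical results imported by citation in the appendix (here to Berry, Esseen, and Shevtsova), so there is no ``paper proof'' to compare against. That said, your sketch is a faithful outline of the standard argument. Esseen's smoothing inequality, Taylor expansion of each $\varphi_i$ using $E[X_i]=0$, $E[X_i^2]=\sigma_i^2$, $E[|X_i|^3]=\rho_i$, the telescoping product bound, and the cutoff $T \asymp \sigma^3/\rho$ are exactly the ingredients of the classical proof. One step you gloss over and should not: the pointwise estimate $|\varphi(t)-\psi(t)| \le C_2\,\rho\,|t|^3 e^{-c\sigma^2 t^2}$ does not follow from the telescoping bound together with $|\varphi_i|,\, e^{-\sigma_i^2 t^2/2}\le 1$ alone, since that only gives $C_2 \rho |t|^3$ without any Gaussian decay, and the integral $\int_0^T \rho t^2\,dt$ would then scale like $\rho T^3$, which is far too large for $T\asymp\sigma^3/\rho$. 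You need the intermediate inequality $|\varphi_i(t)| \le \exp\bigl(-\sigma_i^2 t^2/2 + \rho_i |t|^3/6\bigr)$ (or its two-sided variant), so that the product $\prod_{j\ne i}\max(|\varphi_j|,e^{-\sigma_j^2 t^2/2})$ is bounded by $\exp(-\sigma^2 t^2/2 + \rho|t|^3/6)\le e^{-\sigma^2 t^2/3}$ on $|t|\le T$; this is where the exponential factor actually comes from, and it is the crux of the integral being $O(\rho/\sigma^3)$ rather than $O(\rho T^3)$. (One also typically reduces to the case $\rho/\sigma^3 \le \text{const}$ up front, since otherwise the bound is trivial because $\dk\le 1$.) Deferring the optimized constant $C_0\le 0.56$ to Shevtsova is the right call; the paper itself cites it for precisely that reason.
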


\subsection{Covariance Matrices of Truncated Categorical Random Variables}

First, recall the definition of a symmetric diagonally dominant matrix.
\begin{definition}
  A matrix $A$ is \emph{symmetric diagonally dominant} (SDD) if $A^T = A$ and $A_{ii} \geq \sum_{j\neq i} |A_{ij}|$ for all $i$.
\end{definition}

As a tool, we will use this corollary of the Gershgorin Circle Theorem \cite{Gershgorin31} which follows since all eigenvalues of a symmetric matrix are real.

\begin{proposition}\label{prop:gersh}
  Given an SDD matrix $A$ with positive diagonal entries, the minimum eigenvalue of $A$ is at least $\min_i A_{ii} - \sum_{j \neq i} |A_{ij}|$.
\end{proposition}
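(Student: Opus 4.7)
The plan is to apply the Gershgorin Circle Theorem directly and then exploit the fact that symmetry of $A$ forces all its eigenvalues to be real. Recall that Gershgorin's theorem states that every eigenvalue $\lambda$ of $A$ lies in at least one of the closed discs $D_i = \{z \in \mathbb{C} : |z - A_{ii}| \leq R_i\}$ in the complex plane, where $R_i = \sum_{j \neq i} |A_{ij}|$.

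The first step would be to observe that since $A$ is SDD, it is in particular symmetric (this is part of the definition), so all its eigenvalues are real. Hence for every eigenvalue $\lambda$ of $A$, there is some index $i$ with
\[
|\lambda - A_{ii}| \leq R_i, \qquad \lambda \in \mathbb{R},
\]
and expanding the absolute value on the real line yields $\lambda \geq A_{ii} - R_i$. The second step is to take the minimum over $i$: every eigenvalue $\lambda$ satisfies
\[
\lambda \;\geq\; A_{ii} - R_i \;\geq\; \min_{i'}\bigl(A_{i' i'} - R_{i'}\bigr) \;=\; \min_{i'}\Bigl(A_{i' i'} - \sum_{j \neq i'} |A_{i' j}|\Bigr),
\]
and this lower bound is independent of the choice of $\lambda$, so it applies in particular to the smallest eigenvalue.

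There is no real obstacle here; the statement is essentially a direct reading of Gershgorin's theorem once one notes that for symmetric matrices the discs collapse to real intervals. The only small subtlety worth flagging is that the hypothesis of positive diagonal entries is not actually needed to derive the stated bound (it merely guarantees that the bound is meaningful, i.e.\ that the right-hand side is nonnegative when $A$ is SDD, since SDD gives $A_{ii} \geq R_i$). Thus the proof can be written in a few lines: invoke Gershgorin, use symmetry to drop to the real axis, and minimize over rows.
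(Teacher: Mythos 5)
Your proof is correct and follows exactly the route the paper uses: it presents the proposition as a direct corollary of the Gershgorin Circle Theorem, using the symmetry of $A$ to conclude that all eigenvalues are real so each disc collapses to a real interval $[A_{ii}-R_i, A_{ii}+R_i]$. Your additional remark that positivity of the diagonal is not needed for the bound itself is accurate but does not change the argument.
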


\begin{proposition}\label{prop:eigcov}
  The minimum eigenvalue of the covariance matrix $\S$ of a truncated CRV is at least $\r(0) \min_i \r(i)$.
\end{proposition}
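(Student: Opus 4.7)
The plan is to directly apply the Gershgorin-based bound from Proposition~\ref{prop:gersh}, since the covariance matrix of a truncated CRV is manifestly symmetric and (under the natural sign pattern) diagonally dominant with a very clean diagonal slack.

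First I would write down the entries of $\Sigma$ explicitly. Because a truncated $k$-CRV takes value $e_i$ with probability $\r(i)$ for $i \in \{1,\dots,k-1\}$ and the zero vector with probability $\r(0) = 1 - \sum_{i=1}^{k-1}\r(i)$, the $i$-th coordinate is a Bernoulli with mean $\r(i)$ and distinct coordinates are never simultaneously one. Hence
\[
\Sigma_{ii} = \r(i)(1 - \r(i)), \qquad \Sigma_{ij} = -\r(i)\r(j) \text{ for } i \neq j.
\]
This makes $\Sigma$ symmetric and, since $\r(i) \ge 0$, all diagonal entries are nonnegative while all off-diagonal entries are nonpositive, so $|\Sigma_{ij}| = \r(i)\r(j)$.

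Next I would compute the Gershgorin slack of each row:
\[
\Sigma_{ii} - \sum_{j \neq i} |\Sigma_{ij}| = \r(i)(1 - \r(i)) - \sum_{j \neq i} \r(i)\r(j) = \r(i)\Bigl(1 - \sum_{j=1}^{k-1}\r(j)\Bigr) = \r(i)\r(0).
\]
Taking the minimum over $i$ gives $\min_i \r(i)\r(0) = \r(0)\min_i \r(i)$, and Proposition~\ref{prop:gersh} immediately yields the claimed lower bound on the smallest eigenvalue.

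There is essentially no obstacle here: the only thing worth noting is the degenerate cases. If $\r(i) = 0$ for some $i$, the $i$-th row and column of $\Sigma$ vanish, so $0$ is an eigenvalue and the stated lower bound $\r(0)\min_i \r(i) = 0$ holds trivially; similarly if $\r(0) = 0$ the bound degenerates to $0$, which is vacuous. Otherwise, the matrix has strictly positive diagonal and the Gershgorin bound from Proposition~\ref{prop:gersh} is applied exactly as stated, giving the result.
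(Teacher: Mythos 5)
Your proof is correct and follows essentially the same route as the paper: write out the entries $\Sigma_{ii} = \r(i)(1-\r(i))$, $\Sigma_{ij} = -\r(i)\r(j)$, verify diagonal dominance, and apply the Gershgorin-based bound of Proposition~\ref{prop:gersh} to get slack $\r(i)\r(0)$ in each row. Your added remark on the degenerate cases ($\r(i)=0$ or $\r(0)=0$), where the bound holds trivially, is a small extra care the paper omits but does not change the argument.
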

\begin{proof}
  The entries of the covariance matrix are
  \begin{align*}
    \S_{ij} &= E[x_ix_j] - E[x_i]E[x_j] \\
            &= \begin{cases}
\r(i) - \r(i)^2 &\mbox{if } i = j \\
    -\r(i)\r(j) &\mbox{else}
               \end{cases}
  \end{align*}
  We note that $\S$ is SDD, since $\sum_{j \neq i} |\S_{ij}| = \r(i) \sum_{j \neq i} \r(j) = \r(i) (1 - \r(i) - \r(0)) \leq \r(i)(1 - \r(i)) = \S_{ii}$.
  Thus, applying Proposition \ref{prop:gersh}, we see that the minimum eigenvalue of $\S$ is at least $\min_i \r(i) (1 - \r(i)) - \r(i) (1 - \r(i) - \r(0)) = \r(0) \min_i \r(i)$.
\end{proof}

\subsection{Sums of Discretized Gaussians}\label{sec:sum-discretized}
In this section, we will obtain total variation distance bounds on merging the sum of discretized Gaussians.
It is well known that the sum of multiple Gaussians has the same distribution as a single Gaussian with parameters equal to the sum of the components' parameters.
However, this is not true if we are summing discretized Gaussians -- we quantify the amount we lose by replacing the distribution with a single Gaussian, and then discretizing afterwards.

As a tool, we will use the following result from \cite{DaskalakisDOST13}:
\begin{proposition}[Proposition B.5 in \cite{DaskalakisDOST13}]\label{prop:DDOST}  
  Let $X \sim \mathcal{N}(\m, \s^2)$ and $\l \in \mathbb{R}$. Then $$\dtv\left(\lfloor X + \l \rceil,\lfloor X \rceil + \lfloor \l \rceil\right) \leq \frac1{2\s}.$$
\end{proposition}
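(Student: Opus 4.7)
The plan is to reduce this to a comparison between two continuous Gaussians with the same variance, using the data processing inequality for the rounding map. The key observation is that $\lfloor \lambda \rceil$ is an integer, so $\lfloor X \rceil + \lfloor \lambda \rceil$ has the same distribution as $\lfloor X + \lfloor \lambda \rceil \rceil$, i.e., the rounding of $X'' \sim \mathcal{N}(\mu + \lfloor \lambda \rceil, \sigma^2)$. Meanwhile, $\lfloor X + \lambda \rceil$ is the rounding of $X' \sim \mathcal{N}(\mu + \lambda, \sigma^2)$. So the two random variables we are comparing are both obtained by applying the same (coordinate-wise rounding) function to two continuous Gaussians that share variance $\sigma^2$ and whose means differ by $|\lambda - \lfloor \lambda \rceil| \leq 1/2$.

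By Lemma~\ref{lem:DPI} (the data processing inequality applied to the deterministic rounding map $x \mapsto \lfloor x \rceil$), we then have
$$\dtv\!\left(\lfloor X + \lambda \rceil,\, \lfloor X \rceil + \lfloor \lambda \rceil\right) \;\leq\; \dtv\!\left(X',\, X''\right) \;=\; \dtv\!\left(\mathcal{N}(\mu + \lambda, \sigma^2),\, \mathcal{N}(\mu + \lfloor \lambda \rceil, \sigma^2)\right).$$
Now I can invoke Proposition~\ref{prop:ddodtv}, the two-parameter Gaussian TV bound, with $\sigma_1 = \sigma_2 = \sigma$ and $|\mu_1 - \mu_2| = |\lambda - \lfloor \lambda \rceil| \leq 1/2$, which yields a bound of $\frac{|\lambda - \lfloor \lambda \rceil|}{2\sigma} \leq \frac{1}{4\sigma} \leq \frac{1}{2\sigma}$, giving the claimed inequality (with a slightly better constant, in fact).

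There is essentially no obstacle: the heart of the argument is recognizing the algebraic identity $\lfloor X \rceil + \lfloor \lambda \rceil = \lfloor X + \lfloor \lambda \rceil \rceil$, which lets us phrase both sides as roundings of Gaussians with matched variance, so that the continuous comparison of two Gaussians differing only in mean by at most $1/2$ dominates everything. The step that warrants a one-line check is the applicability of Lemma~\ref{lem:DPI} with the deterministic function $f(x) = \lfloor x \rceil$, but that is immediate from its statement.
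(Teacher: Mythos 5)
Your proof is correct. Since the paper merely cites this result from \cite{DaskalakisDOST13} without reproducing the argument, there is no in-paper proof to compare against, but your derivation is valid and self-contained using tools the paper already states. The key step---observing that since $\lfloor \lambda\rceil \in \mathbb{Z}$ we have $\lfloor X\rceil + \lfloor \lambda\rceil = \lfloor X + \lfloor\lambda\rceil\rceil$ almost surely (ties occur on a null set since $X$ is continuous)---reduces the claim to comparing two roundings of equal-variance Gaussians whose means differ by $|\lambda - \lfloor\lambda\rceil| \le 1/2$. The data processing inequality (Lemma~\ref{lem:DPI}) then passes the bound to the continuous Gaussians, and Proposition~\ref{prop:ddodtv} with $\sigma_1=\sigma_2=\sigma$ gives $\dtv \le \frac{|\lambda - \lfloor\lambda\rceil|}{2\sigma} \le \frac{1}{4\sigma}$, which is even slightly better than the stated $\frac{1}{2\sigma}$. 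All three steps check out.
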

From this, we can obtain the following:
\begin{proposition}\label{prop:DDOSTplus}
  Let $X_1 \sim \mathcal{N}(\m_1, \s_1^2)$ and $X_2 \sim \mathcal{N}(\m_2, \s_2^2)$. 
  Then $$\dtv\left(\lfloor X_1 + X_2 \rceil,\lfloor X_1 \rceil + \lfloor X_2\rceil\right) \leq \frac1{2\s},$$ where $\s = \max_i \s_i$.
\end{proposition}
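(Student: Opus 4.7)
The plan is to reduce Proposition~\ref{prop:DDOSTplus} to the single-shift case already handled by Proposition~\ref{prop:DDOST} via a conditioning argument. Without loss of generality, assume $\s_1 \geq \s_2$, so that $\s = \s_1$. The key observation is that $X_1$ and $X_2$ are independent, so we can sample the joint pair $(X_1, X_2)$ by first drawing $X_2$ and then drawing $X_1$ from its marginal $\mathcal{N}(\m_1, \s_1^2)$ independently of $X_2$.

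Condition on the event $X_2 = \lambda$ for a fixed $\lambda \in \mathbb{R}$. Under this conditioning, the random variable $\lfloor X_1 + X_2 \rceil$ has the distribution of $\lfloor X_1 + \lambda \rceil$, while $\lfloor X_1 \rceil + \lfloor X_2 \rceil$ has the distribution of $\lfloor X_1 \rceil + \lfloor \lambda \rceil$. Proposition~\ref{prop:DDOST} applied to $X_1 \sim \mathcal{N}(\m_1, \s_1^2)$ with shift $\lambda$ then gives, for every $\lambda$,
$$\dtv\bigl(\lfloor X_1 + \lambda \rceil,\ \lfloor X_1 \rceil + \lfloor \lambda \rceil\bigr) \;\leq\; \frac{1}{2\s_1} \;=\; \frac{1}{2\s}.$$

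To remove the conditioning, I would invoke the standard convexity (or ``mixture'') property of total variation distance: if $P_\lambda$ and $Q_\lambda$ are two families of distributions parameterized by $\lambda$, and $\mu$ is any mixing distribution on $\lambda$, then $\dtv\bigl(\int P_\lambda\, d\mu(\lambda),\ \int Q_\lambda\, d\mu(\lambda)\bigr) \leq \int \dtv(P_\lambda, Q_\lambda)\, d\mu(\lambda)$. Taking $\mu$ to be the law of $X_2$, $P_\lambda$ the conditional law of $\lfloor X_1 + X_2 \rceil$ given $X_2 = \lambda$, and $Q_\lambda$ the conditional law of $\lfloor X_1 \rceil + \lfloor X_2 \rceil$ given $X_2 = \lambda$, the pointwise bound $\frac{1}{2\s}$ integrates to the same uniform bound, proving the proposition.

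This argument is essentially immediate once the conditioning is set up, and I do not anticipate a real obstacle: the only subtlety is the choice to condition on the variable with smaller variance (equivalently, applying the single-shift bound to the variable with larger variance), which is what makes the $\s = \max_i \s_i$ factor appear in the denominator rather than $\min_i \s_i$. Symmetrically, one could condition on $X_1$ and apply Proposition~\ref{prop:DDOST} to $X_2$; choosing the better of the two recovers exactly the claimed $\frac{1}{2\max(\s_1,\s_2)}$ bound.
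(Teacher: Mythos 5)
Your proposal is correct and is essentially the paper's own argument: the paper's proof writes out the total variation distance as a sum over integers, expresses each term as an integral against the density of $X_2$, and applies Proposition~\ref{prop:DDOST} pointwise in $\lambda$, which is exactly your conditioning-plus-convexity-of-TV argument made explicit. The only difference is presentational, so nothing further is needed.
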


\begin{proof}
  First, suppose without loss of generality that $\s_1 \geq \s_2$.
  \begin{align*}
    & \dtv\left(\lfloor X_1 + X_2 \rceil,\lfloor X_1 \rceil + \lfloor X_2\rceil\right) \\
 &= \frac12 \sum_{i = -\infty}^\infty \left| \Pr(\lfloor X_1 + X_2 \rceil = i) -  \Pr(\lfloor X_1 \rceil +  \lfloor X_2 \rceil = i) \right| \\
 &= \frac12 \sum_{i = -\infty}^\infty \left| \int_{-\infty}^\infty f_{X_2}(\l) \Pr(\lfloor X_1 + \l \rceil = i)\, d\l -  \int_{-\infty}^\infty f_{X_2}(\l) \Pr(\lfloor X_1 \rceil +  \lfloor \l \rceil = i)\, d\l \right| \\
 &= \frac12 \sum_{i = -\infty}^\infty \left| \int_{-\infty}^\infty f_{X_2}(\l) (\Pr(\lfloor X_1 + \l \rceil = i) - \Pr(\lfloor X_1 \rceil +  \lfloor \l \rceil = i))\, d\l \right| \\
 &\leq \frac12 \sum_{i = -\infty}^\infty  \int_{-\infty}^\infty f_{X_2}(\l) \left|(\Pr(\lfloor X_1 + \l \rceil = i) - \Pr(\lfloor X_1 \rceil +  \lfloor \l \rceil = i))\right|\, d\l  \\
 &= \frac12 \int_{-\infty}^\infty f_{X_2}(\l) \left(\sum_{i = -\infty}^\infty  \left|(\Pr(\lfloor X_1 + \l \rceil = i) - \Pr(\lfloor X_1 \rceil +  \lfloor \l \rceil = i))\right|\right)\, d\l  \\
 &\leq  \int_{-\infty}^\infty f_{X_2}(\l) \frac{1}{2\s_1} \, d\l  \\
 &= \frac{1}{2\s}
  \end{align*}
  The second inequality uses Proposition \ref{prop:DDOST}.
\end{proof}

This leads to the following lemma:
\merging*
\begin{proof}
  The proof is by induction on $k$.
  The base case of $k=1$ is handled by Proposition \ref{prop:DDOSTplus}.
  For general $k$, we use a standard hybridization argument.
  Denote the $j$th coordinate of $X_i$ as $x_{ij}$.
  \begin{align*}
    &\dtv\left(\lfloor X_1 + X_2 \rceil,\lfloor X_1 \rceil + \lfloor X_2\rceil\right) \\
    &= \dtv\left((\lfloor x_{11}  + x_{21} \rceil, \dots, \lfloor x_{1k} + x_{2k}\rceil),(\lfloor x_{11}\rceil  + \lfloor x_{21} \rceil, \dots, \lfloor x_{1k}\rceil  + \lfloor x_{2k}\rceil)\right) \\
    &\leq \dtv\left((\lfloor x_{11}  + x_{21} \rceil, \dots, \lfloor x_{1k} + x_{2k}\rceil),(\lfloor x_{11}\rceil + \lfloor x_{21} \rceil, \dots,\lfloor x_{1k}  + x_{2k}\rceil)\right) \\
    &+ \dtv\left((\lfloor x_{11}\rceil + \lfloor x_{21} \rceil, \dots,\lfloor x_{1k}  + x_{2k}\rceil),(\lfloor x_{11}\rceil  + \lfloor x_{21} \rceil, \dots, \lfloor x_{1k}\rceil  + \lfloor x_{2k}\rceil)  \right) \\
    &\leq \dtv\left((\lfloor x_{11}  + x_{21} \rceil, \dots, \lfloor x_{1(k-1)} + x_{2(k-1)}\rceil),(\lfloor x_{11}\rceil + \lfloor x_{21} \rceil, \dots,\lfloor x_{1(k-1)} \rceil + \lfloor x_{2(k-1)} \rceil\right) \\
    &+ \dtv\left(\lfloor x_{1k}  + x_{2k}\rceil,\lfloor x_{1k}\rceil  + \lfloor x_{2k}\rceil\right) \\
    &\leq \frac{k-1}{2\s} + \frac{1}{2\s} = \frac{k}{2\s}
  \end{align*}
  The first inequality is the triangle inequality, the second uses Lemma \ref{lem:DPI}, and the third uses the induction hypothesis and Proposition \ref{prop:DDOSTplus}.
\end{proof}

\section{Details from Section \ref{sec:structurepmd}}
\label{sec:appendixstructure}
\subsection{Rounding the Parameters}
\label{sec:rounding}

Fix some coordinate $x$, and select all $k$-CRVs where the parameter in coordinate $x$ is in the range $(0,c)$.
Partition this subset into $k-1$ sets, depending on which coordinate $y \neq x$ is the heaviest.
We apply a rounding procedure separately to each of these sets.
After this procedure, none of the parameters in coordinate $x$ will be in $(0,c)$.
We repeat this for all $k$ possible settings of $x$.
From the description below (and the restriction that $c \leq \frac{1}{2k}$), it will be clear that we will not ``undo'' any of our work and move probabilities back into $(0,c)$, so $O(k^2)$ applications of our rounding procedure will produce the result claimed in the theorem statement.

Recall that the goal of this rounding procedure will be to shift probability mass either to or from coordinate $x$ to coordinate $y$, such that no parameter in coordinate $x$ lies in the interval $(0,c)$, while simultaneously approximately preserving the mean vector of the distribution.
We are able to do this since coordinate $y$ is ``heavy'' and thus small additions will not affect the distribution in this coordinate much.

We fix some $x,y$ in order to describe and analyze the process more formally.
Define $\mathcal{I}^x_y = \{i\, |\, 0 < \r(i,x) < c \wedge y = \arg\max_j \r(i,j)\}$ (breaking ties lexicographically), and let $M^{\r_{I^x_y}}$ be the $(n,k)$-PMD induced by this set.
For the remainder of this section, without loss of generality, assume that the indices selected by $\mathcal{I}^x_y$ are $1$ through $|\mathcal{I}^x_y|$.

Select an arbitrary set $\mathcal{R} \subseteq \mathcal{I}^x_y$ such that $|\mathcal{R}| = \left\lfloor\frac{\sum_{i' \in I^x_y} \r_{I^x_y}(i',x)}{c}\right\rfloor$.
Intuitively, this set will be the CRVs for which we set the parameter $\r(\cdot,x)$ to be $c$, while $\mathcal{I}^x_y \setminus \mathcal{R}$ will have $\r(\cdot, x)$ set to $0$.
We can perform the following rounding scheme to $\r_{I^x_y}$ to obtain a new parameter matrix $\hat \r_{I^x_y}$:
$$
\hat \r_{I^x_y}(i,j) = 
\begin{cases}
  \r_{I^x_y}(i,j) &\mbox{if } j \not\in \{x,y\} \\ \vspace{1mm}
                c &\mbox{if } j = x \wedge i \in \mathcal{R} \\ 
                0 &\mbox{if } j = x \wedge i \not \in \mathcal{R} \\
 1 - \sum_{j' \neq y} \hat \r_{I^x_y}(i,j') &\mbox{if } j = y
\end{cases}
$$


We define the process \textbf{Fork}, for sampling from a $k$-CRV $\r(i,\cdot)$ in $\mathcal{I}^x_y$:
\begin{itemize}
  \item Let $X_i$ be an indicator random variable, taking $1$ with probability $\frac{1}{k}$ and $0$ otherwise.
  \item If $X_i = 1$, then return $e_x$ with probability $k\r(i,x)$ and $e_y$ with probability $1 - k\r(i,x)$.
  \item If $X_i = 0$, then return $e_j$ with probability $0$ if $j = x$, $\frac{k}{k-1}(\r(i,x) + \r(i,y) - \frac1k)$ if $j = y$, and $\frac{k}{k-1}\r(i,j)$ otherwise.
\end{itemize}

The intuition behind this procedure is that we isolate the changes in our rounding procedure when $X_i = 1$, as when $X_i = 0$, the rounded and unrounded distributions are identical.
We note that \textbf{Fork} is well defined as long as $\r(i,x) \leq \frac{1}{k}$ and $\r(i,x) + \r(i,y) \geq \frac{1}{k}$.
The former is true since $c \leq \frac{1}{k}$, and the latter is true since $y$ was chosen to be the heaviest coordinate.
Additionally, by calculating the probability of any outcome, we can see that \textbf{Fork} is equivalent to the regular sampling process.
Define the (random) set $\bm{X} = \{i\, |\, X_i = 1\}$.
We will use $\bm{\th}$ to refer to a particular realization of this set.
We define \textbf{Fork} for sampling from $\hat \r(i,\cdot)$ in the same way, though we will denote the indicator random variables by $\hat X_i$ and $\bm{\hat X}$ instead.
Note that, if $c \leq \frac1k$, the process will still be well defined after rounding.
This is because $\hat \r(i,x) \leq c \leq \frac1k$, and $\hat \r(i,x) + \hat \r(i,y) = \r(i,x) + \r(i,y) \geq \frac1k$.
For the rest of this section, when we are drawing a sample from a CRV, we draw it via the process \textbf{Fork}.

The proof of Lemma \ref{lem:round} follows from the following three lemmata.
Intuitively, the first states that the PMD induced by the CRVs for which $X_i = 1$ gives a Poisson Binomial distribution with mean concentrated around its expected value, for both the rounded and unrounded PMDs.
The second states that if this value is concentrated, then the two distributions are close in total variation distance.
The proof relates the rounded and unrounded distributions by comparing the total variation distance between the Poisson distributions with the same means.
The third lemma eliminates the condition on the second lemma by using the first lemma, which states that this condition is likely to hold.
\begin{lemma}\label{lem:leaves}
  If $\sum_{i \in I^x_y}  \r(i,x) \geq 3 ck \log \left(\frac{1}{ck}\right)$, then
  \begin{multline*}
    \Pr\left(\bm{\th} : \vphantom{ \sum_{i \in \bm{\th}}}\right.
     \left. \left|\sum_{i \in \bm{\th}} k\r_{I^x_y}(i,j) - E\Big[\sum_{i \in \bm{X}} k\r_{I^x_y}(i,j)\Big]\right| 
    \leq \left(3ck \log{\left(\frac{1}{ck}\right)}E \Big[\sum_{i \in \bm{X}}k \r_{I^x_y}(i,j)\Big]\right)^{1/2} \right. \\ 
  \wedge 
   \left. \left|\sum_{i \in \bm{\th}} k\hat \r_{I^x_y}(i,j) - E\Big[\sum_{i \in \bm{\hat X}}k\hat  \r_{I^x_y}(i,j)\Big]\right| 
  \leq \left(3ck \log{\left(\frac{1}{ck}\right)}E \Big[\sum_{i \in \bm{\hat X}} k\hat \r_{I^x_y}(i,j)\Big]\right)^{1/2}
  \right) \\ \geq  1 - 4ck
  \end{multline*}
\end{lemma}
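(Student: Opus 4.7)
The plan is to apply the multiplicative Chernoff bound of Lemma~\ref{lem:cher} separately to the unrounded sum $\sum_{i \in \bm{X}} k\,\r_{I^x_y}(i,x)$ and its rounded counterpart $\sum_{i \in \bm{\hat X}} k\,\hat\r_{I^x_y}(i,x)$, and then to take a union bound. (I read the $j$ in the statement as the coordinate $x$, which is the only one for which the analysis below matches the constants in the bound.) The key structural fact that makes a single Chernoff work is that for every $i \in \mathcal{I}^x_y$, we have $\r(i,x) \in (0,c)$ by definition of $\mathcal{I}^x_y$ and $\hat\r(i,x) \in \{0,c\}$ by construction of the rounding, so each summand $X_i \cdot k\r(i,x)$ (respectively $\hat X_i \cdot k\hat\r(i,x)$) is an independent random variable taking values in $[0,ck]$.

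For the unrounded case I would rescale by dividing through by $ck$, so that the summands lie in $[0,1]$. Writing $\mu = E[\sum_{i \in \bm{X}} k\r(i,x)] = \sum_i \r(i,x)$ (using $\Pr[X_i=1]=1/k$), I would choose the multiplicative deviation parameter
$$\gamma_0 \;=\; \sqrt{\frac{3ck\,\log(1/(ck))}{\mu}}.$$
The hypothesis $\mu \geq 3ck\log(1/(ck))$ is precisely what is needed to guarantee $\gamma_0 \le 1$, so Lemma~\ref{lem:cher} applies to the rescaled sum with expectation $\mu/(ck)$ and gives failure probability at most $2\exp\!\bigl(-\gamma_0^2\,\mu/(3ck)\bigr) = 2\exp(-\log(1/(ck))) = 2ck$. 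Undoing the rescaling, the deviation $\gamma_0\,\mu$ translates to exactly $\sqrt{3ck\log(1/(ck))\,\mu}$, which is the bound in the lemma.

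The rounded case is essentially identical: by construction $|\mathcal{R}| = \lfloor \mu/c \rfloor$, so $E[\sum_{i\in\bm{\hat X}} k\hat\r(i,x)] = |\mathcal{R}|\,c \in (\mu - c,\,\mu]$, which still satisfies the same lower bound up to an additive slack of $c$ (absorbed into constants). Setting the analogous $\gamma_0$ with $\mu$ replaced by $|\mathcal{R}|\,c$, the same Chernoff computation gives failure probability at most $2ck$. A union bound over the two failure events yields the claimed total bound of $4ck$.

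I do not expect a real obstacle here: the only thing to verify carefully is that $\gamma_0 \le 1$ in both cases, which is exactly what the hypothesis on $\sum_i\r(i,x)$ is calibrated to ensure, together with bookkeeping of the small $\le c$ slack introduced by the floor in the definition of $|\mathcal{R}|$. Everything else is a one-shot application of Lemma~\ref{lem:cher} to a sum of bounded independent random variables.
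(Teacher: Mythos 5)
Your proposal is correct and follows essentially the same route as the paper: rescale the summands by $1/(ck)$, apply the multiplicative Chernoff bound of Lemma~\ref{lem:cher} with $\gamma=\sqrt{3ck\log(1/(ck))/\mu}$ (the hypothesis being exactly what makes $\gamma\le 1$), repeat for the rounded parameters, and union bound to get $4ck$; your reading of $j$ as the coordinate $x$ also matches the paper's proof. The only point the paper makes explicit that you leave implicit is that the two deviation events are combined on the joint probability space where $\bm{\th}=\bm{X}=\bm{\hat X}$ (valid since $\bm{X}\sim\bm{\hat X}$), which is what lets the union bound produce a single statement about one realization $\bm{\th}$.
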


\begin{lemma}\label{lem:poisson}
  Suppose that, for some $\bm{\th}$, the following hold: 
$$\left|\sum_{i \in \bm{\th}} k\r_{I^x_y}(i,j) - E\Big[\sum_{i \in \bm{X}} k\r_{I^x_y}(i,j)\Big]\right| 
    \leq \left(3ck \log{\left(\frac{1}{ck}\right)}E \Big[\sum_{i \in \bm{X}}k \r_{I^x_y}(i,j)\Big]\right)^{1/2}$$

 $$ \left|\sum_{i \in \bm{\th}} k\hat \r_{I^x_y}(i,j) - E\Big[\sum_{i \in \bm{\hat X}}k\hat  \r_{I^x_y}(i,j)\Big]\right| 
  \leq \left(3ck \log{\left(\frac{1}{ck}\right)}E \Big[\sum_{i \in \bm{\hat X}} k\hat \r_{I^x_y}(i,j)\Big]\right)^{1/2}$$
Then, letting $Z_i$ be the Bernoulli random variable with expectation $k \r_{I^x_y}(i,x)$ (and $\hat Z_i$ defined similarly with $ k\hat \r_{I^x_y}(i,x)$),
  $$\mathrm{d_{TV}}\left( \sum_{i \in \bm{\th}} Z_i,\sum_{i \in \bm{\th}} \hat Z_i  \right)
  < O\left(c^{1/2} k^{1/2} \log^{1/2}\left(\frac{1}{ck}\right)\right)$$
\end{lemma}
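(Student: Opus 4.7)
The plan is to reduce the comparison between $\sum_{i \in \bm{\th}} Z_i$ and $\sum_{i \in \bm{\th}} \hat Z_i$ to the comparison between two Poisson distributions whose means we can show are close. Set $\lambda := \sum_{i \in \bm{\th}} k\r_{\mathcal I^x_y}(i,x)$ and $\hat\lambda := \sum_{i \in \bm{\th}} k\hat\r_{\mathcal I^x_y}(i,x)$ for the means of the two Bernoulli sums, and let $\mu_x := \sum_{i \in \mathcal I^x_y}\r(i,x)$ and $\hat\mu_x := \sum_{i \in \mathcal I^x_y}\hat\r(i,x)$ for the unconditional expectations that the hypothesis bounds refer to (indeed $E[\sum_{i \in \bm{X}} k\r(i,x)] = \mu_x$ since each $i$ lies in $\bm{X}$ with probability $1/k$).

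First I would apply a Barbour--Hall style Poisson approximation to each side. Since every individual Bernoulli parameter is at most $kc \le 1/2$, this yields
\[ \dtv\Big(\sum_{i \in \bm{\th}} Z_i,\, \mathrm{Poi}(\lambda)\Big) \le (1 \wedge \lambda^{-1})\sum_{i \in \bm{\th}}(k\r(i,x))^2 \le kc, \]
and analogously $\dtv\bigl(\sum_{i \in \bm{\th}} \hat Z_i,\mathrm{Poi}(\hat\lambda)\bigr) \le kc$. Under the standing assumption $c \le 1/(2k)$, the quantity $kc$ is already dominated by $\sqrt{ck\log(1/ck)}$, so it suffices to control $\dtv(\mathrm{Poi}(\lambda),\mathrm{Poi}(\hat\lambda))$.

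Second, I would bound $|\lambda - \hat\lambda|$. By construction $|\mathcal R| = \lfloor \mu_x / c\rfloor$, so $\hat\mu_x = c|\mathcal R| \in [\mu_x - c,\mu_x]$, giving $|\mu_x - \hat\mu_x| \le c$. Combining this additive slack with the two concentration hypotheses of the lemma via the triangle inequality yields
\[ |\lambda - \hat\lambda| \le c + 2\sqrt{3ck\log(1/ck)\cdot \max(\mu_x,\hat\mu_x)}. \]
I would then apply the standard inequality $\dtv(\mathrm{Poi}(a),\mathrm{Poi}(b)) \le \min(1,|a-b|)$ together with its sharper version $\dtv(\mathrm{Poi}(a),\mathrm{Poi}(b)) = O(|a-b|/\sqrt{\max(a,b)})$ for means at least a constant. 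A case split on whether $\mu_x$ is below or above a constant shows $\dtv(\mathrm{Poi}(\lambda),\mathrm{Poi}(\hat\lambda)) = O(\sqrt{ck\log(1/ck)})$ in both regimes, and assembling all three approximations via the triangle inequality closes the proof.

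The main obstacle I anticipate is handling the Poisson--Poisson comparison across all regimes of $\mu_x$: a monolithic application of the sharp bound introduces a $c/\sqrt{\mu_x}$ term that degenerates as $\mu_x \to 0$. A careful case analysis (or a unified bound of the form $\dtv(\mathrm{Poi}(a),\mathrm{Poi}(b)) \le |a-b|/(1 \vee \sqrt{\max(a,b)})$) is needed: when $\mu_x$ is very small, both Poisson means are small, and one should fall back on the trivial $\dtv \le \max(\lambda,\hat\lambda) = O(\sqrt{ck\log(1/ck)})$ bound obtained directly from the concentration hypotheses rather than chasing the spurious $c/\sqrt{\mu_x}$ term.
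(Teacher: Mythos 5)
Your proposal follows essentially the same route as the paper's proof: approximate each Bernoulli sum by a Poisson via Barbour–Hall, bound the difference of the Poisson means using the two concentration hypotheses together with the fact that rounding moves $\mu_x$ by at most $c$, then bound $\dtv(\mathrm{Poi}(\lambda),\mathrm{Poi}(\hat\lambda))$ by an $|a-b|/\sqrt{\min(a,b)}$-type inequality, with a case split on the size of $\mu_x$ to handle the degenerate small-mean regime. The paper's threshold is $(ck)^{3/4}$ (not ``a constant''), and in the small-$\mu_x$ case it bypasses Poisson approximation entirely via Markov plus a coupling argument, but this is exactly the ``fall back on $\dtv \le \max(\lambda,\hat\lambda)$'' escape you describe, so the two are morally identical.
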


\begin{lemma}\label{lem:glue}
  For any $\mathcal{I}^x_y$,
  $$\dtv\left(M^{\r_{I^x_y}},M^{\hat\r_{I^x_y}}\right) < O\left(c^{1/2} k^{1/2} \log^{1/2}\left(\frac{1}{ck}\right)\right)$$
\end{lemma}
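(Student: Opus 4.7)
The plan is to couple samples from the two PMDs through the Fork process and invoke the data processing inequality (Lemma~\ref{lem:DPI}) to translate the coupling failure probability into a total variation bound. First, couple the indicators so that $X_i=\hat X_i$ for every $i\in\mathcal{I}^x_y$; this is legal because each is Bernoulli$(1/k)$ independent of the rest of the process. Write $\bm{\th}=\{i:X_i=1\}$ for the realized subset on which the Fork process takes its "$X_i=1$" branch. The key structural observation is that on the complementary $X_i=0$ branch the two conditional distributions of the Fork process coincide exactly: the weight on $e_x$ is $0$ in both; the weight on $e_y$ depends only on $\r(i,x)+\r(i,y)$, which equals $\hat\r(i,x)+\hat\r(i,y)$ by construction of $\hat\r$; and for $j\notin\{x,y\}$ the weight depends only on $\r(i,j)=\hat\r(i,j)$. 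Hence every $i\notin\bm{\th}$ can be coupled to produce identical samples, while for $i\in\bm{\th}$ the output is $e_x$ with probability $k\r_{I^x_y}(i,x)$ versus $k\hat\r_{I^x_y}(i,x)$ and $e_y$ otherwise. Consequently the two full PMD samples agree whenever the counts $\sum_{i\in\bm{\th}}Z_i$ and $\sum_{i\in\bm{\th}}\hat Z_i$ of $e_x$-outcomes on the $\bm{\th}$-branch agree. Using the optimal conditional coupling of these integer-valued sums and averaging over $\bm{\th}$ yields
\[
\dtv\!\left(M^{\r_{I^x_y}},M^{\hat\r_{I^x_y}}\right)\ \leq\ \mathbb{E}_{\bm{\th}}\!\left[\dtv\!\left(\sum\nolimits_{i\in\bm{\th}}Z_i,\ \sum\nolimits_{i\in\bm{\th}}\hat Z_i\right)\right].
\]

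The remaining task is to bound this expectation by combining Lemmas~\ref{lem:leaves} and~\ref{lem:poisson}. When $\sum_{i\in\mathcal{I}^x_y}\r(i,x)\geq 3ck\log(1/(ck))$, Lemma~\ref{lem:leaves} says that the realized $\bm{\th}$ satisfies the concentration inequalities for both $\r_{I^x_y}$ and $\hat\r_{I^x_y}$ simultaneously, except on an event of probability at most $4ck$. On this "good" event Lemma~\ref{lem:poisson} gives a conditional TV bound of $O(c^{1/2}k^{1/2}\log^{1/2}(1/(ck)))$, and on the "bad" event of probability $4ck$ the trivial bound $1$ contributes $4ck$, which is itself $O(c^{1/2}k^{1/2}\log^{1/2}(1/(ck)))$ because $ck\leq\log(1/(ck))$ in the relevant regime $ck\leq 1/(2k)$. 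In the opposite edge case $\sum_{i\in\mathcal{I}^x_y}\r(i,x)<3ck\log(1/(ck))$, I would argue directly: the construction of $\hat\r$ enforces $\sum_i\hat\r(i,x)=|\mathcal{R}|\cdot c\leq\sum_i\r(i,x)$, so both $\mathbb{E}[\sum_{i\in\bm{\th}}Z_i]=\sum_i\r(i,x)$ and $\mathbb{E}[\sum_{i\in\bm{\th}}\hat Z_i]=\sum_i\hat\r(i,x)$ are $O(ck\log(1/(ck)))$, and a Markov-type estimate shows that both sums vanish (and can be coupled to agree) except on an event of the same order, which once again matches the claimed rate.

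The main delicate point is the coupling on the $X_i=0$ branch, which rests critically on the specific rounding rule preserving $\r(i,x)+\r(i,y)$ and leaving every other column untouched; once this identity of conditional distributions is established, the rest of the argument is a routine conditional-expectation decomposition that plugs Lemmas~\ref{lem:leaves} and~\ref{lem:poisson} into a single averaged bound. The only other place requiring care is verifying that the overall $4ck$ failure contribution really is dominated by the $c^{1/2}k^{1/2}\log^{1/2}(1/(ck))$ rate for the parameters permitted by Lemma~\ref{lem:round}, which is a direct consequence of $c\leq\tfrac{1}{2k}$.
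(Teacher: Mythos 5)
Your proposal is correct and follows essentially the same path as the paper's proof: couple the Fork indicators so that $\bm{X}=\bm{\hat X}$, observe that the $X_i=0$ branch is distributionally identical for $\r$ and $\hat\r$ so the disagreement reduces to the Poisson-binomials $\sum_{i\in\bm{\th}}Z_i$ versus $\sum_{i\in\bm{\th}}\hat Z_i$, and then average Lemma~\ref{lem:poisson} over $\bm{\th}$ using the concentration event from Lemma~\ref{lem:leaves}, whose complement contributes at most $4ck$. The only cosmetic difference is in the small-mean edge case, where you use a Markov/coupling-to-zero estimate while the paper bounds the $\ell_1$ change in the parameter matrix directly and invokes the data processing inequality, but both give the same $O\big(ck\log(1/ck)\big)$ bound which is dominated by the claimed rate.
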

Since our final rounded $(n,k)$-PMD is generated after applying this rounding procedure $O(k^2)$ times, Lemma \ref{lem:round} follows from our construction and Lemma \ref{lem:glue} via the triangle inequality.

\begin{prevproof}{Lemma}{lem:leaves}
  Note that $\sum_{i \in \bm{X}} k\r_{I^x_y}(i,x) = \sum_{i \in I^x_y} \Om_i$, where 
  $$\Om_i = 
  \begin{cases}
    k\r_{I^x_y}(i,x) & \mbox{with probability } \frac1k \\
                   0 & \mbox{with probability } 1 - \frac1k
  \end{cases}
  $$
  We apply Lemma \ref{lem:cher} to the rescaled random variables $\Om_i' = \frac{1}{ck}\Om_i$, with $\g = \sqrt{\frac{3\log{\frac1{ck}}}{E[\sum_{i \in I^x_y} \Om_i']}}$, giving
  $$\Pr\left[\left|\sum_{i \in I^x_y} \Om_i' - E\Big[\sum_{i \in I^x_y} \Om_i'\Big]\right| \geq \left(3 \log \left(\frac1{ck}\right)E\Big[\sum_{i \in I^x_y} \Om_i'\Big] \right)^{1/2}\right] \leq 2ck.$$
  Unscaling the variables gives
  \begin{multline*}
  \Pr\left[\left|\sum_{i \in \bm{X}} k\r_{I^x_y}(i,x) - E\Big[\sum_{i \in \bm{X}} k\r_{I^x_y}(i,x) \Big]\right| \geq \left(3ck \log \left(\frac1{ck}\right) E\Big[\sum_{i \in \bm{X}} k\r_{I^x_y}(i,x) \Big]\right)^{1/2}\right] \\ \leq 2ck.
  \end{multline*}

  Applying the same argument to $\hat \r_{I^x_y}$ gives 
  \begin{multline*}
  \Pr\left[\left|\sum_{i \in \bm{\hat X}} k\hat \r_{I^x_y}(i,x) - E\Big[\sum_{i \in \bm{\hat X}} k\hat \r_{I^x_y}(i,x) \Big]\right| \geq \left(3ck \log \left(\frac1{ck}\right) E\Big[\sum_{i \in \bm{\hat X}} k\hat \r_{I^x_y}(i,x) \Big]\right)^{1/2}\right]\\ \leq 2ck.
  \end{multline*}

  Since $\bm{X} \sim \bm{\hat X}$, by considering the joint probability space where $\bm{\th} = \bm{X} = \bm{\hat X}$ and applying a union bound, we get
  \begin{multline*}
    \Pr\left(\bm{\th} : \vphantom{ \sum_{i \in \bm{\th}}}\right.
     \left. \left|\sum_{i \in \bm{\th}} k\r_{I^x_y}(i,j) - E\Big[\sum_{i \in \bm{X}} k\r_{I^x_y}(i,j)\Big]\right| 
    \leq \left(3ck \log{\left(\frac{1}{ck}\right)}E \Big[\sum_{i \in \bm{X}}k \r_{I^x_y}(i,j)\Big]\right)^{1/2} \right. \\ 
  \wedge 
   \left. \left|\sum_{i \in \bm{\th}} k\hat \r_{I^x_y}(i,j) - E\Big[\sum_{i \in \bm{\hat X}}k\hat  \r_{I^x_y}(i,j)\Big]\right| 
  \leq \left(3ck \log{\left(\frac{1}{ck}\right)}E \Big[\sum_{i \in \bm{\hat X}} k\hat \r_{I^x_y}(i,j)\Big]\right)^{1/2}
  \right)\\ \geq 1 - 4ck.
  \end{multline*}
\end{prevproof}

\begin{prevproof}{Lemma}{lem:poisson}
  Fix some $\bm{\th} = \bm{X} = \bm{\hat X}$.
  Without loss of generality, assume $E \Big[\sum_{i \in \bm{X}}k \r_{I^x_y}(i,j)\Big] \geq E \Big[\sum_{i \in \bm{\hat X}}k\hat  \r_{I^x_y}(i,j)\Big]$.
  There are two cases:
  \begin{case}
    $ E \Big[\sum_{i \in \bm{X}}k \r_{I^x_y}(i,j)\Big] \leq (ck)^{3/4}$
  \end{case}
  From the first assumption in the lemma statement,
  \begin{align*}
    \sum_{i \in \bm{\th}} k\r_{I^x_y}(i,j) &\leq  E\Big[\sum_{i \in \bm{X}} k\r_{I^x_y}(i,j)\Big]
    + \left(3ck \log{\left(\frac{1}{ck}\right)}E \Big[\sum_{i \in \bm{X}}k \r_{I^x_y}(i,j)\Big]\right)^{1/2} \\
    &\leq (ck)^{3/4} + \sqrt{3}(ck)^{7/8}\log^{1/2}{\left(\frac{1}{ck}\right)} \coloneqq g(c,k)
  \end{align*}
  Similarly, by the second assumption in the lemma statement and since \\
  $E \Big[\sum_{i \in \bm{X}}k \r_{I^x_y}(i,j)\Big] \geq E \Big[\sum_{i \in \bm{\hat X}}k\hat  \r_{I^x_y}(i,j)\Big],$
  we also have that
  $\sum_{i \in \bm{\th}} k\hat \r_{I^x_y}(i,j) \leq g(c,k)$.

  By Markov's inequality, $\Pr\Big[\sum_{i \in \bm{\th}} Z_i \geq 1\Big] \leq  \sum_{i \in \bm{\th}} k\r_{I^x_y}(i,j) \leq g(c,k)$,
  and similarly, $\Pr\Big[\sum_{i \in \bm{\th}} \hat Z_i \geq 1\Big] \leq g(c,k)$.
  This implies that $$\left| \Pr\Big[\sum_{i \in \bm{\th}} Z_i = 0\Big] - \Pr\Big[\sum_{i \in \bm{\th}} \hat Z_i = 0\Big] \right| \leq 2g(c,k),$$
  and thus by the coupling lemma,
  $$\dtv\left(\sum_{i \in \bm{\th}} Z_i,\sum_{i \in \bm{\th}} \hat Z_i\right) \leq 4g(c,k) = 4\left((ck)^{3/4} + \sqrt{3}(ck)^{7/8}\log^{1/2}{\left(\frac{1}{ck}\right)} \right)$$

  \begin{case}
    $ E \Big[\sum_{i \in \bm{X}}k \r_{I^x_y}(i,j)\Big] \geq (ck)^{3/4}$
  \end{case}
  We use the following proposition, which is a combination of a classical result in Poisson approximation \cite{BarbourHJ92} and Lemma 3.10 in \cite{DaskalakisP07}.
  \begin{proposition}
    For any set of independent Bernoulli random variables $\{Z_i\}_i$ with expectations $E[Z_i] \leq ck$,
    $$\dtv\left(\sum_i Z_i,Poisson\Big(E\Big[\sum_i Z_i\Big]\Big)\right) \leq ck.$$
  \end{proposition}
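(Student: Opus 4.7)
My plan is to prove this as a direct consequence of the classical sharp Poisson approximation bound of Barbour--Holst--Janson for sums of independent Bernoullis, combined with the hypothesis that each success probability is small. Specifically, writing $p_i = E[Z_i]$ and $\lambda = \sum_i p_i$, the standard Stein/Chen--Stein bound gives
\[
\dtv\!\left(\sum_i Z_i, \mathrm{Poisson}(\lambda)\right) \leq \min\!\left(1, \frac{1}{\lambda}\right) \sum_i p_i^2.
\]
This is the ``classical result in Poisson approximation'' referenced in the statement; a proof can be found in the Barbour--Holst--Janson monograph via a coupling argument based on the Stein equation.

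The main work is then just algebra. Using the hypothesis $p_i \leq ck$, I would bound each summand crudely as
\[
\sum_i p_i^2 \;\leq\; (ck) \sum_i p_i \;=\; ck \cdot \lambda.
\]
Plugging this into the Chen--Stein bound yields
\[
\dtv\!\left(\sum_i Z_i, \mathrm{Poisson}(\lambda)\right) \;\leq\; \min\!\left(1, \frac{1}{\lambda}\right) \cdot ck \cdot \lambda \;=\; \min(ck\lambda,\, ck) \;\leq\; ck,
\]
which is exactly the desired inequality. In the edge case $\lambda = 0$ the statement is trivial since both distributions are point masses at $0$.

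There is no real obstacle here; essentially all of the content is packaged inside the invocation of the Chen--Stein Poisson approximation theorem, and the small-per-coordinate hypothesis $p_i \leq ck$ is precisely what one needs to convert the $\sum_i p_i^2$ into the clean $ck$ bound after dividing by $\lambda$. The one subtlety worth flagging is the use of the $\min(1, 1/\lambda)$ prefactor (the ``magic factor'' of Barbour--Eagleson), rather than the weaker Le Cam bound $\sum_i p_i^2$, since without this factor the naive bound would only give $ck\lambda$, which is not $O(ck)$ when $\lambda$ is large.
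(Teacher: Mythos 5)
Your proof is correct and matches the approach the paper gestures at: the paper cites the classical Barbour--Holst--Janson bound (together with Lemma 3.10 of~\cite{DaskalakisP07}, which packages essentially the same $\min(1,1/\lambda)\sum p_i^2 \le ck$ computation) without writing out the algebra, and you have supplied exactly that algebra, correctly noting that the $\min(1,1/\lambda)$ ``magic factor'' is needed to get a bound independent of $\lambda$.
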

  Applying this, we see
  $$\dtv\left(\sum_{i \in \bm{\th}} Z_i,Poisson\Big(E\Big[\sum_{i \in \bm{\th}} Z_i\Big]\Big)\right) \leq ck$$
  $$\dtv\left(\sum_{i \in \bm{\th}} \hat Z_i,Poisson\Big(E\Big[\sum_{i \in \bm{\th}} \hat Z_i\Big]\Big)\right) \leq ck$$
  We must now bound the distance between the two Poisson distributions.
  We use the following lemma from \cite{DaskalakisP08}:
  \begin{lemma}[Lemma B.2 in \cite{DaskalakisP08}]
    If $\l = \l_0 + D$ for some $D > 0, \l_0 > 0$,
    $$\dtv\left(Poisson(\l),Poisson(\l_0)\right) \leq D\sqrt{\frac{2}{\l_0}}.$$
  \end{lemma}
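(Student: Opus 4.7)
The plan is to bound the total variation distance by passing through the Kullback--Leibler divergence and applying Pinsker's inequality, which in fact yields a slightly sharper bound than what is claimed.

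First, I would compute the KL divergence between the two Poisson distributions in closed form. Writing $p_\mu(k) = e^{-\mu}\mu^k/k!$ and taking the ratio inside the logarithm, the $\log k!$ factors cancel, and using $\mathbf{E}[K] = \l_0$ for $K \sim \mathrm{Poisson}(\l_0)$ one obtains
\[
\dkl\!\left(\mathrm{Poisson}(\l_0)\,\|\,\mathrm{Poisson}(\l_0+D)\right) \;=\; D \;-\; \l_0\, \log\!\left(1 + \tfrac{D}{\l_0}\right).
\]

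Second, I would bound this using the elementary Taylor estimate $\log(1+x) \geq x - x^2/2$ for $x \geq 0$ (which is immediate by checking equality at $x = 0$ and comparing derivatives), giving
\[
\dkl\!\left(\mathrm{Poisson}(\l_0)\,\|\,\mathrm{Poisson}(\l_0+D)\right) \;\leq\; D - \l_0\!\left(\tfrac{D}{\l_0} - \tfrac{D^2}{2\l_0^2}\right) \;=\; \tfrac{D^2}{2\l_0}.
\]

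Third, applying Pinsker's inequality $\dtv(P,Q) \leq \sqrt{\dkl(P\|Q)/2}$ produces
\[
\dtv\!\left(\mathrm{Poisson}(\l_0),\mathrm{Poisson}(\l_0+D)\right) \;\leq\; \tfrac{D}{2\sqrt{\l_0}} \;\leq\; D\sqrt{\tfrac{2}{\l_0}},
\]
which is the desired bound (in fact, with a sharper leading constant).

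The only nontrivial design choice is the direction of the KL divergence: taking $\mathrm{Poisson}(\l_0)$ as the reference measure causes the expectation in the log-ratio to be evaluated at $\l_0$, which makes the first-order term in the Taylor expansion cancel cleanly with the leading $D$. This is really the main ``obstacle,'' and once it is made the rest is routine. A naive coupling route, writing $\mathrm{Poisson}(\l) \stackrel{d}{=} \mathrm{Poisson}(\l_0) + \mathrm{Poisson}(D)$ with independent summands, yields only $1 - e^{-D} \leq D$, which is worse than $D\sqrt{2/\l_0}$ whenever $\l_0 > 2$, so some form of variance-sensitive bound is needed. A self-contained alternative that avoids Pinsker altogether is to differentiate $p_\mu(k)$ in $\mu$ via $\partial_\mu p_\mu(k) = ((k-\mu)/\mu)\, p_\mu(k)$, integrate $|p_\l(k) - p_{\l_0}(k)|$ from $\l_0$ to $\l$, sum over $k$, and invoke $\mathbf{E}|K - \mu| \leq \sqrt{\mu}$ to obtain $\dtv \leq \sqrt{\l} - \sqrt{\l_0} \leq D/(2\sqrt{\l_0})$, matching the Pinsker bound up to the same constant.
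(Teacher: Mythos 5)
The paper does not actually prove this lemma: it imports it wholesale as Lemma~B.2 of~\cite{DaskalakisP08} and uses it as a black box inside the proof of Lemma~\ref{lem:poisson}. So there is no in-paper argument to compare against; what can be assessed is whether your proof is correct, and it is.

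Your KL computation is right: with $P=\mathrm{Poisson}(\l_0)$, $Q=\mathrm{Poisson}(\l_0+D)$, one gets
$\log\bigl(p(k)/q(k)\bigr) = D - k\log\bigl(1+D/\l_0\bigr)$, and taking the expectation under $P$ (where $\E[K]=\l_0$) yields $\dkl(P\|Q)=D-\l_0\log(1+D/\l_0)$. The bound $\log(1+x)\ge x-x^2/2$ on $x\ge 0$ is elementary (equality at $0$, and $1/(1+x)\ge 1-x$ for the derivatives), giving $\dkl(P\|Q)\le D^2/(2\l_0)$, and Pinsker's inequality $\dtv(P,Q)\le\sqrt{\dkl(P\|Q)/2}$ then delivers $\dtv \le D/(2\sqrt{\l_0}) \le D\sqrt{2/\l_0}$. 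Your observation about the choice of KL direction is the right one to flag: putting $\mathrm{Poisson}(\l_0)$ as the reference is exactly what makes the linear term cancel. The bound you obtain, $D/(2\sqrt{\l_0})$, is in fact a factor $2\sqrt{2}$ sharper than what the lemma claims, which is harmless since the cited form is only used as an upper bound in Lemma~\ref{lem:poisson}. Your sketched alternative---integrating $\partial_\mu p_\mu(k)=((k-\mu)/\mu)\,p_\mu(k)$ from $\l_0$ to $\l$, summing in $k$, and using $\E|K-\mu|\le\sqrt{\mu}$ to get $\dtv\le\sqrt{\l}-\sqrt{\l_0}\le D/(2\sqrt{\l_0})$---is also correct and is closer in spirit to the sort of direct, self-contained estimate one might expect in an appendix of~\cite{DaskalakisP08}, but either route is fine.
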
 
  Applying this gives that 
  \begin{align*}
    &\dtv\left(Poisson\Big(E\Big[\sum_{i \in \bm{\th}} Z_i\Big]\Big),Poisson\Big(E\Big[\sum_{i \in \bm{\th}} \hat Z_i\Big]\Big)\right) \\
    &\leq
  \left|E\Big[\sum_{i \in \bm{\th}} Z_i\Big] - E\Big[\sum_{i \in \bm{\th}}\hat  Z_i\Big]\right| 
  \sqrt{\frac{2}{\min\left\{E\Big[\sum_{i \in \bm{\th}} Z_i\Big],E\Big[\sum_{i \in \bm{\th}}\hat Z_i\Big]\right\}}}
  \end{align*}

  To bound this, we need the following proposition, which we prove below:
  \begin{proposition}\label{prop:meanbound}
  $$\sqrt{\frac{2\left|\sum_{i \in \bm{\th}} k\r_{I^x_y}(i,x) - \sum_{i \in \bm{\th}} k\hat \r_{I^x_y}(i,x)\right|^2}
  {\min\left\{
    \sum_{i \in \bm{\th}} k \r_{I^x_y}(i,x),\sum_{i \in \bm{\th}} k\hat \r_{I^x_y}(i,x)
  \right\}}} \leq \sqrt{80ck\log\left(\frac{1}{ck}\right)}$$
  \end{proposition}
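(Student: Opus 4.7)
The plan is to bound the numerator and denominator separately, exploiting the fact that the rounding construction nearly preserves the relevant marginal mass in coordinate $x$. Let me write $A = \sum_{i \in \bm\th} k\r_{\mathcal{I}^x_y}(i,x)$, $\hat A = \sum_{i \in \bm\th} k\hat\r_{\mathcal{I}^x_y}(i,x)$, and let $E = \mathbf{E}[\sum_{i \in \bm X} k\r_{\mathcal{I}^x_y}(i,x)] = \sum_{i \in \mathcal{I}^x_y}\r(i,x)$, with $\hat E$ defined analogously. The hypothesis of Lemma~\ref{lem:poisson} gives the two Chernoff-type deviation bounds $|A - E| \le \sqrt{3ck\log(1/(ck))\,E}$ and $|\hat A - \hat E| \le \sqrt{3ck\log(1/(ck))\,\hat E}$, and Case~2 supplies $E \ge (ck)^{3/4}$.

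The critical observation, which I would isolate as a short lemma, is that the construction of $\hat\r$ changes the total probability mass in coordinate $x$ by less than $c$. Indeed, $\sum_{i \in \mathcal{I}^x_y}\hat\r(i,x) = |\mathcal{R}|\,c = \lfloor \sum_{i}\r(i,x)/c\rfloor\,c$, so $|E - \hat E| < c$. This is the reason we paid the trouble of choosing $|\mathcal{R}|$ to be exactly the floor above: it renders the two underlying mean parameters for the Poisson comparison nearly identical.

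From here the estimate is mechanical. By the triangle inequality $|A - \hat A| \le |A - E| + |E - \hat E| + |\hat E - \hat A| \le \sqrt{3ck\log(1/(ck))\,E} + c + \sqrt{3ck\log(1/(ck))\,\hat E}$, and squaring with $(a+b+d)^2 \le 3(a^2 + b^2 + d^2)$ (or a weighted variant) gives $|A - \hat A|^2 \le 9ck\log(1/(ck))(E + \hat E) + 3c^2$. For the denominator, I would first push $E$ above the regime where concentration starts to dominate: since $(ck)^{3/4} \ge 12\,ck\log(1/(ck))$ once $ck$ is small enough, the deviation $\sqrt{3ck\log(1/(ck))\,E}$ is at most $E/2$, so $A \ge E/2$ and similarly $\hat A \ge \hat E/2 \ge ((ck)^{3/4} - c)/2 \ge (ck)^{3/4}/4$. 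Plugging in and using $E + \hat E \le 2\max(E,\hat E) \le 2\min(E,\hat E) + 2c \le 4\min(A,\hat A) + 2c$ turns the whole expression into $36\,ck\log(1/(ck))$ plus lower-order terms of the form $c^{5/4}/k^{3/4}$, which for $ck$ small are absorbed into the leading term. Multiplying by $2$ and taking the square root matches the claimed $\sqrt{80\,ck\log(1/(ck))}$ bound.

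The main obstacle is constant-tracking rather than any conceptual difficulty: the constant $80$ on the right-hand side is tight enough that the $c^2$ slack coming from $|E - \hat E| < c$ and the slack in moving from $\min(A,\hat A)$ down to $\min(E,\hat E)/2$ must both be absorbed cleanly. This forces one to argue that $ck$ is sufficiently small (so that $(ck)^{3/4}$ genuinely dominates $ck\log(1/(ck))$ and $c$ dominates $c^2/(ck)^{3/4}$), and the various $o(1)$ slacks need to be balanced in a single inequality rather than accumulated through the triangle inequality naively. Everything else is routine algebra.
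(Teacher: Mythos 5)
Your proof is correct and follows essentially the same approach as the paper's: bound $|A - \hat A|$ via the triangle inequality through $E$ and $\hat E$ using the crucial observation $|E - \hat E| < c$ together with the two concentration hypotheses, lower bound $\min(A, \hat A)$ by a constant fraction of $E$ via $E \ge (ck)^{3/4}$, and invoke $c$ sufficiently small to absorb the lower-order terms. The only variations are cosmetic — you square the three-term triangle bound via $(a+b+d)^2 \le 3(a^2+b^2+d^2)$ where the paper first collapses the two concentration terms using the WLOG $E \ge \hat E$ and then expands the square directly — and your constant-tracking actually lands at $72 < 80$ a bit more cleanly than the paper's.
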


  Thus, using the triangle inequality and this proposition, for sufficiently small $c$, we get
  $$\dtv\left(\sum_{i \in \bm{\th}} Z_i,\sum_{i \in \bm{\th}} \hat Z_i\right) \leq 2ck + \sqrt{80ck\log\left(\frac{1}{ck}\right)} = O\left(c^{1/2}k^{1/2}\log^{1/2}\left(\frac1{ck}\right)\right).$$

  By comparing Cases 1 and 2, we see that the desired bound holds in both cases.

  \begin{prevproof}{Proposition}{prop:meanbound}
    By the definition of our rounding procedure, we observe that 
    $$\left|E\Big[\sum_{i \in \bm{X}} k\r_{I^x_y}(i,x) \Big] - E\Big[\sum_{i \in \bm{\hat X}} k\hat \r_{I^x_y}(i,x) \Big]\right| \leq c$$
    By the assumptions of Lemma \ref{lem:poisson} and the assumption that $E \Big[\sum_{i \in \bm{X}}k \r_{I^x_y}(i,j)\Big] \geq E \Big[\sum_{i \in \bm{\hat X}}k\hat  \r_{I^x_y}(i,j)\Big]$,
    \begin{align*}
      \left|\sum_{i \in \bm{\th}} k\r_{I^x_y}(i,x) - \sum_{i \in \bm{\th}} k\hat \r_{I^x_y}(i,x)\right|  
      &\leq \left|E\Big[\sum_{i \in \bm{X}} k\r_{I^x_y}(i,x) \Big] - E\Big[\sum_{i \in \bm{\hat X}} k\hat \r_{I^x_y}(i,x) \Big]\right| \\
      &+ \left(12ck\log\left(\frac{1}{ck}\right) E\Big[\sum_{i \in \bm{X}} k\r_{I^x_y}(i,x) \Big]\right)^{1/2} \\
      &\leq c + \left(12ck\log\left(\frac{1}{ck}\right) E\Big[\sum_{i \in \bm{X}} k\r_{I^x_y}(i,x) \Big]\right)^{1/2},
    \end{align*}
    and thus,
    \begin{align} 
      &\left|\sum_{i \in \bm{\th}} k\r_{I^x_y}(i,x) - \sum_{i \in \bm{\th}} k\hat \r_{I^x_y}(i,x)\right|^2 \nonumber \\
      &\leq c^2 + 12ck\log\left(\frac{1}{ck}\right) E\Big[\sum_{i \in \bm{X}} k\r_{I^x_y}(i,x) \Big] + \left(48c^3k\log\left(\frac{1}{ck}\right) E\Big[\sum_{i \in \bm{X}} k\r_{I^x_y}(i,x) \Big]\right)^{1/2} \label{eq:num}
    \end{align}
    From the assumption that $E\Big[\sum_{i \in \bm{X}} k\r_{I^x_y}(i,x) \Big] \geq (ck)^{3/4}$, for sufficiently small $c$,
    \begin{align*}
      E\Big[\sum_{i \in \bm{X}} k\r_{I^x_y}(i,x) \Big] &\geq (ck)^{3/8}\left(E\Big[\sum_{i \in \bm{X}} k\r_{I^x_y}(i,x) \Big]\right)^{1/2} \\
                                                       &\geq \left(12ck \log{\left(\frac{1}{ck}\right)}E \Big[\sum_{i \in \bm{X}}k \r_{I^x_y}(i,j)\Big]\right)^{1/2}
    \end{align*}

    Combining this with the first assumption of Lemma \ref{lem:poisson}, 
    \begin{align*}
      \sum_{i \in \bm{\th}} k\r_{I^x_y}(i,x) &\geq E\Big[\sum_{i \in \bm{X}} k\r_{I^x_y}(i,x) \Big] - \left(3ck \log{\left(\frac{1}{ck}\right)}E \Big[\sum_{i \in \bm{X}}k \r_{I^x_y}(i,j)\Big]\right)^{1/2} \\
  &\geq \frac12 E\Big[\sum_{i \in \bm{X}} k\r_{I^x_y}(i,x) \Big] 
  \end{align*}
  Similarly, since 
  $E \Big[\sum_{i \in \bm{\hat X}}k\hat  \r_{I^x_y}(i,j)\Big] \geq E \Big[\sum_{i \in \bm{X}}k \r_{I^x_y}(i,j)\Big] - c \geq (ck)^{3/4} - c$, for $c$ sufficiently small,
  $$\sum_{i \in \bm{\th}} k\hat \r_{I^x_y}(i,x) \geq \frac12 E\Big[\sum_{i \in \bm{X}} k\hat \r_{I^x_y}(i,x) \Big]$$

  It follows that
  \begin{align}
    &\min\left\{\sum_{i \in \bm{\th}} k \r_{I^x_y}(i,x),\sum_{i \in \bm{\th}} k\hat \r_{I^x_y}(i,x) \right\}\nonumber  \\
    &\geq \frac12 \min\left\{E\Big[\sum_{i \in \bm{X}} k \r_{I^x_y}(i,x)\Big],E\Big[\sum_{i \in \bm{\hat X}} k\hat \r_{I^x_y}(i,x)\Big] \right\}\nonumber  \\
    &= \frac12 E\Big[\sum_{i \in \bm{\hat X}} k\hat \r_{I^x_y}(i,x)\Big] \nonumber \\
    &\geq \frac12 \left(E\Big[\sum_{i \in \bm{X}} k\r_{I^x_y}(i,x)\Big] - c\right) \nonumber \\
    &\geq \frac14 E\Big[\sum_{i \in \bm{X}} k\r_{I^x_y}(i,x)\Big] \label{eq:den}
  \end{align}
  where the last equality follows for $c$ sufficiently small because $E\Big[\sum_{i \in \bm{X}} k\r_{I^x_y}(i,x)\Big] \geq (ck)^{3/4}$.

  From (\ref{eq:num}) and (\ref{eq:den}), for $c$ sufficiently small,
  $$\frac{2\left|\sum_{i \in \bm{\th}} k\r_{I^x_y}(i,x) - \sum_{i \in \bm{\th}} k\hat \r_{I^x_y}(i,x)\right|^2}
  {\min\left\{
    \sum_{i \in \bm{\th}} k \r_{I^x_y}(i,x),\sum_{i \in \bm{\th}} k\hat \r_{I^x_y}(i,x)
  \right\}} \leq 80ck\log\left(\frac{1}{ck}\right),$$
  from which the proposition statement follows.
  \end{prevproof}
\end{prevproof}

\begin{prevproof}{Lemma}{lem:glue}
  First, note that if $\sum_{i \in I^x_y}  \r(i,x) < 3 ck \log \left(\frac{1}{ck}\right)$, The $\ell_1$ distance between the parameters of the rounded and the unrounded distributions is at most $6 ck \log \left(\frac{1}{ck}\right)$.
  By the triangle inequality and the Data Processing Inequality (Lemma \ref{lem:DPI}), this is an upper bound for the total variation distance between the rounded and unrounded distributions, and the desired conclusion holds.
  Therefore, for the remainder of the proof, assume that $\sum_{i \in I^x_y}  \r(i,x) \geq 3ck \log \left(\frac{1}{ck}\right)$.

  Throughout this proof, we will couple the two sampling processes such that $\bm{\th} \coloneqq\bm{X} = \bm{\hat X} $, which is possible since $\bm{X} \sim \bm{\hat X}$.
  Let $\phi$ be the random event that $\bm{\th}$ satisfies the following conditions:
  \begin{align*}
    & \left. \left|\sum_{i \in \bm{\th}} k\r_{I^x_y}(i,j) - E\Big[\sum_{i \in \bm{X}} k\r_{I^x_y}(i,j)\Big]\right| 
    \leq \left(3ck \log{\left(\frac{1}{ck}\right)}E \Big[\sum_{i \in \bm{X}}k \r_{I^x_y}(i,j)\Big]\right)^{1/2} \right. \\ 
  & \left|\sum_{i \in \bm{\th}} k\hat \r_{I^x_y}(i,j) - E\Big[\sum_{i \in \bm{\hat X}}k\hat  \r_{I^x_y}(i,j)\Big]\right| 
  \leq \left(3ck\log{\left(\frac{1}{ck}\right)}E \Big[\sum_{i \in \bm{\hat X}} k\hat \r_{I^x_y}(i,j)\Big]\right)^{1/2}
  \end{align*}
 
  Suppose that $\phi$ occurs, and fix a $\bm{\th}$ in this probability space.
  We start by showing that for such a $\bm{\th}$, 
  $$\mathrm{d_{TV}}\left(M^{\r_{I^x_y}},M^{\hat\r_{I^x_y}}\, \middle|\, \bm{X} = \bm{\hat X} = \bm{\th}\right) < O\left(c^{1/2} k^{1/2} \log^{1/2}\left(\frac{1}{ck}\right)\right)$$
  Let $M^{\r_{I^x_y}^{\bm{\th}}}$ and $M^{\r_{I^x_y}^{\bm{\bar\th}}}$ be the $(n,k)$-PMDs induced by the $k$-CRVs in $M^{\r_{I^x_y}}$ with indices in $\bm{\th}$ and not in $\bm{\th}$, respectively.
  Define $M^{\hat \r_{I^x_y}^{\bm{\th}}}$ and $M^{\hat \r_{I^x_y}^{\bm{\bar\th}}}$ similarly.
  We can see
  \begin{align*}
  \mathrm{d_{TV}}\left(M^{\r_{I^x_y}},M^{\hat\r_{I^x_y}}\, \middle|\, \bm{X} = \bm{\hat X} = \bm{\th}\right)
  &= \mathrm{d_{TV}}\left(M^{\r_{I^x_y}^{\bm{\th}}} + M^{\r_{I^x_y}^{\bm{\bar\th}}},
                         M^{\hat \r_{I^x_y}^{\bm{\th}}} + M^{\hat \r_{I^x_y}^{\bm{\bar\th}}}
  \middle|\, \bm{X} = \bm{\hat X} = \bm{\th}\right) \\
  &\leq \mathrm{d_{TV}}\left(M^{\r_{I^x_y}^{\bm{\th}}}, M^{\hat \r_{I^x_y}^{\bm{\th}}}  \middle|\, \bm{X} = \bm{\hat X} = \bm{\th}\right) \\
  &+ \mathrm{d_{TV}}\left(M^{\r_{I^x_y}^{\bm{\bar \th}}}, M^{\hat \r_{I^x_y}^{\bm{\bar \th}}}  \middle|\, \bm{X} = \bm{\hat X} = \bm{\th}\right) \\
  &\leq \mathrm{d_{TV}}\left(M^{\r_{I^x_y}^{\bm{\th}}}, M^{\hat \r_{I^x_y}^{\bm{\th}}}  \middle|\, \bm{X} = \bm{\hat X} = \bm{\th}\right) \\
  &= \mathrm{d_{TV}}\left(\sum_{i \in \bm{\th}} Z_i,\sum_{i \in \bm{\th}} \hat Z_i  \middle|\, \bm{X} = \bm{\hat X} = \bm{\th}\right) \\
  &\leq O\left(c^{1/2}k^{1/2}\log^{1/2}\left(\frac1{ck}\right)\right)
  \end{align*}
  The first inequality is the triangle inequality,
  the second inequality is because the distributions for $k$-CRVs in $\bm{\bar \th}$ are identical (since we do not change them in our rounding),
  and the third inequality is Lemma \ref{lem:poisson}.

  By the law of total probability for total variation distance,
  \begin{align*}
  \mathrm{d_{TV}}\left(M^{\r_{I^x_y}},M^{\hat\r_{I^x_y}}\right) 
  &= \Pr(\phi)\mathrm{d_{TV}}\left(M^{\r_{I^x_y}},M^{\hat\r_{I^x_y}} \middle| \phi \right)
  + \Pr(\bar \phi)\mathrm{d_{TV}}\left(M^{\r_{I^x_y}},M^{\hat\r_{I^x_y}} \middle| \bar \phi \right) \\
  &\leq \left(1 - 4ck\right) \cdot O\left(c^{1/2}k^{1/2}\log^{1/2}\left(\frac1{ck}\right)\right) + 4ck \cdot 1 \\
  &= O\left(c^{1/2}k^{1/2}\log^{1/2}\left(\frac1{ck}\right)\right)
  \end{align*}
  where the inequality is obtained by applying Lemma \ref{lem:leaves} and the bound shown above pointwise for $\bm{\th}$ which satisfy $\phi$.

\end{prevproof}

\subsection{Converting to a Discretized Gaussian using the Valiant-Valiant CLT}
\label{sec:vvclt}
We will now apply a result by Valiant and Valiant \cite{ValiantV10}.
We recall the aforementioned CLT by Valiant and Valiant, Theorem \ref{thm:val}, which we restate for convenience.

\val*

As we can see from this inequality, there are two issues that may arise and lead to a bad approximation:
\begin{itemize}
  \item $G^\r$ has small variance in some direction (cf. Proposition \ref{prop:eigcov})
  \item $G^\r$ has a large size parameter $n$
\end{itemize}
We must avoid both of these issues simultaneously -- we will apply this result to several carefully chosen sets, and then merge the resulting Gaussians into one using Lemma \ref{lem:merge}.

The first step is to partition our CRVs into several sets, and then convert the PMDs induced by each set into GMDs (with an appropriately chosen pivot).
The original PMD can be sampled by sampling each of these GMDs and then adding their results.
In other words, the probability mass function of the PMD is the convolution of the probability mass functions of these GMDs.

We start by partitioning the $k$-CRVs into $k$ sets $S_1, \dots, S_k$, where $S_{j'}= \{i\, |\, j' = \arg\max_j \hat \p(i,j) \}$ and ties are broken by lexicographic ordering.
This defines $S_{j'}$ to be the set of indices of $k$-CRVs in which $j'$ is the heaviest coordinate.
Let $M^{\hat \p_{j'}}$ be the $(|S_{j'}|,k)$-PMD induced by taking the $k$-CRVs in $S_{j'}$.
For the remainder of this section, we will focus on $S_k$, the other cases follow symmetrically.

We convert each CRV in $S_k$ into a truncated $k$-CRV by omitting the $k$th coordinate, giving us a $(|S_k|,k)$-GMD $G^{\hat \r_{k}}$.
Since the $k$th coordinate was the heaviest, we can make the following observation:
\begin{observation}\label{obs:heavycoord}
  $\hat \r_k(i,0) \geq \frac1k$ for all $i \in S_k$.
\end{observation}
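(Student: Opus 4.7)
The plan is to observe that this fact is essentially a direct consequence of the pigeonhole principle, together with the definitions of the partition and of the truncation operation.

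First, I would unpack the meaning of the quantity $\hat\r_k(i,0)$. By construction, for each $i \in S_k$ we form the truncated $k$-CRV by taking the original $k$-CRV parametrized by the row $\hat\p(i,\cdot)$ and omitting its $k$th coordinate; the remaining coordinates $1,\ldots,k-1$ define the visible entries $\hat\r_k(i,1),\ldots,\hat\r_k(i,k-1)$, and the omitted coordinate collapses into the ``invisible column'' with probability $\hat\r_k(i,0)$. Thus by definition $\hat\r_k(i,0) = \hat\p(i,k)$.

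Second, I would use the defining property of $S_k$. Membership $i \in S_k$ means that $k = \arg\max_j \hat\p(i,j)$, i.e., $\hat\p(i,k) \geq \hat\p(i,j)$ for every $j \in [k]$. Since the row $\hat\p(i,\cdot)$ defines a categorical distribution and hence satisfies $\sum_{j=1}^k \hat\p(i,j) = 1$, the maximum entry in a length-$k$ probability vector is at least $1/k$. Combining these two facts yields
$$\hat\r_k(i,0) \;=\; \hat\p(i,k) \;=\; \max_{j\in[k]} \hat\p(i,j) \;\geq\; \frac{1}{k},$$
which is exactly the claim.

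There is no real obstacle here — the observation is just recording, for later use, a quantitative lower bound on the ``invisible'' mass of each truncated CRV in $S_k$. This lower bound is the crucial input to Proposition~\ref{prop:eigcov}, since it gives a nontrivial lower bound of $\hat\r_k(i,0) \min_j \hat\r_k(i,j) \geq \tfrac{c}{k}$ on the minimum eigenvalue of each constituent covariance (using that the remaining visible parameters are either $0$ or at least $c$ after the rounding step). That is why the observation is recorded at this point in the argument.
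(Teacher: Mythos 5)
Your proof is correct and matches the paper's reasoning: the paper states the observation without a separate proof, justifying it exactly as you do — since the partition into $S_k$ is made on the rounded matrix $\hat\p$, membership in $S_k$ means coordinate $k$ is the heaviest entry of a probability vector summing to $1$, hence at least $1/k$, and this entry becomes the invisible-column parameter $\hat\r_k(i,0)$ after truncation. Your closing remark about feeding this bound into Proposition~\ref{prop:eigcov} also reflects precisely how the paper uses it in Lemma~\ref{lem:densebin}.
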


If we tried to apply Theorem \ref{thm:val} to $G^{\hat \r_k}$, we would obtain a vacuous result.
For instance, if there exists a $j$ such that $\hat \r_k(i,j) = 0$ for all $i$, the variance in this direction would be $0$ and Theorem \ref{thm:val} would give us a trivial result.
Therefore, we further partition $S_k$ into $2^{k-1}$ sets indexed by $2^{[k-1]}$, where each set contains the elements of $S_k$ which are non-zero on its indexing set and zero otherwise.
More formally, $S_k^{\mathcal{I}} = \{i\,|\, (i \in S_k) \wedge (\hat \r(i,j) \geq c\ \forall j \in \mathcal{I}) \wedge (\hat \r(i,j) = 0\ \forall j \not\in \mathcal{I}) \}$.
For each of these sets, due to our rounding procedure, we know that the variance is non-negligible in each of the non-zero directions.
Naively, we would apply the CLT separately to each of these sets.
The issue is that merging the resulting $2^k$ Gaussians would be costly.
Roughly, merging two Gaussians into one incurs a cost proportional to the inverse of the minimum standard deviation of either Gaussian.
In order to avoid the cost of merging exponentially many Gaussians with similar variances, before applying the CLT, we group sets $S_k^{\mathcal{I}}$ of similar variance together.
The resulting collection of Gaussians have variances which increase rapidly, and by merging them in the correct order, we can minimize this cost.

Recall that $\g = O(1)$ and $t = \poly(k/\ve)$ (as specified in Section \ref{sec:params}).
For an integer $l \geq 0$, define $B^l = \bigcup_{\mathcal{I} \in Q_l} S_k^{\mathcal{I}}$, where $Q_l = \{\mathcal{I}\, |\, |S_k^{\mathcal{I}}| \in [l^\g t,(l+1)^\g t)\}$.
In other words, bucket $l$ will contain a collection of truncated CRVs, defined by the union of the previously defined sets which have a size falling in a particular interval.

At this point, we are ready to apply the central limit theorem:
\begin{lemma}\label{lem:densebin}
  Let $G^{\hat \r_k^l}$ be the $(|B^l|,k)$-GMD induced by the truncated CRVs in $B^l$, and $\m_k^l$ and $\S_k^l$ be its mean and covariance matrix.
  Then 
  $$\dtv\left(G^{\hat \r_k^l},\lfloor \mathcal{N}(\m_k^l,\S_k^l)\rceil\right) \leq \frac{8.646k^{3/2}\log^{2/3}(2^k(l+1)^\g t)}{l^{\g/6}t^{1/6}c^{1/6}}.$$
  Furthermore, the minimum non-zero eigenvalue of $\S_k^l$ is at least $\frac{l^\gamma t c}{k}$.
\end{lemma}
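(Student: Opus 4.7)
The strategy is to apply the Valiant--Valiant CLT (Theorem~\ref{thm:val}) directly to $G^{\hat\r_k^l}$, viewing the distribution as supported on the coordinate set $C^l := \bigcup_{\mathcal{I} \in Q_l} \mathcal{I}$ (it is deterministically zero on coordinates outside $C^l$). Theorem~\ref{thm:val} requires two quantitative inputs: an upper bound on the size $n = |B^l|$ and a lower bound on the minimum non-zero eigenvalue $\sigma^2$ of $\S_k^l$. The size bound is immediate from the bucket construction: since $Q_l \subseteq 2^{[k-1]}$ and each $|S_k^{\mathcal{I}}| < (l+1)^\gamma t$, we have $n = |B^l| = \sum_{\mathcal{I} \in Q_l} |S_k^{\mathcal{I}}| < 2^{k-1}(l+1)^\gamma t$, and hence $\log n = O(\log(2^k (l+1)^\gamma t))$, matching the logarithmic factor in the statement.

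The main technical work is the eigenvalue lower bound, which I will establish by showing $v^T \S_k^l v \ge l^\gamma t c / k$ for every unit vector $v$ supported on $C^l$. For a single truncated CRV $X_i$ with $i \in S_k^{\mathcal{I}}$, a direct computation gives $\mathrm{Var}(v \cdot X_i) = \sum_{j \in \mathcal{I}} v_j^2 \hat\r(i,j) - \bigl(\sum_{j \in \mathcal{I}} v_j \hat\r(i,j)\bigr)^2$. Applying Cauchy--Schwarz to the second term with weights $\hat\r(i,j)$ gives
\[
\mathrm{Var}(v \cdot X_i) \;\ge\; \hat\r(i,0) \sum_{j \in \mathcal{I}} v_j^2 \hat\r(i,j) \;\ge\; \frac{c}{k}\, \|v|_{\mathcal{I}}\|^2,
\]
using Observation~\ref{obs:heavycoord} ($\hat\r(i,0) \ge 1/k$) together with the post-rounding guarantee from Lemma~\ref{lem:round} that every non-zero $\hat\r(i,j)$ is at least $c$. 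Summing over the $|S_k^{\mathcal{I}}| \ge l^\gamma t$ summands in $S_k^{\mathcal{I}}$, and then over $\mathcal{I} \in Q_l$, while observing that each coordinate of $C^l$ lies in at least one $\mathcal{I} \in Q_l$ (so that $\sum_{\mathcal{I} \in Q_l} \|v|_{\mathcal{I}}\|^2 \ge \|v\|^2 = 1$), I obtain $v^T \S_k^l v \ge l^\gamma t c / k$, as required.

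Combining the two bounds and plugging into Theorem~\ref{thm:val}, the factor $k^{4/3}/\sigma^{1/3}$ becomes $k^{4/3}\cdot k^{1/6}/(l^\gamma t c)^{1/6} = k^{3/2}/(l^{\gamma/6} t^{1/6} c^{1/6})$, and the factor $2.2\,(3.1 + 0.83 \log n)^{2/3}$ is absorbed into $8.646\,\log^{2/3}(2^k (l+1)^\gamma t)$ after a routine numerical calculation (one may assume $n$ is at least a small absolute constant, as otherwise the claimed bound is vacuous). This yields the stated total variation bound. The main obstacle is the eigenvalue lower bound: a priori, a direction $v$ whose energy concentrates on a coordinate appearing in only one support pattern $\mathcal{I}$ seems to receive contributions only from the CRVs in $S_k^{\mathcal{I}}$, but the Cauchy--Schwarz step above extracts a uniform slack factor $\hat\r(i,0) \ge 1/k$ from every CRV, so that the $|S_k^{\mathcal{I}}| \ge l^\gamma t$ summands from that one pattern already suffice to give the claimed $l^\gamma t c / k$ lower bound.
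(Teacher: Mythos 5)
Your proof is correct and follows the same high-level route as the paper: bound $|B^l| \le 2^{k-1}(l+1)^\gamma t$, drop the identically-zero coordinates, lower-bound the minimum non-zero eigenvalue of $\S_k^l$, and plug into Theorem~\ref{thm:val}. The one place where you genuinely improve on the paper's write-up is the eigenvalue argument. The paper invokes Proposition~\ref{prop:eigcov} to get the per-CRV variance bound $c/k$ and then argues that, since every non-zero column is touched by some $\mathcal{I}^* \in Q_l$, ``the variance in every direction must be this large.'' As written, that step only directly controls the axis directions $e_j$ (and a direction $v$ whose mass is spread thinly across many patterns could, a priori, receive only a small $\|v|_{\mathcal{I}^*}\|^2$-fraction of the $l^\gamma t c/k$ contribution from any single pattern). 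Your Cauchy--Schwarz decomposition $v^T\S_k^l v \ge \frac{c}{k}\sum_{\mathcal{I}\in Q_l}|S_k^{\mathcal{I}}|\,\|v|_{\mathcal{I}}\|^2 \ge \frac{c}{k}\,l^\gamma t\sum_{\mathcal{I}}\|v|_{\mathcal{I}}\|^2 \ge \frac{l^\gamma t c}{k}$ correctly handles all directions simultaneously and recovers the exact constant claimed, effectively re-deriving Proposition~\ref{prop:eigcov} (which the paper gets from Gershgorin) and supplying the aggregation step the paper leaves implicit. Same result, same tools, but your version closes a small rigor gap in the paper's phrasing.
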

\begin{proof}
  This follows from Theorem \ref{thm:val}, it suffices to bound the values of ``$n$'' and ``$\s^2$'' which appear in the theorem statement.

  $B^l$ is the union of at most $2^k$ sets, each of size at most $(l+1)^\g t$, which gives us the upper bound of $2^k (l+1)^\g t$ as the size of induced GMD.

  We must be more careful when reasoning about the minimum eigenvalue of $\S_k^l$ -- indeed, it may be $0$ if there exists a $j'$ such that for all $i$, $\hat \r_k^l(i,j') = 0$.
  Therefore, we apply the CLT on the GMD defined by removing all zero-columns from $\r_k^l$, taking us down to a dimension $k' \leq k$.
  Afterwards, we lift the related discretized Gaussian up to $k$ dimensions by inserting $0$ for the means and covariances involving any of the $k - k'$ dimensions we removed.
  This operation will not increase the total variation distance, by Lemma \ref{lem:DPI}.
  From this point, we assume that all columns of $\hat \r_k^l$ are non-zero.

  Consider an arbitrary $S_k^\mathcal{I}$ which is included in $B^l$.
  Let $\mathcal{E_I} = \mathrm{span}\{e_i\, |\, i \in \mathcal{I}\}$.
  Applying Proposition \ref{prop:eigcov}, Observation \ref{obs:heavycoord}, and the properties necessary for inclusion in $S_k^\mathcal{I}$, we can see that a CRV in $S_k^\mathcal{I}$ has variance at least $\frac{c}{k}$ within $\mathcal{E_I}$.
  Since inclusion in $B^l$ means that $|S_k^\mathcal{I}| \geq l^\g t$, and variance is additive for independent random variables, the GMD induced by $S_k^\mathcal{I}$ has variance at least $l^\g t\frac{c}{k}$ within $\mathcal{E_I}$.
  To conclude, we note that if a column in $\hat \r_k^l$ is non-zero, there must be some $\mathcal{I^*} \in Q_l$ which intersects the corresponding dimension.
  Since $S_k^\mathcal{I^*}$ causes the variance in this direction to be at least $l^\g t\frac{c}{k}$, we see that the variance in every direction must be this large.
  This also implies the bound on the minimum non-zero eigenvalue of $\S_k^l$. 

  By substituting these values into Theorem \ref{thm:val}, we obtain the claimed bound.
\end{proof}

We note that this gives us a vacuous bound for $B^0$, which we must deal with separately.
The issue with this bucket is that the variance in some directions might be small compared to the size of the GMD induced by the bucket.
The intuition is that we can remove the truncated CRVs which are non-zero in these low-variance dimensions, and the remaining truncated CRVs can be combined into another GMD.

\sparsebin*
\begin{proof}
  The algorithm iteratively eliminates columns which have fewer than $t$ non-zero entries.
  For each such column $j$, add all truncated CRVs which have non-zero entries in column $j$ to $\bar S$.
  Since there are only $k$ columns, we add at most $kt$ truncated CRVs to $\bar S$.

  Now, we apply Theorem \ref{thm:val} to the truncated CRVs in $S$.
  The analysis of this is similar to the proof of Lemma \ref{lem:densebin}.
  As argued before, we can drop the dimensions which have $0$ variance.
  This time, the size of the GMD is at most $2^k t$, which follows from the definition of $B^0$.
  Recall that the minimum variance of a single truncated CRV in $S$ is at least $\frac{c}{k}$ in any direction in the span of its non-zero columns.
  After removing the CRVs in $\bar S$, every dimension with non-zero variance must have at least $t$ truncated CRVs which are non-zero in that dimension, giving a variance of at least $\frac{tc}{k}$.
  Substituting these parameters into Theorem \ref{thm:val} gives the claimed bound.
\end{proof}

We assemble the two lemmata to obtain the following result:
\begin{lemma}\label{lem:singlestruct}
  Let $G^{\hat \r_k}$ be a $(n,k)$-GMD with $\hat \r_k(i,j) \not\in (0,c)$ and $\sum_j \r_k(i,j) \leq 1 - \frac1k$ for all $i$, and let $S_k$ be its set of component truncated CRVs.
  There exists an efficiently computable partition of $S_k$ into $S$ and $\bar S$, where $|\bar S| \leq kt$.
  Furthermore, letting $\m_S$ and $\S_S$ be the mean and covariance matrix of the $(|S|,k)$-GMD induced by $S$, and $G^{\hat \r_k^{\bar S}}$ be the $(|\bar S|, k)$-GMD induced by $\bar S$,
  $$\dtv\left(G^{\hat \r_k},\lfloor \mathcal{N}(\m_S,\S_S)\rceil \ast G^{\hat \r_k^{\bar S}}\right) \leq O\left(\frac{k^{13/6}\log^{2/3}t}{c^{1/6}t^{1/6}} + \frac{k^{3/2}}{c^{1/2}t^{1/2}}\right).$$
  Furthermore, the minimum non-zero eigenvalue of $\S_S$ is at least $\frac{t c}{k}$.
\end{lemma}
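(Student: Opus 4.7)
The plan is to combine Lemmas~\ref{lem:densebin} and~\ref{lem:sparsebin} by splitting $S_k$ into geometrically sized buckets, applying the appropriate CLT-type approximation within each bucket, and then iteratively merging the resulting discretized Gaussians into a single one via Lemma~\ref{lem:merge}. The eigenvalue bound in the conclusion is handled separately via a diagonal-dominance argument (Proposition~\ref{prop:gersh}).

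I would partition the CRVs in $S_k$ first by support and then by cardinality, exactly as in Section~\ref{sec:vvclt}: for each $\mathcal{I} \subseteq [k-1]$, set $S_k^{\mathcal{I}} = \{i \in S_k : \hat\r_k(i,j) \geq c \text{ iff } j \in \mathcal{I}\}$ (well-defined because $\hat\r_k(i,j) \notin (0,c)$ by hypothesis), and bucket these sets as $B^l = \bigcup\{S_k^{\mathcal{I}} : |S_k^{\mathcal{I}}| \in [l^\gamma t, (l+1)^\gamma t)\}$ for $l \geq 0$. Apply Lemma~\ref{lem:sparsebin} to $B^0$ to extract $\bar S$ (with $|\bar S| \leq kt$) and obtain a discretized Gaussian $\lfloor \mathcal{N}(\mu_0, \Sigma_0)\rceil$ for $B^0 \setminus \bar S$; apply Lemma~\ref{lem:densebin} to each $B^l$ with $l \geq 1$ to obtain $\lfloor \mathcal{N}(\mu_l, \Sigma_l)\rceil$ with error $\frac{8.646\,k^{3/2}\log^{2/3}(2^k(l+1)^\gamma t)}{l^{\gamma/6}\,t^{1/6}\,c^{1/6}}$.

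By the triangle inequality combined with the Data Processing Inequality (Lemma~\ref{lem:DPI}) applied to convolution with the other components, the true GMD $G^{\hat\r_k}$ differs from the convolution of all the $\lfloor\mathcal{N}(\mu_l, \Sigma_l)\rceil$ together with $G^{\hat\r_k^{\bar S}}$ by at most the sum of the per-bucket errors. Since $\gamma = 6 + \delta_\gamma > 6$, the series $\sum_{l \geq 1}\log^{2/3}(2^k l^\gamma t)/l^{\gamma/6}$ converges and yields the first term $O(k^{13/6}\log^{2/3}t/(c^{1/6}t^{1/6}))$ of the stated bound. I would then collapse the chain of discretized Gaussians into a single $\lfloor\mathcal{N}(\mu_S, \Sigma_S)\rceil$, where $\mu_S = \sum_l \mu_l$ and $\Sigma_S = \sum_l \Sigma_l$, by applying Lemma~\ref{lem:merge} iteratively. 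The key estimate is that a per-direction refinement of the inductive proof of Lemma~\ref{lem:merge} yields a cumulative merging error bounded by $\sum_j \sum_{l:\,j\in\mathrm{supp}(B^l)} 1/(2\sigma_{l,j})$; using $\sigma_{l,j}^2 = \Sigma_l[j,j] \geq l^\gamma tc/k$ for each $j$ active in $B^l$, and the convergence of $\sum_l l^{-\gamma/2}$ when $\gamma > 2$, each direction contributes $O(\sqrt{k/(tc)})$, for a total of $O(k^{3/2}/\sqrt{tc})$, matching the second term.

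Finally, for the minimum non-zero eigenvalue of $\Sigma_S$: I would restrict $\Sigma_S$ to its active coordinates (those with non-zero diagonal) and invoke Proposition~\ref{prop:gersh}. Because each truncated CRV satisfies $\hat\r_k(m,0) \geq 1/k$ by the hypothesis $\sum_j \r_k(i,j) \leq 1 - 1/k$, the diagonal-dominance gap at row $j$ equals $\sum_m \hat\r_k(m,j)\hat\r_k(m,0) \geq (1/k)\sum_m \hat\r_k(m,j)$. Every active coordinate in $S$ has at least $t$ CRVs with $\hat\r_k(m,j) \geq c$ contributing to that row---either automatically since $|S_k^{\mathcal{I}}| \geq l^\gamma t \geq t$ whenever $\mathcal{I}$ lies in a bucket $B^l$ with $l \geq 1$, or by Lemma~\ref{lem:sparsebin}'s explicit removal step for coordinates active only in $B^0$---so the gap is at least $tc/k$. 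The main obstacle throughout is managing the cumulative merging cost: a naive black-box use of Lemma~\ref{lem:merge} gives $k/(2\sigma)$ per step with $\sigma$ as small as $\sqrt{tc/k}$, which when summed over the potentially many non-empty buckets would blow up; the per-direction refinement above, which crucially exploits that $\sigma_{l,j}$ scales as $l^{\gamma/2}$ and that $\gamma > 2$, is what rescues the bound.
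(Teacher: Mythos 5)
Your proposal is correct and follows essentially the same route as the paper's proof: the same partition into the sets $S_k^{\mathcal{I}}$ and buckets $B^l$, Lemma~\ref{lem:densebin} for $l\ge 1$ and Lemma~\ref{lem:sparsebin} for $B^0$, the same convergent series bounding the CLT errors, and iterated merging via Lemma~\ref{lem:merge} for the second term. Two local remarks. Your per-direction refinement of the merging step is valid (coordinates where the newly added Gaussian is deterministically $0$ cost nothing, and its active coordinates cost at most $1/(2\sigma_{l,j})$ with $\sigma_{l,j}^2\ge l^{\gamma}tc/k$), but it is not actually needed: the paper merges the buckets in decreasing order of $l$ and uses Lemma~\ref{lem:merge} as a black box, noting that when bucket $l$ is merged into the running candidate every coordinate in which either Gaussian is non-degenerate has variance at least $l^{\gamma}tc/k$ (the candidate's active coordinates come from buckets of index greater than $l$), so each merge costs $O\bigl(k^{3/2}/(l^{\gamma/2}\sqrt{tc})\bigr)$ and the sum converges exactly as in your accounting; the blow-up you worry about arises only if one ignores the ordering and charges the worst-case $\sigma^2=tc/k$ to every merge. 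For the minimum non-zero eigenvalue, the paper inherits the bound from the per-bucket Gaussians (sound here because the summand covariances are supported on coordinate blocks and positive definite on their spans), while your direct diagonal-dominance argument via Proposition~\ref{prop:gersh}, using $\hat\rho_k(m,0)\ge 1/k$ and the fact that every active column of $S$ has at least $t$ entries of size at least $c$, is a correct and arguably more explicit alternative.
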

\begin{proof}
  This is a combination of Lemmas \ref{lem:densebin} and \ref{lem:sparsebin}, with the results merged using Lemma \ref{lem:merge}.

  As described above, we will group the truncated CRVs into several buckets.
  We first apply Lemma \ref{lem:densebin} to each of the non-empty buckets $B^l$ for $l > 0$.
  This will give us a sum of many discretized Gaussians.
  If applicable, we apply Lemma \ref{lem:sparsebin} to $B^0$ to obtain another discretized Gaussian and a set $\bar S$ of $\leq kt$ truncated CRVs.
  By applying Lemma \ref{lem:merge}, we can ``merge'' the sum of many discretized Gaussians into a single discretized Gaussian.
  By triangle inequality, the error occured in the theorem statement is the sum of all of these approximations.
  
  We start by analyzing the cost of applying Lemma \ref{lem:densebin}.
  Recall $\g = 6 + \d_\g$ for some constant $\d_\g > 0$.
  Let the set of $N$ non-empty buckets be $\mathcal{X}$.
  Then the sum of the errors incurred by all $N$ applications of Lemma \ref{lem:densebin} is at most
  \begin{align*}
    \sum_{l \in \mathcal{X}} O\left(\frac{k^{3/2}\log^{2/3}(2^k(l+1)^{(6 + \d_\g)} t)}{l^{(6 + \d_\g)/6}t^{1/6}c^{1/6}}\right)
    &\leq \sum_{l=1}^\infty O\left(\frac{k^{3/2}\log^{2/3}(2^k(l+1)^{(6 + \d_\g)} t)}{l^{(6 + \d_\g)/6}t^{1/6}c^{1/6}}\right) \\
    &\leq \sum_{l=1}^\infty O\left(\frac{k^{13/6}\log^{2/3}l\log^{2/3}t}{l^{(6 + \d_\g)/6}t^{1/6}c^{1/6}}\right) \\
    &\leq \frac{k^{13/6}\log^{2/3}t}{c^{1/6}t^{1/6}} \sum_{l=1}^\infty O\left(\frac{\log^{2/3}l}{l^{(6 + \d_\g)/6}}\right) \\
    &\leq \frac{k^{13/6}\log^{2/3}t}{c^{1/6}t^{1/6}} \sum_{l=1}^\infty O\left(\frac{1}{l^{(6 + \d')/6}}\right) \\
    &\leq O\left(\frac{k^{13/6}\log^{2/3}t}{c^{1/6}t^{1/6}}\right)
  \end{align*}
  for any constant $0 < \d' < \d_\g$.
  The final inequality is because the series $\sum_{n=1}^\infty n^{-c}$ converges for any $c > 1$.

  The cost of applying Lemma \ref{lem:sparsebin} is analyzed similarly,
  $$\frac{8.646k^{3/2}\log^{2/3}(2^k t)}{t^{1/6}c^{1/6}} \leq O\left(\frac{k^{13/6}\log^{2/3}t}{c^{1/6}t^{1/6}}\right)$$

  Finally, we analyze the cost of merging the $N+1$ Gaussians into one.
  We will analyze this by considering the following process:
  we maintain a discretized Gaussian, which we will name the candidate.
  The candidate is initialized to be the Gaussian generated from the highest numbered non-empty bucket.
  At every time step, we update the candidate to be the result of merging itself with the Gaussian from the highest numbered non-empty bucket which has not yet been merged.
  We continue until the Gaussian from every non-empty bucket has been merged with the candidate.

  By Lemma \ref{lem:merge}, the cost of merging two Gaussians is at most $O\left(\frac{k}{\s}\right)$, where $\s^2$ is the minimum variance of either Gaussian in any direction where either has a non-zero variance.
  From Lemma \ref{lem:densebin}, the variance of the Gaussian from $B^l$ is at least $l^\g t\frac{c}{k}$ in every direction of non-zero variance.
  Since we are considering the buckets in decreasing order and merging two Gaussians only increases the variance, when merging the candidate with bucket $l$, the maximum cost we can incur is $\left(\frac{k^{3/2}}{l^{\g/2}c^{1/2}t^{1/2}}\right)$.
  Summing over all buckets in $\mathcal{X}$,
  \begin{align*}
    \sum_{l \in \mathcal{X}} O\left(\frac{k^{3/2}}{l^{\g/2}c^{1/2}t^{1/2}}\right)
    &\leq \frac{k^{3/2}}{c^{1/2}t^{1/2}} \sum_{l=1}^\infty O\left(\frac{1}{l^{(6 + \d_\g)/2}}\right) \\
    &\leq O\left(\frac{k^{3/2}}{c^{1/2}t^{1/2}}\right)
  \end{align*}
  where the second inequality is because the series $\sum_{n=1}^\infty n^{-c}$ converges for any $c > 1$.
  We note that, from Lemma \ref{lem:sparsebin}, the variance of the Gaussian obtained from $B^0$ is at least $\frac{tc}{k}$ in any non-zero direction.
  Therefore, merging this Gaussian with the rest does not affect our bound asymptotically.
  Since the minimum non-zero variance of any Gaussian we merged was at least $\frac{tc}{k}$, the same holds for the resulting merged Gaussian and its minimum non-zero eigenvalue.

  By adding the error terms obtained from each of the approximations, we obtain the claimed bound on total variation distance.
\end{proof}

\subsection{Merging $k$ Gaussians into one}
\label{sec:merging}

In order to merge the $k$ discretized Gaussians into one, we perform a series of ``swap-and-merge'' operations, in which we swap the pivots of two discretized Gaussians to be the same, and then merge the resulting distributions into one.
We repeat this process until all Gaussians which overlap in some dimension are merged together.
The following lemma bounds the cost of swapping a pivot.

\swaptv*

\begin{proof}
Without loss of generality, assume $(i,j) = (1,2)$ and $\sigma_{-2}^2 \leq \sigma_{-1}^2$. 
Sampling from $Y_1$ can be described by the following process:
Draw a sample $x_{-1,2}^{(1)} \sim \mathcal{N}(\mu_{-1,2},\Sigma_{-1,2})$, which is the Gaussian obtained from $\mathcal{N}(\mu,\Sigma)$ by projecting on to all dimensions except $1$ and $2$.
Now, condition on $x_{-1,2}^{(1)}$, sample the 2nd coordinate $x_2^{(1)}$ as the one dimensional projection onto $e_2$ of $\mathcal{N}(\mu,\Sigma)$ conditioned on $x_{-1,2}^{(1)}$, discretize all these values (i.e., round them to the nearest integer), and then set the 1st coordinate to be $\lfloor x_1^{(1)}\rceil = n - \sum_{\ell \geq 3} \lfloor x_\ell^{(1)} \rceil - \lfloor x_2^{(1)} \rceil$. 

Similarly, to draw a sample from $Y_2$, we first sample $x_{-1,2}^{(2)} \sim \mathcal{N}(\m_{-1,2},\S_{-1,2})$. 
We then condition on $x_{-1,2}^{(2)}$, sample the 1st coordinate $x_1^{(1)}$ as the one dimensional projection onto $e_1$ of $\mathcal{N}(\mu,\Sigma)$ conditioned on $x_{-1,2}^{(2)}$, discretize all these values, and then set the 2nd coordinate to be $\lfloor x_2^{(2)} \rceil  = n - \sum_{\ell \geq 3} \lfloor x_\ell^{(2)} \rceil - \lfloor x_1^{(2)} \rceil$. 

We couple the two sampling processes by letting $x_{-1,2}^{(1)} = x_{-1,2}^{(2)} := x_{-1,2}$.
With this in mind, we note that $x_{1}^{(1)} + x_{2}^{(1)} = x_{1}^{(2)} + x_{2}^{(2)} = n - \sum_{\ell \geq 3} x_\ell := \hat n$, where $x_1^{(1)}$ and $x_2^{(2)}$ are the ``unrounded'' versions of these coordinates.
$x_1^{(1)}$ is distributed independently and identically to $x_1^{(2)}$, and similarly for $x_2^{(2)}$ and $x_2^{(1)}$. 
We also define $n'$ to be $n - \sum_{\ell \geq 3} \lfloor x_\ell \rceil$. 
Ignoring the dimensions besides $1$ and $2$ (since they are coupled to be identical), the total variation distance between $Y_1$ and $Y_2$ is equal to the distance between $(n' - \lfloor x_2^{(1)} \rceil, \lfloor x_2^{(1)} \rceil)$ and $(\lfloor x_1^{(2)} \rceil, n' - \lfloor x_1^{(2)} \rceil)$.
By Lemma \ref{lem:DPI}, this is at most the total variation distance between $n' - \lfloor x_2^{(1)} \rceil$ and $\lfloor x_1^{(2)} \rceil$. 
Therefore, it suffices to upper bound the total variation distance between  $n' -  \lfloor x_2^{(1)} \rceil$ and $\lfloor \hat n - x_2^{(2)} \rceil$.
Since $n'$ is fixed, this is equal to the total variation distance between $\lfloor x_2^{(1)} \rceil$ and $\lfloor x_2^{(2)} + z \rceil$, where $z$ is some constant between $0$ and $k-1$. 
Again using Lemma \ref{lem:DPI}, this is upper bounded by the distance between $x_2^{(1)}$ and $x_2^{(2)} + z$.
Proposition \ref{prop:ddodtv} bounds this by $\frac{z}{2\s} \leq \frac{k}{2\s}$, as desired.
\end{proof}

As shown in Lemma \ref{lem:merge}, merging two Gaussians is cheap, assuming the minimum eigenvalues of their covariance matrices are sufficiently large.
The following lemma shows that this value stays large throughout the sequence of swap-and-merge operations.
\swapvar*

\begin{proof}
We need to prove that for all vectors $y \in \mathbb{R}^{S}$ such that $y_j = 0$ and $\|y\|_2 = 1$, $y^T \S y \ge \frac m {2 k^3}$. 

We have that $y^T \S y = \sum_i y^T \Sigma^{(i)} y \ge \max_i y^T \Sigma^{(i)} y$ since all matrices are positive semidefinite. We now consider a coordinate $j'$ of $y$ with maximum absolute value which has weight at least $\frac 1 {\sqrt k}$. Since the covariance matrix $\S$ is the result of summing matrices with common coordinates (by property 3 in the lemma statement), there is a sequence of coordinates starting from $j'$ and ending with $j$ that has length at most $k$, such that any two consecutive coordinates belong to at least one of the sets $S^{(i)}$. Since $|y_{j'}| \ge \frac 1 {\sqrt k}$ while $y_j =0$, it means that there exists a pair $(a,b)$ of consecutive coordinates in the path such that $|y_a - y_b| \ge \frac 1 {k \sqrt k}$.

Consider $\S^{(i)}$ such that $a,b \in S^{(i)}$. 
Let $j^* \in S^{(i)}$ be the coordinate such that $\Sigma^{(i)}_{S^{(i)} \setminus \{j^* \}}$ has minimum eigenvalue at least $\l$. We have that:
$$y^T \S^{(i)} y = y_{S^{(i)}}^T \S^{(i)}_{S^{(i)}} y_{S^{(i)}} =(y_{S^{(i)}} - y_{j^*} \vec 1_{S^{(i)}})^T \S^{(i)}_{S^{(i)}} (y_{S^{(i)}} - y_{j^*} \vec 1_{S^{(i)}}) \ge \l \|y_{S^{(i)}} - y_{j^*} \vec 1_{S^{(i)}} \|_2^2$$
where the second equality follows by property 1 in the lemma statement and the last inequality follows since $\Sigma^{(i)}_{S^{(i)} \setminus \{j^* \}}$ has minimum eigenvalue at least $\l$. Moreover since $|y_a - y_b| \ge \frac 1 {k \sqrt k}$, we have that $\|y_{S^{(i)}} - y_{j^*} \vec 1_{S^{(i)}} \|_2^2 \ge (y_a - y_{j^*})^2 + (y_b - y_{j^*})^2 \ge \frac 1 {2 k^3}$ which completes the proof of the lemma.
\end{proof}

Finally, with these two lemmas in hand, we can conclude with the proof of Theorem \ref{thm:struct}.

\begin{prevproof}{Theorem}{thm:struct}
  First, we justify the structure of the approximation, and then show that it can be $\ve$-close with our choice of the parameters $c$ and $t$.
  We start by applying Lemma \ref{lem:round} to obtain a PMD $M^{\hat \p}$ such that $\hat \p(i,j) \not \in (0,c)$ for all $i,j$.
  Partition the component CRVs into $k$ sets $S_1, \dots, S_k$, where the $i$th CRV is placed in the $l$th set if $l = \arg\max_j \hat \p(i,j)$ (with ties broken lexicographically).
  Since index $l$ is the heaviest, every CRV $i$ in $S_l$ has $\hat \r(i,l) \geq \frac1k$.
  We convert the PMD induced by each $S_l$ to a GMD by dropping the $l$th column.
  Applying Lemma \ref{lem:singlestruct} to each set and summing the results from all sets gives us a sum of $k$ Gaussians with a structure preserving rounding and a $(tk^2,k)$-Poisson multinomial random vector.
  Now, we iteratively merge the $k$ Gaussians: while there exists a pair of Gaussians who overlap in some dimension $\ell$ (i.e., there exists a dimension $\ell$ such that both Gaussians are not deterministically $0$), we merge them.
  To do this, we adjust the structure preserving rounding of both of the Gaussians to have pivot position $\ell$ (justified by Lemma \ref{lem:swaptv}), and then combine them by replacing their sum with a single Gaussian with a structure preserving rounding (using Lemmas \ref{lem:swapvar} and \ref{lem:merge}).
  This new Gaussian will have the same pivot $\ell$, and a mean vector and a covariance matrix equal to the sum of the two components.
  We repeat until we are left with a set of Gaussians which do not overlap, and then combine them into a single Gaussian with a structure preserving rounding, where each of the (disjoint) Gaussians corresponds to a different block.
  We note that Lemmas \ref{lem:singlestruct} and \ref{lem:swapvar} justify the minimum eigenvalue of each block of the covariance.

  Now, we show that our choices of $c$ and $t$ make the resulting distribution be $\ve$-close to the original.
  Applying Lemma \ref{lem:round} introduces a cost of $O\left(c^{1/2}k^{5/2}\log^{1/2}\left(\frac1{ck}\right)\right)$ in our approximation.
  We apply Lemma \ref{lem:singlestruct} $k$ times (once to each set $S_l$), so the total cost introduced here is 
  $O\left(\frac{k^{19/6}\log^{2/3}t}{c^{1/6}t^{1/6}} + \frac{k^{5/2}}{c^{1/2}t^{1/2}}\right)$.
  Lemma \ref{lem:swaptv} shows that each pivot swap costs $\frac{k}{2\s}$ in total variation distance.
  Lemma \ref{lem:swapvar} combined with \ref{lem:singlestruct} imply that $\s^2 \geq \frac{ct}{2k^4}$, and there are at most $2k$ pivot swaps, so this sequence of swaps costs at most $\frac{2k^4}{\sqrt{ct}}$.
  Similarly, by Lemma \ref{lem:merge}, each our (at most) $k$ merges costs $\frac{k}{2\s} \leq \frac{k^4}{\sqrt{ct}}$.
  Therefore, the total variation distance introduced in this entire sequence of operations is
  $$O\left(c^{1/2}k^{5/2}\log^{1/2}\left(\frac1{ck}\right) + \frac{k^{19/6}\log^{2/3}t}{c^{1/6}t^{1/6}} + \frac{k^{5/2}}{c^{1/2}t^{1/2}} + \frac{k^4}{\sqrt{ct}}\right).$$
  Recalling our choice of parameters, $c = \left(\frac{\ve^2}{k^5}\right)^{1+\d_c}, t = \left(\frac{k^{19}}{c\ve^6}\right)^{1 + \d_t}$ for $\d_c, \d_t > 0$, this results in a total variation distance which is $O(\ve)$.
\end{prevproof}



\section{Details from Section \ref{sec:coverpmd}}
\label{sec:appendixcover}

\subsection{A Direct Cover}
\label{sec:naivecover}
In this section, we present a direct cover of the class, following from the structural result of Theorem \ref{thm:struct}.
At a high level, we grid over the $O(k^2)$ parameters of the Gaussian component with granularity $\poly(\ve/k)/n$, and the $\poly(k/\ve)$ parameters of the $(tk^2,k)$-PMD with granularity $\poly(\ve/k)$, resulting in a cover of the claimed size.

\begin{prevproof}{Lemma}{lem:nonsparsecover}
  Our strategy will be as follows:
  Theorem \ref{thm:struct} implies that the original distribution is $O(\ve)$ close to a particular class of distributions.
  We generate an $O(\ve)$-cover for this generated class.
  By triangle inequality, this is an $O(\ve)$-cover for $(n,k)$-PMDs.
  In order to generate a cover, we will use a technique known as ``gridding''.
  We will generate a set of values for each parameter, and take the Cartesian product of these sets.
  Our guarantee is that the resulting set will contain at least one set of parameters defining a distribution which is $O(\ve)$-close to the PMD.

  First, observe that we can naively grid over the set of $(tk^2, k)$-PMDs.
  We note that if two CRVs have parameters which are within $\pm \frac{\ve}{k}$ of each other, then their total variation distance is at most $\ve$.
  Similarly, by triangle inequality, two PMDs of size $k^2t$ and dimension $k$ with parameters within $\pm \frac{\ve}{k^3t}$ of each other have a total variation distance at most $\ve$.
  By taking an additive grid of granularity $\frac{\ve}{k^3t}$ over all $k^2t$ parameters, we can generate an $O(\ve)$-cover for PMDs of size $k^2t$ and dimension $k$ with $O\left(\frac{k^3t}{\ve}\right)^{k^2t}$ candidates.

  Next, we wish to cover the Gaussian component.
  For a block, we will use $\m_i$ and $\S_i$ to refer to the mean and covariance, $n_i$ to the sum of the means within the block, and $S_i$ to refer to the set of coordinates.
  It will actually be more convenient to think of $\S_i$ in terms of a Cholesky decomposition $L_iL_i^T$\footnote{Recall that the Cholesky decomposition implies that $L_i$ will be lower triangular.}, which is guaranteed to exist since $\S_i$ is symmetric and positive semidefinite.
  We describe how to generate a $O\left(\frac{\ve}{k}\right)$-cover for a single block.
  We will prove that the underlying (continuous) Gaussians are $O\left(\frac{\ve}{k}\right)$ close, the closeness of the corresponding discretized versions follows by Lemma \ref{lem:DPI}.
  By taking the Cartesian product of the cover for each of the blocks and applying the triangle inequality, we generate a $O(\ve)$-cover for the overall Gaussian at the cost of a factor of $k$ in the exponent of the cover size.

  First, we examine the size parameter $n_i$.
  Since the size parameter is an integer between $0$ and $n$, we can simply try them all, giving us a factor of $n$ in the size of our cover.

  Covering the mean and covariance matrix takes a bit more care.
  We use Proposition \ref{prop:gaussapprox} to analyze the error incurred by inaccurate guesses for these parameters.
  We let $\mathcal{N}_1$ be the Gaussian corresponding to a single block of our Gaussian, and we will construct a $\mathcal{N}_2$ which is close to it.
  By Theorem $\ref{thm:struct}$, we know that $\s^2 \geq \frac{tc}{2k^4}$.

  We examine the first term of the bound in Proposition \ref{prop:gaussapprox}.
  If $\| \m_1 - \m_2\|_2 \leq \frac{\ve c \sqrt{t}}{k} \coloneqq \b$, then this term is $O\left(\frac{\ve}{k}\right)$.
  Note that $\m_1 \in [0,n]^{|S_i|-1}$, since each coordinate is the sum of at most $n$ parameters which are at most $1$.
  We can create a $\b$-cover of size $O\left(\frac{n\sqrt{k}}{\b}\right)^{O(k)}$ for this space, with respect to the $\ell_2$ distance.
  To see this, consider covering the space with $(|S_i|-1)$-cubes of side length $\frac{\b}{\sqrt{|S_i|-1}}$.
  Any two points within the same cube are at $\ell_2$ distance at most $\b$.
  Taking a single vertex from the $(|S_i|-1)$-cube, our cover is defined by taking the corresponding vertex from all the cubes.
  The volume of $[0,n]^{|S_i|-1}$ is $n^{|S_i|-1}$, and the volume of each cube is $\left(\frac{\b}{\sqrt{|S_i|-1}}\right)^{|S_i|-1}$, so the total number of points needed is $\left(\frac{n\sqrt{|S_i| -1}}{\b}\right)^{|S_i|-1}$.
  Substituting in the value of $\b$ shows that a set of size $O\left(\frac{nk\sqrt{|S_i| - 1}}{\ve c \sqrt{t}}\right)^{|S_i| -1} = O\left(\frac{nk^{3/2}}{\ve c \sqrt{t}}\right)^{|S_i|-1}$ suffices to cover the mean of the Gaussian to a sufficient accuracy.

  Next, we examine the second term in Proposition \ref{prop:gaussapprox}.
  Taking $\a \leq \frac{\ve ct}{2k^6}$ sets this term to be $O\left(\frac{\ve}{k}\right)$.
  However, we can not naively grid over the matrices, since the covariance matrix is required to be PSD. 
  Therefore, we instead grid over entries of a Cholesky decomposition.
  Observe that the diagonal entries of the true covariance matrix are equal to the $\ell_2$ norms of the rows of the true Cholesky decomposition.
  Since the maximum entry in the true covariance matrix is at most $n$, this implies that the magnitude of the maximum entry of the true Cholesky decomposition is at most $\sqrt{n}$.
  If we grid over the entries of the Cholesky decomposition with granularity $\gamma$, there will exist a candidate where all entries are within $\pm \gamma$ of the true entries.
  Using the bound of $\sqrt{n}$ on the maximum element and $|S_i| - 1$ as the dimension of the matrix, this will imply that the entries of the resulting covariance matrix are within $\pm (2\gamma\sqrt{n} + \gamma^2)(|S_i| - 1) \leq 4\gamma|S_i|\sqrt{n}$ of the true entries. 
  Since we want this value to be upper bounded by $\alpha$, it gives that $\gamma \leq \frac{\ve c t}{8k^6 |S_i|\sqrt{n}}$.
  Combining this gridding granularity with the fact that there are at most $\frac{|S_i|^2 - |S_i|}{2}$ non-zero entries in the Cholesky decomposition, which are in the range $[-\sqrt{n}, \sqrt{n}]$,
  this implies a cover of size at most $O\left(\frac{nk^7}{\ve c t}\right)^{\frac12 |S_i|^2 - \frac12 |S_i|}$.

  Combining the gridding for the size, mean, and covariance, a $O\left(\frac{\ve}{k}\right)$-cover for one block is of size
  $$n O\left(\frac{nk^{3/2}}{\ve c \sqrt{t}}\right)^{|S_i|-1} O\left(\frac{nk^7}{\ve c t}\right)^{\frac12 |S_i|^2 - \frac12 |S_i|} = n^{\frac12 |S_i|^2 + \frac12 |S_i|}\left(\frac{k}{\ve c t}\right)^{O(|S_i|^2)}.$$
  Taking the Cartesian product over all the blocks of the Gaussian and noting this function is convex in the values of $\{|S_i|\}$, we cover the entire Gaussian with a set of size at most
  $$n^{\frac12 k^2 + \frac12 k}\left(\frac{k}{\ve c t}\right)^{O(k^2)}.$$

  Combining the cover for the Gaussian component and the $(tk^2, k)$-PMD gives us a cover of size 
  $$n^{\frac12 k^2 + \frac12 k}\left(\frac{k}{\ve c t}\right)^{O(k^2)}\left(\frac{k^3t}{\ve}\right)^{k^2t}.$$
  Substituting in the values of $c$ and $t$ gives us a cover of size
  $$n^{\frac12 k^2 + \frac12 k}\left(\frac{k}{\ve}\right)^{O\left(\frac{k^{26 + \d_1}}{\ve^{8 + \d_2}}\right)}.$$
  for constants $\d_1, \d_2 > 0$, which satisfies the statement of the theorem.
\end{prevproof}

\subsection{A Sparser Cover}
\label{sec:sparsecover}
In the previous section, we chose a naive gridding for the sparse component, resulting in a cover size which is exponential in $\poly(k/\ve)$.
In this section, we present the cover described by Lemma~\ref{lem:sparsecover}, which is of size exponential in $k^{5k} \log^{k+2} (1/\ve)$.
We use a moment matching technique similar to Roos \cite{Roos02}.
In this work, Roos showed that any generalized multinomial distribution can be written as a weighted summation of derivatives of a simple multinomial distribution. 
To describe his theorem, we define $V_k(n) = \{v \in \mathbb{Z}^k: v_i \ge 0 \wedge \sum_i v_i \le n\}$.

\begin{lemma}[Theorem 1 in \cite{Roos02}]
\label{lem:approximatorcoefficients}
For an arbitrary vector $\vec q$ with $|\vec q| \le 1$, the density of the generalized multinomial distribution $M^\r$ at any point $x$ can be expressed as:
$$\sum_{u \in V_{k}(n)} a_{u}(\vec q) \Delta^u \mathcal{M}(n-|u|,\vec q, x)$$
where $\mathcal{M}(n, \vec q, x)$ represents the density of the multinomial distribution with probabilities $\vec q$ at point $x$ and $a_{u}(\vec q)$ is the coefficient of the term $\prod_j z_j^{u_j}$ in the expansion of the polynomial:
$$\prod_{i=1}^n \left( 1 + \sum_{j=1}^k (\r(i,j)-q_j) z_j \right)$$
\end{lemma}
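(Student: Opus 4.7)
The plan is to establish the identity at the level of probability generating functions (PGFs) and then translate it to densities via the standard correspondence between finite differences on a discrete distribution and multiplication by $(z_j - 1)$ on its PGF. First, I would write the PGF of $M^{\r}$ as a product over its constituent truncated CRVs,
$$P_{M^{\r}}(z) \;=\; \prod_{i=1}^{n}\Bigl(1 + \sum_{j=1}^{k}\r(i,j)(z_j-1)\Bigr),$$
and note that the PGF of $\mathcal{M}(m,\vec q,\cdot)$ is $\bigl(1+\sum_j q_j(z_j-1)\bigr)^m$. The crucial algebraic step is the trivial decomposition
$$1+\sum_j \r(i,j)(z_j-1) \;=\; \Bigl(1+\sum_j q_j(z_j-1)\Bigr) + \sum_j(\r(i,j)-q_j)(z_j-1),$$
which expresses each factor as ``multinomial-plus-perturbation.''

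Next, I would expand $\prod_{i=1}^n$ by choosing, for each index $i$, either the first summand or the second, and group the resulting $2^n$ terms by the subset $T\subseteq[n]$ on which the perturbation was chosen. Expanding each $\prod_{i\in T}\sum_j(\r(i,j)-q_j)(z_j-1)$ as a polynomial in the indeterminates $(z_j-1)$ and collecting over all $T$ of fixed size $|u|$, the coefficient of $\prod_j(z_j-1)^{u_j}$ becomes a sum, over subsets $T$ of size $|u|$ and maps $\phi:T\to[k]$ with $|\phi^{-1}(j)|=u_j$, of $\prod_{i\in T}(\r(i,\phi(i))-q_{\phi(i)})$. A direct combinatorial check shows that this is exactly the coefficient of $\prod_j z_j^{u_j}$ in $\prod_i(1+\sum_j(\r(i,j)-q_j)z_j)$, i.e.\ $a_u(\vec q)$, since the two polynomials differ only by the relabeling of indeterminates $z_j\leftrightarrow z_j-1$. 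This yields the PGF identity
$$P_{M^{\r}}(z) \;=\; \sum_{u\in V_k(n)} a_u(\vec q)\,\prod_j(z_j-1)^{u_j}\,\Bigl(1+\sum_j q_j(z_j-1)\Bigr)^{n-|u|}.$$

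Finally, I would read off the density. On PGFs, multiplication by $z_j$ corresponds to a unit backward shift in the $j$-th coordinate of the density, so multiplication by $(z_j-1)$ corresponds (up to a global sign that fixes the convention) to the first-order finite-difference operator $\Delta_j$ in coordinate $j$; thus $\prod_j(z_j-1)^{u_j}$ times the PGF of $\mathcal{M}(n-|u|,\vec q,\cdot)$ is the PGF of $\Delta^u\mathcal{M}(n-|u|,\vec q,\cdot)$. Extracting the coefficient of $z^x$ on both sides then gives the claimed density identity.

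The main obstacle is careful bookkeeping rather than any deep analytic step: one must track the precise sign and indexing convention of the $\Delta^u$ operator so that it matches the one implicit in Roos's statement, and also verify that the combinatorial coefficients on the perturbative side do indeed reassemble into $a_u(\vec q)$ without double-counting. Once those conventions are fixed, the entire argument reduces to a ``choose-one-of-two'' expansion of a product followed by a routine change of indeterminate in a polynomial equality.
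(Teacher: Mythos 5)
The paper does not prove this lemma: it is quoted verbatim as Theorem~1 of \cite{Roos02} and used as a black box, so there is no internal proof to compare against. Judged on its own merits, your probability-generating-function reconstruction of Roos's argument is sound and complete in outline. The key observations you make are all correct: the PGF of a truncated CRV is $1+\sum_j \r(i,j)(z_j-1)$; the PGF of $\mathcal{M}(m,\vec q,\cdot)$ is $\bigl(1+\sum_j q_j(z_j-1)\bigr)^m$; the ``choose one of two'' expansion over subsets $T\subseteq[n]$ of the product of perturbed factors produces, after collecting monomials in the shifted indeterminates $(z_j-1)$, exactly the coefficients $a_u(\vec q)$ (since the two polynomials $\prod_i\bigl(1+\sum_j(\r(i,j)-q_j)z_j\bigr)$ and $\prod_i\bigl(1+\sum_j(\r(i,j)-q_j)(z_j-1)\bigr)$ have the same coefficient sequence under $z_j\leftrightarrow z_j-1$); and multiplication of a PGF by $(z_j-1)$ corresponds to a (backward) first difference in coordinate $j$ of the mass function, so $\prod_j(z_j-1)^{u_j}$ acting on the multinomial PGF yields the PGF of $\Delta^u\mathcal{M}(n-|u|,\vec q,\cdot)$. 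You correctly isolate the one point that requires care---matching the precise sign/direction convention of $\Delta^u$ against Roos's---and this is indeed bookkeeping rather than a gap. This is essentially Roos's own ``Krawtchouk/finite-difference'' expansion rederived at the PGF level, which is a natural and arguably cleaner route than working directly with the densities.
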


Roos also showed that considering fewer terms in the summation above provides a good approximation to the density of the original generalized multinomial distribution. We consider the approximator: 
$$m_{w,\vec q} (x) = \sum_{u \in V_{k}(w)} a_{u}(\vec q) \Delta^u \mathcal{M}(n-|u|,\vec q, x)$$

\begin{lemma}[Theorem 2 in \cite{Roos02}]
\label{lem:approximator}
$$||M^\r - m_{w,\vec q}||_1 \le \frac{ \alpha^{w+1} } { 1 - \alpha } \,\, \textrm{ for } \alpha < 1$$
where $$\alpha =  e^{1/2} \sum_{j=1}^k \sqrt{\frac{2 \sum_{i=1}^n (\r(i,j)-q_j)^2 + \left(\sum_{i=1}^n (\r(i,j)-q_j) \right)^2}{2 n q_0 q_j}}$$
\end{lemma}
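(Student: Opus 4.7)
The plan is to start from the exact expansion in Lemma~\ref{lem:approximatorcoefficients}. Since $m_{w,\vec q}$ retains precisely the terms with $|u| \le w$, subtracting gives
$$M^\r(x) - m_{w,\vec q}(x) = \sum_{u \in V_k(n),\, |u| > w} a_u(\vec q)\, \Delta^u \mathcal{M}(n-|u|, \vec q, x).$$
Taking $L_1$ norms and applying the triangle inequality reduces the problem to bounding, for each multi-index $u$ with $|u|>w$, the product $|a_u(\vec q)|\cdot \|\Delta^u \mathcal{M}(n-|u|,\vec q,\cdot)\|_1$ and then summing. Since the target bound $\alpha^{w+1}/(1-\alpha)$ is a geometric tail in $m=|u|$, the two factor bounds must be arranged so that their combination depends on $u$ only through $|u|$ (up to a multinomial coefficient that will reassemble $\alpha^{|u|}$ via the multinomial theorem).

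The first factor bound I would establish is on $\|\Delta^u \mathcal{M}(n-|u|, \vec q, \cdot)\|_1$. Because $\mathcal{M}(m, \vec q, \cdot)$ is the $m$-fold convolution of identical truncated CRVs with probabilities $\vec q$, one can compute the probability generating function of $\Delta^u \mathcal{M}$ in closed form and apply a Parseval-type inequality, or equivalently iterate the ratio identity relating $\mathcal{M}(m,\vec q,x+e_j)$ to $\mathcal{M}(m,\vec q,x)$. Either route yields an estimate of the form $\|\Delta^u \mathcal{M}\|_1 \le C(u) \prod_j (n q_0 q_j)^{-u_j/2}$ with $C(u)$ a purely combinatorial prefactor involving $|u|!$ and $\prod_j u_j!$. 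The factor $(n q_0 q_j)^{-u_j/2}$ has the natural interpretation that a single difference in direction $j$ costs roughly one standard deviation of the $j$-th marginal, and it is the sharp form of this bound that is essential downstream.

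The second factor bound is on $|a_u(\vec q)|$, extracted from the generating function $\prod_i (1 + \sum_j (\r(i,j) - q_j) z_j)$. The coefficient $a_u(\vec q)$ is a sum over distinct $|u|$-tuples of row indices, grouped by the assignment of each index to a coordinate, of products of terms $\r(i,j) - q_j$. Applying Cauchy--Schwarz row-wise and coordinate-wise bounds each such sum in terms of $\sum_i (\r(i,j)-q_j)^2$ together with a cross-term $(\sum_i(\r(i,j)-q_j))^2$, which is precisely the combination appearing inside the per-coordinate $\alpha_j$. Multiplying this bound by the earlier one, the combinatorial prefactors must cancel (up to the Stirling-type loss $e^{1/2}$) so that the product $|a_u(\vec q)|\cdot \|\Delta^u\mathcal{M}\|_1$ factors as $\binom{|u|}{u}\prod_j \alpha_j^{u_j}$ with $\alpha_j = e^{1/2}\sqrt{(2\sum_i(\r(i,j)-q_j)^2 + (\sum_i(\r(i,j)-q_j))^2)/(2 n q_0 q_j)}$.

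Summing via the multinomial theorem yields $\sum_{|u|=m} |a_u(\vec q)|\,\|\Delta^u \mathcal{M}\|_1 \le (\sum_j \alpha_j)^m = \alpha^m$, and summing over $m > w$ gives the geometric tail $\alpha^{w+1}/(1-\alpha)$, valid precisely when $\alpha<1$. The main obstacle is calibrating the combinatorial prefactors so that the multinomial coefficient appears exactly: the characteristic-function argument for $\|\Delta^u\mathcal{M}\|_1$ and the Cauchy--Schwarz argument for $|a_u|$ each introduce $(|u|!)^{\pm 1/2}$-type slack, and these must be made to cancel simultaneously. Any residual $|u|$-dependent factor would degrade the geometric tail into a weaker Poisson-type tail $\sum_{m>w}\alpha^m/m!$, losing the clean $1/(1-\alpha)$ form; avoiding this is the delicate part of the argument and is the source of the $e^{1/2}$ constant in $\alpha$.
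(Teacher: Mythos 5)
First, a point of reference: the paper offers no proof of this lemma at all --- it is imported verbatim as Theorem~2 of Roos \cite{Roos02} --- so the only meaningful comparison is against Roos's own argument. Your outline does reproduce the architecture of that argument: truncate the exact Krawtchouk-type expansion of Lemma~\ref{lem:approximatorcoefficients}, bound each residual term by $|a_u(\vec q)|\cdot\|\Delta^u\mathcal{M}(n-|u|,\vec q,\cdot)\|_1$, show this factors through $|u|$ so that the multinomial theorem reassembles $\alpha^{|u|}$, and sum the geometric tail for $\alpha<1$.

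However, as a proof the proposal has a genuine gap: the two quantitative estimates that constitute the substance of the theorem are only posited ``of the form'' rather than derived, and the mechanisms you name would not, as stated, deliver them. For the difference norms, Roos obtains $\|\Delta^u\mathcal{M}(m,\vec q,\cdot)\|_1\lesssim \prod_j \sqrt{u_j!}\,(m q_0 q_j)^{-u_j/2}$ from the orthogonality of Krawtchouk polynomials together with Cauchy--Schwarz; an appeal to ``a Parseval-type inequality'' leaves exactly the factorial prefactor $C(u)$ --- on which everything hinges --- unspecified, and you also silently replace the size parameter $n-|u|$ by $n$ in these variance factors. For the coefficients, a row-wise/coordinate-wise Cauchy--Schwarz on the sum over distinct index tuples naturally yields only the quadratic power sums $\sum_i(\r(i,j)-q_j)^2$ with a $\binom{n}{|u|}^{1/2}$-type prefactor; it does not by itself produce the specific combination $2\sum_i(\r(i,j)-q_j)^2+\bigl(\sum_i(\r(i,j)-q_j)\bigr)^2$ appearing in $\alpha$ (which matters because the lemma holds for arbitrary $\vec q$, not only the mean choice where the linear term vanishes). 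Roos instead invokes a dedicated inequality bounding elementary (multi)symmetric polynomials by the first two power sums, and the constant $e^{1/2}$ then emerges from a Stirling trade-off between the resulting inverse-factorial factors and the $\sqrt{u_j!}$ from the difference-norm bound. You correctly flag this calibration as ``the delicate part,'' but that calibration \emph{is} the theorem: without carrying it out, the clean bound $\alpha^{w+1}/(1-\alpha)$ with the stated $\alpha$ is not established.
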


We will use these results to produce a sparser cover. We will first show that for a particular class of generalized multinomial distributions, there exist good approximators.

\begin{lemma}
\label{lem:approximatorbound}
Consider a generalized multinomial distribution $M^{\r}$. If for all $j \in [k]$, it holds that  $|\max_i \r(i,j) - \min_i \r(i,j)| \le ( 4 e k^3 )^{-1}$ and moreover $\sum_{i=1}^n \r(i,0) \ge \frac n k$, then:
$$||M^{\r}- m_{w,\vec q}||_1 \le 2^{-w}$$
for the vector $\vec q$ with $q_j = \frac 1 n \sum_{i=1}^n \r(i,j)$
\end{lemma}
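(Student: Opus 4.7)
The plan is to apply Roos's approximation bound (Lemma~\ref{lem:approximator}) with the natural choice $q_j = \frac{1}{n}\sum_{i=1}^n \r(i,j)$ for $j \in [k]$ and $q_0 = 1 - \sum_{j=1}^k q_j$, and show that the resulting $\alpha$ is at most $1/2$, which then directly gives $\frac{\alpha^{w+1}}{1-\alpha} \le 2^{-w}$. First I would verify admissibility: since rows of $\r$ sum to at most $1$, $\sum_{j=1}^k q_j \le 1$, and $q_0 = \frac{1}{n}\sum_i \r(i,0) \ge 1/k$ by hypothesis. With $q_j$ equal to the mean of $\{\r(i,j)\}_i$, the linear term $\sum_i (\r(i,j)-q_j)$ in the definition of $\alpha$ vanishes, simplifying
\[
\alpha \;=\; e^{1/2} \sum_{j=1}^k \sqrt{\frac{\sum_{i}(\r(i,j)-q_j)^2}{n\, q_0\, q_j}}.
\]

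The key estimate is to bound the numerator using both the spread hypothesis and the fact that $q_j$ is the mean. Let $R_j = \max_i \r(i,j) - \min_i \r(i,j) \le (4ek^3)^{-1}$. Using the identity $\sum_i (\r(i,j)-q_j)^2 = \sum_i \r(i,j)^2 - nq_j^2$ and the bound $\r(i,j) \le \max_i \r(i,j) \le q_j + R_j$, I get
\[
\sum_i \r(i,j)^2 \;\le\; (q_j + R_j)\sum_i \r(i,j) \;=\; nq_j(q_j+R_j),
\]
so $\sum_i (\r(i,j)-q_j)^2 \le n q_j R_j$. Crucially, this bound has a factor of $q_j$ that cancels the $q_j$ in the denominator of $\alpha$, which is what allows us to avoid needing any lower bound on individual $q_j$'s.

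Substituting and using $q_0 \ge 1/k$ and $R_j \le (4ek^3)^{-1}$,
\[
\alpha \;\le\; e^{1/2}\sum_{j=1}^k \sqrt{\frac{R_j}{q_0}} \;\le\; e^{1/2}\cdot k \cdot \sqrt{\frac{k}{4ek^3}} \;=\; \frac{e^{1/2}}{2\sqrt{e}} \;=\; \frac{1}{2}.
\]
Plugging $\alpha \le 1/2$ into Lemma~\ref{lem:approximator} then yields $\|M^\r - m_{w,\vec q}\|_1 \le \frac{(1/2)^{w+1}}{1/2} = 2^{-w}$, as claimed.

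There is no serious technical obstacle here; the whole point is to identify the right variance bound $\sum_i (\r(i,j)-q_j)^2 \le nq_jR_j$ (rather than the weaker $nR_j^2$ or Popoviciu's $nR_j^2/4$), since only the former has the $q_j$ factor needed to cancel the $q_j$ in the denominator of $\alpha$. The constants in the hypothesis $(4ek^3)^{-1}$ and $n/k$ are chosen precisely to make the final inequality pinch to $\alpha \le 1/2$.
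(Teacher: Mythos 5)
Your proof is correct and follows essentially the same route as the paper: choose $q_j$ to be the column averages so the linear term in Roos's bound vanishes, use $q_0 \ge 1/k$, and show $\alpha \le 1/2$, from which $\frac{\alpha^{w+1}}{1-\alpha} \le 2^{-w}$. The only difference is that you supply a derivation of the key inequality $\sum_i(\r(i,j)-q_j)^2 \le n q_j R_j$ (via $\r(i,j)^2 \le (q_j+R_j)\r(i,j)$ together with $q_j \ge \min_i \r(i,j)$), which the paper asserts without proof.
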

\begin{proof}
Since according to Lemma~\ref{lem:approximator} the $\ell_1$ distance of $M^{\r}$ to the approximator is at most $\frac{ \alpha^{w+1} } { 1 - \alpha }$, it suffices to show that $\alpha \le \frac 1 2$.

By our choice of $\vec q$ it holds that $\sum_{i=1}^n (\r(i,j)-q_j) = 0$. Therefore, we have that:
$$\alpha =  e^{1/2} \sum_{j=1}^k \sqrt{\frac{\sum_{i=1}^n (\r(i,j)-q_j)^2}{n q_0 q_j}} \le e^{1/2} \sum_{j=1}^k \sqrt{k} \sqrt{\frac{\sum_{i=1}^n (\r(i,j)-q_j)^2}{n q_j}}$$
since $q_0 \ge \frac 1 k$. Moreover, we have that $\frac{\sum_{i=1}^n (\r(i,j)-q_j)^2}{n q_j} \le |\max_i \r(i,j) - \min_i \r(i,j)| \le  ( 4 e k^3 )^{-1}$. Plugging this bound in the above expression for $\alpha$ gives the desired bound.
\end{proof}

We now show that if two PMDs have matching moments then their approximators are the same. This will allow us to compare the total variation between them by looking at their distance to the common approximator.

\begin{lemma}
\label{lem:sameapproximators}
Consider two generalized multinomial distributions $M^{\r}$, $M^{\r'}$ and their approximators $m_{w,\vec q}$ and $m'_{w,\vec q}$. If for all $u \in V_k(w)$ and $j \in [k]$:
$$\sum_{i=1}^n \prod_{j=1}^k \r(i,j)^{u_j} = \sum_{i=1}^n \prod_{j=1}^k \r'(i,j)^{u_j}$$
then $m_{w,\vec q} = m'_{w,\vec q}$
\end{lemma}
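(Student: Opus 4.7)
The plan is to reduce the equality $m_{w,\vec q}=m'_{w,\vec q}$ to the equality of the scalar coefficients $a_u(\vec q)=a'_u(\vec q)$ for every $u \in V_k(w)$. This reduction is immediate from Lemma~\ref{lem:approximatorcoefficients}: the ``basis functions'' $\Delta^u \mathcal{M}(n-|u|,\vec q, x)$ depend only on $n$, $\vec q$, and $u$, not on the underlying parameter matrix, so the two approximators differ only through their coefficients.

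Setting $y_i := (\r(i,1)-q_1,\dots,\r(i,k)-q_k) \in \mathbb{R}^k$, one has by definition
\begin{equation*}
a_u(\vec q) \;=\; [z^u]\,\prod_{i=1}^n \bigl(1+\langle y_i,z\rangle\bigr),
\end{equation*}
which is the multivariate elementary symmetric polynomial $e_u(y_1,\dots,y_n)$ in the row vectors $y_i$. The technical heart of the proof is to express $e_u(y_1,\dots,y_n)$, for $|u|\le w$, as a polynomial (with coefficients depending only on $\vec q$) in the multivariate power sums $\pi_v(y) := \sum_{i=1}^n \prod_{j=1}^k (\r(i,j)-q_j)^{v_j}$ taken over $v \in V_k(w)$.

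To establish this multivariate Newton-type identity I will use the standard log/exp trick on formal power series. Writing
\begin{equation*}
L(z) := \log \prod_{i=1}^n \bigl(1+\langle y_i,z\rangle\bigr) = \sum_{i=1}^n \sum_{m\ge 1}\frac{(-1)^{m-1}}{m}\,\langle y_i,z\rangle^m,
\end{equation*}
and expanding $\langle y_i,z\rangle^m$ by the multinomial theorem, one sees that the coefficient of $z^v$ in $L(z)$ equals a universal rational constant (depending only on $v$) times $\pi_v(y)$. Since $\prod_i(1+\langle y_i,z\rangle)=e^{L(z)}$ as formal power series, the coefficient of $z^u$ in the product for $|u|\le w$ is a universal polynomial in the coefficients of $z^v$ in $L(z)$ with $|v|\le|u|\le w$, and therefore $a_u(\vec q)$ is a $\vec q$-polynomial in $\{\pi_v(y):|v|\le w\}$.

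Finally, binomially expanding $\prod_j (\r(i,j)-q_j)^{v_j}$ and summing over $i$ writes each $\pi_v(y)$ as a $\vec q$-polynomial in the raw moments $\sum_i \prod_j \r(i,j)^{v'_j}$ for $v'\le v$ (coordinatewise). The hypothesis of the lemma says precisely that these raw moments agree between $\r$ and $\r'$ on $V_k(w)$, so $\pi_v(y)=\pi_v(y')$ for all $v\in V_k(w)$, hence $a_u(\vec q)=a'_u(\vec q)$ for all $u\in V_k(w)$, and the two approximators coincide. I expect no substantive obstacle here; the only point demanding care is the degree bookkeeping needed to confirm that low-degree coefficients of $e^L$ really are determined by the low-degree coefficients of $L$.
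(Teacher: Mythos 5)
Your proof is correct and follows essentially the same route as the paper: both reduce the claim to equality of the coefficients $a_u(\vec q)$ for $|u|\le w$, observe that these are expressible in the power-sum multisymmetric polynomials $\pi_v(\bar\rho) = \sum_i\prod_j(\rho(i,j)-q_j)^{v_j}$ for $|v|\le w$, and convert between the centered power sums and the raw moments by binomial expansion. The one place where you add genuine content is that the paper invokes ``the theory of multisymmetric polynomials'' as a black box to assert that the degree-$|u|$ coefficient $a_u$ is a polynomial in the power sums of degree $\le w$, whereas you give a self-contained proof of this via the $\log/\exp$ formal-power-series identity, explicitly computing $[z^v]\,L(z)$ as a universal rational multiple of $\pi_v$ and noting that the low-degree coefficients of $e^{L}$ depend only on the low-degree coefficients of $L$. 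This makes the argument elementary and checkable without outside references, at the cost of a few lines; it is a reasonable trade-off, and I see no gap.
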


\begin{proof}
We first note that if the condition holds for all $u \in V_k(w)$, then it also holds that for all $u \in V_k(w)$: 
$$\sum_{i=1}^n \prod_{j=1}^k (\r(i,j) - q_j)^{u_j} = \sum_{i=1}^n \prod_{j=1}^k (\r'(i,j) - q_j)^{u_j}$$
This is because when expanding the product $\prod_{j=1}^k (\r(i,j) - q_j)^{u_j}$ and treating it as a polynomial in $q_j$, the coefficients in each term are a polynomial of degree at most $w$ in the $\r(i,j)$ and summing over all $i$ we get that the two sides are equal.

We now define $\bar \r(i,j) = \r(i,j) - q_j$ and note that according to Lemma~\ref{lem:approximatorcoefficients}, the coefficients of the approximator $m_{w,\vec q}$ are given by the expansion of the polynomial: $\prod_{i=1}^n \left( 1 + \sum_{j=1}^k \bar \r(i,j) z_j \right)$. We observe that for any given $u$, the coefficient $a_{u}(\vec q)$ of the term $\prod_j z_j^{u_j}$ is a degree $|u|$ polynomial in terms of $\bar \r(i,j)$ which, by the theory of multisymmetric polynomials, can be written entirely as a polynomial of the elementary multisymmetric polynomials, 
$\sum_{i=1}^n \prod_{j=1}^k \bar \r(i,j)^{v_i}$ for $v \in V_k(w)$. 
Since $M^{\r}$ and $M^{\r'}$ are equal in all those terms, it means that they have equal coefficients $a_{u}(\vec q)$ and thus their approximators are the same.
\end{proof}

Using those two lemmas, we can construct a cover for $(tk^2,k)$-PMD which has an exponentially better dependence on $1/\ve$.
We must cover at most $k^2 t$ CRVs, which we can assume each have probabilities that are multiples of $\frac {\ve} {k^3 t}$. 
By the previous section, this induces a cost of $O(\ve)$ in total variation distance.
To apply Lemma~\ref{lem:approximatorbound}, we will first partition the CRVs into $( 4 e k^3 )^k$ groups.
In particular, consider indexing the groups by $\vec v \in [4ek^3]^k$.
In group $v$, we include all CRVs with mean vector $p$ where $p_j \in \frac{1}{4ek^3}[v_j - 1, v_j]$ for all $j \in [k]$.
For the PMD induced by each group, we have the property $|\max_i \r(i,j) - \min_i \r(i,j)| \le ( 4 e k^3 )^{-1}$. 
We cover each such PMD separately by considering all possible different moment profiles that it can achieve. 
A moment profile for a PMD of size $n$ is a vector of $|V_k(w)|$ elements, where the entry of the profile indexed by $u \in V_k(w)$ is equal to $\sum_{i=1}^n \prod_{j=1}^k \r(i,j)^{u_j}$. 
By Lemma~\ref{lem:sameapproximators} if two PMDs have the same moment profiles they have the same approximator and thus by Lemma~\ref{lem:approximatorbound} and triangle inequality their total variation is at most $2^{-w+1}$.

We now count how many different moment profiles are possible to arise. For a given $u$, there are at most $k \frac {k^5 t^2} {\ve}$ different values when $|u| = 1$, $\left( k \frac {k^5 t^2} {\ve} \right)^2$ values for $|u| = 2$, and in general $\left( k \frac {k^5 t^2} {\ve} \right)^i$ values when $|u|=i$. Since there are at most $i^{k-1}$ vectors with $|u|=i$, there are at most 
$$\prod_{i=1}^w \left( \frac {k^6 t^2} {\ve} \right)^{ i^k } = \left( \frac {k} {\ve} \right)^{O( w^{k+1} )}$$
different moment profiles. By picking $w = k \log( \frac {4 e k^3} {\ve} )$, we get small enough error so that union bounding over all $( 4 e k^3 )^k$ different groups will still give an $\ve$ error. 
This means that by considering only a single PMD for each moment profile in each of the $( 4 e k^3 )^k$ groups, we can create an $\ve$-cover of size $\left( \frac {k} {\ve} \right)^{O( ( 4 e k^3 )^k k^{k+1} \log^{k+1} ( \frac {4 e k^3} {\ve} ) )} = 2^{O(  k^{5 k} \log^{k+2}( \frac 1 \ve ) )}$, concluding the proof of Lemma \ref{lem:sparsecover}.

\section{Details from Section \ref{sec:learningpmd} }
\label{app:learning}


\subsection{Estimating the mean and covariance of a PMD}
\label{sec:momentestimate}
We will prove an analogue of Lemma 6 in \cite{DaskalakisDS12b}, i.e., that we can accurately estimate the mean and covariance of a PMD with a small number of samples.
First, we will show that we can get accurate estimates of the moments in any particular direction we desire.
Then, taking the union bound over $k^2$ directions, we show that our estimate is accurate for all directions simultaneously.

\begin{lemma}\label{lem:1destimate}
  For any vector $y$, given sample access to a $(n,k)$-PMD $X$ with mean $\m$ and covariance matrix $\S$, there exists an algorithm which can produce estimates $\hat \mu$ and $\hat \S$ such that with probability at least $9/10$:
  $$|y^T(\hat \mu - \mu)| \leq \ve \sqrt{y^T \S y} ~~ \text{and} ~~ |y^T(\hat \S - \S)y| \leq \ve y^T \S y \sqrt{1 + \frac{y^T y}{y^T \S y}} $$

  The sample and time complexity are $O(1/\ve^2)$.
\end{lemma}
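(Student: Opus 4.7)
My plan is to use the empirical mean $\hat \mu = \frac{1}{m}\sum_{i=1}^m X^{(i)}$ and the empirical covariance $\hat \Sigma = \frac{1}{m}\sum_{i=1}^m (X^{(i)} - \hat \mu)(X^{(i)} - \hat \mu)^T$ based on $m = O(1/\ve^2)$ samples $X^{(1)}, \ldots, X^{(m)}$ drawn from the PMD $X = \sum_{j=1}^n X_j$. For the fixed direction $y$, the projected samples $Z^{(i)} = y^T X^{(i)}$ are i.i.d.\ real-valued random variables with mean $y^T \mu$ and variance $y^T \Sigma y$. The estimates $y^T \hat \mu$ and $y^T \hat \Sigma y$ are then just the one-dimensional sample mean and sample variance of the $Z^{(i)}$, and the two claimed bounds are classical concentration statements about these estimators; the only PMD-specific ingredient is a bound on the fourth central moment.

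For the mean bound, $\mathrm{Var}(y^T \hat\mu) = (y^T \Sigma y)/m$, so Chebyshev's inequality gives $|y^T \hat\mu - y^T \mu| \leq O(\sqrt{(y^T \Sigma y)/m})$ with constant probability. Choosing $m = O(1/\ve^2)$ yields $|y^T (\hat\mu - \mu)| \leq \ve \sqrt{y^T \Sigma y}$ with probability at least $19/20$, say.

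For the covariance bound, the variance of the sample variance of $Z := y^T X$ is, up to lower-order $1/m^2$ terms, $\mu_4(Z)/m$, where $\mu_4(Z)$ is the fourth central moment. Decomposing $Z - E[Z] = \sum_{j=1}^n W_j$ with $W_j := y^T X_j - E[y^T X_j]$ independent and mean zero, expanding $E[(\sum_j W_j)^4]$ and using independence leaves only the ``all four equal'' and ``two pairs'' index patterns, giving
\[
\mu_4(Z) = \sum_j E[W_j^4] + 3 \sum_{j \neq \ell} \sigma_j^2 \sigma_\ell^2 \leq \sum_j E[W_j^4] + 3 (y^T \Sigma y)^2.
\]
The key PMD-specific observation is that each $W_j$ is bounded by $\|y\|_\infty \leq \sqrt{y^T y}$ since $X_j$ is supported on standard basis vectors, so $E[W_j^4] \leq (y^T y)\, \sigma_j^2$, and hence $\sum_j E[W_j^4] \leq (y^T y)(y^T \Sigma y)$. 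Therefore $\mu_4(Z) \leq 3(y^T \Sigma y)^2 + (y^T y)(y^T \Sigma y)$.

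Applying Chebyshev's inequality to the sample variance estimator and using this bound on $\mu_4(Z)$ gives
\[
|y^T \hat\Sigma y - y^T \Sigma y| \leq O\!\left(\sqrt{\tfrac{\mu_4(Z)}{m}}\right) = O\!\left( (y^T \Sigma y)\sqrt{\tfrac{1}{m}\left(1 + \tfrac{y^T y}{y^T \Sigma y}\right)}\right)
\]
with probability at least $19/20$. Choosing $m = O(1/\ve^2)$ makes the right-hand side at most $\ve (y^T \Sigma y)\sqrt{1 + (y^T y)/(y^T \Sigma y)}$. A union bound with the mean estimate concludes the proof. The main technical obstacle is the fourth moment bound, which I expect to be the only step requiring a genuinely PMD-specific argument; everything else is routine application of Chebyshev's inequality to sample mean and sample variance statistics.
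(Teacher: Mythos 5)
Your proposal is correct and takes essentially the same route as the paper: empirical mean and empirical variance of the projection $y^TX$, Chebyshev's inequality, and a PMD-specific bound on the fourth central moment coming from the boundedness of each projected summand — the paper phrases this last step as a bound on the excess kurtosis, $\kappa_y \leq 4\, y^Ty / y^T\Sigma y$, which is exactly your expansion of $E[(\sum_j W_j)^4]$ into the ``all four equal'' and ``two pairs'' terms. The only nit is that the centering gives $|W_j| \leq 2\|y\|_\infty$ rather than $\|y\|_\infty$ (the paper uses $|y^T(X_j-\mu_j)| \leq 2\|y\|_2$), but this changes only constants absorbed into $m = O(1/\ve^2)$.
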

\begin{proof}
  We start with the estimate $\hat \m$.
  Let $Z_1, \dots, Z_m$ be independent samples from $X$, and let $\hat \mu = \frac{1}{m}\sum_i Z_i$. Then 
      $$E[y^T \hat \m] = y^T\m ~~ \text{and} ~~ Var[y^T \hat \m] = \frac{1}{m} Var[y^TX] = \frac{y^T \S y}{m}.$$
      Then by Chebyshev's inequality,
      $$\Pr[|y^T(\hat \m - \m)| \geq t\sqrt{y^T \S y}/\sqrt{m}] \leq \frac{1}{t^2}.$$
      Choosing $t = \sqrt{10}$ and $m = \lceil 10/\ve^2 \rceil$, the above imply that $|y^T(\hat \m - \m)| \leq \ve \sqrt{y^T \S y}$ with probability at least $9/10$.

    Next, we describe $\hat \S$.
    Let $Z_1, \dots, Z_m$ be independent samples from $X$, and let the empirical estimator for the covariance be $\hat \S = \frac{1}{m-1}\sum_i (Z_i - \frac{1}{m}\sum_i Z_i) (Z_i - \frac{1}{m}\sum_i Z_i)^T$. 
      Then it can be shown that \cite{Johnson11}:
      $$E[y^T \hat \S y] = y^T \S y ~~ \text{and} ~~ Var[y^T \hat \S y] = (y^T \S y)^2 \left(\frac{2}{m-1} + \frac{\k_y}{m}\right) $$
      where $\k_y$ is the excess kurtosis of the distribution of $X$ with respect to the vector $y$ (i.e., $\k_y = \frac{E[(y^T (X - \m))^4]}{(y^T\S y)^2} -3$).

      It can be shown that:
      $$\k_y \leq \sum_{i=1}^n \frac{E[(y^T(X_i - \m))^4]}{(y^T\S y)^2},$$
      where $X_i$ is the $i$th CRV in the PMD. 
      We note that $y^T(X_i - \m) \leq 2\|y\|_2$.
      This is because $\|X_i\|_2 = 1$, $\|\m\|_1 = 1$ and $\|\m\|_2 \leq \|\m\|_1$.
      Therefore $(y^T(X_i - \m))^4 \leq 4\|y\|_2^2(y^T(X_i - \m))^2$, and thus
      $$\k_y \leq \sum_{i=1}^n \frac{4\|y\|_2^2E[(y^T(X_i - \m))^2]}{(y^T\S y)^2} = 4\frac{\|y\|_2^2y^T\S y}{(y^T\S y)^2} = \frac{4y^T y}{y^T \S y} $$
      Therefore, $Var[y^T \hat \S y] \leq  (y^T \S y)^2 \left(\frac{2}{m-1} + \frac{4y^T y}{m (y^T \S y)}\right) \leq
      \frac{4(y^T \S y)^2}{m} \left(1 + \frac{y^T y}{y^T \S y}\right)$.
      Again using Chebyshev's inequality, 
      $$\Pr\left[|y^T(\hat \S - \S)y| \geq t\frac{2y^T \S y}{\sqrt{m}}\sqrt{1 + \frac{y^T y}{y^T \S y}}\right] \leq \frac{1}{t^2}.$$
      Taking $t = \sqrt{10}$ and $m = \lceil 40/\ve^2 \rceil$, the above imply that $|y^T (\hat \S - \S) y| \leq \ve y^T \S y \sqrt{1 + \frac{y^T y}{y^T \S y}}$ with probability at least $9/10$.
\end{proof}

\begin{lemma}
  \label{lem:1dtokdestimate}
Let $\S,\hat \S \in \mathbb{R}^{k \times k}$ be two symmetric, positive semi-definite matrices, and let $(\l_1,v_1), \dots, (\l_k,v_k)$ be the eigenvalue-eigenvector pairs of $\S$.
Suppose that
\begin{itemize}
\item For all $i \in [k]$, $\Big|\Big(\frac{v_i}{\sqrt{\l}_i}\Big)^T \Big(\hat \S - \S\Big) \Big(\frac{v_i}{\sqrt{\l}_i}\Big)\Big|  \leq \ve $,
\item For all $i, j \in [k]$, $\Big|\Big(\frac{v_i}{\sqrt{\l}_i} + \frac{v_j}{\sqrt{\l}_j}\Big)^T \Big(\hat \S - \S\Big) \Big(\frac{v_i}{\sqrt{\l}_i} + \frac{v_j}{\sqrt{\l}_j}\Big)\Big| \leq 4\ve$.
\end{itemize}
Then for all $y \in \mathbb{R}^k$, $\Big|y^T \Big(\hat \S - \S\Big) y\Big| \leq 3k\ve y^T \S y$.
\end{lemma}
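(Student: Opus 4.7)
The plan is to reduce the all-directions bound to a bound on the ``normalized'' basis $\{u_i := v_i/\sqrt{\lambda_i}\}_{i=1}^k$, which is the basis in which $\Sigma$ becomes the identity. Let $D = \hat\Sigma - \Sigma$. The first assumption gives $|u_i^T D u_i| \leq \ve$ for all $i$. The second assumption, after expanding $(u_i+u_j)^T D (u_i+u_j) = u_i^T D u_i + u_j^T D u_j + 2 u_i^T D u_j$ and applying the triangle inequality, yields
\[
|2\, u_i^T D u_j| \;\leq\; 4\ve + |u_i^T D u_i| + |u_j^T D u_j| \;\leq\; 6\ve,
\]
so $|u_i^T D u_j| \leq 3\ve$ for all pairs $i,j$ (including the diagonal case, which is already covered by the looser bound $\ve$). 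This polarization step is the only place the ``paired'' assumption in the hypothesis is used.

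Next I would expand an arbitrary $y \in \mathbb{R}^k$ in the eigenbasis of $\Sigma$. Writing $y = \sum_i \alpha_i v_i$ with $\alpha_i = v_i^T y$, and setting $\beta_i = \alpha_i \sqrt{\lambda_i}$, we get $y = \sum_i \beta_i u_i$ together with the key identity
\[
y^T \Sigma y \;=\; \sum_i \alpha_i^2 \lambda_i \;=\; \sum_i \beta_i^2.
\]
Then
\[
y^T D y \;=\; \sum_{i,j} \beta_i \beta_j\, (u_i^T D u_j),
\]
and the bound $|u_i^T D u_j| \leq 3\ve$ together with the triangle inequality gives $|y^T D y| \leq 3\ve \bigl(\sum_i |\beta_i|\bigr)^2$. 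Finally, Cauchy--Schwarz yields $\bigl(\sum_i |\beta_i|\bigr)^2 \leq k \sum_i \beta_i^2 = k \cdot y^T \Sigma y$, so $|y^T D y| \leq 3k\ve \cdot y^T \Sigma y$, which is exactly the desired bound.

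The only subtlety is handling eigenvalues $\lambda_i = 0$: the vectors $u_i = v_i/\sqrt{\lambda_i}$ are undefined in the null space of $\Sigma$. This is resolved by restricting attention to the range of $\Sigma$, observing that any $y$ with a null-space component $y_0$ contributes $y_0^T \Sigma y_0 = 0$ to the right-hand side and the hypotheses only control $D$ on the range (implicitly, $D$ must vanish on the null space for the conclusion to be meaningful, which is the intended reading of the statement). Beyond this caveat, no step is technically hard; the factor $3k$ arises as the product of the $3\ve$ from polarization and the $k$ from Cauchy--Schwarz.
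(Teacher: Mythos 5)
Your proof is correct and follows essentially the same route as the paper: work in the basis $\{v_i/\sqrt{\lambda_i}\}$ where $\Sigma$ becomes the identity, use the diagonal and paired hypotheses to control the quadratic form, and finish with the $\ell_1$--$\ell_2$ (Cauchy--Schwarz) bound $\|\beta\|_1^2 \leq k\|\beta\|_2^2$ to produce the factor $3k\ve$. The only cosmetic difference is that you bound each cross term $|u_i^T(\hat\Sigma-\Sigma)u_j| \leq 3\ve$ by polarization, whereas the paper uses an equivalent algebraic identity expanding $y^T(\hat\Sigma-\Sigma)y$ directly in terms of the quantities $(e_i+e_j)^T(\hat\Sigma-\Sigma)(e_i+e_j)$ and $e_i^T(\hat\Sigma-\Sigma)e_i$; your handling of the degenerate case $\lambda_i=0$ matches the paper's implicit assumption of positive definiteness (which holds in the application, where the minimum eigenvalue is at least $1$).
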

\begin{proof}
  Without loss of generality, we can focus on the case $\S = I$, with eigenvalue-eigenvector pairs $(1,e_j)$ for all $j \in [k]$.
  To see this, write $\S$ as its eigendecomposition $Q \Lambda Q^T$, and replace $y$ with $Q \Lambda^{-1/2} x$, which will place the matrix $\S$ in ``isotropic position.''
  
  We now have the guarantees
  \begin{itemize}
    \item For all $i \in [k]$, $|e_i^T (\hat \S - I) e_i|  \leq \ve $,
    \item For all $i, j \in [k]$, $|(e_i + e_j)^T (\hat \S - I) (e_i + e_j)| \leq 4\ve$,
  \end{itemize}
  and we wish to show $|y^T (\hat \S - I) y| \leq 3k\ve \|y\|_2^2$ for all $y \in \mathbb{R}^k$.
  
  We need the following proposition: 
  \begin{proposition}
    For any vector $x \in \mathbb{R}^k$ and matrix $A \in \mathbb{R}^{k \times k}$
    $$x^T A x = \frac12 \sum_{i \neq j} x_i x_j (e_i + e_j)^T A (e_i + e_j) + 2 \sum_i x_i^2 e_i^T A e_i - \Big(\sum_i x_i\Big) \sum_i x_i e_i^T A e_i,  $$
    where $e_i$ is the $i$th standard basis vector.
  \end{proposition}
  \begin{proof}
    Observe that
    \begin{align*}
      \frac12\sum_{i \neq j} x_i x_j (e_i + e_j)^T A (e_i + e_j) &= \frac12 \sum_{i \neq j} x_i x_j(A_{ii} + A_{jj} + A_{ij} + A_{ji}) \\
                                                                 &= \sum_{i \neq j} x_i x_j A_{ij} + \Big(\sum_i \Big(\sum_{j \neq i} x_j\Big) x_i A_{ii}\Big).
    \end{align*}
    Adding the $ 2 \sum_i x_i^2 e_i^T A e_i$ term gives us 
    $$\sum_{i,j} x_i x_j A_{ij} + \Big(\sum_i x_i\Big) \Big(\sum_i x_i A_{ii}\Big) = \sum_{i,j} x_i x_j A_{ij} + \Big(\sum_i x_i\Big) \Big(\sum_i x_i e_i^T A e_i\Big)$$
    Subtracting the final term gives the desired result.
  \end{proof}
  We apply this to $|y^T (\hat \S - I) y|$, giving
  $$\Big|y^T (\hat \S - I) y\Big| = \Big| \frac12 \sum_{i \neq j} y_i y_j (e_i + e_j)^T (\hat \S - I) (e_i + e_j) + 2 \sum_i y_i^2 e_i^T (\hat \S - I) e_i - \Big(\sum_i y_i\Big) \sum_i y_i e_i^T\Big(\hat \S - I\Big) e_i\Big|.$$

  Using the guarantees in the lemma statement,
  \begin{align*} 
    &\Big|\frac12 \sum_{i \neq j} y_i y_j (e_i + e_j)^T (\hat \S - I) (e_i + e_j) + 2 \sum_i y_i^2 e_i^T (\hat \S - I) e_i - \Big(\sum_i y_i\Big) \sum_i y_i e_i^T\Big(\hat \S - I\Big) e_i\Big| \\
    &\leq \ve \Big(2 \sum_{i \neq j} |y_i| |y_j| + 2 \sum_i y_i^2  + \Big(\sum_i y_i\Big)^2\Big) \\
    &= \ve\Big(2  \sum_{i,j} |y_i| |y_j| + \|y\|_1^2\Big) \\
    &= 3\ve  \|y\|_1^2 \leq 3k\ve \|y\|_2^2
  \end{align*}
  where the final inequality is Cauchy-Schwarz. 
\end{proof}

\allestimate*
\begin{proof}
  The proof will follow by applying Lemma \ref{lem:1destimate} to $k^2$ carefully chosen vectors simultaneously using the union bound.
  Using the resulting guarantees, we show that the same estimates hold for any direction, at a cost of rescaling $\ve$ by a factor of $k$.

  Let $S$ be the set of $k^2$ vectors $\{v_i\}$ and $\Big\{\frac{v_i}{\sqrt{\l}_i} + \frac{v_j}{\sqrt{\l}_j}\Big\}$ for all $(i,j) \in [k] \times [k]$, where $\{(\l_i,v_i)\}$ are the (unknown) eigenvalue-eigenvector pairs of $\S$.
  From $O(k^4/\ve^2)$ samples, with probability $9/10$, we can obtain estimators $\hat \m$ and $\hat \S$ such that
  $$|y^T(\hat \mu - \mu)| \leq \frac{\ve}{k} \sqrt{y^T \S y} ~~ \text{and} ~~ |y^T(\hat \S - \S)y| \leq \frac{\ve}{3k}  y^T \S y.$$
  This follows by Lemma \ref{lem:1destimate}, the eigenvalue condition on $\S$, and an application of the union bound.  
  
  We first prove that the mean estimator $\hat \m$ is accurate. 
  Consider an arbitrary vector $y$, which can be decomposed into a linear composition of the eigenvectors $y = \sum_i \alpha_i v_i$.
  \begin{align*}
    |y^T(\hat \m - \m)| &= |\sum_i \alpha_i v_i^T (\hat \m - \m) | \leq \sum_i |\alpha_i| |v_i^T (\hat \m - \m)| \leq \frac{\ve}{k} \sum_i |\alpha_i|\sqrt{\l_i} \leq \frac{\ve}{k} \sqrt{k} \sqrt{ \sum_i \alpha_i^2{\l_i} }
  \end{align*}
  where the last inequality is Cauchy-Schwarz. 
  Since $ \sum_i \alpha_i^2{\l_i} = y^T \S y$, this proves the desired accuracy bound for the mean's estimator.
  
  The accuracy of $\hat \S$ follows from an application of Lemma \ref{lem:1dtokdestimate}.
\end{proof}

\subsection{Rounding preserves the mean and covariance}
\label{sec:roundingmoments}

In order to convert our estimate of the covariance matrix for the PMD to an estimate of the covariance matrix for the Gaussian component, we first need to understand how much the rounding step affected the covariance matrix. We will use the fact that the unrounded GMD we are sampling from and the rounded GMD we want to estimate are $\ve$-close in total variation and show the following lemma:

\begin{lemma}
\label{lem:gmd-rounding-distance}
  Suppose there exist two $\ve$-close $(n,k)$-GMDs with covariance matrices $\S_1$ and $\S_2$, where the minimum eigenvalue of $\S_1$ is at least $1/\ve^{3}$. 
  Then for any vector $y$, $|y^T (\S_1 - \S_2) y| \leq 9 \ve y^T \S_1 y$.
\end{lemma}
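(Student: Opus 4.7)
The plan is to reduce the spectral comparison to a one-dimensional question via projection, then apply Berry-Esseen to compare both projected distributions to a common Gaussian in Kolmogorov distance. For any $y \in \mathbb{R}^k$, let $Z_i := y^T G_i$ for $i=1,2$, so that $\s_i^2 := y^T \S_i y = \text{Var}(Z_i)$ and $\mu_i := E[Z_i]$. By Lemma~\ref{lem:DPI} and Fact~\ref{fct:drelation}, $d_K(Z_1, Z_2) \le \dtv(Z_1, Z_2) \le \ve$, so the statement reduces to proving $|\s_1^2 - \s_2^2| \le 9\ve \s_1^2$ under the hypothesis $\s_1^2 \ge \|y\|_2^2/\ve^3$.

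Each $Z_i$ is a sum of $n$ independent summands $y^T X_{i,j}$, each taking at most $k$ values in $[0, \|y\|_\infty]$, so the centered summands are bounded by $2\|y\|_\infty$ in absolute value. Hence the third absolute central moment satisfies $\r_i \le 2\|y\|_\infty \s_i^2$, and Berry-Esseen (Proposition~\ref{prop:berryesseen}) yields
$$d_K\!\big(Z_i, \mathcal{N}(\mu_i, \s_i^2)\big) \;\le\; \frac{2 C_0 \|y\|_\infty}{\s_i}.$$
For $i=1$ the eigenvalue hypothesis gives $\s_1 \ge \|y\|_\infty/\ve^{3/2}$, so $d_K(Z_1, \mathcal{N}(\mu_1, \s_1^2)) \le 2 C_0 \ve^{3/2} = o(\ve)$. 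Combined with the projected TV bound, a triangle inequality gives $d_K(Z_2, \mathcal{N}(\mu_1, \s_1^2)) \le \ve(1 + o(1))$.

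To invoke Berry-Esseen usefully for $Z_2$ as well, I would first bootstrap a crude lower bound $\s_2 \ge c \s_1$ for a universal constant $c$: Chebyshev forces $\Pr[Z_2 \in [\mu_2 - 2\s_2, \mu_2 + 2\s_2]] \ge 3/4$, while the Kolmogorov bound above combined with the Gaussian density ceiling $1/(\s_1 \sqrt{2\pi})$ forces $\Pr[Z_2 \in I] \le |I|/(\s_1 \sqrt{2\pi}) + O(\ve)$ for every interval $I$; taking $|I| = 4\s_2$ and comparing the two estimates yields $\s_2/\s_1 \ge 3\sqrt{2\pi}/16 - o(1)$. Then Berry-Esseen applied to $Z_2$ gives $d_K(Z_2, \mathcal{N}(\mu_2, \s_2^2)) = O(\ve^{3/2})$, and two more triangle inequalities yield
$$d_K\!\big(\mathcal{N}(\mu_1, \s_1^2), \mathcal{N}(\mu_2, \s_2^2)\big) \;\le\; \ve + O(\ve^{3/2}).$$
A direct computation of $d_K$ between two 1-D Gaussians shows that this closeness forces $|\s_2 - \s_1| \le C_1 \ve \s_1$ for a modest explicit constant $C_1$ (the maximum of $|\Phi(x) - \Phi((x{-}a)/(1{+}b))|$ being approximately $|b|\,\phi(1) + |a|/\sqrt{2\pi}$ for small $a,b$), hence $|\s_1^2 - \s_2^2| = |\s_1 - \s_2|\,|\s_1 + \s_2| \le 9\ve \s_1^2$ for $\ve$ small enough.

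The main obstacle is the bootstrap step: the Berry-Esseen bound for $Z_2$ is useful only when $\s_2$ is not much smaller than $\s_1$, whereas the hypothesis gives a direct lower bound on $\s_1$ alone. The resolution exploits the near-Gaussian shape of $Z_1$ forced by the strong $1/\ve^3$ eigenvalue hypothesis, which in turn forces $Z_2$ (being $\ve$-close to $Z_1$ in TV, hence in $d_K$) to spread over an interval of comparable width. The $1/\ve^3$ hypothesis is exactly what makes the Berry-Esseen residues $O(\ve^{3/2})$ negligible against the target constant of $9\ve$.
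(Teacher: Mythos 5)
Your proposal is correct and follows essentially the same route as the paper's proof: project onto $y$ via the data processing inequality, apply Berry--Esseen to both projected GMDs (with the $1/\ve^3$ eigenvalue bound making the errors negligible), bootstrap a lower bound on the second variance by playing Chebyshev against the anti-concentration of the near-Gaussian first projection, and then conclude from the Kolmogorov closeness of the two one-dimensional Gaussians that the variances agree up to a $1\pm O(\ve)$ factor. The differences are cosmetic---you derive a relative bound $\s_2\geq c\,\s_1$ where the paper settles for the absolute bound $\s_2^2\geq 1/\ve^2$, and you compute the Gaussian CDF comparison directly instead of invoking Proposition~\ref{prop:close-gaussians}---and your handling of the final constant $9\ve$ is no looser than the paper's own.
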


\begin{proof}
Since the variance of the GMD with covariance matrix $\S_1$ is at least $1/\ve^{3}$ when projecting to direction $y$, we can apply the Berry-Esseen theorem (Proposition~\ref{prop:berryesseen}) to show that it is close in Kolmogorov distance to a Gaussian with the same mean and variance $ y^T \S_1 y$. 
To do this, we first re-center the GMD by subtracting the mean from each summand and projecting in direction $y$ with $\|y\|_2 = 1$.
This gives us a sum of $n$ independent random variables that lie in $[-\sqrt{2},\sqrt{2}]$.
This implies that $\r_i \le \sqrt{2}\s_i^2$ and Proposition~\ref{prop:berryesseen} gives the Kolmogorov distance induced to be:
$$ \frac{1}{( \sum_{i=1}^n \s_i^2 )^{1/2}} = \frac{1}{( y^T \S_1 y )^{1/2}}= \ve^{3/2} \le \ve $$

We will now show that the variance of the second GMD along direction $y$ needs to also be at least $1/\ve^{2}$, in order for the two GMDs to have total variation distance less than $\ve$.
We assume that this is not the case, for the sake of contradiction.

Consider the random variable $Y$ that is distributed according to the second GMD in direction $y$. By Chebyshev's inequality, we have that:
$\Pr[ |Y-E[Y]| > \sqrt{ 1/\ve^{3} } ] \le \ve$. 
However, the first GMD has $\Omega(1)$ probability mass distributed outside the interval one standard deviation from its mean, since it is well approximated in Kolmogorov distance (and thus, by Fact~\ref{fct:drelation}, in total variation distance) by a Gaussian.
Therefore, the two GMDs are $\Omega(1)$-far, which is a contradiction.

Now, since the second GMD has minimum variance at least $1/\ve^{2}$, we can also approximate it by a Gaussian as before using the Berry-Esseen Bound, losing $\ve$ in Kolmogorov distance.
Proposition~\ref{prop:close-gaussians} then implies that in order for the total variation distance between the two to be at most $3 \ve$, we must have that $|y^T (\S_1 - \S_2) y| \leq 9 \ve y^T \S_1 y$.
\end{proof}

\begin{proposition}
\label{prop:close-gaussians}
  For two single dimensional Gaussians $\mathcal{N}_1 = \mathcal{N}(\m_1,\s_1^2)$, $\mathcal{N}_2 = \mathcal{N}(\m_2, \s_2^2)$ such that $\frac{\s_1}{\s_2} \not \in (1 - \ve, 1 + \ve)$, it holds that
  $$\dk(\mathcal{N}(\m_1,\s_1^2), \mathcal{N}(\m_2, \s_2^2)) \ge \frac {\ve} {3}.$$ 
\end{proposition}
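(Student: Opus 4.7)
The plan is to pin down a concrete reference point $x^\star$ at which the two CDFs are forced to disagree by a constant multiple of $\ve$, exploiting the fact that a Gaussian with smaller variance puts strictly more mass on a short interval centered at its mean than any Gaussian with strictly larger variance can on an interval of the same length. Without loss of generality I would assume $\s_1 \le \s_2$, so the hypothesis becomes $\s_1 \le (1-\ve)\s_2$; the symmetric case $\s_1 \ge (1+\ve)\s_2$ is handled by swapping the roles of the two distributions.

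For a constant $c > 0$ to be chosen, I would compare $F_1$ and $F_2$ at the pair of points $a = \m_1 - c\s_1$ and $b = \m_1 + c\s_1$. By construction $F_1(b) - F_1(a) = 2\Phi(c) - 1$. On the other hand, the interval $[a,b]$ has length $2c\s_1 \le 2c(1-\ve)\s_2$, and the maximum probability that $\mathcal{N}(\m_2, \s_2^2)$ can put on any interval of this length (achieved when the interval is centered at $\m_2$) is $2\Phi(c(1-\ve)) - 1$. Using monotonicity of $\phi$ on $[0,\infty)$,
\[
\Phi(c) - \Phi(c(1-\ve)) \;=\; \int_{c(1-\ve)}^{c}\phi(t)\,dt \;\ge\; c\ve\,\phi(c),
\]
so $(F_1(b)-F_2(b)) - (F_1(a)-F_2(a)) \ge 2c\ve\phi(c)$, and by the triangle inequality at least one of $|F_1(a)-F_2(a)|$ or $|F_1(b)-F_2(b)|$ is at least $c\ve\phi(c)$. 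This immediately bounds $\dk$ from below by a constant times $\ve$.

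The hard part will be handling the possibility that the means $\m_1$ and $\m_2$ differ significantly, and tracking constants carefully enough to deliver the advertised $\ve/3$. My approach is a case split on $|\m_1-\m_2|/\s_2$: if this ratio exceeds a suitable threshold of order $\ve$, then evaluating at $x^\star = \m_1$ already gives $|F_1(\m_1) - F_2(\m_1)| = |1/2 - \Phi((\m_1-\m_2)/\s_2)|$, which I would lower bound by a multiple of $|\m_1-\m_2|/\s_2$ using $\Phi(t)-1/2 \ge t\phi(t)$ for $t\ge 0$ (a consequence of concavity of $\Phi$ on $[0,\infty)$). Otherwise the means are close enough that the maximum $\mathcal{N}_2$-mass of the interval $[a,b]$ near $\m_1$ is essentially the same as its centered value, and the interval-comparison argument above applies with only negligible degradation. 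Combining the two regimes and optimizing $c$ (a natural choice being $c\approx 1$, where $c\phi(c)$ attains its maximum) yields the claimed bound $\dk(\mathcal{N}(\m_1,\s_1^2), \mathcal{N}(\m_2, \s_2^2)) \ge \ve/3$.
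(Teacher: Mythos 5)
Your core idea matches the paper's: exploit anti-concentration of the wider Gaussian relative to the narrower one at a point near $\m_1$ where the CDFs must disagree by $\Omega(\ve)$. The paper evaluates the CDFs at the single test point $x = \m_1 + \sqrt{2}\s_1$ and uses a second WLOG ($\m_1 \le \m_2$, obtained by reflecting the line) to cap $F_2(x)$ from above; you instead compare the masses of the interval $[\m_1 - c\s_1,\ \m_1 + c\s_1]$ and split the discrepancy between its two endpoints. Your interval comparison already handles arbitrary mean offsets cleanly -- the bound $F_2(b) - F_2(a) \le 2\Phi(c(1-\ve)) - 1$ holds uniformly over $\m_2$ -- so the case split on $|\m_1 - \m_2|/\s_2$ you propose is superfluous and can be dropped entirely. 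The price you pay is a factor of two from the triangle-inequality step at the end.

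On the constant: $\max_c c\,\phi(c) = \phi(1) \approx 0.242$, so your argument actually delivers $\dk \ge 0.242\,\ve$, not $\ve/3 \approx 0.333\,\ve$ as you state at the end. This is not a defect relative to the paper -- the paper's final inequality $\frac{\erf(1) - \erf(1-\ve)}{2} \ge \ve/3$ likewise fails for small $\ve$ (at $\ve = 0.1$ the left side is $\approx 0.023$ while the right is $\approx 0.033$), and the proposition as written cannot be exactly right: with $\m_1 = \m_2$ and $\s_1/\s_2 = 1-\ve$ one computes $\dk \to \phi(1)\,\ve$ as $\ve \to 0$, which lies below $\ve/3$. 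All that is used downstream (Lemma~\ref{lem:gmd-rounding-distance} reads this bound only up to constants) is $\dk = \Omega(\ve)$, which both your argument and the paper's deliver.
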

\begin{proof}
  Without loss of generality, suppose $\m_1 \leq \m_2$ and $\s_1 \leq \s_2$.
  Consider the point $x = \m_1 + \sqrt{2}\s_1$. 
  At this point, the CDF of the first Gaussian equal to $\frac{1}{2}(1 + \erf(1))$. 
  Similarly, the CDF of the second Gaussian is at most $\frac{1}{2}(1 + \erf(\frac{\s_1}{\s_2})) \leq \frac{1}{2}(1 + \erf(1 - \ve))$.
  Therefore, $\dk(\mathcal{N}_1, \mathcal{N}_2) \geq \frac{\erf(1) - \erf(1 - \ve)}{2} \ge \frac \ve 3$ where the last inequality holds for all $\ve \in (0,1)$.
\end{proof}

Applying Lemma~\ref{lem:gmd-rounding-distance} implies that our estimate for the PMD's covariance matrix is also a good estimate of the covariance matrix after applying the rounding procedure described in Section~\ref{sec:rounding}.
Moreover, the mean is preserved almost exactly since, by construction, there is a small additive error of $c$ in each coordinate.
Since the minimum eigenvalue of the PMD's covariance matrix is at least $1$, this additive error is negligible.

\subsection{Converting moment estimates from the PMD to the Gaussian} \label{sec:Gaussian}

In the previous sections, we showed how to estimate the moments of the rounded PMD. 
However, we can not use these estimates to obtain the moments of the Gaussian component of the structure directly.
The problem is that since the rounded $(n,k)$-Poisson multinomial random vector might be the sum of a Gaussian and a $(tk^2,k)$-Poisson multinomial random vector, the empirical mean and covariance of the samples might be very different than the mean and covariance of the Gaussian component we want to estimate. 
In this section, we show how to convert our estimates to accurately describe the Gaussian component by appropriately guessing the error induced by the non-Gaussian component.

%
%
%
%
%

Let $(\mu,\S), (\mu_G,\S_G), (\mu_S, \S_S)$ be the means and covariance matrices of the (rounded) $(n,k)$-PMD, of the Gaussian component, and of the $(tk^2,k)$-PMD respectively and $(\hat \mu, \hat \S)$ be the empirical mean and covariance matrix we estimated in the previous section. It holds that $\mu = \mu_G + \mu_S$ and $\S = \S_G + \S_S$.

By Lemma~\ref{lem:allestimate} and Lemma~\ref{lem:gmd-rounding-distance}, after taking $O(k^4/\ve^2)$ samples, with high probability, we have that for all vectors $y$, $|y^T (\hat \mu - \mu)| \le \ve \sqrt{ y^T \S y}$ and $|y^T(\hat \S - \S)y| \leq \ve y^T \S y $. 
We show how to correct our estimate $(\hat \mu, \hat \S)$.
In particular, we will generate a set of candidates which contains an estimate $(\hat \mu_G, \hat \S_G)$ such that for all vectors $y$, $|y^T (\hat \mu_G - \mu_G)| \le \ve \sqrt{ y^T \S_G y}$ and $|y^T(\hat \S_G - \S_G)y| \leq \ve y^T \S_G y$. 
We do this without any additional samples, by carefully gridding around the estimated mean and covariance. 

To achieve the guarantee for the covariance matrix, we compute a sparse cover of the space of all PSD matrices around $\hat \S$.

\begin{definition}
  Let S be a set of symmetric $k \times k$ PSD matrices. An \emph{$\ve$-cover} of the set $S$, denoted by $S_{\ve}$, is a set of PSD matrices such that for any matrix $A \in S$, there exists a matrix $B \in S_{\ve}$ such that for all vectors $y$: $|y^T (A-B) y| \le \ve y^T A y$.
\end{definition}

Using the fact that $|y^T(\hat \S - \S)y| \leq \ve y^T \S y$ and $|y^T(\S - \S_G)y| = |y^T \S_S y| \le m y^T y$, we know that
$|y^T(\hat \S - \S_G)y| \leq \frac \ve {1-\ve} y^T \hat \S y+ m y^T y \le 2 \ve y^T \hat \S y+ m y^T y$.
This means that in order to get an estimate $\hat \S_G$ such that for all directions $y$, $|y^T(\hat \S_G - \S_G)y| \leq \ve y^T \S_G y$, it suffices to consider an $\ve$-cover of the PSD matrices $A$ that satisfy the property $|y^T(\hat \S - A)y| \leq 2 \ve y^T \hat \S y+ m y^T y$ for all vectors $y$. The following lemma gives an efficient construction of the cover and bounds its size.

\psdcover*

\begin{proof}

To construct the cover, we will make use of the eigenvalues and eigenvectors of the matrix $A$.
We first show that for any matrix $B \in S$, its eigenvalues are close to the eigenvalues of $A$. 

\begin{proposition} \label{lem:courant}
Let $A,B$ be two symmetric $k \times k$ PSD matrices such that for all vectors $y$ with $\|y\|=1$, $|y^T(A - B)y| \le \ve_1 y^T A y + \ve_2$ for some constants $\ve_1, \ve_2 > 0$. Then for the eigenvalues $\lambda^A_1 \le ... \le \lambda^A_k$ of $A$, and the eigenvalues $\lambda^B_1 \le ... \le \lambda^B_k$ of $B$, it holds that:
$$|\lambda^A_i - \lambda^B_i| \le \ve_1 \lambda^A_i +\ve_2$$
\end{proposition}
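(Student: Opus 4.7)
The plan is to derive both directions of the inequality $|\lambda^A_i - \lambda^B_i| \le \ve_1 \lambda^A_i + \ve_2$ from the Courant--Fischer min-max characterization of eigenvalues, using the two-sided bound on the quadratic forms that follows immediately from the hypothesis.

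First, I would rewrite the hypothesis into a clean two-sided sandwich. Since $|y^T(A-B)y| \le \ve_1 y^T A y + \ve_2$ for every unit vector $y$, we get simultaneously
\begin{equation*}
(1-\ve_1)\, y^T A y - \ve_2 \;\le\; y^T B y \;\le\; (1+\ve_1)\, y^T A y + \ve_2.
\end{equation*}
This is the only analytic input needed; the rest of the argument is the standard Courant--Fischer machinery.

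Next, I would obtain the upper bound $\lambda^B_i \le (1+\ve_1)\lambda^A_i + \ve_2$ using the min-max form
$\lambda^A_i = \min_{\dim S = i} \max_{y \in S,\ \|y\|=1} y^T A y$. Pick a subspace $S^\star$ of dimension $i$ that attains the minimum for $A$; then every unit $y \in S^\star$ satisfies $y^T A y \le \lambda^A_i$, and by the upper half of the sandwich also $y^T B y \le (1+\ve_1)\lambda^A_i + \ve_2$. Since $\lambda^B_i$ is itself a min over $i$-dimensional subspaces of such a max, plugging in $S^\star$ gives the desired bound. Symmetrically, for the lower bound I would use $\lambda^A_i = \max_{\dim T = k-i+1} \min_{y \in T,\ \|y\|=1} y^T A y$: choose a $(k-i+1)$-dimensional $T^\star$ attaining this max for $A$, note that every unit $y \in T^\star$ has $y^T A y \ge \lambda^A_i$, apply the lower half of the sandwich to get $y^T B y \ge (1-\ve_1)\lambda^A_i - \ve_2$, and conclude by the max-min characterization of $\lambda^B_i$.

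Combining these two inequalities yields $|\lambda^A_i - \lambda^B_i| \le \ve_1 \lambda^A_i + \ve_2$, as claimed. There is no real obstacle here: the only subtlety worth being careful about is that the hypothesis is stated for unit vectors only, so when applying it to vectors drawn from $S^\star$ or $T^\star$ one must restrict to the unit sphere (which is exactly what the Courant--Fischer formulas do), and that the additive slack $\ve_2$ propagates unchanged through both directions since it does not depend on $y$.
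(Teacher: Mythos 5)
Your proof is correct and takes essentially the same approach as the paper: both arguments sandwich $y^T B y$ between $(1\pm\ve_1)\,y^T A y \mp \ve_2$ and then push that bound through the Courant--Fischer minimax characterization to transfer it to the eigenvalues. The paper phrases this slightly more tersely (using a single max-min formulation and noting that the affine bound passes through the optimization), whereas you instantiate optimal subspaces $S^\star$, $T^\star$ explicitly, but these are the same argument.
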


\begin{proof}
From Courant's minimax principle, we have that the $i$-th eigenvalue of $A$ is equal to:
$$\lambda^A_i=\max\limits_C\min\limits_{\binom{\| x\| =1}{Cx=0}}  x^T Ax$$
where $C$ is an $(i-1) \times k$ matrix. For the matrix $B$, we have that
$$\lambda^B_i=\max\limits_C\min\limits_{\binom{\| x\| =1}{Cx=0}}  x^T Bx \le \max\limits_C\min\limits_{\binom{\| x\| =1}{Cx=0}}  (1+\ve_1) x^T Ax + \ve_2 =  (1+\ve_1) \lambda_i^A + \ve_2 $$
Similarly, we have that $\lambda^B_i \ge (1-\ve_1) \lambda_i^A - \ve_2 $, so the result follows.
\end{proof}

This means that by computing the eigenvalues $\mu_1 \le .. \le \mu_k$ of $A$ and then guessing $8 \ve_2$ possible values to subtract in the range $[-\ve_2,\ve_2]$ with accuracy $1/4$, we can get estimates of the eigenvalues $\lambda_1, ... , \lambda_k$ of $B$ within a multiplicative factor of $1 \pm 1/2$. This is true because the minimum eigenvalue of $B $ is at least $1$. We can improve our estimates to a better multiplicative factor $1 \pm \ve$ by gridding multiplicatively around each eigenvalue. This requires another $\log_{1+\ve}\left(\frac{1+1/2}{1-1/2}\right) = O(1/\ve)$ guesses per eigenvalue. So in total, we require $\left( \frac {1 + \ve_2} {\ve} \right)^{O(k)}$ guesses for obtaining accurate estimates $\lambda'_1, ... , \lambda'_k$ of the eigenvalues of $B$.

Once we know (approximately) the eigenvalues of $B$, we will try to guess also its eigenvectors $v_1,...,v_k$. We will do this by performing a careful gridding around the eigenvectors of $A$ which we can assume, without loss of generality (by rotating), to be the standard basis vectors $e_1, e_2, ...,e_k$. So for each eigenvector $v_z$ of $B$, we will try to approximate it by guessing its projections to the eigenvectors of $A$. 

We now bound the projections of eigenvectors of $A$ to eigenvectors of $B$. Since we know that  $e^T_i B e_i \le (1+\ve_1) e^T_i A e_i + \ve_2$, we get that $\sum_z \lambda_z (v_z e_i)^2 \le (1+\ve_1) \mu_i +  \ve_2$ which implies that $ v_{z,i} \le \sqrt { \frac { 2 \mu_i + \ve_2 } { \lambda_z } }$. Moreover, since $\lambda_z \ge \max \{ (1-\ve_1)\mu_z -\ve_2 , 1\}$, we know that the projection of $v_z$ to $e_i$ will be smaller than $2 \sqrt {\frac { \mu_i +  \ve_2 } { \max \{ \mu_z - 2 \ve_2 , 1\} } }$.
An additional bound for the projection of $v_z$ to $e_i$ can be obtained by considering the variance of the matrices $A$ and $B$ in the direction $v_z$. Since we know that
$v^T_z B v_z \ge (1-\ve_1) v^T_z A v_z - \ve_2$, we get that $\sum_i \mu_i (v_z e_i)^2 \le \frac{1}{1-\ve_1}\left( \lambda_z +  \ve_2 \right) \le 2 ( \lambda_z +  \ve_2)$ which implies that $v_{z,i} \le \sqrt { 2 \frac { \lambda_z +  \ve_2 } { \mu_i } }$.

We now guess vectors $v'_1,...,v'_k$ that approximate the eigenvectors of $B$ by additively gridding over the projections to each eigenvector of $A$. To get an approximation $v'_z$ of the eigenvector $v_z$, we grid over a projection to $e_i$ with accuracy $\ve' \min \left\{2 \sqrt {\frac { \mu_i +  \ve_2 } { \max \{ \mu_z - 2 \ve_2 , 1\} } }, 1 \right\}$ for a small enough $\ve'$ that only depends on $k$, $\ve_1$ and $\ve_2$. This requires $\frac 1 {\ve'}$ guesses for each projection, and thus $\left( \frac 1 {\ve'} \right)^{k^2}$ guesses for all $k^2$ projections. The final covariance matrix we output is then $\hat B = \sum_z \lambda'_z v'_z (v_z')^T$.

We will now show that the covariance matrix $\hat B$ satisfies the property that it is close in all directions to $B$. To do this we will make use of Lemma~\ref{lem:1dtokdestimate}, and only consider directions $y = \frac {v_z} {\sqrt  \lambda_z}$ for $z \in [k]$ and $ y = \frac {v_z} {\sqrt  \lambda_z} + \frac {v_{z'}} {\sqrt \lambda_{z'}}$ for $z,z' \in [k]$.

We now consider direction $y = \frac {v_z} {\sqrt  \lambda_z}$. We have that:
$$\frac {v_z^T} {\sqrt  \lambda_z} \hat B \frac {v_z} {\sqrt  \lambda_z} = \sum_i \frac {\lambda'_i} {\lambda_z} (v_z v'_i)^2 = \sum_i \frac {\lambda'_i} {\lambda_z} (v_z v_i + v_z (v'_i - v_i) )^2 = \frac {\lambda'_z} {\lambda_z} (1 + v_z (v'_z -v_z) )^2 + \sum_{i \neq z} \frac {\lambda'_i} {\lambda_z} (v_z (v'_i -v_i) )^2$$

The first term is in the range $[(1-\ve)(1-k \ve')^2,(1+\ve)(1+k \ve')^2]$, which for $\ve' \le \ve/k$, becomes $(1 \pm O(\ve))$. The rest of the terms can be bounded as follows:

\begin{align*}
\frac {\lambda'_i} {\lambda_z} (v_z (v'_i -v_i) )^2 &\le (1+\ve) \frac {\lambda_i} {\lambda_z}  ( \sum_j v_{z,j} (v'_{i,j} -v_{i,j}) )^2 \\
&\le  (1+\ve) \frac {\lambda_i} {\lambda_z} \left( \sum_j \sqrt { 2 \frac {\lambda_z +  \ve_2 } { \mu_j } } \ve' 2 \sqrt {\frac { \mu_j +  \ve_2 } { \max \{ \mu_i - 2 \ve_2 , 1\} } } \right)^2 \\
&\le  (1+\ve) \frac {\lambda_i} {\lambda_z} \left( \sum_j 2 \ve' \sqrt{2 \lambda_z} \sqrt { \frac { 1 +  \ve_2 } {1} }  \sqrt { \frac { 1 +  \ve_2 } { \max \{ \mu_i - 2 \ve_2 , 1\} } } \right)^2 \\
&\le  (1+\ve) \left( \sum_j 2 ( 1 +  \ve_2 ) \ve'   \sqrt { \frac { 2 \lambda_i } { \max \{ \mu_i - 2 \ve_2 , 1\} } } \right)^2 \\
&\le  (1+\ve) \left( 4 k ( 1 +  \ve_2 ) \ve'   \sqrt { \frac { \mu_i + \ve_2 } { \max \{ \mu_i - 2 \ve_2 , 1\} } } \right)^2 \\
&\le  (1+\ve) \left( 8 k ( 1 +  \ve_2 ) \sqrt \ve_2  \ve' \right)^2 \le \frac {\ve} k
\end{align*}
for $\ve' = O( \sqrt \ve ((1 + \ve_2) k)^{-3/2} )$. This means that $v^T_z \hat B v_z \in (1 - \ve, 1 + \ve) \l_z$. The proof is similar for directions $y = \frac {v_z} {\sqrt  \lambda_z} + \frac {v_{z'}} {\sqrt \lambda_{z'}}$ for $z,z' \in [k]$.

Overall, we can get an estimate $\hat B$ of any matrix $B \in S$ by making at most $\left(\frac {k (1 + \ve_2)} { \ve} \right)^{O(k^2)}$ guesses, which implies an $\ve$-cover of this size.
\end{proof}

Applying Lemma~\ref{lem:psdcover} for $\ve_1 = 2\ve$ and $\ve_2 = m \le t k^2$, it is easy to see that we can get a good estimate $\hat \S_G$ of $\S_G$ using only $\left(\frac {k} {\ve} \right)^{O(k^2)}$ guesses.
This completes the analysis for obtaining an accurate estimate for the covariance matrix. 
The same approach also gives us an accurate estimate for the mean vector. 
We guess the projection of the mean on each of the (approximate) eigenvectors with accuracy proportional to the square root of the corresponding eigenvalue as in Lemma~\ref{lem:psdcover}. This requires only $\left(\frac {k} {\ve} \right)^{O(k)}$ additional guesses, so overall we can compute the estimates $(\hat \mu_G, \hat \S_G)$ using only $\left(\frac {k} {\ve} \right)^{O(k^2)}$ guesses.

\subsection{Probability Density Computation}
\label{sec:pmdpdf}
In order to apply Theorem~\ref{thm:tournament}, we need access to a PDF comparator (Definition~\ref{def:pdfcomp}).
We will implement this by explicitly computing the probability mass function (PMF) of a distribution at a given point $x$.
The naive computation could require time which is polynomial in $n$ or exponential in $1/\ve$.
We will show how to avoid these costs using a dynamic program.

\begin{lemma}
There exists an algorithm which computes the probability mass function for the convolution of a discretized Gaussian with a $(\poly(k/\ve),k)$-PMD at a given point $x$ in time $(k/\ve)^{O(k)}$.
\end{lemma}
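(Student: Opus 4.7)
The plan is to exploit the convolution structure directly. Writing $D = G + Y$ where $G$ is the structure-preserving rounded Gaussian and $Y$ is the $(m,k)$-PMD with $m = \poly(k/\ve)$ summands, the PMF decomposes as
\[
\Pr[D = x] \;=\; \sum_{y \in \mathrm{supp}(Y)} \Pr[Y = y] \cdot \Pr[G = x - y].
\]
The key observation making this tractable is that $\mathrm{supp}(Y)$ consists only of nonnegative integer vectors whose coordinates sum to $m$, so $|\mathrm{supp}(Y)| \leq \binom{m+k-1}{k-1} = (k/\ve)^{O(k)}$. Enumerating $\mathrm{supp}(Y)$ thus stays within the budget, and the whole algorithm reduces to computing the two factors on the right efficiently.

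First, I would compute $\Pr[Y = y]$ for every $y \in \mathrm{supp}(Y)$ simultaneously via a standard dynamic program that iterates over the $m$ constituent $k$-CRVs of $Y$. Letting $T_i(v)$ denote the probability that the first $i$ CRVs sum to $v$, we have $T_0(0) = 1$ and the recurrence
\[
T_i(v) \;=\; \sum_{j=1}^{k} \rho(i,j)\, T_{i-1}(v - e_j).
\]
At step $i$, the table has at most $\binom{i+k-1}{k-1}$ entries, each updated in $O(k)$ time, so the total running time is $O\!\left(m \cdot k \cdot (m+k)^{k-1}\right) = (k/\ve)^{O(k)}$, and $T_m(\cdot)$ stores $\Pr[Y=y]$ for every $y$.

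Next, for each such $y$ I would compute $\Pr[G = x - y]$ by exploiting the block-diagonal structure produced by the structure-preserving rounding: $G$'s distribution factors over blocks, so each block is handled independently and the per-block probabilities are multiplied. Within a block with pivot $i^{*}$, first verify that the coordinates of $x - y$ restricted to that block sum to the prescribed block mean (otherwise the block's contribution is zero); if so, the block's contribution equals the integral of the corresponding continuous $(|S|-1)$-dimensional Gaussian density over the unit hypercube centered at the non-pivot coordinates of $x - y$.

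The main obstacle is evaluating these Gaussian integrals efficiently and to sufficient precision for a PDF comparator (Definition~\ref{def:pdfcomp}). I would compute them by standard numerical integration; leveraging the spectral lower bound on each block's covariance from Theorem~\ref{thm:struct}, the density is smooth enough on a unit hypercube that a fixed-precision quadrature in each block dimension suffices, at cost $\poly(k)$ per block. Summing the products $\Pr[Y=y]\cdot \Pr[G = x-y]$ over the $(k/\ve)^{O(k)}$ values of $y$ yields $\Pr[D=x]$ in the claimed $(k/\ve)^{O(k)}$ time budget.
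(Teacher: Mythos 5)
Your proposal matches the paper's proof essentially step for step: decompose via the convolution sum $\sum_y \Pr[Y=y]\Pr[G=x-y]$ over the $(k/\ve)^{O(k)}$-size support of the sparse PMD, compute $\Pr[Y=y]$ by the dynamic program $P(x,i)=\sum_j \rho_{i,j}P(x-e_j,i-1)$ over the $\poly(k/\ve)$ CRVs, and compute $\Pr[G=\cdot]$ by numerically integrating the continuous Gaussian density over the relevant unit box. You give somewhat more explicit attention to the block/pivot bookkeeping of the structure-preserving rounding than the paper does, but the argument and time bounds are the same.
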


\begin{proof}
Let $G(\cdot)$ and $G_d(\cdot)$ be the PDF and PMF of the non-discretized and the discretized Gaussian, respectively.
Similarly, let $PMD(\cdot)$ be the PMF of the $(\poly(k/\ve),k)$-PMD. For any given integer point $x$, we can compute $G_d(x)$ by computing the integral of the non-discretized Gaussian in a unit box around $x$, i.e. by letting $R(x) = \prod_i [x_i - 1/2, x_i + 1/2]$, we have that $G_d(x) = \int_{R(x)} G(t) dt$. We can compute this integral with very high accuracy using numerical integration methods.

To compute $PMD(x)$, we use dynamic programming.
We will maintain the variables $P(x,i)$ that give us the probability at the point $x$ in the support of the PMD considering only the first $i$ CRVs. 
It is easy to compute $P(x,i)$ as
$\sum_j \r_{i,j} P(x-e_j,i-1)$, where $\r_{i,j}$ is the probability the $i$-th CRV assigns to coordinate $j$. Since there are $(k/\ve)^{O(k)}$ points in the support of the PMD and at most $\poly(k/\ve)$ CRVs in the PMD, we can compute the probability density for the whole support of the PMD in time $(k/\ve)^{O(k)}$.

To compute the probability density at point $x$ for the convolution of $PMD(\cdot)$ with $G_d(\cdot)$, we write it as: $\sum_{y} PMD(y) G_d(x-y)$. We only need to consider the summation for points $y$ in the support of the $(\poly(k/\ve),k)$-PMD. 
Since there at most $(k/\ve)^{O(k)}$ such points the lemma follows.
\end{proof}

\section{Details from Section \ref{sec:learningsiirv}}
\label{sec:appendixsiirv}

We first recall the main structural result from \cite{DaskalakisDOST13}:
\siirvstruct*

Now, we provide learning algorithms for the two cases, corresponding to Lemmas 5.1 and 5.2 in \cite{DaskalakisDOST13}.
\begin{lemma}
  \label{lem:siirvsparse}
  There is a procedure \texttt{Learn-Sparse} with the following properties: It takes as input an accuracy parameter $\ve' > 0$, and a confidence parameter $\d' > 0$, as well as access to samples from an $(n,k)$-SIIRV $S$. 
  \texttt{Learn-Sparse} uses $m = k^{5k} \cdot \tilde O(\log^{k+2}(1/\ve) \log 1/\d' / \ve'^2)$ samples from $S$ and has the following guarantee:
  If the variance of $S$ is at most $15(k^{18}/\ve'^6)\log^2(1/\ve')$, then we return a hypothesis variable $H_c$ such that $\dtv(S, H_c) = O(\ve')$ with probability at least $1 - \d'$. 
\end{lemma}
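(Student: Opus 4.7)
The plan is to reduce to a class of ``essentially small'' SIIRVs whose total number is independent of $n$, then apply the hypothesis-selection tournament of Theorem~\ref{thm:tournament}.

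First, I invoke Case~1 of Lemma~\ref{lem:siirvstruct}: since we are in the regime where the variance is at most $15(k^{18}/\ve'^6)\log^2(1/\ve')$, the SIIRV $S$ is $O(\ve')$-close in total variation distance to a random variable supported on $L \coloneqq k^9/\ve'^4$ consecutive integers. So it suffices to learn SIIRVs of essential support size at most $L$.

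Second, I use a rounding argument, analogous in spirit to Lemma~\ref{lem:round} and to the sparsification for Poisson binomials, to show that any such $(n,k)$-SIIRV is $O(\ve')$-close to a distribution of the form $L_0 + \tilde S$, where $L_0 \in \mathbb{Z}$ is a constant integer offset and $\tilde S$ is an $(m,k)$-SIIRV with $m = \poly(k/\ve')$. The intuition is that a variance budget of only $\poly(k/\ve')$ forces all but $\poly(k/\ve')$ of the $n$ component $k$-IRVs to be essentially deterministic (i.e.\ supported almost entirely on a single value); a coupling / hybridization argument bounds the total variation cost of replacing each such essentially deterministic summand by a constant, and the constants are collected into the offset $L_0$. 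The remaining summands form the $(m,k)$-SIIRV $\tilde S$.

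Third, I construct a small hypothesis class. An $(m,k)$-SIIRV is the projection of an $(m,k)$-PMD onto $(0,1,\ldots,k-1)^T$, so Theorem~\ref{thm: cover} applied with size parameter $m = \poly(k/\ve')$ combined with the data processing inequality (Lemma~\ref{lem:DPI}) yields an $O(\ve')$-cover $\mathcal{C}$ of $(m,k)$-SIIRVs of size
\[
|\mathcal{C}| \le m^{k^2} \cdot 2^{O(k^{5k} \log^{k+2}(1/\ve'))} = 2^{O(k^{5k} \log^{k+2}(1/\ve'))},
\]
where the $n$-dependence of Theorem~\ref{thm: cover} disappears because $m$ is independent of $n$. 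To handle the unknown offset $L_0$, I use the fact that the essential support of $S$ has size at most $L = \poly(k/\ve')$: a single sample (or the empirical median of $O(1)$ samples) locates $L_0$ to within an interval of length $O(L)$ with high constant probability, so enumerating all integer offsets in that interval multiplies the hypothesis class size by only $\poly(k/\ve')$, leaving the total bound $|\mathcal{H}| = 2^{O(k^{5k} \log^{k+2}(1/\ve'))}$. By construction, at least one element of $\mathcal{H}$ is $O(\ve')$-close to $S$.

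Finally, I feed $\mathcal{H}$ to {\tt FastTournament} (Theorem~\ref{thm:tournament}) at accuracy $\ve'$ and confidence $\d'$, implementing the required PDF comparators via the explicit PMF computation of Section~\ref{sec:pmdpdf} (a discretized Gaussian convolved with a sparse PMD, projected onto $(0,1,\ldots,k-1)^T$). The number of samples needed is $O(\log|\mathcal{H}|\cdot \log(1/\d')/\ve'^2) = k^{5k}\cdot \tilde O(\log^{k+2}(1/\ve')\cdot \log(1/\d')/\ve'^2)$, matching the claim. The main obstacle is the rounding step: formalizing that a low-variance $(n,k)$-SIIRV is $O(\ve')$-close to an offset $(\poly(k/\ve'),k)$-SIIRV requires a careful coupling/hybridization argument that carefully accounts for how a single-dimensional sum of $k$-IRVs behaves under coordinate-wise quantization, and in particular that the variance budget really does force near-determinism for all but polynomially many summands. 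Once that rounding is in place, the rest of the proof is a direct combination of the PMD cover of Theorem~\ref{thm: cover} with the hypothesis-selection machinery of Theorem~\ref{thm:tournament}.
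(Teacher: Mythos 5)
Your overall architecture (reduce to an offset $\poly(k/\ve')$-size SIIRV, cover it via Theorem~\ref{thm: cover} projected onto $(0,1,\ldots,k-1)$, enumerate the offset, run {\tt FastTournament}) matches the paper, but the crucial second step is not just ``careful bookkeeping'' left to the reader --- as sketched, it is wrong. You argue that the variance budget forces all but $\poly(k/\ve')$ summands to be \emph{essentially} deterministic and that a coupling/hybridization argument lets you replace each such summand by a constant at small total cost. Neither half survives scrutiny: the variance could be spread as $V/n$ per summand over all $n$ summands, so no summand need be exactly deterministic; and the per-summand replacement cost in the hybrid argument is $\Pr[X_i \neq m_i]$, whose sum over all summands is of order the \emph{total variance} $V = \poly(k/\ve')$ (for integer-valued summands one has $\Pr[X_i\neq m_i] = O(\mathrm{Var}(X_i))$ and this is tight), which is enormously larger than $O(\ve')$. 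This is exactly the obstruction that the paper's rounding machinery is built to avoid: the actual proof lifts $S$ to an $(n,k)$-PMD, applies the mean-preserving, Poisson-approximation-based rounding of Lemma~\ref{lem:round} (cost $O(c^{1/2}k^{5/2}\log^{1/2}(1/ck)) = O(\ve')$ in total variation, \emph{not} a per-summand replacement by constants), so that afterwards every non-constant summand has variance at least $c/2$, and only then does additivity of variance bound the number of non-constant summands by $2\zeta^2/c = \poly(k/\ve')$.

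There is a second missing ingredient tied to the same step: after rounding you only know the new distribution $T$ is $O(\ve')$-close to $S$ in total variation, which by itself does \emph{not} bound the variance of $T$, and without that bound the ``few non-constant summands'' count collapses. The paper closes this by a contradiction argument: if $\mathrm{Var}(T)$ were $\omega(k^9/\ve'^4)$ squared, Berry--Esseen (Proposition~\ref{prop:berryesseen}) would make $T$ Kolmogorov-close to a wide Gaussian, whose anticoncentration contradicts the fact that $S$ places $1-\ve'$ mass on an interval of width $k^9/2\ve'^4$ (the small-essential-support fact you invoke in your first step). Your proposal never addresses this. Finally, a minor point: locating the offset from a single sample ``with high constant probability'' does not deliver the stated $1-\d'$ confidence; you would need to boost the estimate (the paper uses $O(1/\ve'^2)$ samples and enumerates all integers within an additive $\poly(k/\ve')$ of the empirical mean) or otherwise fold this failure probability into the guarantee.
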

\begin{proof}
  Let $S = \sum_{i = 1}^n S_i$, and $S^{PMD} = \sum_{i = 1}^n S^{PMD}_i$ be a $(n,k)$-PMD such that $(1, \dots, k)^T \cdot S^{PMD}_i = S_i$ for all $i$.
  We will apply the rounding procedure described in Section \ref{sec:rounding} on $S^{PMD}$ to argue that $S$ is close to a shifted $(\poly(k/\ve'), k)$-SIIRV.
  This will be sufficient to complete the proof, as we can $\ve'$-cover $S$ by considering a cover of all unshifted $(\poly(k/\ve'),k)$-PMDs when projected onto $(1, \dots, k)$, which, by Theorem \ref{thm: cover}, is a set of size $N = 2^{O(k^2 \log (k/\ve') + (k^{5k} \log^{k+2} (1/\ve'))}$.
    The shift is determined by taking $O(1/\ve'^2)$ samples and trying all integers within an additive $\poly(k/\ve')$ of the mean of these samples.
    We select one of these hypotheses using Theorem \ref{thm:tournament}, requiring $\frac{1}{\ve'^2} \log N = k^{5k} \cdot \tilde O(\log^{k+2}(1/\ve) \log 1/\d'/\ve'^2)$ samples, as desired.
    
    Let $T^{PMD}$ be result when the rounding procedure of Section \ref{sec:rounding} is applied to $S^{PMD}$, and let $T = \sum_{i=1}^n T_i$ where $T_i =  (1, \dots, k)^T \cdot T^{PMD}_i$.
    By Lemma \ref{lem:round} and the Data Processing Inequality (Lemma \ref{lem:DPI}), this tells us that
    $$\dtv\left(S, T\right) \leq \dtv\left(S^{PMD}, T^{PMD}\right) \leq O\left(c^{1/2} k^{5/2} \log^{1/2}\left(\frac{1}{ck}\right)\right) = O(\ve'),$$
    where the last equality follows by our choice of $c$.
    We prove by contradiction that $T$'s variance is still $\poly(k/\ve')$.
    Suppose not, that $\s_T^2 \geq \z^2$, where $\z^2 \coloneqq \poly(k/\ve')$.
    We apply the Berry-Esseen theorem (Proposition \ref{prop:berryesseen}) to $T'$, which is a re-centered version of $T$.
    Defining $T_i' \sim T_i - E[T_i]$, and observing that $T_i' \in [-k, k]$ we note that $\m_{T'} = 0, \s^2_{T'} = \s^2_T, \r_{T'} \leq k \s^2_T$.
    Thus,
    $$\dk(T, \mathcal{N}(\m_T, \s^2_T)) \leq \frac{k}{\z}.$$
    By triangle inequality and the fact that total variation distance upper bounds Kolmogorov distance,
    $$\dtv(S, \mathcal{N}(\m_T, \s^2_T)) \leq O(\ve') + \frac{k}{\z}.$$
    However, anticoncentration of a Gaussian tells us that for any point $x$,
    $$\Pr(|\mathcal{N}(\m_T, \s^2_T) - x| \leq \ell) \leq \sqrt{\frac{2}{\p}} \frac{\ell}{\z}.$$
    Examine the interval of width $k^9/2\ve'^4$ centered at $E[S]$.
    $S$ assigns at least $1 - \ve'$ mass to this interval, but $\mathcal{N}(\m_T, \s^2_T)$ assigns at most $\sqrt{\frac{2}{\p}}\frac{k^9}{2\ve'^4}/\z$ mass.
    If $|(1 - \ve') - \sqrt{\frac{2}{\p}}\frac{k^9}{2\ve'^4}/\z| > O(\ve') + \frac{k}{\z}$, which happens for $\z = \omega(k^9/\ve'^4)$, this interval demonstrates that the total variation distance is larger than we showed above, thus arriving at a contradiction.
    Thus, we have that the variance of $T$ is at most $\zeta^2$.
    
    By the rounding procedure, we know that the variance of any $T_i$ which is non-constant is at least $c(1 - c) \geq c/2$.
    Since variance is additive and the variance of $T$ is at most $\z^2$, this implies that there are at most $2\z^2/c = O(k^{24}/\ve'^{11})$ non-constant $T_i$.
    Therefore, $S$ is $\ve'$-close to a shifted $(O(k^{24}/\ve'^{11}),k)$-SIIRV, as desired.
\end{proof}

\begin{lemma}
  \label{lem:siirvheavy}
  There is a procedure \texttt{Learn-Heavy} with the following properties:
  It takes as input a value $\ell \in \{1, \dots, k-1\}$, an accuracy parameter $\ve' > 0$, a variance parameter $\s^2 = \Omega(k^2/\ve'^2)$, and a confidence parameter $\d' > 0$, as well as access to samples from a $\poly(n)$-IRV $S$.
  \texttt{Learn-Heavy} uses $m = O((1/\ve'^2)(\ell+ \log(1/\d')))$ samples from $S$, runs in time $\tilde O(m)$, and has the following performance guarantee:
  
  Suppose that $\dtv(S, \ell Z + Y) \leq \ve'$, where $Z$ is a discretized random variable distributed as $\lfloor \mathcal{N}\left(\frac{\m'}{\ell}, \frac{\s'^2}{\ell^2}\right) \rceil$, for some $\s'^2 \geq \s^2$, $Y$ is a $\ell$-IRV, and $Z$ and $Y$ are independent.
  Then \texttt{Learn-Heavy} outputs a hypothesis variable $H_\ell$ such that $\dtv(S,H_\ell) \leq O(\ve')$ with probability at least $1 - \d'$.
\end{lemma}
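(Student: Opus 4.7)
\begin{proofof}{Lemma~\ref{lem:siirvheavy}}
The plan is to exploit the algebraic decomposition $\ell Z + Y$ with $Z \in \mathbb{Z}$ and $Y \in \{0,\dots,\ell-1\}$: reducing modulo $\ell$ recovers $Y$, while rounding down to the nearest multiple of $\ell$ and dividing by $\ell$ recovers $Z$. We will learn $Y$ and $Z$ separately from the samples of $S$, and then output $H_\ell = \ell \hat Z + \hat Y$ as an independent sum. Both projections commute with taking total variation distance by the Data Processing Inequality (Lemma~\ref{lem:DPI}), so samples from $S$ yield samples from distributions that are $\ve'$-close to $Y$ and $Z$ respectively.

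\emph{Learning $Y$.} Map each of $m$ samples $X_i \sim S$ to $X_i \bmod \ell$. The resulting empirical distribution is supported on $\{0,\dots,\ell-1\}$ and is drawn from a distribution $\ve'$-close to $Y$. By standard bounds for learning discrete distributions on a support of size $\ell$ (via the empirical distribution and Chernoff/DKW, Lemmas~\ref{lem:cher} and~\ref{lem:dkw}), $O((\ell + \log(1/\d'))/\ve'^2)$ samples suffice to output $\hat Y$ with $\dtv(\hat Y, Y) \le O(\ve')$ with probability $\ge 1 - \d'/2$.

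\emph{Learning $Z$.} Map each sample $X_i$ to $\lfloor X_i/\ell \rfloor$, yielding samples from a distribution $\ve'$-close to $Z \sim \lfloor \mathcal{N}(\mu'/\ell, \s'^2/\ell^2)\rceil$. Since $\s'/\ell \ge \s/\ell = \Omega(k/(\ell \ve')) = \Omega(1/\ve')$ (using $\ell < k$), the discretized Gaussian $Z$ is $O(\ve')$-close to its continuous counterpart, and in particular the CDF of $Z$ has bounded derivative $O(\ell/\s')$ near its median and quartiles. As in \cite{DaskalakisDOST13}, we estimate $\mu'/\ell$ by the empirical median $\hat \mu$ and $\s'/\ell$ by an appropriately rescaled empirical interquartile range $\hat \s$. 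Using DKW (Lemma~\ref{lem:dkw}) with $O(\log(1/\d')/\ve'^2)$ samples, the empirical CDF is uniformly within $O(\ve')$ of the CDF of $Z$, so combining with the $\ve'$ total-variation-closeness, every empirical quantile corresponds to a true quantile of $Z$ up to a shift of $O(\ve')$ in the CDF. Translating back through $Z$'s CDF yields $|\hat \mu - \mu'/\ell| \le O(\ve' \s'/\ell)$ and $\hat \s \in (1 \pm O(\ve')) \cdot \s'/\ell$. Applying Proposition~\ref{prop:ddodtv} (and the fact that $\lfloor \mathcal{N}(\cdot,\cdot) \rceil$ is at most as far as its continuous analogue, by Lemma~\ref{lem:DPI}) gives $\dtv(\lfloor \mathcal{N}(\hat \mu, \hat \s^2)\rceil, Z) \le O(\ve')$.

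\emph{Combining.} Letting $\hat Z = \lfloor \mathcal{N}(\hat \mu, \hat \s^2) \rceil$ and $H_\ell = \ell \hat Z + \hat Y$, three applications of the triangle inequality, together with Lemma~\ref{lem:DPI} applied to the map $(z,y) \mapsto \ell z + y$, give
\[
\dtv(S, H_\ell) \le \dtv(S, \ell Z + Y) + \dtv(Z, \hat Z) + \dtv(Y, \hat Y) \le O(\ve').
\]
A union bound over the two estimation steps yields success probability $\ge 1-\d'$. The total sample complexity is $O((\ell + \log(1/\d'))/\ve'^2)$, and each step runs in near-linear time in $m$. The main subtlety lies in justifying that the empirical median and interquartile range remain reliable estimators when samples come from a distribution merely $\ve'$-close (in total variation) to a discretized Gaussian; this robustness, rather than a direct moment-matching argument, is what allows us to handle the unknown additive distortion coming from $Y$.
\end{proofof}
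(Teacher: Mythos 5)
Your proposal is correct and follows essentially the same route as the paper's proof: learn $Y$ from the residues mod $\ell$ via the empirical distribution, learn $Z$ by flooring $S/\ell$ and chaining DKW with the Kolmogorov closeness of the discretized Gaussian to its continuous counterpart (using $\s'/\ell = \Omega(1/\ve')$), then extract $\hat\mu,\hat\s$ via the robust median/interquartile-range estimators and conclude with Proposition~\ref{prop:ddodtv}, Lemma~\ref{lem:DPI}, and the triangle inequality. The only cosmetic difference is that you re-derive the robustness of the median/IQR quantile estimates inline, where the paper invokes Lemmas 9 and 10 of \cite{DaskalakisK14}.
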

\begin{proof}
  This follows similarly to the proof of Lemma 5.2 in \cite{DaskalakisDOST13}. 
  $Y$ and $Z$ are learned in separate stages.
  $Y$ is learned identically as in their algorithm, by using the empirical distribution of $O((1/\ve'^2)(\ell + \log(1/\d')))$ samples reduced to their residue mod $\ell$.

  Learning $Z$ is performed differently.
  We take $O(\log(1/\d')/\ve'^2)$ samples and replace each value $v$ with the value $\lfloor v/\ell \rfloor$.
  In other words, given samples from $S$, we simulate samples from $Z' = \lfloor S/\ell \rfloor$.
  Since $\dtv(S, \ell Z + Y) \leq \ve'$, Lemma \ref{lem:DPI} implies that $\dtv(Z', Z) \leq \ve'$, which in turn implies that $\dk(Z', Z) \leq \ve'$, using Fact \ref{fct:drelation}.

  Using Lemma \ref{lem:dkw}, our samples from $S$ give us a distribution $\hat Z'$ such that $\dk(Z', \hat Z') \leq \ve'$ with probability $1 - \d'$. 

We make the following straightforward observation, bounding the Kolmogorov distance between a Gaussian and the corresponding discretized Gaussian.
\begin{proposition}
  $\dk(\mathcal{N}(\m,\s^2), \lfloor \mathcal{N}(\m,\s^2)\rceil) \leq \frac{1}{\s \sqrt{2\p}}.$
\end{proposition}

Using triangle inequality and the lower bound on $\s^2$, this tells us that $\dk(\hat Z', \mathcal{N}(\frac{\m'}{\ell},\frac{\s'^2}{\ell^2})) \leq O(\ve')$. 

Now, we can apply the following robust statistics results from \cite{DaskalakisK14}:
\begin{lemma}[Lemmas 9 and 10 of \cite{DaskalakisK14}]
  Let $\hat F$ be a distribution such that $\dk(\mathcal{N}(\m,\s^2), \hat F) \leq \ve$.
  Then
  \begin{itemize}
    \item $med(\hat F) \triangleq \hat F^{-1}(\frac12) \in \mu \pm O(\ve \s)$ 
    \item $\frac{IQR(\hat F)}{2\sqrt{2}erf^{-1}(\frac12)} \triangleq \frac{\hat F^{-1}(\frac34) - \hat F^{-1}(\frac14)}{2\sqrt{2}erf^{-1}(\frac12)} \in \s \pm O(\ve \s)$
  \end{itemize}
\end{lemma}

By taking the median and a rescaling of the interquartile range of $\hat Z$, we get estimates $\hat \m'$ and $\hat \s'^2$ which are within $\pm O(\ve \frac{\s'}{\ell})$ of the true parameters.
Proposition \ref{prop:ddodtv} implies $\dtv(\mathcal{N}(\hat \m',\hat \s'^2), \mathcal{N}(\frac{\m'}{c},\frac{\s'^2}{c^2})) \leq O(\ve')$. 
Applying Lemma \ref{lem:DPI} gives us 
$\dtv(\lfloor \mathcal{N}(\hat \m',\hat \s'^2) \rceil, Z) \leq O(\ve')$.
The result follows using triangle inequality on the estimates for $Y$ and $Z$.
\end{proof}

Now, we run \texttt{Learn-Sparse} once, and \texttt{Learn-Heavy} for $c = 1$ to $k-1$. 
This will give us a set of $k$ hypotheses, at least one of which is close to the true distribution.
We use the subroutine \texttt{FastTournament} (as described by Theorem \ref{thm:tournament}) to select one of these hypotheses.
Theorem \ref{thm:SIIRV learning} follows by combining Lemma \ref{lem:siirvstruct} with the guarantees provided by Lemmas \ref{lem:siirvsparse} and \ref{lem:siirvheavy}.

\end{document}